\documentclass[10pt]{article}
\usepackage{geometry,url,graphicx}
\usepackage{amsmath,amssymb,amsthm,enumerate}
\usepackage{epstopdf}
\usepackage{newcent}
\usepackage{color}

\definecolor{dartmouthgreen}{rgb}{0.05, 0.5, 0.06}
\definecolor{ferngreen}{rgb}{0.31, 0.47, 0.26}
\definecolor{shamrockgreen}{rgb}{0.0, 0.62, 0.38}

\usepackage{hyperref}
\hypersetup{colorlinks=true,allcolors=blue}

\def\today{\number\day\ \ifcase\month\or
	January\or February\or March\or April\or May\or June\or
	July\or August\or September\or October\or November\or December\fi
	\space \number\year}
\def\gobble#1#2{}
\def\shortdate{\number\day/\number\month/\expandafter\gobble\number\year}

\usepackage{fancyhdr}

\def\rmd{\mathrm{d}}
\def\rme{\mathrm{e}}
\def\rmi{\mathrm{i}}
\def\ds{\displaystyle}
\def\ss{\scriptstyle}

\def\PI{\mbox{\rm P$_{\rm I}$}}
\def\PII{\mbox{\rm P$_{\rm II}$}}
\def\PIII{\mbox{\rm P$_{\rm III}$}}
\def\PIV{\mbox{\rm P$_{\rm IV}$}}
\def\PV{\mbox{\rm P$_{\rm V}$}}
\def\PVI{\mbox{\rm P$_{\rm VI}$}}

\def\dPI{{\rm d\PI}}
\def\dPII{{\rm d\PII}}
\def\dPIII{{\rm d\PIII}}

\def\a{\alpha}
\def\b{\beta}
\def\ga{\gamma}
\def\de{\delta}
\def\ep{\varepsilon}
\def\ph{\varphi}
\def\k{\kappa}

\def\la{\lambda}
\def\th{\theta}
\def\vth{\vartheta}

\def\p{Painlev\'{e}}

\def\peq{\p\ equation}
\def\peqs{\p\ equations}
\def\dpeq{discrete \p\ equation}
\def\dpeqs{discrete \p\ equations}
\def\bk{B\"ack\-lund}
\def\bt{B\"ack\-lund transformation}
\def\bts{B\"ack\-lund transformations}

\def\cdot{{\,\scriptscriptstyle\bullet\,}}
\def\hide#1{}
\def\ifrac#1#2{{#1/#2}}
\def\hf{\tfrac{1}{2}}
\def\hfz{\tfrac{1}{2}z}

\newcommand{\LaguerreL}[2]{L^{(#1)}_{#2}}

\newcommand{\Integer}{\mathbb{Z}}

\newcommand{\Com}{\mathbb{C}}

\newcommand{\deriv}[3][]{\frac{\rmd^{#1}{#2}}{{\rmd{#3}}^{#1}}}

\def\Matrix#1{\begin{matrix}#1\end{matrix}}

\def\wz{\deriv{w}{z}}
\def\wzz{\deriv[2]{w}{z}}
\def\Lag#1#2{L_{#1}^{(#2)}}

\newtheorem{theorem}{Theorem}[section]

\newtheorem{lemma}[theorem]{Lemma}
\newtheorem{corollary}[theorem]{Corollary}
\theoremstyle{definition}
\newtheorem{definition}[theorem]{Definition}
\newtheorem{example}[theorem]{Example}
\newtheorem{remark}[theorem]{Remark}
\newtheorem{remarks}[theorem]{Remarks}

\numberwithin{figure}{section}
\numberwithin{equation}{section}
\numberwithin{table}{section}

\newcommand{\comment}[1]{}
\def\beq{\begin{equation}}
\def\eeq{\end{equation}}
\def\wh#1{\widehat{#1}}
\def\wt#1{\widetilde{#1}}
\def\pms#1#2#3{(\a,\b,\ga)=\big(#1,#2,#3\big)}

\def\pmsn#1#2#3#4{(\a_{#1},\b_{#1},\ga_{#1})=\big(#2,#3,#4\big)}

\def\D{{\rm D}}

\def\Lag#1#2{L_{#1}^{(#2)}}
\def\Tmn#1#2{T_{#1}^{(#2)}}
\def\Umn#1#2{U_{#1}^{(#2)}}

\usepackage{tikz} 
\usepackage{youngtab}

\def\T{\mathcal{T}}
\DeclareMathOperator{\Wr}{Wr}

\def\=#1{\overline{#1}}

\def\etal{\textit{et al.}}

\def\rarrow{-\!\!\!-\!\!\!-\!\!\!\longrightarrow}
\def\larrow{\longleftarrow\!\!\!-\!\!\!-\!\!\!-}
\def\arr#1{\enskip\begin{array}{c}#1\\[-8pt] \rarrow\end{array} \enskip}
\def\darrow#1#2{\enskip\begin{array}{c}
{\ss{#1}}\\[-8pt] \rarrow\\[-8pt] \larrow\\[-7pt] {\ss{#2}}
\end{array} \enskip}

\def\Hir#1#2{\D_{z}\big(#1\cdot#2\big)}
\def\dln{\deriv{}{z}\ln}
\def\Ln#1#2{\deriv{}{z}\ln\frac{#1}{#2}}

\def\A#1{\widetilde{A}_{#1}^{(1)}}
\def\spv{s\PV}
\def\bn#1#2{\Big[\substack{#1\\[2pt]#2}\Big]}
\newcommand{\Zn}[2][]{\mathcal{Z}_{#2}^{#1}(\hf z)}

\begin{document}

\title{Discrete equations from \bts\\ of the fifth \peq}

\author{Peter A.\ Clarkson$^{1}$\footnote{Corresponding author}, Clare Dunning$^{2}$ and Ben Mitchell$^{1}$\\[5pt]
$^{1}$School of Engineering, Mathematics and Physics,\\ University of Kent, Canterbury, CT2 7NF, UK\\[5pt] 
$^{2}$School of Mathematics, University of Leeds, Leeds, LS2 9JT, UK\\[5pt]
Email: {P.A.Clarkson@kent.ac.uk, T.C.Dunning@leeds.ac.uk, bsjm2@kent.ac.uk}
}
\maketitle
 
\begin{abstract} In this paper discrete equations are derived from B\"{a}cklund transformations of the fifth Painlev\'{e} equation, including a new discrete equation which has ternary symmetry. There are two classes of rational solutions of the fifth Painlev\'{e} equation, one expressed in terms of the generalised Laguerre polynomials and the other in terms of the generalised Umemura polynomials, both of which can be expressed as Wronskians of Laguerre polynomials. Hierarchies of rational solutions of the discrete equations are derived in terms of the generalised Laguerre and generalised Umemura polynomials. It is known that there is non-uniqueness of some rational solutions of the fifth Painlev\'{e} equation. Pairs of non-unique rational solutions are used to derive distinct hierarchies of rational solutions which satisfy the same discrete equation.
\end{abstract}

\smallskip\noindent{\textbf{Mathematics Subject Classification (2020)}: 33E17 34M55 37J70 39A10}

\smallskip\noindent{\textbf{Keywords}: Discrete \peqs; \bts; fifth \peq; rational solutions}

\tableofcontents

\section{Introduction}
In this paper we are concerned with \dpeqs\ associated with the fifth \peq 
\beq\wzz=\left(\frac{1}{2w}+\frac{1}{w-1}\right)\left(\wz\right)^{\!\!2}-\frac{1}{z}\wz+\frac{(w-1)^2(\a w^2+{\b})}{z^2w}+\frac{\ga w}{z} 
+\frac{\de w(w+1)}{w-1},\label{eq:p5gen}\eeq
with $\a$, $\b$, $\ga$ and $\de$ constants. We consider the generic case of \eqref{eq:p5gen} when $\de\not=0$, so we set
$\de=-\hf$, without loss of generality (by rescaling $z$ if necessary) and obtain
\beq\wzz=\left(\frac{1}{2w}+\frac{1}{w-1}\right)\left(\wz\right)^{\!\!2}-\frac{1}{z}\wz+\frac{(w-1)^2(\a w^2+{\b})}{z^2w}+\frac{\ga w}{z}-\frac{w(w+1)}{2(w-1)},\label{eq:pv}\eeq
which we will refer to as \PV. In the case when $\de=0$ in \eqref{eq:p5gen}
\beq\wzz=\left(\frac{1}{2w}+\frac{1}{w-1}\right)\left(\wz\right)^{\!\!2}-\frac{1}{z}\wz+\frac{(w-1)^2(\a w^2+{\b})}{z^2w}+\frac{\ga w}{z},\label{eq:p5deg}\eeq
which is known as the \textit{degenerate fifth \peq}\ (deg-\PV).

The six \peqs\ were discovered by \p, Gambier and their colleagues in the late 19th and early 20th centuries whilst studying second order ordinary differential equations of the form
\begin{equation} \label{gen-ode}
\wzz=F\left(z,w,\wz\right), 
\end{equation}
where $F$ is rational in $\rmd w/\rmd z$ and $w$ and analytic in $z$. The \p\ transcendents, i.e.\ the solutions of the \peqs, can be thought of as nonlinear analogues of the classical special functions. 
The \peqs\ are regarded as being ``completely integrable" and solvable using Riemann-Hilbert techniques \cite{refFIKN}.
The general solutions of the \peqs\ are transcendental in the sense that they cannot be expressed in terms of known elementary functions and so require the introduction of a new transcendental function to describe their solution. However, it is well known that all the \peqs, except \PI, possess rational solutions, algebraic solutions and solutions expressed in terms of the classical special functions 
for special values of the parameters, see, e.g.\ \cite{refPAC06review,refGLS}
and the references therein. These hierarchies are usually generated from ``seed solutions'' using the associated \bk\ transformations and frequently can be expressed in the form of determinants. 

Discrete \peqs\ also have a long history. The discrete \p\ I (\dPI) equation is 
\beq\label{eq:dPI} x_{n+1}+x_{n}+x_{n-1}=1+\frac{\la n+\rho}{x_{n}}, \eeq 
with $\la$ and $\rho$ constants. Equation \eqref{eq:dPI} was studied in 1939 by Shohat \cite{refShohat}, developing ideas due to Laguerre \cite{refLag} dating from 1885, in the study of orthogonal polynomials associated with the quartic weight $\exp(-\tfrac{1}{4}x^4)$ on the real line.
Subsequently this problem was studied by Freud \cite{refFreud}, see also \cite{refBNevai}, though it was not recognised at the time as a \dpeq. 
Equation \eqref{eq:dPI} was then rediscovered in the 1980s in models of gauge ﬁeld theory by Bessis, Itzykson and Zuber \cite{refBIZ} and in quantum gravity by Br\'{e}zin and Kazakov \cite{refBreKaz}, who showed that its continuum limit is the first \peq\ (\PI)
\beq\wzz=6w^2+z,\label{eq:p1}\eeq
see also \cite{refFIK91}.

The discrete \p\ II (\dPII) equation is 
\beq x_{n+1}+x_{n-1}=\frac{\zeta[(n+\la) x_{n}+\rho]}{1-x_{n}^2},\label{eq:dPII}\eeq
with $\zeta$, $\la$ and $\rho$ constants, which was derived by Periwal and Shevitz \cite{refPS} in a study of unitary matrix models in a double scaling limit using orthogonal polynomials on the unit circle.
Around the same time, Nijhoff and Papageorgiou \cite{refNP91} independently derived \dPII\ \eqref{eq:dPII} and showed that its continuum limit is the second \peq\ (\PII)
\beq\wzz=2w^3+z w+\a,\label{eq:p2}\eeq
with $\a$ a constant; Periwal and Shevitz \cite{refPS} gave a continuum limit to \PII\ \eqref{eq:p2} with $\a=0$.

One method for deriving \dpeqs\ is using \bts\ of (continuous) \peqs. For example, it is well known that \PII\ \eqref{eq:p2} has the \bts\ 
\begin{subequations}\begin{align}
w(z;\a+1)&=-w(z;\a)-\frac{2\a+1}{2w^2(z;\a)+2w'(z;\a)+z},\\
w(z;\a-1)&=-w(z;\a)-\frac{2\a-1}{2w^2(z;\a)-2w'(z;\a)+z}.
\end{align}\end{subequations}
with $'\equiv \rmd/\rmd z$ \cite{refFA82,refGambier10}.
Hence letting $x_{n}=w(z;n+\la-\hf)$ and $\a=n+\la-\hf$, 
then by eliminating $w'(z;\a)$ we obtain another discrete equation, commonly known as alt-\dPI\ 
\beq \frac{n+\la}{x_{n+1}+x_{n}}+\frac{n+\la-1}{x_{n}+x_{n-1}}+2x_{n}^2+z=0,\label{alt-dPI}\eeq
which has \PI\ \eqref{eq:p1} as its continuum limit \cite{refFGR}.
The alt-\dPI\ equation \eqref{alt-dPI} appeared in 1981 in the work of Jimbo and Miwa \cite[equation (5.3)]{refJM}, though it was not recognised to be a \dpeq\ until the work of Fokas, Grammaticos and Ramani \cite{refFGR}. 

In some respects the \dpeqs\ are much richer than the \peqs. For example, they can have solutions which have no continuum limit and other properties which are lost in the continuous limit. Perhaps the most fundamental difference is that, whilst the \peqs\ are unique up to a M\"obius
transformation, this is not true for the \dpeqs. For each \dpeq\ there exists a number of different forms, as illustrated by \dPI\ \eqref{eq:dPI} and alt-\dPI\ \eqref{alt-dPI} which both have \PI\ \eqref{eq:p1} as a continuum limit yet are not equivalent as there is no transformation relating the solutions \eqref{eq:dPI} and \eqref{alt-dPI}. 
Further their exact solutions are quite different since \dPI\ \eqref{eq:dPI} arises from \bts\ of the fourth \peq\ (\PIV)
\beq \wzz=\frac{1}{2w}\left(\wz\right)^{\!2}+\tfrac{3}{2}w^3+4z w^2+2(z^2-\a)w+\frac{\b}{w},\label{eq:piv}\eeq
with $\a$ and $\b$ constants \cite{refGRP97}, whereas alt-\dPI\ \eqref{alt-dPI} arises from \bts\ of \PII\ \eqref{eq:p2}, as shown above.

The nomenclature for \dpeqs\ can be misleading. The original naming of \dpeqs\ came from their continuum limits but this can be ambiguous. For example, there are generalisations of \dPI\ \eqref{eq:dPI} and \dPII\ \eqref{eq:dPII}, see equations \eqref{eq:adPI} and \eqref{eq:adPII}, respectively, which have continuum limits to other \peqs, as discussed below. Another approach is to classify the \dpeqs\ by the type of affine Weyl group and associated affine Dynkin diagram they are derived from, which is now commonly used.
Sakai \cite{refSakai}, extending the earlier work of Okamoto \cite{refOkamoto79} for the (continuous) \peqs, approached the \dpeqs\ from the point of view of algebraic geometry. Sakai clarified the algebraic nature of \dpeqs\ and gave a classification scheme. A comprehensive review of the geometric aspects of the continuous and \dpeqs\ is given by Kajiwara, Noumi and Yamada \cite{refKNY}. There is a recent discussion of the symmetry group of \dPII\ \eqref{eq:dPII} in \cite{refDSSW}. Discrete \peqs\ arise in various applications such as random matrix theory and orthogonal polynomials, cf.~\cite{refWVAbk,refWVA2020}.

There are three distinct types of \dpeqs: (i), those for which the discrete variable is additive, which is the case for the discrete equations we discuss; (ii), the $q$-\dpeqs\ for which the discrete variable is multiplicative; and (iii), elliptic \dpeqs\ whose coefficients are elliptic functions of the independent variable $n$.

From the works of Okamoto \cite{refOkamotoP2P4,refOkamotoPV,refOkamotoPVI,refOkamotoPIII}, it is known that the parameter spaces of \PII--\PVI\ all admit the action of an
extended affine Weyl group; the group acts as a group of \bts. 
In a series of papers, Noumi and Yamada \cite{refNoumi,refNY98i,refNY98iii,refNY04}
have implemented this idea to derive a hierarchy of dynamical systems
associated to the affine Weyl group of type $\A{N}$, often known as ``\textit{symmetric forms of the \peqs}". The behaviour of each dynamical
system varies depending on whether $N$ is even or odd. 
The first member of the $\A{2n+1}$ hierarchy, i.e.\ $\A{3}$, usually known as symmetric-\PV\ (\spv), is equivalent to \PV\ \eqref{eq:pv}
and given by
\begin{subequations}\label{eq:spv}\begin{align}
z\deriv{f_{0}}{z}&=f_{0}f_{2}(f_{1}-f_{3})+(\hf-\vth_{2})f_{0}+\vth_{0}f_{2},\\
z\deriv{f_{1}}{z}&=f_{1}f_{3}(f_{2}-f_{0})+(\hf-\vth_{3})f_{1}+\vth_{1}f_{3},\\
z\deriv{f_{2}}{z}&=f_{2}f_{0}(f_{3}-f_{1})+(\hf-\vth_{0})f_{2}+\vth_{2}f_{0},\\
z\deriv{f_{3}}{z}&=f_{3}f_{1}(f_{0}-f_{2})+(\hf-\vth_{1})f_{3}+\vth_{3}f_{1},
\end{align}
with the normalisations
\beq\label{eq:sPVa}
f_{0}(z)+f_{2}(z)=\sqrt{z},\qquad f_{1}(z)+f_{3}(z)=\sqrt{z},
\eeq
and $\vth_{0}$, $\vth_{1}$, $\vth_{2}$ and $\vth_{3}$ are constants such that
\beq\label{eq:sPVk}
\vth_{0}+\vth_{1}+\vth_{2}+\vth_{3}=1.
\eeq\end{subequations}
The symmetric system \spv\ \eqref{eq:spv} was found by Adler \cite{refAdler} in the context of periodic chains of \bts\ and has applications in random matrix theory, cf.~\cite{refFW02}. 

The outline of this paper is as follows.
In \S\ref{sec:bts} we describe \bts\ of \PV\ \eqref{eq:pv}.
In \S\ref{sec:deqn} we derive some \dpeqs\ from these \bts, including a new discrete equation which has a ternary symmetry. There have been earlier studies deriving \dpeqs\ from \bts\ of \PV\ \eqref{eq:pv}, see \cite{refFGR,refTGR02}. Our study explicitly gives the \bts\ which generate the \dpeqs. These then can be used to obtain exact solutions of the discrete equations in terms of known solutions of \PV\ \eqref{eq:pv}.
In \S\ref{sec:PVrats} we describe the rational solutions of \PV\ \eqref{eq:pv} which are expressed in terms of Wronskians of Laguerre polynomials. 
In \S\ref{sec:rats} we derive hierarchies of rational solutions of the \dpeqs\ derived in \S\ref{sec:deqn} in terms of the rational solutions of \PV\ \eqref{eq:pv} given in \S\ref{sec:PVrats}. Further pairs of rational solutions which although different, which satisfy \PV\ \eqref{eq:pv} for the same parameters, are used to derive distinct hierarchies of rational solutions that satisfy the same discrete equation. Finally in \S\ref{sec:dis} we discuss our results.

\section{\label{sec:bts}\bts\ of the fifth \peq}
In this section we describe the \bts\ of \PV\ \eqref{eq:pv}.

\begin{definition} A \textit{\bt} maps a solution of a given \peq\ to another solution of the same \peq,
with different values of the parameters, or another equation.
\end{definition}

\begin{definition}
Let $w=w(z;\a,\b,\ga)$ be a solution of \PV\ \eqref{eq:pv} such that
\beq\label{eq:PT.BTs.P5a}
\Phi(w;z)=z\wz-\ep_{2}aw^2+(\ep_{2}a-\ep_{3}b+\ep_{1}z)w+\ep_{3}b \not=0,\eeq
and \beq\label{eq:PT.BTs.P5a1}\Phi(w;z)-2\ep_{1}z w\not=0,\eeq
where $a=\sqrt{2\a}$ and $b=\sqrt{-2\b}$ and $c=\ga$. Then the \textit{\bt}\ $\T_{\ep_{1},\ep_{2},\ep_{3}}$ is defined by
\begin{equation}\label{eq:PT.BTs.P5b}
\T_{\ep_{1},\ep_{2},\ep_{3}}:\quad
W(z;A,B,C)
=1-\frac{2\ep_{1}z w}{\Phi(w;z)}
=1-\frac{2\ep_{1}z w}{z w'-\ep_{2}aw^2+(\ep_{2}a-\ep_{3}b+\ep_{1}z)w+\ep_{3}b},
\end{equation} 
with $'\equiv\rmd/\rmd z$, $\ep_j=\pm1$, $j=1,2,3$, independently, and for the parameters
\beq\label{eq:PT.BTs.P5c}
\T_{\ep_{1},\ep_{2},\ep_{3}}(a,b,c)=\Big(\hf\big[c+\ep_{1}(1-\ep_{3}b-\ep_{2}a)\big],\hf\big[c-\ep_{1}(1-\ep_{3}b-\ep_{2}a)\big],\ep_{1}(\ep_{3}b-\ep_{2}a)\Big),\eeq
so
\[(A,B,C)=\Big(\tfrac{1}{8}\big[c+\ep_{1}(1-\ep_{3}b-\ep_{2}a)\big]^2,-\tfrac{1}{8}\big[c-\ep_{1}(1-\ep_{3}b-\ep_{2}a)\big]^2,\ep_{1}(\ep_{3}b-\ep_{2}a)\Big),\]
cf.~\cite{refGromak,refGF01a,refGF01b}, \cite[\S39]{refGLS} or \cite[\S32.7(v)]{refDLMF}.
\end{definition}

Note that we are using the same notation as in \cite[\S32.7(v)]{refDLMF} which slightly differs from that used in \cite{refGromak,refGF01a,refGF01b} or 
\cite[\S39]{refGLS}.

\begin{remarks}{\rm
\begin{enumerate}[(i)]\item[]
\item Since $\ep_j=\pm1$, $j=1,2,3$, independently, there are eight distinct transformations $\T_{\ep_{1},\ep_{2},\ep_{3}}$. The parameter $\ep_{1}$ reflects that the transformation $(z,\ga)\to(-z,-\ga)$ is a discrete symmetry of \PV. The parameters $\ep_{2}$ and $\ep_{3}$ reflect that $\a=\hf a^2$ and $\b=-\hf b^2$, respectively, so $a$ and $b$ can take either sign.
\item The \bt\ $\T_{\ep_{1},\ep_{2},\ep_{3}}$ is usually given for the parameters $\a$, $\b$ and $\ga$, in terms of $a$, $b$ and $c$. However in order to apply a sequence of \bts, it's preferable to define the action of \bt\ on the parameters $a$, $b$ and $c$, rather than $\a$, $\b$ and $\ga$, to avoid any ambiguity in taking a square root.
\item We note that the Hamiltonian for \PV\ \eqref{eq:pv} in \cite{refJM} is written in terms of the monodromy parameters $\th_{0}$, $\th_{1}$ and $\th_{\infty}$, which are related to the parameters $\a$, $\b$ and $\ga$ through
\beq \a=\tfrac{1}{8}(\th_{0}-\th_{1}+\th_{\infty})^2,\qquad \b=-\tfrac{1}{8}(\th_{0}-\th_{1}-\th_{\infty})^2,\qquad\ga=1-\th_{0}-\th_{1},\label{SchlesParams}\eeq
see also \cite{refOkamotoPV}, as are the Schlesinger transformations, which relate two solutions of the associated monodromy problem, for \PV\ \eqref{eq:pv}, see \cite{refFMA,refMF}. 
Further the discrete symmetries of \PV\ \eqref{eq:pv}, corresponding to the extended affine Weyl group of type $A_3^{(1)}$, act naturally on the root variables, see, for example, \cite{refCDHM}. Hence it is natural to define the \bts\ for \PV\ \eqref{eq:pv} in terms of the parameters $a=\sqrt{2\a}$, $b=\sqrt{-2\b}$ and $c=\ga$, rather than $\a$, $\b$ and $\ga$.
\item When applying the \bt\ $\T_{\ep_{1},\ep_{2},\ep_{3}}$ followed by the \bt\ $\T_{\sigma_{1},\sigma_{2},\sigma_{3}}$ we use the notation $\T_{\sigma_{1},\sigma_{2},\sigma_{3}}\circ \T_{\ep_{1},\ep_{2},\ep_{3}}$, i.e.
\[\T_{\sigma_{1},\sigma_{2},\sigma_{3}}\big(\T_{\ep_{1},\ep_{2},\ep_{3}}(w;a,b,c)\big)=\T_{\sigma_{1},\sigma_{2},\sigma_{3}}\circ \T_{\ep_{1},\ep_{2},\ep_{3}}(w;a,b,c).\]
{Further if 
\[ \T_{\ep_{1},\ep_{2},\ep_{3}}(w;a,b,c)= \Big(W(w,w',a,b,c);A(a,b,c),B(a,b,c),C(a,b,c)),\]
then
\[
\T_{\sigma_{1},\sigma_{2},\sigma_{3}}\circ \T_{\ep_{1},\ep_{2},\ep_{3}}(w;a,b,c)
= \T_{\sigma_{1},\sigma_{2},\sigma_{3}}\Big(W(w,w',a,b,c);A(a,b,c),B(a,b,c),C(a,b,c)).
\]}
\end{enumerate}
}\end{remarks}

\section{\label{sec:deqn}Deriving \dpeqs}
In this section we derive four \dpeqs\ which are associated with \PV\ \eqref{eq:pv} using the \bts\ described in \S\ref{sec:bts}.

To derive a \dpeq\ from \bts\ of \PV\ \eqref{eq:pv}, we use the \bt\ $\mathcal{R}_+$, which relates a solution $w(z;a,b,c)$ to another solution $w_+(z;a_+,b_+,c_+)$ and its inverse transformation $\mathcal{R}_-=\mathcal{R}_+^{-1}$, which relates $w(z;a,b,c)$ to a third solution $w_-(z;a_-,b_-,c_-)$. Eliminating the derivative between the two \bts\ gives an algebraic relation between $w(z;a,b,c)$, $w_+(z;a_+,b_+,c_+)$ and $w_-(z;a_-,b_-,c_-)$, which is the discrete equation. 
Then setting $w(z;a,b,c)=w_{n}$, ${w_{\pm}}(z;a_{\pm},b_{\pm},c_{\pm})=w_{n\pm1}$, with
$(a,b,c)=(a_{n},b_{n},c_{n})$ and $(a_{\pm},b_{\pm},c_{\pm})=(a_{n\pm1},b_{n\pm1},c_{n\pm1})$, gives the chain
$$\{w_{n-1},{a}_{n-1},{b}_{n-1},{c}_{n-1}\}\darrow{\ds{\mathcal{R}_+}}{\ds\mathcal{R}_-} 
\{w_{n},a_{n},b_{n},c_{n}\}\darrow{\ds{\mathcal{R}_+}}{\ds{\mathcal{R}_-}}\{w_{n+1},{a}_{n+1},{b}_{n+1},{c}_{n+1}\}.$$
The parameters satisfy a linear discrete system derived from \eqref{eq:PT.BTs.P5c}.
See, for example, \cite{refCMW,refFGR} for more details of this procedure.

\begin{remarks}{\rm 
\begin{enumerate}[(i)]\item[] \item 
It is essential that $\mathcal{R}_-$ is the inverse transformation of $\mathcal{R}_+$, {see Example \ref{exam32}}. It is possible to create an algebraic relation between three solutions of \PV\ \eqref{eq:pv} using two \bts\ $\mathcal{R}_1$ and $\mathcal{R}_2$ when $\mathcal{R}_2\not=\mathcal{R}_1^{-1}$, but this won't necessarily result in a \dpeq. For example, in \cite[\S39]{refGLS}, see also \cite{refGF01a,refGF01b}, there are algebraic relations of three (and four) solutions of \PV, but these are not \dpeqs.
\item The Bessel function $J_{n}(z)$ satisfies the differential equation
\beq
z^2\deriv[2]{J_{n}}{z}+ z\deriv{J_{n}}{z}+ (z^2-n^2)J_{n}=0,\label{eqbessel1}
\end{equation}
and the recurrence relations
\begin{subequations}\begin{align}
&z\deriv{J_{n}}{z}(z) = zJ_{n-1}(z) - nJ_{n}(z),\label{eqbessel2}\\
&z\deriv{J_{n}}{z}(z) = nJ_{n}(z) - zJ_{n+1}(z),\label{eqbessel3}\\
&zJ_{n+1}(z) -2n J_{n}(z)+ zJ_{n-1}(z) =0,\label{eqbessel4}
\end{align}\end{subequations}
cf.~\cite[\S10]{refDLMF}. Eliminating $\ds\deriv{J_{n}}{z}(z)$ between \eqref{eqbessel2} and
\eqref{eqbessel3} yields the recurrence relation \eqref{eqbessel4}. Letting $n\to n+1$ in \eqref{eqbessel2} and eliminating $J_{n+1}(z)$ using
\eqref{eqbessel3} gives the differential equation \eqref{eqbessel1}. 
Thus the Bessel function $J_{n}(z)$ satisfies both the differential equation \eqref{eqbessel1} and the difference equation \eqref{eqbessel4}. An analogous situation occurs for the \peqs, except \PI\ \eqref{eq:p1}.

\end{enumerate}
}\end{remarks}

In the next three subsections, we derive discrete equations using the \bts
\[\begin{array}{l@{\qquad}l}
\mathcal{R}_{1}=\T_{1,-1,1}, & \mathcal{R}_{1}^{-1}=\T_{-1,1,-1},\\
\mathcal{R}_{2}=\T_{-1,-1,-1}\circ\T_{-1,1,-1}, &\mathcal{R}_{2}^{-1}=\T_{1,1,1}\circ\T_{1,-1,1},\\
\mathcal{R}_{3}=\T_{-1,1,-1}\circ\T_{1,-1,-1}\circ\T_{1,-1,1}, & \mathcal{R}_{3}^{-1}=\T_{-1,1,1}, 
\end{array}\]
as well as a discrete equation which has a ternary symmetry using the \bt
\[\mathcal{R}_{4}=\mathcal{R}_{3}^2=\T_{-1,1,-1}\circ \T_{1,-1,-1}^2\circ\T_{1,-1,1},\qquad \mathcal{R}_{4}^{-1}=\T_{-1,1,1}^2.\]
{These \bts\ are the only combinations which we have found that give rise to a discrete equation.
\begin{example}\label{exam32}
Consider the \bts\ $\T_{-1,1,1}$ and $\T_{1,-1,1}$. First applying $\T_{-1,1,1}$ then $\T_{1,-1,1}$ gives 
\begin{align*}
\T_{-1,1,1}(w)&=1+\frac{2zw}{zw'-aw^2+(a-b-z)w+b},\qquad
\T_{1,-1,1}\circ\T_{-1,1,1}(w)=w,
\end{align*} 
\comment{Note: changed a + to - in the first equation. The following was the original:
\begin{align*}
\T_{-1,1,1}(w)&=1+\frac{2zw}{zw'+aw^2+(a-b-z)w+b},\qquad
\T_{1,-1,1}\circ\T_{-1,1,1}(w)=w,
\end{align*} }
with $'\equiv \rmd/\rmd z$, where we have used the fact that $w$ satisfies \PV\ \eqref{eq:pv} with 
$\pms{\hf a^2}{-\hf b^2}{c}$.
However applying $\T_{1,-1,1}$ then $\T_{-1,1,1}$ gives
\begin{align*}
\T_{1,-1,1}(w)&=1-\frac{2zw}{zw'+aw^2-(a+b-z)w+b},\\
\T_{-1,1,1}\circ\T_{1,-1,1}(w)&=w+\frac{(a-b-c+1)w(w-1)^2}{zw'+(b+c-1)w^2-(2b+c-1+z)w+b}.
\end{align*}
Hence we conclude that $\T_{1,-1,1}\not=\T_{-1,1,1}^{-1}$ and so these \bts\ can not be used to derive a discrete equation. We remark that the parameters transform as follows
\begin{align*} \T_{1,-1,1}\circ\T_{-1,1,1}(a,b,c)&=(a,-b,c),\\ \T_{-1,1,1}\circ\T_{1,-1,1}(a,b,c)&=\big(\hf(a+b+c-1),\hf(a+b-c+1),a-b+1\big),
\end{align*}
confirming that $\T_{1,-1,1}\not=\T_{-1,1,1}^{-1}$.
\end{example}}

\subsection{\label{ssec:adPII}First discrete equation: asymmetric discrete \PII}
Consider the \bt\ $\mathcal{R}_{1}=\T_{1,-1,1}$ which has an inverse $\mathcal{R}_{1}^{-1}=\T_{-1,1,-1}$, then
\begin{subequations}\label{bt:R1}\begin{align} 
w_{n+1}&=\mathcal{R}_{1}(w_{n})=1-\frac{2z w_{n}}{\ds z w_{n}'+a_{n}w_{n}^2-(a_{n}+b_{n}-z)w_{n}+b_{n}},\label{bt:R1a}\\
w_{n-1}&=\mathcal{R}_{1}^{-1}(w_{n})=1+\frac{2z w_{n}}{\ds z w_{n}'-a_{n}w_{n}^2+(a_{n}+b_{n}-z)w_{n}-b_{n}},\label{bt:R1b}
\end{align}\end{subequations}
with $'\equiv \rmd/\rmd z$, 
where $a_{n}$, $b_{n}$ and $c_{n}$ satisfy 
\begin{align*}
(a_{n+1},b_{n+1},c_{n+1})&=\mathcal{R}_{1}(a_{n},b_{n},c_{n})=\left(\hf\big(a_{n}-b_{n}+c_{n}+1\big),\hf\big(-a_{n}+b_{n}+c_{n}-1\big),a_{n}+b_{n}\right),\\
(a_{n-1},b_{n-1},c_{n-1})&=\mathcal{R}_{1}^{-1}(a_{n},b_{n},c_{n})=\left(\hf\big(a_{n}-b_{n}+c_{n}-1\big),\hf\big(-a_{n}+b_{n}+c_{n}+1\big),a_{n}+b_{n}\right),
\end{align*}
 which have solution
\begin{align}\label{dP2:abc} a_{n}&=\hf\big[n+\la+\rho+(-1)^n\ph\big],\qquad
b_{n}=-\hf\big[n+\la-\rho-(-1)^n\ph\big],\qquad
c_{n}=\rho-(-1)^n\ph,
\end{align}
with $\la$, $\rho$ and $\ph$ arbitrary constants.
From \eqref{bt:R1} we obtain
\begin{subequations}\label{dP2:btsys}\begin{align}
\frac{1}{w_{n+1}-1}&+\frac{1}{2w_{n}}\deriv{w_{n}}{z}+\frac{a_{n}w_{n}^2-(a_{n}+b_{n}-z)w_{n}+b_{n}}{2z w_{n}}=0,\\
\frac{1}{w_{n-1}-1}&-\frac{1}{2w_{n}}\deriv{w_{n}}{z}+\frac{a_{n}w_{n}^2-(a_{n}+b_{n}-z)w_{n}+b_{n}}{2z w_{n}}=0,
\end{align}\end{subequations}
so adding gives
\[\frac{1}{w_{n+1}-1}+\frac{1}{w_{n-1}-1}+\frac{a_{n}w_{n}^2-(a_{n}+b_{n}-z)w_{n}+b_{n}}{z w_{n}}=0,\]
which is equation (3.4) in \cite{refGT96}.

Setting $w_{n}=(q_{n}+1)/(q_{n}-1)$ and $w_{n\pm1}=(q_{n\pm1}+1)/(q_{n\pm1}-1)$ in \eqref{bt:R1} gives 
\begin{subequations}\label{dP2:btsys2}\begin{align} 
q_{n+1}&+\frac{2}{1-q_{n}^2}\deriv{q_{n}}{z}-\frac{2}{z}\,\frac{(n+\la) q_{n}+\rho+(-1)^n\ph}{1-q_{n}^2}=0,\\
q_{n-1}&-\frac{2}{1-q_{n}^2}\deriv{q_{n}}{z}-\frac{2}{z}\,\frac{(n+\la) q_{n}+\rho+(-1)^n\ph}{1-q_{n}^2}=0.
\end{align}\end{subequations}
Hence by adding we obtain 
\beq q_{n+1}+q_{n-1}=\frac{4}{z}\frac{(n+\la) q_{n}+\rho+(-1)^n\ph}{1-q_{n}^2},\label{eq:adPII}\eeq
which is known as the \textit{asymmetric \dPII\ equation}
\cite{refSKGHR} and is a generalisation of \dPII\ \eqref{eq:dPII}.
{Further $q_{n}$ satisfies the second-order equation
\beq \deriv[2]{q_{n}}{z}=\frac{q_{n}}{q_{n}^2-1}\left(\deriv{q_{n}}{z}\right)^{\!2}-\frac{1}{z}\deriv{q_{n}}{z}-\frac{a_{n}^2(q_{n}+1)^2-b_{n}^2(q_{n}-1)^2}{z^2(q_{n}^2-1)}-\frac{c_{n}(q_{n}^2-1)}{2z}+\tfrac{1}{4}q_{n}(q_{n}^2-1),\label{eq2:qn}
\eeq
with $a_{n}$, $b_{n}$ and $c_{n}$ given by \eqref{dP2:abc}.}

In order to consider the alternating term in \eqref{eq:adPII}, i.e.\ the binary symmetry, we set $q_{2n}=x_{n}$ and $q_{2n+1}=y_{n}$ in \eqref{eq:adPII} and obtain the discrete system
\begin{align} \label {sys:dPII}
x_{n+1}+x_{n}&=\frac{4}{z}\frac{(2n+\la+1)y_{n}+\rho-\ph}{1-y_{n}^2},\qquad
y_{n}+y_{n-1}=\frac{4}{z}\frac{(2n+\la)x_{n}+\rho+\ph}{1-x_{n}^2},
\end{align}
which is regarded as a discrete \PIII\ equation (\dPIII) since it admits a continuous limit to the third \peq\ (\PIII) 
\beq \wzz=\frac{1}{w}\left(\wz\right)^{\!2}-\frac{1}{z}\wz+\frac{\a w^2+\b}{w}+\ga w^3+\frac{\de}{w},\label{eq:p3} \eeq
with $\a$, $\b$, $\ga$ and $\de$ constants \cite{refGNPRS,refROSG}.

We note that
\[\mathcal{R}_{1}^2(a,b,c)=(a+1,b-1,c),\] and so \[\mathcal{R}_{1}^{2n}(a,b,c)=(n+a,b-n,c).\]
Therefore since
\begin{align*} 
(a_{0},b_{0},c_{0})&=\big(\hf(\la+\rho+\ph),-\hf(\la-\rho-\ph),\rho-\ph\big),\\
(a_{1},b_{1},c_{1})&=\big(\hf(\la+\rho-\ph+1),-\hf(\la-\rho+\ph+1),\rho+\ph\big),
\end{align*}
then
\begin{align*} 
(a_{2n},b_{2n},c_{2n})&=\Big(\hf\big(2n+\la+\rho+\ph\big),-\hf\big(2n+\la-\rho-\ph\big),\rho-\ph\Big),\\
(a_{2n+1},b_{2n+1},c_{2n+1})&=\Big(\hf\big(2n+1+\la+\rho-\ph\big),-\hf\big(2n+1+\la-\rho+\ph\big),\rho+\ph\Big).
\end{align*}
If $w_{0}(z)$ satisfies \PV\ \eqref{eq:pv} with $\a=\tfrac{1}{8}(\la+\rho+\ph)^2$, $\b=-\tfrac{1}{8}(\la-\rho-\ph)^2$ and $\ga=\rho-\ph$ then
\beq x_{n}=q_{2n}=\frac{\mathcal{R}_{1}^{2n}(w_{0})+1}{\mathcal{R}_{1}^{2n}(w_{0})-1},\qquad 
y_{n}=q_{2n+1}=\frac{\mathcal{R}_{1}^{2n+1}(w_{0})+1}{\mathcal{R}_{1}^{2n+1}(w_{0})-1},\eeq
satisfy \eqref {sys:dPII}.

\begin{remarks}
\begin{enumerate}[(i)]\item[]
\item In the case when $\rho=\ph=0$, equation \eqref{eq2:qn} becomes
\beq \deriv[2]{q_{n}}{z}=\frac{q_{n}}{q_{n}^2-1}\left\{\left(\deriv{q_{n}}{z}\right)^{\!2} -\frac{(4n+\la)^2}{z^2} \right\}-\frac{1}{z}\deriv{q_{n}}{z} 
+\tfrac{1}{4}q_{n}(q_{n}^2-1),\label{eq2:qn0}\eeq 
which is equivalent to \PV\ \eqref{eq:pv} with $\a=-\b=\hf(n+\la)^2$ and $\ga=0$. 
Making the transformation $q_{n}=1/\sqrt{1-u_{n}}$ with $z=2\sqrt{x}$ in equation \eqref{eq2:qn0} gives deg-\PV\ \eqref{eq:p5deg} with $\a=0$, $\b=-2(n+\la)^2$ and $\ga=\hf$.
Solutions of deg-\PV\ \eqref{eq:p5deg} are related to those of \PIII\ \eqref{eq:p3}, cf.~\cite[Theorem 34.2]{refGLS}. 
Solutions of equation\ \eqref{eq2:qn0}, in terms of Bessel functions, are discussed in \cite{refPAC23,refHisakado}.

\item Equation \eqref{eq2:qn0} arises as a reduction of the two-dimensional complex sine-Gordon equation, also known as the {Pohlmeyer-Lund-Regge model}\ \cite{refLund77,refLR,refPohl},
in quantum systems \cite{refCFH05,refCH05,refCH08}, in relativity \cite{refGMS} and in coefficients of the three-term recurrence relation for orthogonal polynomials with respect to the weight $w(\theta)=\exp({t\cos\theta})$ on the unit circle, see \cite[equation (3.13)]{refWVAbk}. The orthogonal polynomials for this weight are related to unitary random matrices \cite{refPS}. 

\end{enumerate}
\end{remarks}

\subsection{\label{ssec:adPn}Second discrete equation}
Consider the \bt\ $\mathcal{R}_{2}=\T_{-1,-1,-1}\circ\T_{-1,1,-1}$ which has the inverse $\mathcal{R}_{2}^{-1}=\T_{1,1,1}\circ\T_{1,-1,1}$. {Then we have
\begin{align*}
\T_{-1,1,-1}(w) &=1+\frac{2zw}{zw'-aw^2+(a+b-z)w-b},\nonumber \\
\T_{-1,-1,-1}\circ\T_{-1,1,-1}(w) 
&=\frac{1}{w}\left\{1+\frac{(a-b-c-1)(w-1)^2}{\ds z w'-aw^2+(2a-c-1+z)w-(a-c-1)}\right\},\\
\T_{1,-1,1}(w) &=1-\frac{2zw}{zw'+aw^2-(a+b-z)w+b},\nonumber\\
\T_{1,1,1}\circ\T_{1,-1,1}(w) &=\frac{1}{w}\left\{1-\frac{(a-b-c+1)(w-1)^2}{\ds z w'+aw^2-(2a-c+1+z)w+(a-c+1)}\right\},
\end{align*}
with $'\equiv \rmd/\rmd z$, where we have used the fact that $w$ satisfies \PV\ \eqref{eq:pv} with 
$\pms{\hf a^2}{-\hf b^2}{c}$ to eliminate $w''$.
Also for the parameters $(a,b,c)$ we have
\begin{align*}
\T_{-1,1,-1}(a,b,c) &=\Big(\hf(a-b+c-1),-\hf(a-b-c-1),a+b\Big),\nonumber\\
\T_{-1,-1,-1}\circ\T_{-1,1,-1}(a,b,c) &= \Big(\hf\big(a+b-c-1\big),\hf\big(a+b+c+1\big),-a+b+1\Big),\\
\T_{1,-1,1}(a,b,c) &=\Big(\hf(a-b+c+1),-\hf(a-b-c+1),a+b\Big),\nonumber\\
\T_{1,1,1}\circ\T_{1,-1,1}(a,b,c) &=\Big(\hf\big(a+b-c+1\big),\hf\big(a+b+c-1\big),-a+b-1\Big).
\end{align*}
Consequently, if $w_{n}=w\big(z;\hf a_{n}^2,-\hf b_{n}^2,c_{n}\big)$ and $w_{n\pm1}=w\big(z;\hf a_{n\pm1}^2,-\hf b_{n\pm1}^2,c_{n\pm1}\big)$ then}
\begin{subequations}\label{bt:R2}\begin{align} 
w_{n+1}=\mathcal{R}_{2}(w_{n})&=\frac{1}{w_{n}}\left\{1+\frac{(a_{n}-b_{n}-c_{n}-1)(w_{n}-1)^2}{\ds z w_{n}'-a_{n}w_{n}^2+(2a_{n}-c_{n}-1+z)w_{n}-(a_{n}-c_{n}-1)}\right\},\label{bt:R2a}\\
w_{n-1}=\mathcal{R}_{2}^{-1}(w_{n})&=\frac{1}{w_{n}}\left\{1-\frac{(a_{n}-b_{n}-c_{n}+1)(w_{n}-1)^2}{\ds z w_{n}'+a_{n}w_{n}^2-(2a_{n}-c_{n}+1+z)w_{n}+(a_{n}-c_{n}+1)}\right\},\label{bt:R2b}
\end{align}\end{subequations}
where $a_{n}$, $b_{n}$ and $c_{n}$ satisfy 
\begin{align*} 
(a_{n+1},b_{n+1},c_{n+1})&=\mathcal{R}_{2}(a_{n},b_{n},c_{n}) 
=\Big(\hf\big(a_{n}+b_{n}-c_{n}-1\big),\hf\big(a_{n}+b_{n}+c_{n}+1\big),-a_{n}+b_{n}+1\Big),\\
(a_{n-1},b_{n-1},c_{n-1})&=\mathcal{R}_{2}^{-1}(a_{n},b_{n},c_{n}) 
=\Big(\hf\big(a_{n}+b_{n}-c_{n}+1\big),\hf\big(a_{n}+b_{n}+c_{n}-1\big),-a_{n}+b_{n}-1\Big),
\end{align*}
which have solution
\begin{align} \label{eq:abcn2}a_{n}&=-\hf\big[n+\la+\rho+(-1)^n\ph\big],\qquad
b_{n}=\hf\big[n+\la-\rho+(-1)^n\ph\big],\qquad
c_{n}=n+\la-(-1)^n\ph,
\end{align}
with $\la$, $\rho$ and $\ph$ arbitrary constants.
From \eqref{bt:R2} we obtain
\begin{subequations}\label{dPn:btsys}\begin{align}
\frac{a_{n}-b_{n}-c_{n}-1}{w_{n} w_{n+1}-1}&-\frac{z}{(w_{n}-1)^2}\deriv{w_{n}}{z}+a_{n}-\frac{z-c_{n}-1}{w_{n}-1}-\frac{z}{\left(w_{n}-1\right)^{2}}=0,\\
\frac{a_{n}-b_{n}-c_{n}+1}{w_{n} w_{n-1}-1}&+\frac{z}{(w_{n}-1)^2}\deriv{w_{n}}{z}+a_{n}-\frac{z-c_{n}+1}{w_{n}-1}-\frac{z}{\left(w_{n}-1\right)^{2}}=0,
\end{align}\end{subequations}
Hence adding
and using \eqref{eq:abcn2} 
gives the difference equation
\beq
\frac{2n+2\la+1}{w_{n}w_{n+1}-1}+\frac{2n+2\la-1}{w_{n}w_{n-1}-1}+n+\la+\rho+(-1)^n\ph-\frac{2\big[n+\la-(-1)^n\ph\big]}{w_{n}-1}+\frac{2z w_{n}}{(w_{n}-1)^2}=0,
\eeq
which is equivalent to equation (2.8) in \cite{refROG}.

Setting $w_{n}=(q_{n}+1)/(q_{n}-1)$ and $w_{n\pm1}=(q_{n\pm1}+1)/(q_{n\pm1}-1)$ in \eqref{bt:R2}
gives
\begin{subequations}\label{dPn:btsys2}\begin{align} 
\frac{2 n+2\la+1}{q_{n+1}+q_{n}}&-\frac{z}{1-q_{n}^2}\deriv{q_{n}}{z}-\hfz+\frac{\big[n+\la+(-1)^{n}\ph \big] q_{n}+\rho}{1-q_{n}^{2}}=0, \\
\frac{2 n+2\la-1}{q_{n}+q_{n-1}}&+\frac{z}{1-q_{n}^2}\deriv{q_{n}}{z}-\hfz+\frac{\big[n+\la+(-1)^{n}\ph \big] q_{n}+\rho}{1-q_{n}^{2}}=0.
\end{align}\end{subequations}
Hence by adding we obtain
\beq\frac{2 n+2\la+1}{q_{n+1}+q_{n}}+\frac{2 n+2\la-1}{q_{n}+q_{n-1}}=z-\frac{2\big[n+\la+(-1)^{n}\ph \big] q_{n}+2\rho}{1-q_{n}^{2}},
\label{eq:dPn}\eeq
and then letting $\xi_{n}=n+\la$ gives
\beq \frac{\xi_{n+1}+\xi_{n}}{q_{n+1}+q_{n}}+\frac{\xi_{n}+\xi_{n-1}}{q_{n}+q_{n-1}}=z-\frac{2\big[\xi_{n}+(-1)^{n}\ph \big] q_{n}+2\rho}{1-q_{n}^{2}}.\label{eq:dPn1}\eeq
This is equation (2.9) in \cite{refROG}, equation (3.25) in \cite{refTGR02} and equation (13) in \cite{refGRTTS}.
Following \cite{refROG}, we introduce a new variable $p_{n}$ through
\[ p_{n}=1-\frac{2}{z}\,\frac{\xi_{n+1}+\xi_{n}}{q_{n+1}+q_{n}},\]
and then \eqref{eq:dPn1} becomes
\begin{align}\label{eq:dPn2}
p_{n}+p_{n-1}&=\frac{4}{z}\,\frac{[\xi_{n}+(-1)^n\ph]q_{n}+\rho}{1-q_{n}^2},\qquad
q_{n+1}+q_{n}=\frac{2}{z}\,\frac{\xi_{n+1}+\xi_{n}}{1-p_{n}},
\end{align}
which is system (2.10) in \cite{refROG}. 
Ramani, Ohta and Grammaticos \cite{refROG} state that ``the system \eqref{eq:dPn2} describes a motion in the same parameter space as asymmetric \dPII\ which is associated with the $A_3$ afﬁne Weyl group. While asymmetric \dPII\ describes a motion along the axes, the geometry of \eqref{eq:dPn2} is more complicated: the associated motion is a `staircase’ one".
To our knowledge\footnote{This was confirmed in an email communication with Basil Grammaticos.}, there is no discussion in the literature as to which (continuous) \peq\ arises in the continuum limit for either \eqref{eq:dPn1} or \eqref{eq:dPn2}.

In order to consider the alternating term in \eqref{eq:dPn}, i.e.\ the binary symmetry, 
setting $q_{2n}=x_{n}$ and $q_{2n+1}=y_{n}$ yields the discrete system
\begin{subequations}\label {sys:dPn}
\begin{align} 
\frac{4n+2\la+3}{x_{n+1}+y_{n}}+\frac{4n+2\la+1}{x_{n}+y_{n}}&=z-\frac{2(2n+\la-\ph+1)y_{n}+2\rho}{1-y_{n}^2},\\
\frac{4n+2\la+1}{x_{n}+y_{n}}+\frac{4n+2\la-1}{x_{n}+y_{n-1}}&=z-\frac{2(2n+\la+\ph)x_{n}+2\rho}{1-x_{n}^2},
\end{align}\end{subequations}
and so if $\xi_{n}=n+\la$ then
\begin{align*} 
\frac{\xi_{2n+2}+\xi_{2n+1}}{x_{n+1}+y_{n}}+\frac{\xi_{2n+1}+\xi_{2n}}{x_{n}+y_{n}}&=z-\frac{2(\xi_{2n+1}-\ph)y_{n}+2\rho}{1-y_{n}^2},\\
\frac{\xi_{2n+1}+\xi_{2n}}{x_{n}+y_{n}}+\frac{\xi_{2n}+\xi_{2n-1}}{x_{n}+y_{n-1}}&=z-\frac{2(\xi_{2n}+\ph)x_{n}+2\rho}{1-x_{n}^2}.
\end{align*} 

{Further $q_{n}$ satisfies the second-order equation
\beq \deriv[2]{q_{n}}{z}=\frac{q_{n}}{q_{n}^2-1}\left(\deriv{q_{n}}{z}\right)^{\!2}-\frac{1}{z}\deriv{q_{n}}{z}-\frac{a_{n}^2(q_{n}+1)^2-b_{n}^2(q_{n}-1)^2}{z^2(q_{n}^2-1)}
-\frac{c_{n}(q_{n}^2-1)}{2z}+\tfrac{1}{4}q_{n}(q_{n}^2-1), \label{eq2:qn2}
\eeq
with $a_{n}$, $b_{n}$ and $c_{n}$ given by \eqref{eq:abcn2}.}

We note that
\[\mathcal{R}_{2}^2(a,b,c)=(a-1,b+1,c+2),\] and so \[\mathcal{R}_{2}^{2n}(a,b,c)=(a-n,b+n,c+2n).\]
Therefore since
\begin{align*}
(a_{0},b_{0},c_{0})&=\big(-\hf(\la+\rho+\ph),\hf(\la-\rho+\ph),\la-\ph\big),\\
(a_{1},b_{1},c_{1})&=\big(-\hf(\la+\rho-\ph+1),\hf(\la-\rho-\ph+1), \la+\ph+1\big).
\end{align*}
then
\begin{align*}
(a_{2n},b_{2n},c_{2n})&=\big(-\hf(2n+\la+\rho+\ph),\hf(2n+\la-\rho+\ph),2n+\la-\ph\big),\\
(a_{2n+1},b_{2n+1},c_{2n+1})&=\big(-\hf(2n+\la+\rho-\ph+1),\hf(2n+\la-\rho-\ph+1),2n+\la+\ph+1\big).
\end{align*}
If $w_{0}(z)$ satisfies \PV\ \eqref{eq:pv} with $\a=\tfrac{1}{8}(\la+\rho+\ph)^2$, $\b=-\tfrac{1}{8}(\la-\rho+\ph)^2$ and $\ga=\la-\ph$ then
\beq x_{n}=q_{2n}=\frac{\mathcal{R}_{2}^{2n}(w_{0})+1}{\mathcal{R}_{2}^{2n}(w_{0})-1},\qquad 
y_{n}=q_{2n+1}=\frac{\mathcal{R}_{2}^{2n+1}(w_{0})+1}{\mathcal{R}_{2}^{2n+1}(w_{0})-1},\eeq
satisfy \eqref {sys:dPn}.

{\begin{remark}{\rm Since the \bt\ $\mathcal{R}_{2}=\T_{-1,-1,-1}\circ\T_{-1,1,-1}$ involves a sequence of two \bts\ then it might be expected that the resulting transformation is quadratic in $\rmd w/\rmd z$. Generally if one applies the \bt\ $\T_{\ep_{1},\ep_{2},\ep_{3}}$ followed by the \bt\ $\T_{\sigma_{1},\sigma_{2},\sigma_{3}}$ then the resulting transformation is quadratic in $\rmd w/\rmd z$, but in some cases there is a simplification, which happens in this case.
}\end{remark}}

\subsection{\label{ssec:tdPI}Third discrete equation: ternary discrete \PI}
Consider the \bt\ $\mathcal{R}_{3}=\T_{-1,1,-1}\circ \T_{1,-1,-1}\circ\T_{1,-1,1}$, which has an inverse $\mathcal{R}_{3}^{-1}=\T_{-1,1,1}$, {then
\begin{align*}
\T_{1,-1,1}(w) &=1-\frac{2zw}{zw'+aw^2-(a+b-z)w+b}, \\
\T_{1,-1,-1}\circ\T_{-1,1,1}(w) &=\frac{1}{w}\left\{1-\frac{(a-b+c+1)(w-1)^2}{zw'+aw^2-(2a+c-z+1)w+b} \right\}, \\
\T_{-1,1,-1}\circ \T_{1,-1,-1}\circ\T_{1,-1,1}(w) &= 1-\frac{2zw}{zw'+aw^2-(a+b-z)w+b},\\
\T_{-1,1,1}(w) &=1+\frac{2zw}{zw'-aw^2+(a-b-z)w+b},
\end{align*}
with $'\equiv \rmd/\rmd z$, where we have used the fact that $w$ satisfies \PV\ \eqref{eq:pv} with $\pms{\hf a^2}{-\hf b^2}{c}$ to eliminate $w''$.
Also for the parameters $(a,b,c)$ we have
\begin{align*}
\T_{1,-1,1}(a,b,c) &=\big(\hf(a-b+c+1),-\hf(a-b-c+1),a+b \big), \\
\T_{1,-1,-1}\circ\T_{-1,1,1}(a,b,c) &= \big(\hf(a+b+c+1),\hf(a+b-c-1),a-b+1 \big), \\
\T_{-1,1,-1}\circ \T_{1,-1,-1}\circ\T_{1,-1,1}(a,b,c) &= \big(\hf(a-b+c+1),\hf(a-b-c+1),a+b \big), \\
\T_{-1,1,1}(a,b,c) &=\big(\hf(a+b+c-1),-\hf(a+b-c-1),a-b \big),
\end{align*}
Consequently, if $w_{n}=w\big(z;\hf a_{n}^2,-\hf b_{n}^2,c_{n}\big)$ and $w_{n\pm1}=w\big(z;\hf a_{n\pm1}^2,-\hf b_{n\pm1}^2,c_{n\pm1}\big)$ then}
\begin{subequations}\label{bt:R3}\begin{align} 
w_{n+1}&=\mathcal{R}_{3}(w_{n})=1-\frac{2z w_{n}}{\ds z w_{n}'+a_{n}w_{n}^2-(a_{n}+b_{n}-z)w_{n}+b_{n}},\label{bt:R3a}\\
w_{n-1}&=\mathcal{R}_{3}^{-1}(w_{n})=1+\frac{2z w_{n}}{\ds z w_{n}'-a_{n}w_{n}^2+(a_{n}-b_{n}-z)w_{n}+b_{n}},\label{bt:R3b}
\end{align}\end{subequations}
with $'\equiv \rmd/\rmd z$,
where $a_{n}$, $b_{n}$ and $c_{n}$ satisfy 
\begin{subequations}\label{bt:R3abc}\begin{align}
(a_{n+1},b_{n+1},c_{n+1})&=\mathcal{R}_{3}(a_{n},b_{n},c_{n})=\left(\hf\big(a_{n}-b_{n}+c_{n}+1\big),\hf\big(a_{n}-b_{n}-c_{n}+1\big), a_{n}+b_{n}\right),\\
(a_{n-1},b_{n-1},c_{n-1})&=\mathcal{R}_{3}^{-1}(a_{n},b_{n},c_{n})=\left(\hf\big(a_{n}+b_{n}+c_{n}-1\big),-\hf\big(a_{n}+b_{n}-c_{n}-1\big), a_{n}-b_{n}\right).
\end{align}\end{subequations}
These have solution
\begin{subequations}\label{eq:abcn3}
\begin{align} 
a_{n}&=\tfrac{1}{3}n+\rho+\la\cos\big(\tfrac{2}{3}\pi n\big)+\tfrac{1}{3}\sqrt{3}\,\ph\sin\big(\tfrac{2}{3}\pi n\big),\label{eq:abcn3a}\\ 
b_{n}&=\tfrac{1}{3}+\sqrt{3}\,\la\sin\big(\tfrac{2}{3}\pi n\big)-\ph\cos\big(\tfrac{2}{3}\pi n\big),\label{eq:abcn3b}\\
c_{n}&=\tfrac{1}{3}n+\rho-2\la\cos\big(\tfrac{2}{3}\pi n\big)-\tfrac{2}{3}\sqrt{3}\,\ph\sin\big(\tfrac{2}{3}\pi n\big),\label{eq:abcn3c}
\end{align}\end{subequations}
with $\la$, $\rho$ and $\ph$ arbitrary constants (the factor of $\sqrt{3}$ is for convenience), i.e. 
\[\begin{split}a_{3n}&=n+\la+\rho,\\ a_{3n+1}&=n-\hf\la+\hf\ph+\rho+\tfrac{1}{3},\\ a_{3n+2}&=n-\hf\la-\hf\ph+\rho+\tfrac{2}{3},\end{split}\qquad
\begin{split}b_{3n}&=-\ph+\tfrac{1}{3},\\ b_{3n+1}&=\tfrac{3}{2}\la+\hf\ph+\tfrac{1}{3},\\ b_{3n+2}&=-\tfrac{3}{2}\la+\hf\ph+\tfrac{1}{3},\end{split}\qquad
\begin{split}c_{3n}&=n-2\la+\rho,\\ c_{3n+1}&=n+\la-\ph+\rho+\tfrac{1}{3},\\ c_{3n+2}&=n+\la+\ph+\rho+\tfrac{2}{3}.\end{split}\]
which have a ternary symmetry in contrast to the binary symmetry for the parameters discussed in \S\ref{ssec:adPII} and \S\ref{ssec:adPn} above.

Setting $w_{n}=1+1/x_{n}$ and $w_{n\pm1}=1+1/x_{n\pm1}$ in \eqref{bt:R3} gives 
\begin{subequations}\label{tdP1:btsys}\begin{align} 
x_{n+1}&-\frac{1}{2x_{n}(x_{n}+1)}\deriv{x_{n}}{z}+\frac{a_{n}}{2z x_{n}}-\frac{b_{n}}{2z (x_{n}+1)}+\frac{1}{2}=0,\\
x_{n-1}&+\frac{1}{2x_{n}(x_{n}+1)}\deriv{x_{n}}{z}+\frac{a_{n}}{2z x_{n}}+\frac{b_{n}}{2z (x_{n}+1)}+\frac{1}{2}=0.
\end{align}\end{subequations}
Hence by adding we obtain 
\beq x_{n}(x_{n+1}+x_{n-1}+1)+\frac{a_{n}}{z} 
=0,\label{eq:tdPI}\eeq
which is known as the \textit{ternary \dPI\ equation} \cite{refGRP97,refTGR02}. We remark that this derivation of \eqref{eq:tdPI} was given by two of us (Clarkson and Mitchell, in collaboration with Dzhamay and Hone), see \cite{refCDHM}.

Since equation \eqref{eq:tdPI} has a ternary symmetry, then it can be written as a system of three equations.
Setting $x_{3n}=X_{n}$, $x_{3n+1}=Y_{n}$ and $x_{3n+2}=Z_{n}$ in \eqref{eq:tdPI} yields the discrete system
\begin{subequations}\label{sys:tertdPI}\begin{align} 
X_{n}(Y_{n}+Z_{n-1}+1)+\frac{a_{3n}}{z}&=0,\\
Y_{n}(Z_{n}+X_{n}+1)+\frac{a_{3n+1}}{z}&=0,\\
Z_{n}(X_{n+1}+Y_{n}+1)+\frac{a_{3n+2}}{z}&=0.
\end{align}\end{subequations}

Further $x_{n}$ satisfies the second-order equation
\beq\deriv[2]{x_{n}}{z}=
\frac{2x_{n}+1}{2x_{n}(x_{n}+1)}\left(\deriv{x_{n}}{z}\right)^{\!2}
-\frac{1}{z}\deriv{x_{n}}{z}-\frac{a_n^2(x_{n}+1)^2-b_n^2x_{n}^2}{2z^2x_{n}(x_{n}+1)} 
-\frac{c_nx_{n}(x_{n}+1)}{z}+\frac{x_{n}(x_{n}+1)(2x_{n}+1)}{2},
\label{eq2:xn}\eeq
with $a_{n}$, $b_{n}$ and $c_{n}$ given by \eqref{eq:abcn3}.

We note that
\[\mathcal{R}_{3}^3(a,b,c)=(a+1,b,c+1),\] and so \[\mathcal{R}_{3}^{3n}(a,b,c)=(a+n,b,c+n).\]
Therefore since
\begin{align*} (a_{0},b_{0},c_{0}) &=\big(\la+\rho,-\ph+\tfrac{1}{3},-2\la+\rho\big),\\
(a_{1},b_{1},c_{1}) &=\big(-\hf\la+\hf\ph+\rho+\tfrac{1}{3},\tfrac{3}{2}\la+\hf\ph+\tfrac{1}{3},\la-\ph+\rho+\tfrac{1}{3}\big), \\
(a_{2},b_{2},c_{2}) &=\big(-\hf\la-\hf\ph+\rho+\tfrac{2}{3},-\tfrac{3}{2}\la+\hf\ph+\tfrac{1}{3},\la+\ph+\rho+\tfrac{2}{3}\big),
\end{align*}
then
\begin{align*} (a_{3n},b_{3n},c_{3n}) &=\big(n+\la+\rho,-\ph+\tfrac{1}{3},n-2\la+\rho\big),\\
(a_{3n+1},b_{3n+1},c_{3n+1}) &=\big(n-\hf\la+\hf\ph+\rho+\tfrac{1}{3},\tfrac{3}{2}\la+\hf\ph+\tfrac{1}{3},n+\la-\ph+\rho+\tfrac{1}{3}\big), \\
(a_{3n+2},b_{3n+2},c_{3n+2}) &=\big(n-\hf\la-\hf\ph+\rho+\tfrac{2}{3},-\tfrac{3}{2}\la+\hf\ph+\tfrac{1}{3},n+\la+\ph+\rho+\tfrac{2}{3}\big).
\end{align*}
If $w_{0}$ satisfies \PV\ \eqref{eq:pv} with $\a=\hf(\la+\rho)^2$, $\b=-\hf(\ph-\tfrac{1}{3})^2$ and $\ga=-2\la+\rho$ then
\[ X_{n}=x_{3n}=\frac{1}{\mathcal{R}_{3}^{3n}(w_{0})-1},\qquad
Y_{n}=x_{3n+1}=\frac{1}{\mathcal{R}_{3}^{3n+1}(w_{0})-1},\qquad
Z_{n}=x_{3n+2}=\frac{1}{\mathcal{R}_{3}^{3n+2}(w_{0})-1},
 \] 
 satisfy \eqref{sys:tertdPI}.
 
\begin{remarks}{\rm
\begin{enumerate}[(i)]\item[]
\item In terms of $w$, the transformation \eqref{bt:R3a} is the same as $\T_{1,-1,1}(w)$. 
However, the parameters $(a,b,c)$ map differently since
\[\T_{1,-1,1}(a,b,c)=\big(\hf(a-b+c-1),\hf(-a+b+c-1), a+b\big). \] 
It is straightforward to show that 
$\T_{1,-1,1}\circ\T_{-1,1,1}(w)=w$, whereas $\T_{-1,1,1}\circ\T_{1,-1,1}(w)\not=w$, 
so $\T_{1,-1,1}$ is \textit{not} the inverse of $\T_{-1,1,1}$.

\item It is interesting to note that given a solution $w_{0}(z)$ of \PV\ \eqref{eq:pv} with $\pmsn{0}{\hf a_{0}^2}{-\hf b_{0}^2}{c_{0}}$, the transformation $\mathcal{R}_{1}$ \eqref{bt:R1a} maps it to the solution $w_{1}(z)$ with $\pmsn{1}{\hf a_{1}^2}{-\hf b_{1}^2}{c_{1}}$ and the transformation $\mathcal{R}_{3}$ \eqref{bt:R3a} maps it to the solution $w_{3}(z)$ with $\pmsn{3}{\hf a_{3}^2}{-\hf b_{3}^2}{c_{3}}$ where
\[ w_{1}(z)=w_{3}(z),\qquad a_{1}=a_{3},\qquad b_{1}=-b_{3},\qquad c_{1}=c_{3},\]
i.e.\ both transformations map $w_{0}(z)$ to the \textit{same} solution of \PV. However applying the transformation a second time gives different solutions since
\begin{align*}\mathcal{R}_{1}^2(a_{0},b_{0},c_{0})&=(a_{0}+1,b_{0}-1, c_{0}),\\ 
\mathcal{R}_{3}^2(a_{0},b_{0},c_{0})&=\big(\hf(a_{0}+b_{0}+c_{0}+1),-\hf(a_{0}+b_{0}-c_{0}-1), a_{0}-b_{0}+1\big),\end{align*}
so the transformations $\mathcal{R}_{1}$ \eqref{bt:R1} and $\mathcal{R}_{3}$ \eqref{bt:R3} will generate \textit{different} hierarchies of solutions.

\item The symmetric case of equation \eqref{eq:tdPI}, i.e.\ when $\la=\ph=0$, 
 \beq x_{n}(x_{n+1}+x_{n-1}+1)+\frac{\rho+\tfrac{1}{3}n}{z}=0,\label{eq:tdPI0}\eeq
is a discrete \PI\ \cite[equation (3.6)]{refRG96}, whilst the full equation, which was derived from a discrete dressing transformation \cite{refGRP97}, is a discrete form of \PIV\ \eqref{eq:piv}.

\item The ternary \dPI\ \eqref{eq:tdPI} is associated with \PV\ \eqref{eq:pv}, though the asymmetric generalisation of \dPI\ \eqref{eq:dPI}, i.e.
\beq\label{eq:adPI} x_{n+1}+x_{n}+x_{n-1}=1+\frac{\la n+\rho+(-1)^n\ph}{x_{n}}, \eeq 
with $\la$, $\rho$ and $\ph$ constants, is associated with \PIV\ \eqref{eq:piv} cf.~\cite{refFIK91,refMagnus95}, and can be derived from \bk\ transformations of \PIV\ 
\cite{refCMW,refFGR}. We remark that the continuum limit of \eqref{eq:adPI} with $\ph\not=0$ is \PII\ \eqref{eq:p2} rather than \PI\ \eqref{eq:p1}, as shown in \cite{refCJ99}.

\item Recently Clarkson \etal\ \cite{refCDHM} studied the special case of the ternary \dPI\ equation given by 
\beq v_{n}(v_{n+1}+v_{n-1}+1)=(n+1)\ep,\qquad \ep>0, \label{tdpI} \eeq
with $v_{-1}=0$, which arises in quantum minimal surfaces \cite{refAHK,refHoppe} and found a unique solution. The expressions for $v_{3n}$, $v_{3n+1}$ and $v_{3n+2}$ were written explicitly as ratios of Wronskian determinants in terms of the modified Bessel functions 
$K_{1/6}(z)$ and $K_{5/6}(z)$ and were all different showing the three-fold structure of the solutions.
\item Felder and Hoppe \cite{refFH} show that ternary \dPI\ equation \eqref{tdpI} also arises in connection with orthogonal polynomials in the complex plane with respect to the weight
\[w(z;t)=\exp\left\{-t|z|^2+\tfrac{1}{3}\rmi\big(z^3+\overline{z}^3\big)\right\},\qquad t>0,\]
with $z=x+\rmi y\in\Com$, and has the \textit{same} unique solution as in \cite{refCDHM}.
Teodorescu \etal\ \cite{refTBAZW}, see also \cite[equation (26)]{refTeo}, show that 
equation \eqref{tdpI} arises in the theory of random normal matrices \cite{refWZ}, which motivated the study in \cite{refFH}.
\item A study of the asymptotics of solutions of \eqref{tdpI} is given by Aptekarev and Novokshenov \cite{refAN}.
\end{enumerate}}\end{remarks}

\subsection{\label{ssec:tdPI2}Fourth discrete equation: new equation with ternary symmetry}
It is possible to derive a discrete equation relating the three solutions $x_{n}$, $x_{n+2}$ and $x_{n-2}$ of the ternary \dPI\ equation \eqref{eq:tdPI} as illustrated in the following Lemma.
\begin{lemma} Suppose that $x_{n}$, $x_{n+2}$ and $x_{n-2}$ are solutions of the ternary \dPI\ equation \eqref{eq:tdPI},
then they satisfy the discrete equation
\beq \frac{a_{n+1}}{x_{n+2}+x_{n}+1}+\frac{a_{n-1}}{x_{n}+x_{n-2}+1}=z+\frac{a_{n}}{x_{n}}, \label{eq:tdPInew}\eeq
where $a_{n}$ is given by \eqref{eq:abcn3a}, i.e.
\beq a_{n}=\tfrac{1}{3}n+\rho+\la\cos\big(\tfrac{2}{3}\pi n\big)+\tfrac{1}{3}\sqrt{3}\,\ph\sin\big(\tfrac{2}{3}\pi n\big),\label{def:an}\eeq
with $\la$, $\rho$ and $\ph$ arbitrary constants.
\end{lemma}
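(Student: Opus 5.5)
The identity \eqref{eq:tdPInew} should follow purely algebraically from three consecutive instances of the ternary \dPI\ equation \eqref{eq:tdPI}, without returning to the \bts\ $\mathcal{R}_3$. The strategy is to use \eqref{eq:tdPI} at the two neighbouring indices to express $x_{n+1}$ and $x_{n-1}$ through $x_{n+2},x_n,x_{n-2}$. Then \eqref{eq:tdPI} at index $n$ eliminates $x_{n+1}$ and $x_{n-1}$ altogether. Here $x_n$, $x_{n\pm2}$ are understood as members of one solution $\{x_m\}$ of \eqref{eq:tdPI}, with $a_m$ given by \eqref{def:an}.

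First, write \eqref{eq:tdPI} with $n$ replaced by $n+1$:
\[
x_{n+1}(x_{n+2}+x_n+1)=-\frac{a_{n+1}}{z}.
\]
Assuming $x_{n+2}+x_n+1\neq0$, this gives
\[
\frac{a_{n+1}}{x_{n+2}+x_n+1}=-z\,x_{n+1}.
\]
Likewise, with $n$ replaced by $n-1$, and assuming $x_n+x_{n-2}+1\neq0$, we get
\[
\frac{a_{n-1}}{x_n+x_{n-2}+1}=-z\,x_{n-1}.
\]
Adding these, the left-hand side of \eqref{eq:tdPInew} equals $-z(x_{n+1}+x_{n-1})$.

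Second, use \eqref{eq:tdPI} at index $n$ itself, divided by $x_n$ (assumed nonzero):
\[
x_{n+1}+x_{n-1}=-1-\frac{a_n}{z x_n}.
\]
Substituting this gives
\[
-z(x_{n+1}+x_{n-1})=z+\frac{a_n}{x_n},
\]
which is exactly the right-hand side of \eqref{eq:tdPInew}. The coefficients $a_{n\pm1}$ and $a_n$ are the same sequence \eqref{eq:abcn3a} at shifted indices, so \eqref{def:an} holds automatically.

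There is no real obstacle; the only point needing care is the genericity assumptions. These are $x_n\neq0$, and that the two denominators $x_{n\pm2}+x_n+1$ do not vanish identically. If they did vanish, \eqref{eq:tdPI} at $n\pm1$ would force $a_{n\pm1}=0$, so for generic parameters $\la,\rho,\ph$ this cannot happen. I would state these nondegeneracy conditions explicitly alongside the computation. Finally, I would note that \eqref{eq:tdPInew} relates $x_n$, $x_{n\pm2}$, i.e.\ it corresponds to the step $\mathcal{R}_4=\mathcal{R}_3^2$.
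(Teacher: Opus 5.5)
Your argument is correct and is essentially the paper's proof: both use \eqref{eq:tdPI} at indices $n-1$, $n$ and $n+1$ to eliminate $x_{n\pm1}$. The paper solves for $x_{n+1}$ in two ways and equates them, while you sum $\frac{a_{n\pm1}}{\cdots}=-z\,x_{n\pm1}$ and substitute the equation at index $n$; your explicit nondegeneracy conditions ($x_n\neq0$, and $a_{n\pm1}\neq0$ so that the denominators cannot vanish) are ones the paper leaves implicit.
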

\begin{proof} Letting $n\to n-1$ in equation \eqref{eq:tdPI} gives 
\[x_{n-1}(x_{n}+x_{n-2}+1)+\frac{a_{n-1}}{z}=0.\]
Solving this equation and \eqref{eq:tdPI} for $x_{n+1}$ and $x_{n-1}$ gives
\[ x_{n+1}=\frac{a_{n-1}}{z(x_{n}+x_{n-2}+1)}-1-\frac{a_{n}}{zx_{n}},\qquad x_{n-1}=-\frac{a_{n-1}}{z(x_{n}+x_{n-2}+1)}.\]
Letting $n\to n+2$ in 
the expression for $x_{n-1}$ gives
\[ x_{n+1}=-\frac{a_{n+1}}{z(x_{n+2}+x_{n}+1)}.\]
Equating the two expressions for $x_{n+1}$ gives the result.
\end{proof}

An alternative proof is to consider the \bt\ $\mathcal{R}_{4}=\mathcal{R}_{3}^2$, with $\mathcal{R}_{4}^{-1}=\T_{-1,1,1}^2$, then
\begin{subequations}\label{bt:R32}\begin{align} 
w_{n+2}=\mathcal{R}_{4}(w_{n})&=\frac{1}{w_{n}}\left\{1+\frac{(a_{n}-b_{n}+c_{n}+1)(w_{n}-1)^2}{\ds zw_{n}'+a_{n}w_{n}^2-(2a_{n}+c_{n}+1-z)w_{n}+a_{n}+c_{n}+1}\right\},\\
w_{n-2}=\mathcal{R}_{4}^{-1}(w_{n})&=\frac{1}{w_{n}}\left\{1+\frac{(a_{n}+b_{n}+c_{n}-1)(w_{n}-1)^2}{\ds zw_{n}'-a_{n}w_{n}^2+(2a_{n}+c_{n}-1-z)w_{n}-a_{n}-c_{n}+1}\right\},
\end{align}\end{subequations}
with $'\equiv \rmd/\rmd z$ and $a_{n}$, $b_{n}$ and $c_{n}$ are given by \eqref{eq:abcn3}. Letting $w_{n}=1+1/x_{n}$ and $w_{n\pm2}=1+1/x_{n\pm2}$ gives 
\begin{subequations}\label{bt:R32xn}\begin{align} 
\frac{a_{n}-b_{n}+c_{n}+1}{x_{n+2}+x_{n}+1}&=z-\frac{z}{x_{n}(x_{n}+1)}\deriv{x_{n}}{z}+\frac{(a_n-b_n)x_{n}+a_{n}}{x_{n}(x_{n}+1)},\\
\frac{a_{n}+b_{n}+c_{n}-1}{x_{n}+x_{n-2}+1}&=z+\frac{z}{x_{n}(x_{n}+1)}\deriv{x_{n}}{z}+\frac{(a_n+b_n)x_{n}+a_{n}}{x_{n}(x_{n}+1)},
\end{align}\end{subequations}
and so by adding we obtain
\[\frac{a_{n+1}}{x_{n+2}+x_{n}+1}+\frac{a_{n-1}}{x_{n}+x_{n-2}+1}=z+\frac{a_{n}}{x_{n}},\]
since from \eqref{bt:R3abc}, $a_{n+1}=\hf(a_{n}-b_{n}+c_{n}+1)$ and $a_{n-1}=\hf(a_{n}+b_{n}+c_{n}-1)$.
The parameters $\wt{a}_{n}$, $\wt{b}_{n}$ and $\wt{c}_{n}$ satisfy
\[\begin{split} 
\wt{a}_{n+1}&=\hf(\wt{a}_{n}+\wt{b}_{n}+\wt{c}_{n}+1),\\
\wt{b}_{n+1}&=-\hf(\wt{a}_{n}+\wt{b}_{n}-\wt{c}_{n}-1),\\
\wt{c}_{n+1}&=\wt{a}_{n}-\wt{b}_{n}+1,
\end{split} \qquad\quad
\begin{split} 
\wt{a}_{n-1}&=\hf(\wt{a}_{n}-\wt{b}_{n}+\wt{c}_{n}-1),\\
\wt{b}_{n-1}&=\hf(\wt{a}_{n}-\wt{b}_{n}-\wt{c}_{n}+1),\\
\wt{c}_{n-1}&=\wt{a}_{n}+\wt{b}_{n}-1,
\end{split}\]
which have solution
\begin{align*} 
\wt{a}_{n}&=\tfrac{2}{3}n+\rho+\la\cos\big(\tfrac{2}{3}\pi n\big)+\tfrac{1}{3}\sqrt{3}\,\ph\sin\big(\tfrac{2}{3}\pi n\big),\\
\wt{b}_{n}&=\tfrac{1}{3}-\sqrt{3}\,\la\sin\big(\tfrac{2}{3}\pi n\big)+\ph\cos\big(\tfrac{2}{3}\pi n\big),\\
\wt{c}_{n}&=\tfrac{2}{3}n+\rho-2\la\cos\big(\tfrac{2}{3}\pi n\big)-\tfrac{2}{3}\sqrt{3}\,\ph\sin\big(\tfrac{2}{3}\pi n\big),
\end{align*}
with $\la$, $\rho$ and $\ph$ arbitrary constants.
We remark that the \bt\ \eqref{bt:R32} has similar structure to the \bt\ $\mathcal{R}_{2}$ \eqref{bt:R2}, except the parameters have a ternary symmetry rather than a binary symmetry. We note that since $\mathcal{R}_{3}=\T_{-1,1,-1}\circ \T_{1,-1,-1}\circ\T_{1,-1,1}$, then
\begin{align*}
\mathcal{R}_{4}=\mathcal{R}_{3}^2
&=\T_{-1,1,-1}\circ \T_{1,-1,-1}\circ\T_{1,-1,1}\circ \T_{-1,1,-1}\circ \T_{1,-1,-1}\circ\T_{1,-1,1}\\
&=\T_{-1,1,-1}\circ \T_{1,-1,-1}\circ \T_{1,-1,-1}\circ\T_{1,-1,1},
\end{align*}
as $\T_{1,-1,1}\circ \T_{-1,1,-1}=\mathcal{I}$, the identity.
\begin{corollary}
If $x_{n}$ satisfies ternary \dPI\ equation \eqref{eq:tdPI} and $a_{n}$ is given by \eqref{def:an}, then $q_{n}=x_{2n}$ and $p_{n}=x_{2n+1}$ satisfy
\begin{subequations}\label{eq:tdPInew12}\begin{align} 
\frac{a_{2n+1}}{q_{n+1}+q_{n}+1}+\frac{a_{2n-1}}{q_{n}+q_{n-1}+1}&=z+\frac{a_{2n}}{q_{n}}, \label{eq:tdPInew1}\\
\frac{a_{2n+2}}{p_{n+1}+p_{n}+1}+\frac{a_{2n}}{p_{n}+p_{n-1}+1}&=z+\frac{a_{2n+1}}{p_{n}}, \label{eq:tdPInew2}
\end{align}\end{subequations}
respectively. Further $X_{n}=x_{3n}$, $Y_{n}=x_{3n+1}$ and $Z_{n}=x_{3n+2}$ satisfy
 the system
\begin{subequations}\label{eq:tdPInew3}\begin{align}
\frac{a_{3n+1}}{X_{n}+Z_{n}+1}+\frac{a_{3n-1}}{X_{n}+Y_{n-1}+1}
 &=z+\frac{a_{3n}}{X_{n}},\\
 \frac{a_{3n+2}}{X_{n+1}+Y_{n}+1}+\frac{a_{3n}}{Y_{n}+Z_{n-1}+1}
 &=z+\frac{a_{3n+1}}{Y_{n}},\\
 \frac{a_{3n+3}}{Y_{n+1}+Z_{n}+1}+\frac{a_{3n+1}}{X_{n}+Z_{n}+1}
 &=z+\frac{a_{3n+2}}{Z_{n}}.
\end{align}\end{subequations}
\end{corollary}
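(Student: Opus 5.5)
The corollary is a direct reindexing of the preceding Lemma, so the plan is to substitute into \eqref{eq:tdPInew}, which holds for every integer $n$ whenever $x_{n}$ satisfies \eqref{eq:tdPI}, with the parameters $a_{n}$ given by \eqref{def:an}. No further use of the \bts\ is needed. The only content is choosing the right shifts of $n$ and tracking the indices of $a$.

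For \eqref{eq:tdPInew12}, first replace $n$ by $2n$ in \eqref{eq:tdPInew}. This gives
\[\frac{a_{2n+1}}{x_{2n+2}+x_{2n}+1}+\frac{a_{2n-1}}{x_{2n}+x_{2n-2}+1}=z+\frac{a_{2n}}{x_{2n}}.\]
Since $q_{n}=x_{2n}$, we have $x_{2n\pm2}=q_{n\pm1}$, and this is exactly \eqref{eq:tdPInew1}. Next replace $n$ by $2n+1$. With $p_{n}=x_{2n+1}$ we have $x_{2n+3}=p_{n+1}$ and $x_{2n-1}=p_{n-1}$. The coefficients become $a_{2n+2}$, $a_{2n}$ and $a_{2n+1}$, which gives \eqref{eq:tdPInew2}. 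The key point is that \eqref{eq:tdPInew} couples only indices of equal parity, which is why it decouples into even and odd chains.

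For \eqref{eq:tdPInew3}, replace $n$ in turn by $3n$, $3n+1$ and $3n+2$, and rewrite every $x_{m}$ using $X,Y,Z$ according to $m \bmod 3$.
\begin{itemize}
\item For $n\to3n$: $x_{3n+2}=Z_{n}$ and $x_{3n-2}=x_{3(n-1)+1}=Y_{n-1}$, with coefficients $a_{3n+1}$, $a_{3n-1}$, $a_{3n}$.
\item For $n\to3n+1$: $x_{3n+3}=X_{n+1}$ and $x_{3n-1}=Z_{n-1}$, giving the second equation.
\item For $n\to3n+2$: $x_{3n+4}=Y_{n+1}$ and $x_{3n}=X_{n}$, giving the third.
\end{itemize}
The sums in the denominators are symmetric, so the order of terms such as $X_{n}+Z_{n}$ matches the stated form.

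There is no real obstacle here; the only care needed is the index bookkeeping. In particular one must leave the coefficients $a_{m}$ in their unreduced form $a_{3n-1}$, $a_{3n+3}$ and so on, as the statement does, rather than reducing them through the ternary structure of \eqref{def:an}. One should also check that each denominator's indices are correctly identified modulo $3$, for example $3n-2\equiv1$ and $3n-1\equiv2$.
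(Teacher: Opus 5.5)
Your proposal is correct and follows the paper's own proof: substitute $n\to2n$ and $n\to2n+1$, then $n\to3n,3n+1,3n+2$, in \eqref{eq:tdPInew} and relabel. Your index bookkeeping (e.g.\ $x_{3n-2}=Y_{n-1}$, $x_{3n-1}=Z_{n-1}$, $x_{3n+4}=Y_{n+1}$) checks out against each of the stated equations.
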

\begin{proof} Letting $n\to2n$ and $n\to2n+1$ in \eqref{eq:tdPInew} and then setting $q_{n}=x_{2n}$ and $p_{n}=x_{2n+1}$, respectively, gives \eqref{eq:tdPInew12}. Similarly for the system \eqref{eq:tdPInew3}, which is equivalent to the system \eqref{sys:tertdPI}. 
\end{proof}

\subsection{\label{ssec:diffdiff}Differential-difference equations}
In the previous subsections, we have obtained discrete equations by eliminating the derivative between two equations, i.e.\ by adding the equations in \eqref{dP2:btsys}, \eqref{dPn:btsys} and \eqref{tdP1:btsys}. 
If instead we subtract the equations then we obtain differential-difference equations.

\begin{enumerate}[(i)]
\item Subtracting the equations in \eqref{dP2:btsys2} gives the \textit{modified Volterra lattice} \cite{refWadati}
\[\deriv{q_{n}}{z}=\tfrac{1}{4}(q_{n}^2-1)(q_{n+1}-q_{n-1}).\]
which is a special case of equation \eqref{eq:V1} below. 
\item Subtracting the equations in \eqref{dPn:btsys2} gives
\[ z\deriv{q_{n}}{z}=\hf(1-q_{n}^2)\left(\frac{2 n+2\la+1}{q_{n+1}+q_{n}}-\frac{2 n+2\la-1}{q_{n}+q_{n-1}}\right).\]
which is a non-autonomous special case of equation \eqref{eq:V2} below.

\item Subtracting the equations in \eqref{tdP1:btsys} gives
\[ z\deriv{x_{n}}{z}=z x_{n}(x_{n}+1)(x_{n+1}-x_{n-1})-{b_{n} x_{n}},\]
with $b_{n}$ given by \eqref{eq:abcn3b}.

\item Subtracting the equations in \eqref{bt:R32xn} gives
\[ z\deriv{x_{n}}{z}= -x_{n}(x_{n}+1)\left(\frac{a_{n+1}}{x_{n+2}+x_{n}+1} -\frac{a_{n-1}}{x_{n}+x_{n-2}+1}\right)-b_{n} x_{n},\]
with $a_{n}$ and $b_{n}$ given by \eqref{eq:abcn3}. Consequently {setting $x_{2n}=\hf(q_{n}-1)$ and $x_{2n+1}=\hf(p_{n}-1)$ gives}
\begin{align*} 
z\deriv{q_{n}}{z}&= (1-q_{n}^2)\left(\frac{a_{2n+1}}{q_{n+1}+q_{n}} -\frac{a_{2n-1}}{q_{n}+q_{n-1}}\right)- b_{2n} (q_{n}-1),\\
z\deriv{p_{n}}{z}&= (1-p_{n}^2)\left(\frac{a_{2n+2}}{p_{n+1}+p_{n}} -\frac{a_{2n}}{p_{n}+p_{n-1}}\right)- b_{2n+1} (p_{n}-1),
\end{align*}
which are non-autonomous special cases of equation \eqref{eq:V2} below.
\end{enumerate}

\begin{remark}{\rm 
Yamilov \cite{refYam} gives a classification of (autonomous) Volterra-type equations. The first two are:
\begin{align}
\deriv{u_{n}}{z}&=P(u_{n})(u_{n+1}-u_{n-1}),\label{eq:V1}\\
\deriv{u_{n}}{z}&=P(u_{n}^2)\left(\frac{1}{u_{n+1}+u_{n}}-\frac{1}{u_{n}+u_{n-1}}\right),\label{eq:V2}
\end{align}
where $P(u)=a u^2+b u+c$, with $a$, $b$ and $c$ constants.
}\end{remark}

\section{\label{sec:PVrats}Rational solutions of \PV}
In this section we describe the rational solutions of \PV\ \eqref{eq:pv},
which are classified in the following theorem.
\begin{theorem}{\rm \label{thm:KLM}
\textit{Equation \eqref{eq:pv} has a rational solution if and only if one of the following holds:}
\begin{enumerate}[(i)]
\item\textit{$\a=\hf m^2$, $\b=-\hf(m+2n+1+\k)^2$, $\ga=\k$, for $m\geq1$;}
\item\textit{$\a=\hf(m+\k)^2$, $\b=-\hf(n+\ep\k)^2$, $\ga=m+\ep n$, with $\ep=\pm1$, provided that $m\not=0$ or $n\not=0$;}
\item\textit{$\a=\hf(m+\hf)^2$, $\b=-\hf(n+\hf)^2$, $\ga=\k$, provided that $m\not=0$ or $n\not=0$,}
\end{enumerate}
\textit{where $m,n\in\Integer$ and $\k$ is an arbitrary constant,
together with the solutions obtained through the symmetries 
\begin{align}
\label{PV:S1}\mathcal{S}_{1}:\qquad &{w_{1}}({z};{\a_{1}}, {\b_{1}}, {\ga_{1}},-\hf)
={w(-z;\a,\b,\ga,-\hf)},&&({\a_{1}}, {\b_{1}}, {\ga_{1}},-\hf)=(\a,\b,-\ga,-\hf),\\
\label{PV:S2}\mathcal{S}_{2}:\qquad &{w_{2}}({z};{\a_{2}}, {\b_{2}}, {\ga_{3}},-\hf)
=\frac{1}{w(z;\a,\b,\ga,-\hf)},&&({\a_{2}}, {\b_{2}}, {\ga_{3}},-\hf)=(-\b,-\a,-\ga,-\hf),
\end{align}
where $w(z;\a,\b,\ga,-\hf)$ is a solution of \PV\ \eqref{eq:pv}.}
}\end{theorem}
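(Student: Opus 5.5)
This theorem is the Kitaev--Law--McLeod classification of rational solutions of \PV. Its proof has two directions: necessity (any rational solution forces the parameters into (i)--(iii) up to $\mathcal{S}_1,\mathcal{S}_2$) and sufficiency (each listed parameter set admits a rational solution). I would treat them separately.

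\emph{Necessity.} I would start from local analysis. Rewrite \eqref{eq:pv} as $z^2w''=\dots$ and study a rational solution $w$ near $z=\infty$, near $z=0$, and at finite nonzero poles or points where $w=1$.

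At $z=\infty$, because $\de=-\hf$, a balance argument shows that $w$ has an expansion $w\to-1$, or $w\sim \kappa z^{\pm1}$ terms. Concretely, either $w(\infty)=-1$ with $w=-1+\sum_{k\ge1}c_kz^{-k}$, or $w$ has a pole or zero at infinity of controlled order. Substituting these expansions fixes leading coefficients in terms of $\ga$.

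At $z=0$, a rational $w$ behaves like $w\sim w_0z^{r}$. The indicial analysis gives $r=\pm a$, $\pm b$ (up to sign choices), or $w(0)$ finite with a relation among $a,b$.

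At finite $z_0\neq0$, poles of $w$ are simple with residue fixed by the equation. The same holds for the points where $w=1$ or $w=0$. This gives the usual Painlev\'e-type local data.

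Next I would use residue counting. Consider the logarithmic derivative, or better the Hamiltonian $H$ of Jimbo--Miwa/Okamoto, which is rational whenever $w$ is. Summing residues of $H$ (or of $w'/w$ and $w'/(w-1)$) over all finite points, and comparing with the exponents at $0$ and $\infty$, yields integrality conditions. Specifically, the number of zeros minus poles is expressed through $\pm a\pm b$ and $\ga$. This forces $a$, $b$, $c$ to satisfy conditions of the shape $a\in\Integer$, or $a+b\in\Integer$ with $c$ related, or $a,b\in\Integer+\hf$. These are exactly the three families once the symmetries $\mathcal{S}_1$ and $\mathcal{S}_2$ are used to normalise signs.

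\emph{Sufficiency.} I would use the \bts\ $\T_{\ep_1,\ep_2,\ep_3}$ of \S\ref{sec:bts} acting on simple seed solutions. For (ii), the seed is $w=-1$, which solves \eqref{eq:pv} when $\a=-\b$, $\ga=0$ (i.e.\ $a=\pm b$). Acting with the translation lattice generated by the $\T$'s (e.g.\ $\mathcal{R}_1^2:(a,b,c)\mapsto(a+1,b-1,c)$, together with its analogues) reaches all parameters $a=m+\k$, $b=n+\ep\k$, $c=m+\ep n$. For (iii), the seed is $w=\k z+\dots$, or more simply a Riccati-free seed such as $w=1+\dots$ at half-integers; equivalently one can use the Umemura-type seed. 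For (i), the seed is the Laguerre-type family generated from $w=1+z/\k$-type solutions.

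The \bts\ preserve rationality, since each is a rational expression in $w$, $w'$, $z$. One must check that the denominators $\Phi$ in \eqref{eq:PT.BTs.P5a} never vanish identically along the chosen paths. The excluded cases $m=n=0$ are exactly those where $\Phi\equiv0$ or the chain collapses.

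\emph{Main obstacle.} The hard part is the necessity direction: making the residue/exponent bookkeeping exhaustive. In particular, the case where $w(\infty)$ is finite but the expansion at $0$ involves the non-generic exponent must be handled, as must the case where $\a=0$ or $\b=0$ (so $a$ or $b$ vanishes and $\T$ degenerates). I would first try to reduce all cases, via $\mathcal{S}_1$, $\mathcal{S}_2$ and \bts, to a fundamental region of the affine Weyl group $A_3^{(1)}$. There I would show directly that a rational solution must be one of the seeds. This is the approach of Kitaev--Law--McLeod, and I would follow it.
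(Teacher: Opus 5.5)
\textbf{Verdict: genuine gap in the necessity direction.} The paper does not prove this theorem; it cites Kitaev--Law--McLeod (Theorem~1.1) and Gromak--Laine--Shimomura (Theorem~40.3). Your outline therefore has to carry the whole proof, and its necessity half fails at the one step meant to produce information.

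The residues of $w'/w$ and $w'/(w-1)$ at every point of the Riemann sphere, including $0$ and $\infty$, are orders of $w$ or $w-1$. These are integers independent of $\a,\b,\ga$, so the residue theorem applied to them says nothing.

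Nor do exponents $\pm a,\pm b$ occur at $z=0$ in general. If $\a\b\neq0$, write the equation as $z^2w''=\cdots+(w-1)^2(\a w^2+\b)/w+\cdots$. The term $\b/w$ cannot be balanced when $w(0)=0$, and $\a w^3$ cannot be balanced when $w(0)=\infty$. So a rational solution has either $w(0)=\pm b/a$ (from $\a w(0)^2+\b=0$) or $w=1+z/\ga+O(z^2)$. Power behaviour $z^{\pm a}$, $z^{\pm b}$ arises only when $\b=0$ or $\a=0$. Hence your claim that zeros minus poles is \emph{expressed through $\pm a\pm b$ and $\ga$} is false for generic parameters.

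A count that does work uses functions whose residues at the simple movable points are fixed constants, but whose residues at $0$ and $\infty$ involve the parameters:
\begin{itemize}
\item at a pole $z_0\neq0$, $w\approx\pm z_0/(a(z-z_0))$, so $\mathrm{Res}\,(w/z)=\pm1/a$;
\item at a zero, $w\approx\pm bz_0(z-z_0)$, so $\mathrm{Res}\,(z/w)=\pm1/b$;
\item at a $1$-point, $w-1\approx\pm(z-z_0)$, so $\mathrm{Res}\,1/(w-1)=\pm1$.
\end{itemize}
Combine these with the admissible behaviours at $\infty$ ($w=-1-4\ga/z+\cdots$, $w\sim\pm b/z$, or $w\sim\pm z/a$) and at $0$. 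For example, when $w(\infty)=-1$ and $w(0)=\pm b/a$ this gives $a\pm b\in\Integer$ and $\ga\in\Integer$.

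Even so, these are only linear necessary conditions. The parity constraints, the restriction $m\geq1$ in (i), and the exclusions need a further argument, which is the substance of the cited proof. Your step ``reduce to a fundamental region and show the solution is a seed'' names that step but does not carry it out.

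\textbf{Sufficiency is reasonable in outline but incomplete, and your account of the exclusions is wrong.}
\begin{itemize}
\item The seed for (iii) is never specified. A concrete one is $w=-(z+2\ga)/(z-2\ga)$, which solves \PV\ for $(\a,\b)=(\tfrac18,-\tfrac18)$ and every $\ga$.
\item Non-vanishing of $\Phi$ along the translation orbits has to be checked, not just noted.
\item Your own seed $w\equiv-1$ sits at $a=\pm b$, $c=0$, which is case (ii) with $m=n=0$. So that point does carry a rational solution, and for generic $\k$ no other listed case covers it.
\item In (iii) the exclusion is vacuous, since $(m,n)=(-1,0)$ gives the same $\a,\b$.
\item In any case, non-existence is a necessity statement. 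It cannot be inferred from the degeneration of one particular B\"acklund chain.
\end{itemize}
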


\begin{proof}See Kitaev, Law and McLeod \cite[Theorem 1.1]{refKLM}; also \cite[Theorem 40.3]{refGLS}.\end{proof}

\begin{remarks}{\rm 
\begin{enumerate}[(i)]\item[]
\item Kitaev, Law and McLeod \cite[Theorem 1.1]{refKLM} give four cases, though their cases (I) and (II) are related by the symmetry \eqref{PV:S2}. 
Further Kitaev, Law and McLeod \cite[Theorem 1.2]{refKLM} prove that if $\ga\not\in\Integer$ in case (i), then the rational solution is unique and if $\ga\in\Integer$ then there are at most two rational solutions, as illustrated in \cite{refCD24}; see also \cite{refAGLZ24a,refAGLZ24b}.
\item Rational solutions in case (i) are expressed in terms of \textit{generalised Laguerre polynomials}, which are discussed in \S\ref{ssec:GenLag}, and in cases (ii) and (iii) in terms of \textit{generalised Umemura polynomials}, discussed in \S\ref{ssec:GenUm}.
\end{enumerate}}\end{remarks}

\subsection{\label{def:GenLag}Generalised Laguerre polynomials}
Rational solutions in case (i) of Theorem\ \ref{thm:KLM} are expressed in terms of \textit{generalised Laguerre polynomials}, which are written in terms of a determinant of Laguerre polynomials, or equivalently as a Wronskian, see \cite[\S3]{refCD24}. These rational solutions are special cases of the solutions of \PV\ \eqref{eq:pv} expressible in terms of the Kummer functions $M(a,b,z)$ and $U(a,b,z)$ discussed in \cite{refMasuda04,refOkamotoPV} since
\beq U(-n,\a+1,z)=(-1)^n(\a+1)_{n}M(-n,\a+1,z)=(-1)^n n!\, \Lag{n}{\a}(z),\eeq
with $L_{n}^{(\a)}(z)$ the Laguerre polynomial 
\beq\label{eq:alp} \Lag{n}{\a}(z)=\frac{z^{-\a}\,\rme^z}{n!}\deriv[n]{}{z}\left(z^{n+\a}\,\rme^{-z}\right),\qquad n\geq0,\eeq see \cite[equation (13.6.19)]{refDLMF}.
First we define the generalised Laguerre polynomial $\Tmn{m,n}{\mu}(z)$; properties of it are discussed in \cite{refBB,refCD24}. 

\begin{definition}
The \textit{generalised Laguerre polynomial} $\Tmn{m,n}{\mu}(z)$, which is a polynomial of degree $(m+1)n$, is defined by 
\beq \label{def:Tmn}
\Tmn{m,n}{\mu}(z)=\det \left[\deriv[j+k]{}{z}L_{m+n}^{(\mu+1)} (z) \right]_{j,k=0}^{n-1}, \qquad m\ge 0, \quad n\geq 1,
\eeq
{with $\Tmn{m,0}{\mu}(z)=1$},
where $L_{n}^{(\mu)}(z)$ is the Laguerre polynomial \eqref{eq:alp}.
\end{definition}

For case (i) in Theorem\ \ref{thm:KLM}, Kitaev, Law and McLeod \cite{refKLM} state that rational solutions arise when
\[\a=\hf\ell^2,\qquad \b=-\hf(k+\ep\ga)^2,\qquad \ep=\pm1,\]
where $\ell>0$ and $k+\ell$ is odd, with $k,\ell\in\Integer$, and $\b\not=0$ when $|k|<\ell$. Since the sign of $\ga$ can be changed by letting $z\to-z$, we set $\ep=1$, without loss of generality. There are three cases (a), $k>\ell$, (b), $|k|\leq\ell$ and (c), $-k>\ell$, which are given in the following theorem.

\begin{theorem}{\label{thm:genLag}\rm
\textit{Suppose that $\Tmn{m,n}{\mu}(z)$ is the generalised Laguerre polynomial given by \eqref{def:Tmn} and $L_{n}^{(\a)}(z)$ is the Laguerre polynomial, then \PV\ \eqref{eq:pv} has the following three sets of rational solutions.}
\begin{enumerate}[(a)]
\item\textit{If $\ell=m$ and $k=m+2n+1$, with $m\geq1$, so $k+\ell=2m+2n+1$, then for $n\geq1$
\begin{subequations}\beq {w}_{m,n}^{(\mu)}(z)=\frac{T_{m-1,n}^{(\mu)}(z)\,T_{m-1,n+1}^{(\mu)}(z)}{\Tmn{m,n}{\mu}(z)\,T_{m-2,n+1}^{(\mu)}(z)}, 
\qquad\pms{\hf m^2}{-\hf(m+2n+1+\mu)^2}{\mu}.\eeq
{In the case when $n=0$, then for $m\geq1$
\beq {w}_{m,0}^{(\mu)}(z)
=\frac{\Lag{m}{\mu+1}(z)}{\Lag{m-1}{\mu+1}(z)}, 
\qquad\pms{\hf m^2}{-\hf(m+1+\mu)^2}{\mu}.\eeq
}\end{subequations}}
\item\textit{If $\ell=m+n+1$ and $k=m-n$, with $m\geq0$, so $k+\ell=2m+1$, then for $m\geq1$
\begin{subequations}
\beq \label{eqn:GenLagvmn}
{v}_{m,n}^{(\mu)}(z)=\frac{m-n+\mu}{m+n+1}\,\frac{\Tmn{m,n}{\mu-2n}(z)\,\Tmn{m-1,n+1}{\mu-2n-2}(z)}{\Tmn{m-1,n}{\mu-2n}(z)\,\Tmn{m,n+1}{\mu-2n-2}(z)},
\quad\pms{\hf(m+n+1)^2}{-\hf(m-n+\mu)^2}{\mu}.\eeq
{In the case when $m=0$ and $n\geq1$ then
\begin{align} \label{GenLag_vmn}
{v}_{0,n}^{(\mu)}(z)&
={\frac{n-\mu}{n+1}\,\frac{L_{n}^{(-\mu-1)}(-z)}{L_{n+1}^{(-\mu-1)}(-z)}}, &&\pms{\hf(n+1)^2}{-\hf(n-\mu)^2}{\mu},\\
\wt{v}_{0,n}^{(\mu)}(z)&
=\frac{L_{n+1}^{(-1-\mu)}(-z)}{L_n^{(-1-\mu)}(-z)}, && \pms{\hf(n+1)^2}{-\hf(n-\mu)^2}{\mu+2}.
\end{align}
In the case when $n=0$ and $m\geq0$ then
\begin{align} 
{v}_{m,0}^{(\mu)}(z)&
=\frac{m+\mu}{m+1}\,\frac{\LaguerreL{\mu-1}{m}(z)}{\LaguerreL{\mu-1}{m+1}(z)},
&&\pms{\hf(m+1)^2}{-\hf(m+\mu)^2}{\mu},\\
\wt{v}_{m,0}^{(\mu)}(z)&
=\frac{\LaguerreL{\mu-1}{m+1}(z)}{\LaguerreL{\mu-1}{m}(z)},
&&\pms{\hf(m+1)^2}{-\hf(m+\mu)^2}{\mu-2}.\end{align}}\end{subequations}}
\item\textit{If $\ell=m$ and $k=-m-2n-1$, with $m\geq1$ and $n\geq0$, so $k+\ell=-2m-1$, then ${u}_{m,n}^{(\mu)}(z)={w}_{m,n}^{(-\mu)}(-z)$, i.e.
\begin{subequations}\beq {u}_{m,n}^{(\mu)}(z)=
\frac{\Tmn{n-1,m}{\mu-2m-2n}(z)\,\Tmn{n,m}{\mu-2m-2n-2}(z)}{\Tmn{n,m-1}{\mu-2m-2n}(z)\,\Tmn{n-1,m+1}{\mu-2m-2n-2}(z)},
\qquad\pms{\hf m^2}{-\hf(m+2n+1-\mu)^2}{\mu}.\eeq
{In the case when $n=0$, then for $m\geq1$
\beq {u}_{m,0}^{(\mu)}(z)
=\frac{\Lag{m}{1-\mu}(-z)}{\Lag{m-1}{1-\mu}(-z)}, 
\qquad\pms{\hf m^2}{-\hf(m+1-\mu)^2}{\mu}.\eeq}\end{subequations}
}\end{enumerate}}\end{theorem}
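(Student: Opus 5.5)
The plan is to use the B\"acklund transformations $\T_{\ep_1,\ep_2,\ep_3}$ of \S\ref{sec:bts} to generate each family from explicit seed solutions. The recurrences are then closed using Wronskian/Toda-type identities for the determinants $\Tmn{m,n}{\mu}(z)$ in \eqref{def:Tmn}. Throughout, I regard case (c) as a consequence of case (a): the symmetry $\mathcal{S}_1$ in \eqref{PV:S1}, namely $z\mapsto -z$, $\ga\mapsto-\ga$, sends $w_{m,n}^{(-\mu)}(z)$ with parameters $\big(\hf m^2,-\hf(m+2n+1-\mu)^2,-\mu\big)$ to a solution with $\ga=\mu$. This is exactly ${u}_{m,n}^{(\mu)}(z)={w}_{m,n}^{(-\mu)}(-z)$. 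The determinantal form in (c) then follows from the reflection identity $\Tmn{m,n}{\mu}(-z)\propto \Tmn{n,m}{-\mu-2m-2n-2}(z)$ (up to a constant). I would establish this identity from $\Lag{n}{\a}(-z)$ relations, i.e.\ the conjugate-partition duality for Wronskians of Laguerre polynomials. So the real work is in (a) and (b).

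\textbf{Seeds.} First I verify directly that the $n=0$ and $m=0$ formulas satisfy \PV. For instance, substituting $w=\Lag{m}{\mu+1}/\Lag{m-1}{\mu+1}$ into \eqref{eq:pv} and using the Laguerre equation $zy''+(\a+1-z)y'+ny=0$ together with the contiguous relations $z\,\rmd \Lag{n}{\a}/\rmd z=n\Lag{n}{\a}-(n+\a)\Lag{n-1}{\a}$ and $\Lag{n}{\a}=\Lag{n}{\a+1}-\Lag{n-1}{\a+1}$ reduces everything to polynomial identities. Alternatively, I would note that the simplest member $w=-(\mu+1)/z$-type or constant seeds lie on a Riccati locus $zw'=\pm aw^2+\dots$. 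There \PV\ reduces to a Riccati equation linearised to Kummer's equation, and the Laguerre polynomials $U(-n,\a+1,z)$ appear.

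\textbf{Induction in the second index.} Next I choose the B\"acklund transformation that shifts $(a,b,c)$ so that $n\mapsto n+1$ with $m,\mu$ fixed. In case (a) the required shift is $b\mapsto b\mp 2$ with $a,c$ fixed. By \eqref{eq:PT.BTs.P5c} this is a composition of two $\T$'s (e.g.\ $\mathcal{R}_1^{2}$-type compositions give $(a+1,b-1,c)$, and combining with the sign changes $\ep_2,\ep_3$ gives $(a,b-2,c)$). I would apply \eqref{eq:PT.BTs.P5b} to the claimed $w_{m,n}^{(\mu)}$ and need to show the result is $w_{m,n+1}^{(\mu)}$. Writing $w=\tau_1\tau_2/(\tau_3\tau_4)$, the quantity $\Phi(w;z)$ becomes a combination of Hirota derivatives $\D_z(\tau_i\cdot\tau_j)$. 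The claim then reduces to bilinear identities among the $T$'s, namely Jacobi/Desnanot–Jacobi (Sylvester) identities applied to the Wronskian \eqref{def:Tmn}, combined with the Laguerre differentiation rule $\rmd \Lag{n}{\a}/\rmd z=-\Lag{n-1}{\a+1}$. This rule lets one move between the indices $(m,n,\mu)$. Case (b) is the same argument with the shift $(a,b)\mapsto(a+1,b-1)$ and $\mu\mapsto\mu-2$ in the superscripts, which explains the superscripts $\mu-2n$, $\mu-2n-2$.

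\textbf{Main obstacle.} I expect the hardest step to be deriving the needed bilinear relations, in particular the Toda-type equation $z\,\D_z^2(T\cdot T)\sim T_{+}T_{-}$ and the mixed relations linking $\Tmn{m,n}{\mu}$, $\Tmn{m-1,n+1}{\mu}$ and $\Tmn{m,n}{\mu-2}$. My first attempt would be to realise all $T$'s as Wronskians of consecutive Laguerre polynomials, i.e.\ Schur-function/partition Wronskians of rectangular shape. I would then invoke the general Desnanot–Jacobi identity for Wronskians, $W(f_1,\dots,f_n,g,h)W(f_1,\dots,f_n)=W(\dots,g)W(\dots,h)'-W(\dots,g)'W(\dots,h)$. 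This handles the derivative terms, while the $z$-dependent coefficients are absorbed using the Laguerre ODE. Finally I check that the degeneracy conditions \eqref{eq:PT.BTs.P5a}, \eqref{eq:PT.BTs.P5a1} hold generically in $\mu$, so each transformation is well defined.
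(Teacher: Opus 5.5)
\textbf{Step (c): wrong reflection identity.} Reducing (c) to (a) through $\mathcal{S}_1$ is how the paper obtains (c), as a corollary of (a). However, the reflection identity you plan to use is false.
\begin{itemize}
\item $\Tmn{m,n}{\mu}$ has degree $(m+1)n$, whereas $\Tmn{n,m}{\nu}$ has degree $(n+1)m$.
\item It fails even for $m=n=1$. The polynomials $\Tmn{1,1}{\mu}(-z)=\Lag{2}{\mu+1}(-z)$ and $\Tmn{1,1}{-\mu-6}(z)=\Lag{2}{-\mu-5}(z)$ have the same $z^2$ and $z$ coefficients, but constant terms $\hf(\mu+2)(\mu+3)$ and $\hf(\mu+3)(\mu+4)$.
\end{itemize}
$\Tmn{m,n}{\mu}$ corresponds to an $n\times(m+1)$ rectangle, whose conjugate is the $(m+1)\times n$ rectangle. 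So the correct duality is $\Tmn{m,n}{\mu}(-z)\propto\Tmn{n-1,m+1}{-\mu-2m-2n-2}(z)$. For $m=0$ this is the paper's $\Tmn{0,n}{\mu}(z)=(-1)^{\lceil n/2\rceil}\Lag{n}{-\mu-2n-1}(-z)$, and it also checks directly for $m=1,n=2$.

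Applied to the four factors of $w^{(-\mu)}_{m,n}(-z)$, the correct duality gives exactly $\Tmn{n-1,m}{\mu-2m-2n}\Tmn{n,m}{\mu-2m-2n-2}/(\Tmn{n,m-1}{\mu-2m-2n}\Tmn{n-1,m+1}{\mu-2m-2n-2})$. Your version would instead put $\Tmn{n,m-1}{\mu-2m-2n}$ upstairs and $\Tmn{n,m}{\mu-2m-2n-2}$ downstairs, contradicting the stated formula.

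\textbf{Steps (a) and (b): the core identities are missing.} Your route differs from the paper's, which does no induction. The paper quotes (a) from Clarkson--Dunning and obtains (b) the same way, by specialising Masuda's determinant formulae for the Kummer-function solutions of \PV\ to the Laguerre case $U(-n,\a+1,z)\propto\Lag{n}{\a}(z)$. It therefore never has to prove bilinear relations itself. Only the $m=0$ and $n=0$ formulae are checked via Laguerre recurrences; your seed step is fine there, since those parameters lie on Riccati walls.

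A B\"acklund induction is legitimate in principle, but as written its entire content is deferred to identities you do not derive.
\begin{itemize}
\item By itself, Desnanot--Jacobi applied to the Hankel matrices in \eqref{def:Tmn} yields only relations without explicit $z$, of the type \eqref{iden:324a}--\eqref{iden:Tmn3}.
\item Identifying $1-2\ep_1zw/\Phi(w;z)$ with the claimed quotient needs relations with explicit factors of $z$ and parameter-dependent coefficients, of the type \eqref{eq:A3}. The paper has only checked these by computer algebra, so you cannot borrow them.
\item The $z$-dependence has to come from the Laguerre equation of the base function $f=\Lag{m+n}{\mu+1}$, e.g.\ via $zf^{(k+2)}=(z-\mu-2-k)f^{(k+1)}+(k-m-n)f^{(k)}$ acting on rows of the Hankel matrix.
\end{itemize}
You only gesture at this last step, and that derivation is the missing core of the proof.

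\textbf{The induction cannot run family by family.}
\begin{itemize}
\item Two $\T$'s whose composite fixes $c$ only send $(a,b)$ to $(\delta\pm a,\pm b-\delta)$, with independent signs and $\delta\in\{0,\pm1\}$. So $b\mapsto b\pm2$ with $a,c$ fixed needs four $\T$'s, e.g.\ $w^{(\mu)}_{m,n}\to w^{(\mu)}_{m+1,n}\to w^{(\mu)}_{m,n+1}$ via \eqref{bt:R1wc} and \eqref{bt:R1wd}.
\item Each single $\T$-step lands on a member (or reciprocal of a member) of another family. Examples are $w^{(\mu)}_{m,n}\to 1/v^{(\mu+2n+1)}_{m,n}\to w^{(\mu)}_{m+1,n}$ and $v^{(\mu)}_{m,n}\to u^{(\mu+2m+1)}_{n+1,m}\to v^{(\mu)}_{m,n+1}$.
\end{itemize}
So (a), (b) and (c) cannot be proved separately. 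You need one joint induction over all three families, with the $z$-dependent bilinear identities established for each mixed step.
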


\begin{proof}{For cases (a) and (c) see Clarkson and Dunning \cite[Theorem 3]{refCD24} and \cite[Corollary 1]{refCD24}, respectively. Case (b) follows in a similar way from the special function solutions of \PV\ \eqref{eq:pv} given by Masuda \cite[Theorem 2.2]{refMasuda04}}.
The solutions when either $m=0$ or $n=0$ follow from recurrence relations for the Laguerre polynomial, cf.~\cite[\S18.9]{refDLMF}.
\end{proof}

{It is known that rational solutions of \PIII\ can be expressed either in terms of four special polynomials or in terms of the logarithmic derivative of the ratio of two special polynomials, cf.~\cite[Theorem 2.4]{refPACpiii}. Hence it might be expected that the rational solutions discussed in Theorem \ref{thm:genLag} can also be written in terms of the logarithmic derivative of the ratio of two generalised Laguerre polynomials. We note that Masuda \cite[Remark 2.3]{refMasuda04} has written the special function solutions of \spv\ \eqref{eq:spv} as logarithmic derivatives.}

\begin{remarks}{\label{rem:45}\rm 
\begin{enumerate}[(i)]\item[]
\item The rational solutions in Theorem \ref{thm:genLag} can also be written as logarithmic derivatives
\begin{subequations}\label{GenLag_log}
\begin{align}
w_{m,n}^{(\mu)}(z)&= \frac{z}{m} \Ln{\Tmn{m-2,n+1}{\mu}(z)}{\Tmn{m,n}{\mu}(z)}-\frac{z-m-2n-1-\mu}{m},\label{GenLag_wmn_log}\\
v_{m,n}^{(\mu)}(z)&=\frac{z}{m+n+1} \Ln{\Tmn{m-1,n}{\mu-2n}(z)}{\Tmn{m,n+1}{\mu-2n-2}(z)}+1,\label{GenLag_vmn_log}\\
u_{m,n}^{(\mu)}(z)&= \frac{z}{m} \Ln{\Tmn{n,m-1}{\mu-2m-2n}}{\Tmn{n-1,m+1}{\mu-2m-2n-2}}+\frac{z+m+2n+1-\mu}{m}.\label{GenLag_umn_log}
\end{align} 
\end{subequations}
{Using computer algebra we have verified these alternative forms of the rational solutions in Theorem \ref{thm:genLag} given by
\eqref{GenLag_log} for several small values of $m$ and $n$. Comparing these expressions with those in Theorem \ref{thm:genLag} gives the relations \eqref{eq:A3}. We envisage these relations that can be proved using the Jacobi identity \cite{refDod}, sometimes known as the \textit{Lewis Carroll formula}, for the determinant $\mathcal{D}$
\beq \mathcal{D}\, \mathcal{D}\bn{i,k}{j,\ell}= \mathcal{D}\bn{i}{j}\mathcal{D}\bn{k}{\ell} - \mathcal{D}\bn{k}{j} \mathcal{D}\bn{i}{\ell}\label{JacobiId}\eeq
where $\mathcal{D}\bn{i}{j} $
is the determinant with the $i^{\rm th}$ row and the $j^{\rm th}$ column removed from $\mathcal{D}$, though we don't pursue this further here.}
\item {From the definition of $\Tmn{m,n}{\mu}(z)$ \eqref{def:Tmn}
\[ \Tmn{m-1,n}{\mu}(z)=\det \left[\deriv[j+k]{}{z}L_{m+n-1}^{(\mu+1)} (z) \right]_{j,k=0}^{n-1},\qquad\Tmn{m-2,n+1}{\mu}(z)=\det \left[\deriv[j+k]{}{z}L_{m+n-1}^{(\mu+1)} (z) \right]_{j,k=0}^{n},\] 
which are well-defined for $m\geq1$ and $n\geq1$.}
\end{enumerate}
}\end{remarks}

Recall that solutions of the discrete equations \eqref{eq:adPII} and \eqref{eq:dPn} also satisfy the second-order equation
\beq \deriv[2]{q}{z}=\frac{q}{q^2-1}\left(\deriv{q}{z}\right)^{\!2}-\frac{1}{z}\deriv{q}{z}-\frac{2\a(q+1)^2+2\b(q-1)^2}{z^2(q^2-1)}
-\frac{\ga(q^2-1)}{2z}+\tfrac{1}{4}q(q^2-1), \label{eq2:q}
\eeq {see equations \eqref{eq2:qn} and \eqref{eq2:qn2}},
and solutions of the ternary \dPI\ \eqref{eq:tdPI} also satisfy the second-order equation
\beq\deriv[2]{x}{z}=
\frac{2x+1}{2x(x+1)}\left(\deriv{x}{z}\right)^{\!2}
-\frac{1}{z}\deriv{x}{z}-\frac{\a(x+1)^2+\b x^2}{z^2x(x+1)} 
-\frac{\ga x(x+1)}{z}+\frac{x(x+1)(2x+1)}{2},\label{eq2:x} \eeq 
{see equation \eqref{eq2:xn}}.
In the following Lemma we derive the solutions of equations \eqref{eq2:q} and \eqref{eq2:x} in terms of generalised Laguerre polynomials.

\begin{lemma} If $w_{m,n}^{(\mu)}(z)$, $v_{m,n}^{(\mu)}(z)$ and $u_{m,n}^{(\mu)}(z)$ are rational solutions of \PV\ defined in Theorem \ref{thm:genLag} and $\Tmn{m,n}{\mu}(z)$ is the generalised Laguerre polynomial given by \eqref{def:Tmn}, then
\begin{subequations}\begin{align}
\mathcal{W}_{m,n}^{(\mu)}(z)=\frac{w_{m,n}^{(\mu)}(z)+1}{w_{m,n}^{(\mu)}(z)-1} &=1+2\deriv{}{z} \ln \frac{\Tmn{m-1,n}{\mu+1}(z)}{\Tmn{m-1,n+1}{\mu-1}(z)},\label{eq:413a}\\
\mathcal{V}_{m,n}^{(\mu)}(z)=\frac{v_{m,n}^{(\mu)}(z)+1}{v_{m,n}^{(\mu)}(z)-1} &=-1+\frac{2\mu}{z}+2\deriv{}{z} \ln \frac{\Tmn{m-1,n+1}{\mu-2n-1}(z)}{\Tmn{m,n}{\mu-2n-1}(z)},\label{eq:413b}\\
\mathcal{U}_{m,n}^{(\mu)}(z)=\frac{u_{m,n}^{(\mu)}(z)+1}{u_{m,n}^{(\mu)}(z)-1} &=1+2\deriv{}{z}\ln\frac{\Tmn{n,m}{\mu-2m-2n-1}}{\Tmn{n-1,m}{\mu-2m-2n-1}},\label{eq:413c}\end{align} \end{subequations}
are solutions of equation \eqref{eq2:q} and
\begin{subequations}\begin{align}
\mathcal{X}_{m,n}^{(\mu)}(z)=\frac{1}{w_{m,n}^{(\mu)}(z)-1}&=\deriv{}{z} \ln \frac{\Tmn{m-1,n}{\mu+1}(z)}{\Tmn{m-1,n+1}{\mu-1}(z)},\label{eq:413d}\\
\mathcal{Y}_{m,n}^{(\mu)}(z)=\frac{1}{v_{m,n}^{(\mu)}(z)-1}&=-1+\frac{\mu}{z}+\deriv{}{z} \ln \frac{\Tmn{m-1,n+1}{\mu-2n-1}(z)}{\Tmn{m,n}{\mu-2n-1}(z)},\label{eq:413e}\\
\mathcal{Z}_{m,n}^{(\mu)}(z)=\frac{1}{u_{m,n}^{(\mu)}(z)-1} &=\deriv{}{z}\ln\frac{\Tmn{n,m}{\mu-2m-2n-1}}{\Tmn{n-1,m}{\mu-2m-2n-1}},\label{eq:413f}
\end{align} \end{subequations}
are solutions of equation \eqref{eq2:x}.
\end{lemma}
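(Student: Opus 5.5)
Two separate facts are asserted: that the transformed functions solve the stated second-order equations, and that they have the stated logarithmic-derivative forms.

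\emph{Equation-solving part.} This needs nothing beyond Theorem \ref{thm:genLag}. If $w$ solves \PV\ \eqref{eq:pv} with parameters $(\a,\b,\ga)$, substitute $w=(q+1)/(q-1)$, i.e.\ $q=(w+1)/(w-1)$. A direct change of variables then shows that $q$ satisfies \eqref{eq2:q} with the same $(\a,\b,\ga)$. This is exactly the computation that produced \eqref{eq2:qn} from \PV\ with $\a=\hf a_n^2$, $\b=-\hf b_n^2$, $\ga=c_n$. In the same way, $x=1/(w-1)$, i.e.\ $w=1+1/x$, turns \PV\ into \eqref{eq2:x}, as was done for \eqref{eq2:xn}. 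Applying these changes of variable to $w_{m,n}^{(\mu)}$, $v_{m,n}^{(\mu)}$, $u_{m,n}^{(\mu)}$ (which solve \PV\ by Theorem \ref{thm:genLag}) shows that $\mathcal{W},\mathcal{V},\mathcal{U}$ solve \eqref{eq2:q} and $\mathcal{X},\mathcal{Y},\mathcal{Z}$ solve \eqref{eq2:x}, with the parameters listed in Theorem \ref{thm:genLag}. Note that $\mathcal{W}=1+2\mathcal{X}$, so it suffices to establish the $\mathcal{X},\mathcal{Y},\mathcal{Z}$ formulas.

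\emph{Identity part.} This is the real content. I would start from the logarithmic-derivative forms \eqref{GenLag_log} of Remark \ref{rem:45}. For example, $w_{m,n}^{(\mu)}-1=\frac{z}{m}\bigl(\ln(\Tmn{m-2,n+1}{\mu}/\Tmn{m,n}{\mu})\bigr)'-\frac{z-m-2n-1-\mu}{m}-1$. Then $1/(w-1)$ must be recast as $\bigl(\ln(\Tmn{m-1,n}{\mu+1}/\Tmn{m-1,n+1}{\mu-1})\bigr)'$. I would do this through bilinear identities of Hirota type linking generalised Laguerre polynomials whose indices differ by one and whose $\mu$ differs by one. The natural tools are:
\begin{itemize}
\item the Laguerre relations $\frac{\rmd}{\rmd z}L_k^{(\a)}=-L_{k-1}^{(\a+1)}$ and $L_k^{(\a)}=L_k^{(\a+1)}-L_{k-1}^{(\a+1)}$, which let the Wronskian entries in \eqref{def:Tmn} be rewritten with shifted parameter;
\item the Jacobi identity \eqref{JacobiId} applied to the Wronskian/Hankel determinant $\Tmn{m-1,n+1}{\mu-1}$, bordered appropriately.
\end{itemize}
From these I expect to obtain relations of the form
\[\D_z\bigl(\Tmn{m-1,n+1}{\mu-1}\cdot\Tmn{m-1,n}{\mu+1}\bigr)=\text{(product form)}\times\Tmn{m,n}{\mu}\,\Tmn{m-2,n+1}{\mu},\]
together with a companion identity expressing $\Tmn{m-1,n}{\mu}\Tmn{m-1,n+1}{\mu}$ in terms of the shifted polynomials. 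Combining these with the product formula in Theorem \ref{thm:genLag}(a) should give $\mathcal{X}_{m,n}^{(\mu)}$. Case (c) then follows from case (a) via $u_{m,n}^{(\mu)}(z)=w_{m,n}^{(-\mu)}(-z)$ and the reflection behaviour of the polynomials under $z\to-z$, $\mu\to-\mu$; this is how case (c) was obtained in \cite{refCD24}. Case (b) is handled similarly, and there the extra term $\mu/z$ comes from the factor $(m-n+\mu)/(m+n+1)$.

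\emph{Main obstacle.} Establishing these bilinear determinant identities in general is the hard step; the paper itself only verifies \eqref{GenLag_log} for small $m,n$. If the general proof resists, a fallback is an independence-of-representation argument. Any rational solution of \eqref{eq2:x} of the proposed log-derivative form yields, by reversing the change of variables, a rational solution of \PV\ with the same parameters. By \cite[Theorem 1.2]{refKLM}, such a solution is unique when $\mu\notin\Integer$, and this extends to all $\mu$ by continuity in $\mu$. So it would suffice to check that the proposed expressions satisfy \eqref{eq2:x}, which can be done by the Hirota bilinear reduction above. The cases $m=0$ or $n=0$ reduce to the Laguerre three-term and derivative relations directly.
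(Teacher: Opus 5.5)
You correctly handle the ``is a solution'' part by change of variables; the paper leaves this implicit. For \eqref{eq:413d} your plan is the paper's. Starting from the product formula in Theorem \ref{thm:genLag}(a), $1/(w_{m,n}^{(\mu)}-1)$ becomes a log-derivative via $\Hir{\Tmn{m-1,n}{\mu+1}}{\Tmn{m-1,n+1}{\mu-1}}=\Tmn{m,n}{\mu}\Tmn{m-2,n+1}{\mu}$ and $\Tmn{m-1,n}{\mu+1}\Tmn{m-1,n+1}{\mu-1}=\Tmn{m-1,n}{\mu}\Tmn{m-1,n+1}{\mu}-\Tmn{m,n}{\mu}\Tmn{m-2,n+1}{\mu}$. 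These are \eqref{iden:324c} and \eqref{iden:Tmn3}, already proved by Clarkson--Dunning, so you need not rederive them. Do not start from \eqref{GenLag_log}, though: the paper only checks those by computer for small $m,n$.

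For \eqref{eq:413f} your reflection route is a valid alternative to the paper's, which instead repeats the case-(a) computation with \eqref{iden:324f} and \eqref{iden:Tmn2}. It rests on the identity $\Tmn{m,n}{\mu}(-z)=c_{m,n}\Tmn{n-1,m+1}{-\mu-2m-2n-2}(z)$ with $c_{m,n}\neq0$. This is the identity behind $u_{m,n}^{(\mu)}(z)=w_{m,n}^{(-\mu)}(-z)$, and you must state and justify it. Given it, $\mathcal{Z}_{m,n}^{(\mu)}(z)=\mathcal{X}_{m,n}^{(-\mu)}(-z)$ and \eqref{eq:413d} maps exactly onto \eqref{eq:413f}; the constants drop out of the log-derivative.

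\textbf{The gap is case (b).} ``Handled similarly'' does not work here. Put $v=v_{m,n}^{(\mu+2n+1)}$, so that $\mathcal{Y}_{m,n}^{(\mu+2n+1)}+1=v/(v-1)$. Reaching \eqref{eq:413e} needs two identities with an explicit factor $z$:
\begin{align*}
z\,\Tmn{m,n}{\mu}\Tmn{m-1,n+1}{\mu}&=(m+n+\mu+1)\,\Tmn{m,n}{\mu+1}\Tmn{m-1,n+1}{\mu-1}-(m+n+1)\,\Tmn{m-1,n}{\mu+1}\Tmn{m,n+1}{\mu-1},\\
z\,\Hir{\Tmn{m-1,n+1}{\mu}}{\Tmn{m,n}{\mu}}&=(m+n+\mu+1)\,\Tmn{m,n}{\mu+1}\Tmn{m-1,n+1}{\mu-1}-(\mu+2n+1)\,\Tmn{m-1,n+1}{\mu}\Tmn{m,n}{\mu}.
\end{align*}
The first turns $v/(v-1)$ into $\frac{m+n+\mu+1}{z}\,\Tmn{m,n}{\mu+1}\Tmn{m-1,n+1}{\mu-1}/(\Tmn{m,n}{\mu}\Tmn{m-1,n+1}{\mu})$. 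The second turns this into $\frac{\mu+2n+1}{z}+\frac{\rmd}{\rmd z}\ln\big(\Tmn{m-1,n+1}{\mu}/\Tmn{m,n}{\mu}\big)$. The constant here is $\mu+2n+1$, not the $m+n+\mu+1$ of the prefactor; $m=1$, $n=0$ already separates them, and the paper's written proof slips at this point.

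The tools you list for the general case cannot produce either identity. The relations $\frac{\rmd}{\rmd z}L_k^{(\a)}=-L_{k-1}^{(\a+1)}$ and $L_k^{(\a)}=L_k^{(\a+1)}-L_{k-1}^{(\a+1)}$ are invariant under $z\mapsto z+c$. So nothing derived from them with the Jacobi identity can contain an explicit $z$. You need an input that breaks this invariance, worked into the Wronskian: for example $zL_k^{(\a+1)}=(k+\a+1)L_k^{(\a)}-(k+1)L_{k+1}^{(\a)}$, or the Laguerre differential equation.

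The paper gets the first identity (it is \eqref{eq:A4} with $n\to n+1$) only by combining the product form with \eqref{GenLag_vmn_log} and \eqref{iden:324b}. Since \eqref{GenLag_vmn_log} is merely computer-checked, this is where real work remains.

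Your fallback does not close the gap. Checking that the log-derivative expression satisfies \eqref{eq2:x} for all $m,n$ requires second-order bilinear identities that again carry explicit $z$. The Kitaev--Law--McLeod uniqueness theorem therefore only relocates the difficulty.
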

\begin{proof} Since
\beq \frac{w+1}{w-1}=1+\frac{2}{w-1},\label{wrel}\eeq
then we prove \eqref{eq:413d} first. 
By definition
\[{w}_{m,n}^{(\mu)}=\frac{\Tmn{m-1,n}{\mu}\,\Tmn{m-1,n+1}{\mu}}{\Tmn{m,n}{\mu}\,\Tmn{m-2,n+1}{\mu}},\]
and so
\[\frac{1}{w_{m,n}^{(\mu)}-1}=\frac{\Tmn{m,n}{\mu}\,\Tmn{m-2,n+1}{\mu}}{\Tmn{m-1,n}{\mu}\,\Tmn{m-1,n+1}{\mu}-\Tmn{m,n}{\mu}\,\Tmn{m-2,n+1}{\mu}}.\]
Also
\[\deriv{}{z} \ln \frac{\Tmn{m-1,n}{\mu+1}}{\Tmn{m-1,n+1}{\mu-1}}=\frac{ \Hir{\Tmn{m-1,n}{\mu+1}}{\Tmn{m-1,n+1}{\mu-1}}}{\Tmn{m-1,n}{\mu+1}\,\Tmn{m-1,n+1}{\mu-1}}
=\frac{\Tmn{m,n}{\mu}\,\Tmn{m-2,n+1}{\mu}}{\Tmn{m-1,n}{\mu}\,\Tmn{m-1,n+1}{\mu}-\Tmn{m,n}{\mu}\,\Tmn{m-2,n+1}{\mu}},\]
since from the identity \eqref{iden:324c}
\[\Hir{\Tmn{m-1,n}{\mu+1}}{\Tmn{m-1,n+1}{\mu-1}}=\Tmn{m,n}{\mu}\,\Tmn{m-2,n+1}{\mu},\]
where $\D_{z}(f\cdot g)$ is the Hirota operator defined by
\beq{\D_{z}(f\cdot g)=\deriv{f}{z}g-f\deriv{g}{z}},\label{Hirop}\eeq
and from the identity \eqref{iden:Tmn3}
\[\Tmn{m-1,n}{\mu+1}\,\Tmn{m-1,n+1}{\mu-1}=\ \Tmn{m-1,n}{\mu}\,\Tmn{m-1,n+1}{\mu}-\Tmn{m,n}{\mu}\,\Tmn{m-2,n+1}{\mu},\]
which completes the proof of \eqref{eq:413d}. 
The identity \eqref{eq:413a} follows from equation \eqref{wrel}. 

By definition
\[{v}_{m,n}^{(\mu+2n+1)}=\frac{m+n+\mu+1}{m+n+1}\,\frac{\Tmn{m,n}{\mu+1}\,\Tmn{m-1,n+1}{\mu-1}}{\Tmn{m-1,n}{\mu+1}\,\Tmn{m,n+1}{\mu-1}},\]
and so
\begin{align*}
\frac{1}{{v}_{m,n}^{(\mu+2n+1)}-1}+1&=\frac{(m+n+\mu+1)\Tmn{m,n}{\mu+1}\,\Tmn{m-1,n+1}{\mu-1}}{(m+n+\mu+1)\Tmn{m,n}{\mu+1}\,\Tmn{m-1,n+1}{\mu-1}-(m+n+1)\Tmn{m-1,n}{\mu+1}\,\Tmn{m,n+1}{\mu-1}}.
\end{align*}
From \eqref{eqn:GenLagvmn} and \eqref{GenLag_vmn_log} it follows
\[ z\Hir{\Tmn{m-1,n}{\mu+1}}{\Tmn{m,n+1}{\mu-1}}=(m+n+\mu+1)\Tmn{m,n}{\mu+1}\,\Tmn{m-1,n+1}{\mu-1}-(m+n+1)\Tmn{m,n+1}{\mu-1}\,\Tmn{m-1,n}{\mu+1},\]
and since from \eqref{iden:324b}
\[\Hir{\Tmn{m-1,n}{\mu+1}}{\Tmn{m,n+1}{\mu-1}}=\Tmn{m,n}{\mu}\,\Tmn{m-1,n+1}{\mu},\]
then we have
\[z\Tmn{m,n}{\mu}\,\Tmn{m-1,n+1}{\mu} =(m+n+\mu+1)\Tmn{m,n}{\mu+1}\,\Tmn{m-1,n+1}{\mu-1}-(m+n+1)\Tmn{m,n+1}{\mu-1}\,\Tmn{m-1,n}{\mu+1}.\]
Therefore
\[\frac{1}{{v}_{m,n}^{(\mu+2n+1)}-1}+1 =\frac{m+n+\mu+1}{z}\,\frac{\Tmn{m,n}{\mu+1}\,\Tmn{m-1,n+1}{\mu-1}}{\Tmn{m-1,n+1}{\mu}\,\Tmn{m,n}{\mu}},\]
and 
\[
\frac{m+n+\mu+1}{z}+\deriv{}{z} \ln \frac{\Tmn{m-1,n+1}{\mu}}{\Tmn{m,n}{\mu}}=\frac{m+n+\mu+1}{z}\,\frac{\Tmn{m,n}{\mu+1}\,\Tmn{m-1,n+1}{\mu-1}}{\Tmn{m-1,n+1}{\mu}\,\Tmn{m,n}{\mu}}.\]
Hence we obtain \eqref{eq:413e}. The identity \eqref{eq:413b} follows from equation \eqref{wrel}. 

By definition 
\[u_{m,n}^{(\mu+2m+2n+1)}(z)=\frac{\Tmn{n-1,m}{\mu+1}\,\Tmn{n,m}{\mu-1}}{\Tmn{n,m-1}{\mu+1}\,\Tmn{n-1,m+1}{\mu-1}},\]
so
\[\frac{1}{u_{m,n}^{(\mu+2m+2n+1)}(z)-1}=\frac{\Tmn{n,m-1}{\mu+1}\,\Tmn{n-1,m+1}{\mu-1}}{\Tmn{n-1,m}{\mu+1}\,\Tmn{n,m}{\mu-1}-\Tmn{n,m-1}{\mu+1}\,\Tmn{n-1,m+1}{\mu-1}}.\]
Also
\[\deriv{}{z}\ln\frac{\Tmn{n,m}{\mu}}{\Tmn{n-1,m}{\mu}}=\frac{\Hir{\Tmn{n,m}{\mu}}{\Tmn{n-1,m}{\mu}}}{\Tmn{n,m}{\mu}\,\Tmn{n-1,m}{\mu}}
=\frac{\Tmn{n,m-1}{\mu+1}\,\Tmn{n-1,m+1}{\mu-1}}{\Tmn{n-1,m}{\mu+1}\,\Tmn{n,m}{\mu-1}-\Tmn{n,m-1}{\mu+1}\,\Tmn{n-1,m+1}{\mu-1}},\]
since from the identity \eqref{iden:324f}
\[\Hir{\Tmn{n,m}{\mu}}{\Tmn{n-1,m}{\mu}}=\Tmn{n,m-1}{\mu+1}\,\Tmn{n-1,m+1}{\mu-1},\]
and from the identity \eqref{iden:Tmn2}
\[\Tmn{n,m}{\mu}\,\Tmn{n-1,m}{\mu}=\Tmn{n-1,m}{\mu+1}\,\Tmn{n,m}{\mu-1}-\Tmn{n,m-1}{\mu+1}\,\Tmn{n-1,m+1}{\mu-1},\]
which completes the proof of \eqref{eq:413f}.
The identity \eqref{eq:413c} follows from equation \eqref{wrel}.
\end{proof}

\subsection{\label{def:GenUm}Generalised Umemura polynomials}
Rational solutions in cases (ii) and (iii) of Theorem \ref{thm:KLM} are expressed in terms of \textit{generalised Umemura polynomials}.
Umemura \cite{refUm20} defined some polynomials through a differential-difference equation to describe rational solutions of \PV\ \eqref{eq:pv}; see also \cite{refPACpv,refNY98ii,refYamada}.
Subsequently these were generalised by Masuda, Ohta and Kajiwara \cite{refMOK}, who defined the {generalised Umemura polynomial} $R_{m,n}^{(\k)}(z)$ through coupled differential-difference equations and gave a representation as the determinant
\beq R_{m,n}^{(\ell)}(z)=\left|\Matrix{
q_{1}^{(\ell)} & \cdots & q_{-m+2}^{(\ell)} & q_{-m+1}^{(\ell)} & \cdots & q_{-m-n+2}^{(\ell)} \\[2pt]
q_{3}^{(\ell)} & \cdots & q_{-m+4}^{(\ell)} & q_{-m+3}^{(\ell)} & \cdots & q_{-m-n+4}^{(\ell)} \\[2pt]
\vdots & \ddots & \vdots & \vdots & \ddots & \vdots \\[2pt]
q_{2m-1}^{(\ell)} & \cdots & q_{m}^{(\ell)} & q_{m-1} ^{(\ell)}& \cdots & q_{m-n}^{(\ell)} \\[2pt]
p_{n-m}^{(\ell)} & \cdots & p_{n+1}^{(\ell)} & p_{n}^{(\ell)} & \cdots & p_{2n-1} ^{(\ell)}\\[2pt]
p_{n-m-2}^{(\ell)} & \cdots & p_{n-1}^{(\ell)} & p_{n-2}^{(\ell)} & \cdots & p_{2n-3}^{(\ell)} \\[2pt]
\vdots & \ddots & \vdots & \vdots & \ddots & \vdots \\[2pt]
p_{-n-m+2}^{(\ell)} & \cdots & p_{-n+1}^{(\ell)} & p_{-n+2}^{(\ell)} & \cdots & p_{1}^{(\ell)}
}\right|,\label{def:Rmn}\eeq
where $p_{k}^{(\ell)}(z)=L_{k}^{(\ell-1)}(\hfz)$ and $q_{k}^{(\ell)}(z)=L_{k}^{(\ell-1)}(-\hfz)$, with $p_{k}^{(\ell)}=q_{k}^{(\ell)}=0$ for $k<0$, and $L_{k}^{(\a)}(z)$ is the {Laguerre polynomial} \eqref{eq:alp}. Recently Masuda \cite{refMasuda25} studied properties of zeros of the polynomial $R_{m,n}^{(\ell)}(z)$. 

Rather than use \eqref{def:Rmn} for the generalised Umemura polynomial, we prefer to use an equivalent representation given in Definition \ref{dfn:GenUm}, which we find simpler to compute and useful to prove algebraic and combinatorial properties of the polynomial, though we do not pursue this further here.

\begin{definition}{\label{dfn:GenUm}\rm The {\textit{generalised Umemura polynomial}} $\Umn{m,n}{\k}(z)$ is given by the Wronskian
\beq \Umn{m,n}{\k}(z)=c_{m,n}\exp(\hf mz)\Wr\!\left({\ph^{(\k)}_{1}},{\ph^{(\k)}_{3}},\ldots,{\ph^{(\k)}_{2m-1}};{\psi^{(\k)}_{1}},{\psi^{(\k)}_{3}},\ldots,{\psi^{(\k)}_{2n-1}}\right),\quad m,n\geq0, 
\label{def:Umn}\eeq
where 
\beq\label{def:cmn} c_{m,n}=(-1)^{m}\, 2^{(m+n)(m+n+1)/2}
\prod_{j=1}^{m-1}\frac{(2j+1)!}{j!}\prod_{k=1}^{n-1}\frac{(2k+1)!}{k!},\eeq
${\ph^{(\k)}_{n}(z)=\rme^{-z/2}\,L_{n}^{(\k)}(\hfz)}$ and ${\psi^{(\k)}_{n}(z)=L_{n}^{(\k)}(-\hfz)}$, 
with $\k$ a parameter, and $\Lag{n}{\k}(z)$ is the {Laguerre polynomial} \eqref{eq:alp} and $\Umn{0,0}{\k}(z)=1$.}
\end{definition}

\begin{remarks}{\rm 
\begin{enumerate}[(i)]\item[] 
\item The generalised Laguerre polynomial $\Tmn{m,n}{\mu}(z)$ is a Wronskian involving one sequence of Laguerre polynomials, whereas the generalised Umemura polynomial $\Umn{m,n}{\k}(z)$ involves two sequences of Laguerre polynomials, unless either $m=0$ or $n=0$. This is analogous to the situation for rational solutions of \PIV\ \eqref{eq:piv}, where one family of rational solutions involves one sequence of Hermite polynomials and a second family involves two sequences of Hermite polynomials, cf.~\cite{refPAC06review,refNY99}.
\item The generalised Umemura polynomial $\Umn{m,n}{\k}(z)$ given by \eqref{def:Umn} is a monic polynomial of degree $\hf m(m+1)+\hf n(n+1)$ and has the symmetry
\beq \Umn{m,n}{\k}(z)=\Umn{n,m}{-\k-2m-2n}(z).\label{Umn:sym}\eeq
\item The polynomials $R_{m,n}^{(\ell)}(z)$ and $\Umn{m,n}{\k}(z)$ given by \eqref{def:Rmn} and \eqref{def:Umn} respectively, are equivalent, which can be shown using matrix row and column operations. Further the two polynomials are related as follows
\[\Umn{m,n}{\k}(z)=c_{m,n}(-1)^{m+n(n+1)/2}\left(\tfrac{1}{2}\right)^{(m+n)(m+n-1)/2}R_{m,n}^{(-\k-m-n)}(z),\]
with $c_{m,n}$ given by \eqref{def:cmn}.
\item
When $m=0$, the polynomial $\Umn{0,n}{\k}(z)$ is equivalent to the polynomial defined by Umemura \cite{refUm20} through a differential-difference equation. Also $\Umn{0,n}{\k}(z)$ is equivalent to the determinant $\tau_{m}^{(n)}(z)$ defined in Theorem \ref{thm:KYO} below; see also \cite{refNY98ii}. Due to the symmetry \eqref{Umn:sym}, the same is true for the polynomial $\Umn{m,0}{\k}(z)$.
\end{enumerate}
}\end{remarks}

\begin{theorem}{\label{thm:genUm}\rm\textit{Suppose that $\Umn{m,n}{\k}(z)$ is the generalised Umemura polynomial given by \eqref{def:Umn}.} 
\begin{enumerate}[(a)]
\item\textit{In case (ii) of Theorem \ref{thm:KLM}, \PV\ \eqref{eq:pv} has the rational solutions for $m\geq1$ and $n\geq1$
\begin{subequations}
\begin{align}
\wh{w}_{m,n}^{(\k)}(z)&=-\frac{\Umn{m,n-1}{2\k}(z)\,\Umn{m-1,n}{2\k+2}(z)}{\Umn{m,n-1}{2\k+2}(z)\,\Umn{m-1,n}{2\k}(z)},&&
\pms{\hf(m+\k)^2}{-\hf(n+\k)^2}{m+n}, \label{def:GenUmw}\\
\wh{v}_{m,n}^{(\k)}(z)&=-\frac{\Umn{m-1,n-1}{2\k-2n+3}(z)\,\Umn{m,n}{2\k-2n-1}(z)}{\Umn{m-1,n-1}{2\k-2n+1}(z)\,\Umn{m,n}{2\k-2n+1}(z)},&&
\pms{\hf(m+\k)^2}{-\hf(n-\k)^2}{m-n},\label{def:GenUmv}
\end{align}
{In the case when $n=0$, then for $m\geq1$
\beq \wh{w}_{m,0}^{(\k)}(z)=\wh{v}_{m,0}^{(\k)}(z)=-\frac{\Umn{m,0}{2\k-1}(z)\,\Umn{m-1,0}{2\k+2}(z)}{\Umn{m,0}{2\k+1}(z)\,\Umn{m-1,0}{2\k}(z)},\qquad
\pms{\hf(m+\k)^2}{-\hf\k^2}{m}, \label{Um:wm0}\eeq
and in the case when $m=0$ then for $n\geq1$
\beq \wh{w}_{0,n}^{(\k)}(z)=\wh{v}_{0,n}^{(-\k)}(z)=-\frac{\Umn{0,n}{2\k+1}(z)\,\Umn{0,n-1}{2\k}(z)}{\Umn{0,n}{2\k-1}(z)\,\Umn{0,n-1}{2\k+2}(z)},\qquad
\pms{\hf\k^2}{-\hf(n+\k)^2}{n}.\label{Um:wn0} \eeq}\end{subequations}}
\item\textit{In case (iii) of Theorem \ref{thm:KLM}, \PV\ \eqref{eq:pv} has the rational solution for $m\geq1$ and $n\geq1$
\begin{subequations}
\beq\wh{u}_{m,n}^{(\k)}(z)=-\,\frac{\Umn{m,n-1}{\k+1}(z)\,\Umn{m,n+1}{\k-1}(z)}{\Umn{m-1,n}{\k+1}(z)\,\Umn{m+1,n}{\k-1}(z)},\qquad
\pms{\hf(m+\hf)^2}{-\hf(n+\hf)^2}{m+n+\k}.\label{def:GenUmu}\eeq
{In the case when
$n=0$, then for $m\geq1$
\beq \wh{u}_{m,0}^{(\k)}(z)=-\,\frac{\Umn{m,0}{\k}(z)\,\Umn{m,1}{\k-1}(z)}{\Umn{m-1,0}{\k+1}(z)\,\Umn{m+1,0}{\k-1}(z)},\qquad
\pms{\hf(m+\hf)^2}{-\tfrac{1}{8}}{m+\k},\eeq
and in the case when $m=0$ then for $n\geq1$
\beq \wh{u}_{0,n}^{(\k)}(z)=-\,\frac{\Umn{0,n-1}{\k+1}(z)\,\Umn{0,n+1}{\k-1}(z)}{\Umn{0,n}{\k}(z)\,\Umn{1,n}{\k-1}(z)},\qquad
\pms{\tfrac{1}{8}}{-\hf(n+\hf)^2}{n+\k}.\eeq}\end{subequations}}
\end{enumerate}}\end{theorem}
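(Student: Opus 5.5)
The plan is to avoid verifying the Wronskian ratios directly in \PV. Instead I would obtain them from a small set of seed solutions by applying the \bts\ $\T_{\ep_{1},\ep_{2},\ep_{3}}$ of \S\ref{sec:bts}, and then identify the resulting rational functions with the stated ratios of generalised Umemura polynomials using bilinear (Hirota-type) identities. This mirrors how Theorem \ref{thm:genLag} rests on \cite{refCD24} and \cite{refMasuda04}.

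\textbf{Step 1: the one-index families.} First I would treat $m=0$ or $n=0$. Here $\Umn{0,n}{\k}$ is, up to normalisation, the original Umemura polynomial, equivalently the $\tau$-function of Theorem \ref{thm:KYO}; by the symmetry \eqref{Umn:sym} the same holds for $\Umn{m,0}{\k}$. So \eqref{Um:wm0} and \eqref{Um:wn0} follow from the known Umemura/Noumi--Yamada description \cite{refNY98ii,refUm20}. For $m=n=0$ the seed solutions are elementary: $w=-1$ when $\a=\hf\k^2$, $\b=-\hf\k^2$, $\ga=0$, and $w=-1$ or $1+z/(\text{const})$-type solutions in case (iii). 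These can be checked directly in \eqref{eq:pv}.

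\textbf{Step 2: an inductive shift in $m$ and $n$.} Next I would choose a \bt\ $\T_{\ep_{1},\ep_{2},\ep_{3}}$, or a composition such as $\mathcal{R}_{2}$, whose action \eqref{eq:PT.BTs.P5c} on $(a,b,c)$ sends the case (ii) parameters $(m+\k,\,n+\k,\,m+n)$ to $(m+1+\k,\,n+\k,\,m+n+1)$, and another that raises $n$. With $a=m+\k$, $b=n+\k$, $c=m+n$, one checks that $\T_{1,1,-1}$, up to sign choices, shifts $a$ and $c$ by one. Applying it to $\wh{w}_{m,n}^{(\k)}$ gives $W=1-2zw/\Phi(w;z)$. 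To show $W=\wh{w}_{m+1,n}^{(\k)}$, I need two things:
\begin{enumerate}[(i)]
\item the logarithmic-derivative representation $z\,\rmd\ln w/\rmd z=$ combination of $\D_z$-terms, which gives $\Phi$ as a ratio of products of $U$'s;
\item a set of bilinear identities, for example
\[\Hir{\Umn{m,n}{\k+2}}{\Umn{m,n}{\k}}\propto \Umn{m+1,n}{\k}\,\Umn{m-1,n}{\k+2}+\cdots,\qquad \Umn{m,n}{\k}\Umn{m,n}{\k+2}-\Umn{m,n}{\k+1}{}^2\propto\cdots,\]
the Umemura analogues of \eqref{iden:324b}, \eqref{iden:324c} and \eqref{iden:Tmn3} used for $\Tmn{m,n}{\mu}$.
\end{enumerate}
These identities would be proved from the Wronskian form \eqref{def:Umn} via the Jacobi identity \eqref{JacobiId}. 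The needed inputs are the Laguerre relations $\rmd L_n^{(\k)}/\rmd z=-L_{n-1}^{(\k+1)}$ and $L_n^{(\k)}=L_n^{(\k+1)}-L_{n-1}^{(\k+1)}$, which shift $\k$ and the indices inside the columns $\ph^{(\k)}_{2j-1},\psi^{(\k)}_{2j-1}$. The normalisation $c_{m,n}$ in \eqref{def:cmn} fixes the constants.

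\textbf{Step 3: the remaining families.} The $\wh{v}_{m,n}^{(\k)}$ family then follows from $\wh{w}$ by applying the symmetries \eqref{PV:S1} and \eqref{PV:S2} combined with $\k\to-\k$ and \eqref{Umn:sym}, with a check of the parameter bookkeeping. Case (iii), $\wh{u}_{m,n}^{(\k)}$, is handled the same way: start from the seeds with $a=\hf$, $b=\hf$ and induct with the \bts\ that shift $a$ or $b$ by one while shifting $c$ accordingly.

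\textbf{Main obstacle.} The hard step is establishing the bilinear identities for $\Umn{m,n}{\k}$ with two interlaced Laguerre sequences. The column operations needed to express $\Umn{m\pm1,n}{\k\pm1}$ as bordered minors of a common larger Wronskian are delicate, because $\ph$ carries the factor $\rme^{-z/2}$ and $\psi$ has argument $-\hfz$. My first attempt would be to embed everything in the Masuda--Ohta--Kajiwara determinant \eqref{def:Rmn}, where the $q$ and $p$ blocks are already arranged for Jacobi's identity, and derive the identities there. Failing that, I would cite \cite{refMOK} for the rational solutions in terms of $R_{m,n}^{(\ell)}$ and translate via the stated relation between $R$ and $U$.
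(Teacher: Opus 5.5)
The paper does not prove this theorem itself. Its proof is the one-line reference to Masuda, Ohta and Kajiwara (Theorem 1.1), which is exactly your fallback. On that route the only work is the dictionary between $R_{m,n}^{(\ell)}$ and $\Umn{m,n}{\k}$:
\begin{itemize}
\item For $\wh{w}$ and $\wh{v}$ the same index pairs appear upstairs and downstairs, so the normalising prefactors cancel.
\item For $\wh{u}$ they do not cancel: the numerator has $(m,n\pm1)$ and the denominator $(m\pm1,n)$. With \eqref{def:cmn} and the stated $R$--$U$ relation the prefactors combine to $-(2n+1)/(2m+1)$, so the constant in the translated formula must be checked rather than assumed.
\end{itemize}

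Your primary route has concrete errors and would not work as written.

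(i) In Step~2, no single $\T_{\ep_1,\ep_2,\ep_3}$ shifts $(a,c)$ by $(1,1)$. By \eqref{eq:PT.BTs.P5c} the new $c$ is $\ep_1(\ep_3b-\ep_2a)\in\{\pm(m+n+2\k),\pm(m-n)\}$, never $m+n+1$. For example, $\T_{1,1,-1}$ sends $(m+\k,n+\k,m+n)$ to $(n+\hf,m-\hf,-m-n-2\k)$, which is a case~(iii) point. A single transformation always lands on a $\wh{v}$- or $\wh{u}$-type parameter set. The shift $(a,b,c)\mapsto(a+1,b,c+1)$ is $\mathcal{R}_3^3$, which passes through $\wh{v}$ and $\wh{u}$ (the chain in Lemma~\ref{lem:528}). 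So your induction must carry all three families simultaneously, with a separate identity for each arrow.

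(ii) Step~3 fails, because $\mathcal{S}_1$ and $\mathcal{S}_2$ preserve each family.
\begin{itemize}
\item From $\ph_k(-z)=\rme^{z/2}\psi_k(z)$, $\psi_k(-z)=\rme^{z/2}\ph_k(z)$ and \eqref{Umn:sym} one gets $\Umn{m,n}{\k}(-z)=\pm\Umn{m,n}{-\k-2m-2n}(z)$, with a sign depending only on $(m,n)$.
\item Hence $\wh{w}_{m,n}^{(\k)}(-z)=1/\wh{w}_{m,n}^{(-\k-m-n)}(z)$ and $\wh{v}_{m,n}^{(\k)}(-z)=1/\wh{v}_{m,n}^{(n-m-\k)}(z)$.
\item On the parameter side, $(\hf(m+\k)^2,-\hf(n-\k)^2,m-n)$ is the $\ep=-1$ branch of case~(ii). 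Matching it with any $\mathcal{S}_1$/$\mathcal{S}_2$-image of $(\hf(m'+\k')^2,-\hf(n'+\k')^2,m'+n')$ forces $m'<0$ or $n'<0$ for generic $\k$.
\end{itemize}
So $\wh{v}$ needs its own construction, for instance through $\mathcal{R}_1\big(\wh{w}_{m,n}^{(\k)}\big)=\wh{v}_{m,n}^{(\k+n+1/2)}$ as in Lemma~\ref{dPII_Um_case1}, and that costs further identities.

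(iii) The case~(iii) seed is not $-1$ or $1+z/\mathrm{const}$. For $(\a,\b,\ga)=(\tfrac18,-\tfrac18,\k)$ it is $-(z+2\k)/(z-2\k)$; $w=-1$ occurs only when $\k=0$.

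More fundamentally, the bilinear identities you postpone are the entire content of the theorem. The ones actually needed carry explicit $z$- and parameter-dependent coefficients; compare \eqref{eq:425}, which the paper itself only checks by computer algebra for small $m,n$. Jacobi's identity on a common Wronskian yields constant-coefficient Pl\"ucker-type relations only. The $z$ and $\k$ coefficients enter when the bordered Wronskians are rewritten through the contiguity relations of the Laguerre columns, including the $\rme^{-z/2}$ twist in $\ph_k$. That is what Masuda, Ohta and Kajiwara carry out via the $\tau$-functions of the symmetric form. Without it your main route is a programme rather than a proof, and the argument you can actually complete is the citation, which is also all the paper does.
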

\begin{proof} See Masuda, Ohta and Kajiwara \cite[Theorem 1.1]{refMOK}.\end{proof}

\begin{remark}{\label{rem:410}\rm The rational solutions in Theorem \ref{thm:genUm} can also be written as logarithmic derivatives
\begin{subequations}\label{GenUm_log}
\begin{align}
\wh{w}_{m,n}^{(\k)}(z) 
&=\frac{z}{m+\k}\Ln{\Umn{m-1,n}{2\k}(z)}{\Umn{m,n-1}{2\k+2}(z)}-\frac{n+\k}{m+\k},\\
\wh{v}_{m,n}^{(\k+ n)}(z) 
&=\frac{z}{m+n+\k}\Ln{\Umn{m-1,n-1}{2\k+1}(z)}{\Umn{m,n}{2\k+1}(z)}-\frac{\k}{m+n+\k}, \\
\wh{u}_{m,n}^{(\k)}(z) 
&=\frac{2z}{2m+1}\Ln{\Umn{m-1,n}{\k+1}(z)}{\Umn{m+1,n}{\k-1}(z)}+1.
\end{align}\end{subequations}
{Using computer algebra we have verified these alternative forms of the rational solutions in Theorem \ref{thm:genUm} given by
\eqref{GenLag_log} for several small values of $m$ and $n$. Comparing these expressions with those in Theorem \ref{thm:genUm} gives the relations \eqref{eq:A12}, which we expect can be proved using the Jacobi identity \eqref{JacobiId}. We note that Masuda \cite[Appendix A]{refMasuda25} has written the rational solutions expressed in terms of generalised Umemura polynomials of \spv\ \eqref{eq:spv} as logarithmic derivatives}.
}\end{remark}

In the following Lemma we derive the solutions of equations \eqref{eq2:q} and \eqref{eq2:x} in terms of generalised Umemura polynomials.

\begin{lemma}
If $\wh{w}_{m,n}^{(\k)}(z)$, $\wh{v}_{m,n}^{(\k)}(z)$ and $\wh{u}_{m,n}^{(\k)}(z)$ are rational solutions of \PV\ defined in Theorem \ref{thm:genUm} and $\Umn{m,n}{\mu}(z)$ is the generalised Umemura polynomial given by \eqref{def:Umn}, then
\begin{subequations}\label{eq:GenUmPolywvu1}\begin{align}
\wh{\mathcal{W}}_{m,n}^{(\k)}(z)=\frac{\wh{w}_{m,n}^{(\k)}(z)+1}{\wh{w}_{m,n}^{(\k)}(z)-1}&=2\deriv{}{z}\ln\frac{\Umn{m,n}{2\k}(z)}{\Umn{m-1,n-1}{2\k+2}(z)}, \label{eq:GenUmPolyw1}\\
\wh{\mathcal{V}}_{m,n}^{(\k)}(z)=\frac{\wh{v}_{m,n}^{(\k)}(z)+1}{\wh{v}_{m,n}^{(\k)}(z)-1}&=2\deriv{}{z}\ln\frac{\Umn{m,n-1}{2\k-2n+1}(z)}{\Umn{m-1,n}{2\k-2n+1}(z)},\label{eq:GenUmPolyv1}\\
\wh{\mathcal{U}}_{m,n}^{(\k)}(z)=\frac{\wh{u}_{m,n}^{(\k)}(z)+1}{\wh{u}_{m,n}^{(\k)}(z)-1}&=\frac{2(m+n+\k)}{z}+2\deriv{}{z}\ln\frac{\Umn{m,n}{\k+1}(z)}{\Umn{m,n}{\k-1}(z)},\label{eq:GenUmPolyu1}
\end{align}\end{subequations}
are solutions of \eqref{eq2:q}
and
\begin{subequations}\label{eq:GenUmPolywvu2}\begin{align}
\wh{\mathcal{X}}_{m,n}^{(\k)}(z)=\frac{1}{\wh{w}_{m,n}^{(\k)}(z)-1}&=-\frac{1}{2}+\deriv{}{z}\ln\frac{\Umn{m,n}{2\k}(z)}{\Umn{m-1,n-1}{2\k+2}(z)},\label{eq:GenUmPolyw2}\\
\wh{\mathcal{Y}}_{m,n}^{(\k)}(z)=\frac{1}{\wh{v}_{m,n}^{(\k)}(z)-1}&=-\frac{1}{2}+\deriv{}{z}\ln\frac{\Umn{m,n-1}{2\k-2n+1}(z)}{\Umn{m-1,n}{2\k-2n+1}(z)},\label{eq:GenUmPolyv2}\\
\wh{\mathcal{Z}}_{m,n}^{(\k)}(z)=\frac{1}{\wh{u}_{m,n}^{(\k)}(z)-1}&=-\frac{1}{2}+\frac{(m+n+\k)}{z}+\deriv{}{z}\ln\frac{\Umn{m,n}{\k+1}(z)}{\Umn{m,n}{\k-1}(z)}\label{eq:GenUmPolyu2}
\end{align}\end{subequations}
are solutions of \eqref{eq2:x}.
\end{lemma}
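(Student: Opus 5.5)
We follow the same strategy as in the proof of the preceding Lemma for the generalised Laguerre polynomials. The substitutions $q=(w+1)/(w-1)$ and $x=1/(w-1)$ map solutions of \PV\ \eqref{eq:pv} to solutions of \eqref{eq2:q} and \eqref{eq2:x} respectively, with the same $(\a,\b,\ga)$; this is how \eqref{eq2:qn} and \eqref{eq2:xn} were obtained. Theorem \ref{thm:genUm} already guarantees that $\wh{w}_{m,n}^{(\k)}$, $\wh{v}_{m,n}^{(\k)}$ and $\wh{u}_{m,n}^{(\k)}$ solve \PV. So the only content is the explicit logarithmic-derivative form on the right-hand sides. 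Since \eqref{wrel} gives $(w+1)/(w-1)=1+2/(w-1)$, each identity in \eqref{eq:GenUmPolywvu1} follows from the corresponding one in \eqref{eq:GenUmPolywvu2}: the constants $-\tfrac12$ become $0$ and the coefficients double. We therefore prove \eqref{eq:GenUmPolyw2}, \eqref{eq:GenUmPolyv2} and \eqref{eq:GenUmPolyu2} only.

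For \eqref{eq:GenUmPolyw2} we write, from \eqref{def:GenUmw},
\[\frac{1}{\wh{w}_{m,n}^{(\k)}-1}=-\frac{\Umn{m,n-1}{2\k+2}\,\Umn{m-1,n}{2\k}}{\Umn{m,n-1}{2\k}\,\Umn{m-1,n}{2\k+2}+\Umn{m,n-1}{2\k+2}\,\Umn{m-1,n}{2\k}}.\]
On the other side, $\dln\big(\Umn{m,n}{2\k}/\Umn{m-1,n-1}{2\k+2}\big)=\Hir{\Umn{m,n}{2\k}}{\Umn{m-1,n-1}{2\k+2}}\big/\big(\Umn{m,n}{2\k}\Umn{m-1,n-1}{2\k+2}\big)$. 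We then need two bilinear identities for the generalised Umemura polynomials, the analogues of \eqref{iden:324c} and \eqref{iden:Tmn3}:
\begin{itemize}
\item[] a product identity expressing $\Umn{m,n-1}{2\k}\Umn{m-1,n}{2\k+2}+\Umn{m,n-1}{2\k+2}\Umn{m-1,n}{2\k}$ as a multiple of $\Umn{m,n}{2\k}\Umn{m-1,n-1}{2\k+2}$;
\item[] a Hirota identity expressing $\big(\D_z+\tfrac12\big)\big(\Umn{m,n}{2\k}\cdot\Umn{m-1,n-1}{2\k+2}\big)$ as a multiple of $\Umn{m,n-1}{2\k+2}\Umn{m-1,n}{2\k}$.
\end{itemize}
Equivalently, and more economically, we compare with the logarithmic form in Remark \ref{rem:410}. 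That route needs only the relations \eqref{eq:A12} together with the Hirota identities, exactly as the $v$-case was handled in the preceding proof.

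We will establish these identities from the Wronskian representation \eqref{def:Umn}. The tools are the Jacobi identity \eqref{JacobiId} applied to the Wronskian, and the Laguerre relations $\tfrac{\rmd}{\rmd z}L_n^{(\k)}(\pm\hfz)=\mp\tfrac12 L_{n-1}^{(\k+1)}(\pm\hfz)$ and $L_n^{(\k)}=L_n^{(\k+1)}-L_{n-1}^{(\k+1)}$. The Laguerre relations let us shift the superscript $\k$ by one while changing the index sets $\{1,3,\dots,2m-1\}$ and $\{1,3,\dots,2n-1\}$. The factor $\exp(\hf mz)$ in \eqref{def:Umn} accounts for the additive $-\tfrac12$ terms. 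The cases $\wh{v}$ and $\wh{u}$ proceed in the same way. For $\wh{v}$ we use the pair $\Umn{m,n-1}{2\k-2n+1},\Umn{m-1,n}{2\k-2n+1}$ against \eqref{def:GenUmv}. For $\wh{u}$ we use the pair $\Umn{m,n}{\k+1},\Umn{m,n}{\k-1}$, and there an extra identity of the type used for $v_{m,n}^{(\mu)}$ is needed to generate the term $(m+n+\k)/z$. That term arises, as in the proof of \eqref{eq:413e}, by combining a Hirota identity with the logarithmic form \eqref{GenUm_log}.

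The main obstacle is proving these bilinear identities rigorously, especially for $\wh{u}$, where the $1/z$ term requires a $z\D_z$-type identity mixing superscripts $\k\pm1$. The paper itself has only checked \eqref{eq:A12} by computer algebra. I would first try to obtain all the identities as Plücker/Jacobi relations for a single bordered Wronskian, choosing which row and column to delete so that the four minors are the required $U$'s. If that fails for the $z$-weighted identity, I would fall back on the Masuda–Ohta–Kajiwara bilinear equations \cite{refMOK} for $R_{m,n}^{(\ell)}$, translated via the stated relation between $R_{m,n}^{(\ell)}$ and $\Umn{m,n}{\k}$.
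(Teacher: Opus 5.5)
This is essentially the paper's proof: it also writes each quotient as a ratio of products and closes the $\wh{w}$ and $\wh{v}$ cases with the product/Hirota pairs \eqref{eq:Masuda34b}, \eqref{eq:Masuda34a} and \eqref{eq:Masuda12b}, \eqref{eq:Masuda12a}, which are imported from Masuda's Appendix A relations rather than derived from the Wronskian; the $\wh{u}$ case is closed with \eqref{eq:B7} and \eqref{eq:425}, the latter only checked by computer algebra, so the step you flag as the main obstacle is exactly the unproved step in the paper too, and your detour through Remark \ref{rem:410} and \eqref{eq:A12} does not avoid it, since those relations are also only computer-checked and for $\wh{u}$ give just \eqref{eq:B7}. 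One sign slip: with $f=\Umn{m,n}{2\k}$ and $g=\Umn{m-1,n-1}{2\k+2}$ the identities give $\Hir{f}{g}+\hf fg=\hf\,\Umn{m,n-1}{2\k}\,\Umn{m-1,n}{2\k+2}$, whereas $\wh{\mathcal{X}}_{m,n}^{(\k)}=-\hf+\Hir{f}{g}/(fg)$ requires the equivalent form $\Hir{f}{g}-\hf fg=-\hf\,\Umn{m,n-1}{2\k+2}\,\Umn{m-1,n}{2\k}$.
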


\begin{proof} We note that \eqref{eq:GenUmPolywvu2} follow from \eqref{eq:GenUmPolywvu1} using \eqref{wrel}.
By definition
\[\wh{w}_{m,n}^{(\k)}(z)=-\frac{\Umn{m,n-1}{2\k}(z)\,\Umn{m-1,n}{2\k+2}(z)}{\Umn{m,n-1}{2\k+2}(z)\,\Umn{m-1,n}{2\k}(z)},\]
so that we have
\begin{align*} 
&\frac{\wh{w}_{m,n}^{(\k)}(z)+1}{\wh{w}_{m,n}^{(\k)}(z)-1}=\frac{\Umn{m,n-1}{2\k}(z)\,\Umn{m-1,n}{2\k+2}(z)-\Umn{m,n-1}{2\k+2}(z)\,\Umn{m-1,n}{2\k}(z)}{\Umn{m,n-1}{2\k}(z)\,\Umn{m-1,n}{2\k+2}(z)+\Umn{m,n-1}{2\k+2}(z)\,\Umn{m-1,n}{2\k}(z)},\\
&\deriv{}{z}\ln\frac{\Umn{m,n}{2\k}(z)}{\Umn{m-1,n-1}{2\k+2}(z)}=\frac{\Hir{\Umn{m,n}{2\k}(z)}{\Umn{m-1,n-1}{2\k+2}(z)}}{\Umn{m,n}{2\k}(z)\,\Umn{m-1,n-1}{2\k+2}(z)},
\end{align*}
where $\D_{z}(f\cdot g)$ is the Hirota operator \eqref{Hirop}.
From the identities \eqref{eq:Masuda34a} and \eqref{eq:Masuda34b}
\begin{align*} 4\Hir{\Umn{m,n}{2\k}(z)}{\Umn{m-1,n-1}{2\k+2}(z)}&=\Umn{m,n-1}{2\k}(z)\,\Umn{m-1,n}{2\k+2}(z)-\Umn{m,n-1}{2\k+2}(z)\,\Umn{m-1,n}{2\k}(z), \\
2\Umn{m,n}{2\k}(z)\,\Umn{m-1,n-1}{2\k+2}(z)&=\Umn{m,n-1}{2\k}(z)\,\Umn{m-1,n}{2\k+2}(z)+\Umn{m,n-1}{2\k+2}(z)\,\Umn{m-1,n}{2\k}(z), \end{align*}
respectively, therefore
\[\frac{2\Hir{\Umn{m,n}{2\k}(z)}{\Umn{m-1,n-1}{2\k+2}(z)}}{\Umn{m,n}{2\k}(z)\,\Umn{m-1,n-1}{2\k+2}(z)}=\frac{\Umn{m,n-1}{2\k}(z)\,\Umn{m-1,n}{2\k+2}(z)-\Umn{m,n-1}{2\k+2}(z)\,\Umn{m-1,n}{2\k}(z)}{\Umn{m,n-1}{2\k}(z)\,\Umn{m-1,n}{2\k+2}(z)+\Umn{m,n-1}{2\k+2}(z)\,\Umn{m-1,n}{2\k}(z)}, \]
which proves \eqref{eq:GenUmPolyw1}. 

By definition
\[ \wh{v}_{m,n}^{(\k+ n-1/2)}(z)=-\frac{\Umn{m-1,n-1}{2\k+2}(z)\,\Umn{m,n}{2\k-2}(z)}{\Umn{m-1,n-1}{2\k}(z)\,\Umn{m,n}{2\k}(z)},\]
so that we have
\begin{align*} &\frac{\wh{v}_{m,n}^{(\k+ n-1/2)}(z)+1}{\wh{v}_{m,n}^{(\k+ n-1/2)}(z)-1}=\frac{\Umn{m,n}{2\k-2}\,\Umn{m-1,n-1}{2\k+2}-\Umn{m,n}{2\k}\,\Umn{m-1,n-1}{2\k}}{\Umn{m,n}{2\k-2}\,\Umn{m-1,n-1}{2\k+2}+\Umn{m,n}{2\k}\,\Umn{m-1,n-1}{2\k}},\\
&\deriv{}{z}\ln\frac{\Umn{m,n-1}{2\k}}{\Umn{m-1,n}{2\k}}=\frac{\Hir{\Umn{m,n-1}{2\k}}{\Umn{m-1,n}{2\k}}}{\Umn{m,n-1}{2\k}\,\Umn{m-1,n}{2\k}}.
\end{align*}

From the identities \eqref{eq:Masuda12a} and \eqref{eq:Masuda12b} 
\begin{align*} 4\Hir{\Umn{m,n-1}{2\k}}{\Umn{m-1,n}{2\k}} &=\Umn{m,n}{2\k-2}\,\Umn{m-1,n-1}{2\k+2}-\Umn{m,n}{2\k}\,\Umn{m-1,n-1}{2\k},\\
2\Umn{m,n-1}{2\k}\,\Umn{m-1,n}{2\k}&=\Umn{m,n}{2\k-2}\,\Umn{m-1,n-1}{2\k+2}+\Umn{m,n}{2\k}\,\Umn{m-1,n-1}{2\k},\end{align*}
respectively, therefore
\[\frac{2\Hir{\Umn{m,n-1}{2\k}}{\Umn{m-1,n}{2\k}}}{\Umn{m,n-1}{2\k}\,\Umn{m-1,n}{2\k}}=\frac{\Umn{m,n}{2\k-2}\,\Umn{m-1,n-1}{2\k+2}-\Umn{m,n}{2\k}\,\Umn{m-1,n-1}{2\k}}{\Umn{m,n}{2\k-2}\,\Umn{m-1,n-1}{2\k+2}+\Umn{m,n}{2\k}\,\Umn{m-1,n-1}{2\k}}, \]
which proves \eqref{eq:GenUmPolyv1}. 

By definition
\[\wh{u}_{m,n}^{(\k)}(z)=-\,\frac{\Umn{m,n-1}{\k+1}(z)\,\Umn{m,n+1}{\k-1}(z)}{\Umn{m-1,n}{\k+1}(z)\,\Umn{m+1,n}{\k-1}(z)},\]
and so
\[\frac{\wh{u}_{m,n}^{(\k)}+1}{ \wh{u}_{m,n}^{(\k)}-1}=\frac{\Umn{m,n-1}{\k+1}\,\Umn{m,n+1}{\k-1}-\Umn{m-1,n}{\k+1}\,\Umn{m+1,n}{\k-1}}{\Umn{m,n-1}{\k+1}\,\Umn{m,n+1}{\k-1}+\Umn{m-1,n}{\k+1}\,\Umn{m+1,n}{\k-1}}.\]
Also
\begin{align*} 
\deriv{}{z}\ln\frac{\Umn{m,n}{\k+1}}{\Umn{m,n}{\k-1}}+\frac{m+n+\k}{z} 
&=\frac{\Hir{\Umn{m,n}{\k+1}}{\Umn{m,n}{\k-1}}}{\Umn{m,n}{\k+1}\,\Umn{m,n}{\k-1}}+\frac{m+n+\k}{z} \\
&=\frac{z\Hir{\Umn{m,n}{\k+1}}{\Umn{m,n}{\k-1}}+(m+n+\k) \Umn{m,n}{\k+1}\,\Umn{m,n}{\k-1}}{z\Umn{m,n}{\k+1}\,\Umn{m,n}{\k-1}} 
 \end{align*}
It can be shown that
\begin{align} 
\Umn{m,n-1}{\k+1}\,\Umn{m,n+1}{\k-1}+\Umn{m-1,n}{\k+1}\,\Umn{m+1,n}{\k-1}&=2z\Umn{m,n}{\k+1}\,\Umn{m,n}{\k-1},\label{eq:426}\\
\Umn{m,n-1}{\k+1}\,\Umn{m,n+1}{\k-1}-\Umn{m-1,n}{\k+1}\,\Umn{m+1,n}{\k-1}&=4z\Hir{\Umn{m,n}{\k+1}}{\Umn{m,n}{\k-1}}+4(m+n+\k)\Umn{m,n}{\k+1}\,\Umn{m,n}{\k-1},\label{eq:425}
 \end{align}
which proves \eqref{eq:GenUmPolyu1}. 
Equation \eqref{eq:426} is the identity \eqref{eq:B7} and 
using computer algebra, we have verified equation \eqref{eq:425} for several small values of $m$ and $n$. 
\end{proof}

\section{\label{sec:rats}Rational solutions of the discrete equations}
In this section we discuss rational solutions of the discrete equations derived in \S\ref{sec:deqn}. It is known that there are rational solutions of the \dPII\ equation \eqref{eq:dPII}, see \cite{refKajiwara,refKNT,refKYO,refNY98ii}. This is illustrated in the following theorem.

\begin{theorem}{\label{thm:KYO}
Define $\tau_{m}^{(n)}(z)$ to be the $m\times m$ determinant given by
\beq\tau_{m}^{(n)}(z)=\left|\Matrix{
p^{(n)}_{m} & p^{(n)}_{m+1} & \cdots & p^{(n)}_{2m-1} \\
p^{(n)}_{m-2} & p^{(n)}_{m-1} & \cdots & p^{(n)}_{2m-3} \\
\vdots & \vdots & \ddots & \vdots \\
p^{(n)}_{-m+2} & p^{(n)}_{-m+3} & \cdots & p^{(n)}_{1}}\right|,\label{def:taumn}\eeq
where $p^{(n)}_{k}(z)=L^{(n)}_{k}(\hfz)$, with $L^{(n)}_k(z)$ the Laguerre polynomial, for $k\geq0$ and $p^{(n)}_{k}(z)=0$ for $k<0$,
then
\beq \label{sol:KYO} q_{n}(z;m)=\frac{\tau_{m+1}^{(n+1)}(z)\,\tau_{m}^{(n-1)}(z)}{\tau_{m+1}^{(n)}(z)\,\tau_{m}^{(n)}(z)}-1,\eeq
satisfies 
\beq\label{dPII:KYO} q_{n+1}+q_{n-1}=\frac{4}{z}\frac{(n+1)q_{n}-m-1}{1-q_{n}^2},\eeq
which is \dPII\ \eqref{eq:dPII} with $\zeta=4/z$, $\la=1$ and $\rho=-m-1$.
}\end{theorem}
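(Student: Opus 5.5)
The natural route is to realise $q_n(z;m)$ as the image under $w\mapsto (w+1)/(w-1)$ of a chain of rational solutions of \PV\ generated by the \bt\ $\mathcal{R}_{1}$ of \S\ref{ssec:adPII}, and then to invoke \eqref{eq:adPII}. Matching \eqref{dPII:KYO} with \eqref{eq:adPII} forces $\la=1$ and $\ph=0$, and (after fixing the sign convention) $\rho=-m-1$. Then \eqref{dP2:abc} gives
\[
a_{n}=\hf(n-m),\qquad b_{n}=-\hf(n+m+2),\qquad c_{n}=-m-1 .
\]
It therefore suffices to show two things: that $w_{n}:=(q_{n}+1)/(q_{n}-1)$ solves \PV\ with $\pms{\hf a_n^2}{-\hf b_n^2}{c_n}$, and that $w_{n+1}=\mathcal{R}_{1}(w_{n})$ in the sense of \eqref{bt:R1a}. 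Once both hold, \eqref{dP2:btsys2} holds for every $n$, and adding its two members gives \eqref{dPII:KYO}.

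\textbf{Step 1.} I would show that $\tau_{m}^{(n)}$ coincides, up to a constant, with the Umemura polynomial $\Umn{0,m}{\k}$ (or $\Umn{m,0}{\k}$) for a suitable $\k=\k(n,m)$. The remarks after Definition \ref{dfn:GenUm} already state this equivalence. Substituting it into \eqref{sol:KYO} gives
\[
q_{n}+1=\frac{\tau_{m+1}^{(n+1)}\tau_{m}^{(n-1)}}{\tau_{m+1}^{(n)}\tau_{m}^{(n)}} .
\]
This should be exactly the combination $-\wh{u}$ or $-\wh{w}_{0,n}$ of Theorem \ref{thm:genUm}, up to composition with $w\mapsto (w+1)/(w-1)$.

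\textbf{Step 2.} Using the bilinear identities that underlie the Lemma on $\wh{\mathcal{W}},\wh{\mathcal{U}}$ (the identities quoted as \eqref{eq:Masuda34a}, \eqref{eq:Masuda34b}, \eqref{eq:B7}), I would rewrite $w_n$ as one of the rational solutions of Theorem \ref{thm:genUm}. I would then check that its parameters $(\a,\b,\ga)$ agree with $(\hf a_n^2,-\hf b_n^2,c_n)$; the check is on $a,b,c$ themselves, so that the signs of the square roots are tracked correctly. The case (ii) family with $\ga=m\pm n$-type dependence does not fit, because $c_n$ here is independent of $n$. Case (iii) is also excluded, since $a_n$ takes both integer and half-integer values. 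I therefore expect the relevant family to be the $m=0$ or $n=0$ specialisation, or the Laguerre family of case (i) with $\ga=\k$ fixed. The parameter $\ga=-m-1$ being independent of $n$ with $\b$ shifting strongly suggests case (i): $\a=\hf a^2$ with $a$ a half-integer step, $\ga$ fixed.

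\textbf{Step 3.} Verifying $w_{n+1}=\mathcal{R}_{1}(w_n)$ is the main obstacle. Written out, \eqref{bt:R1a} becomes a differential-bilinear relation between $\tau$'s with shifted indices, of the form
\[
z\,\D_z\bigl(\tau_{m+1}^{(n+1)}\cdot\tau_{m+1}^{(n)}\bigr)=(\text{linear in }z)\,\tau\tau+\cdots .
\]
I would first try to derive these relations from the Wronskian form in Definition \ref{dfn:GenUm}, using the Laguerre identities $\tfrac{\rmd}{\rmd z}L_k^{(n)}=-L_{k-1}^{(n+1)}$ and $L_k^{(n)}=L_k^{(n+1)}-L_{k-1}^{(n+1)}$ together with the Jacobi identity \eqref{JacobiId}.

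If that proves too messy, there is a fallback that avoids bilinear identities. Rational solutions with these parameters are unique whenever $\ga\notin\Integer$, which does not apply here. However, $\mathcal{R}_{1}(w_n)$ is rational and solves \PV\ with the parameters of $w_{n+1}$, so one can compare leading asymptotics at $z=\infty$ and $z=0$ to pin it down among the at most two rational solutions. I would then check the base cases $m=0$ (where $\tau_1^{(n)}=L_1^{(n)}(\hf z)$) and small $n$ by direct computation.
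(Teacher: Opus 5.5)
Your reduction is sound, and your values $a_n=\hf(n-m)$, $b_n=-\hf(n+m+2)$, $c_n=-m-1$ are right. It is also a genuinely different route from the paper, which proves nothing itself and only cites Kajiwara, Yamamoto and Ohta; they work directly with bilinear equations satisfied by the determinants $\tau^{(n)}_m$.

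However, Step 2 aims at the wrong family. Case (i) is impossible. For $n-m$ odd, $a_n$ and $b_n$ are not integers. For $n-m$ even, $a_n,b_n,c_n$ are integers with $a_n+b_n+c_n=-2m-2$ even, whereas in case (i), also after \eqref{PV:S1}--\eqref{PV:S2}, integer parameters always have $a+b+c$ odd.

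Your reason for discarding case (ii) misreads it. There $\ga=m'+\ep n'$ is fixed by the integer indices, and the $n$-dependence sits in the continuous parameter $\k$, which here moves by $\hf$ per step. In fact $1/w_n$ has $\pms{\tfrac18(n+m+2)^2}{-\tfrac18(n-m)^2}{m+1}$. These are the parameters of $\wh{w}^{(\k)}_{m+1,0}$ in \eqref{Um:wm0} with $\k=\hf(n-m)$. Using $\tau^{(n)}_m\propto\Umn{m,0}{n-m+1}$, this solution is
\[ \wh{w}^{((n-m)/2)}_{m+1,0}=-\frac{\tau^{(n-1)}_{m+1}\,\tau^{(n+1)}_{m}}{\tau^{(n+1)}_{m+1}\,\tau^{(n-1)}_{m}}, \]
so the chain you need is the ``symmetric case'' of \S\ref{adPII:GenUm}, namely $q_n=-\wh{\mathcal{W}}^{((n-m)/2)}_{m+1,0}$. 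For $m=0$ this gives $q_n=2/(2n+2-z)$.

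The missing idea is then a bilinear identity, not a substitution. Take $w_n$ to be the reciprocal of the quotient above. Then $(w_n+1)/(w_n-1)$ coincides with \eqref{sol:KYO} if and only if
\[ \tau^{(n+1)}_{m+1}\tau^{(n-1)}_{m}+\tau^{(n-1)}_{m+1}\tau^{(n+1)}_{m}=2\,\tau^{(n)}_{m+1}\tau^{(n)}_{m}. \]
This is needed because \eqref{sol:KYO} has superscripts shifted by one, whereas $\wh{w}_{m+1,0}$ has them shifted by two. None of the identities you invoke gives this: \eqref{eq:Masuda34b} and \eqref{eq:B7} involve both Umemura indices and do not reduce to the one-index family. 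Proving identities of this kind for $\tau^{(n)}_m$ is exactly the content of the cited proof.

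Once the identity is available, Theorem \ref{thm:genUm} together with \eqref{PV:S2} shows that $w_n$ solves \PV\ with $(a_n,b_n,c_n)$. Kitaev, Law and McLeod's uniqueness covers generalised Umemura parameters, so $\mathcal{R}_1(w_n)=w_{n+1}$ follows at once, provided $\Phi(w_n;z)\not\equiv0$. Step 3 is therefore not the real obstacle. Your fallback relies instead on the ``at most two rational solutions when $\ga\in\Integer$'' statement, which concerns case (i) and does not apply here. Checking $m=0$ and small $n$ by hand cannot replace the identity.
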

\begin{proof} See Kajiwara, Yamamoto and Ohta \cite{refKYO}.\end{proof}

\comment{\begin{remarks}{\rm
\begin{enumerate}[(i)]\item[]
\item The determinant $\tau_{m}^{(n)}(z)$ can be written as the Wronskian
\beq \tau_{m}^{(n)}(z)=\Wr\big(p^{(n)}_{1},p^{(n)}_{3},\ldots,p^{(n)}_{2m-1} \big),\eeq
with $p^{(n)}_{k}(z)=L^{(n)}_{k}(\hfz)$,
and is equivalent to the generalised Umemura polynomial $\Umn{0,n}{\k}(z)$. 
\item The function $q_{n}(z;m)$ satisfies equation \eqref{eq2:q} 
with 
\[\pms{\tfrac{1}{8}(m-n)^2}{-\tfrac{1}{8}(m+n+2)^2}{-m-1}.\]
\end{enumerate}
}\end{remarks}}

\begin{remarks}{\rm
\begin{enumerate}[(i)]\item[]
\item The determinant $\tau_{m}^{(n)}(z)$ can be written as the Wronskian
\beq \tau_{m}^{(n)}(z)=(-1)^{m(m-1)/2}\Wr\big(p^{(n-m+1)}_{1},p^{(n-m+1)}_{3},\ldots,p^{(n-m+1)}_{2m-1} \big),\eeq
with $p^{(n)}_{k}(z)=L^{(n)}_{k}(\hfz)$,
and is related to the generalised Umemura polynomial by
\[\Umn{m,0}{n-m+1}(z) = \Umn{0,m}{-n-m-1}(z) = c_{0,m}(-1)^{m(m+1)/2}\left(\tfrac{1}{2}\right)^{m(m-1)/2}\tau_{m}^{(n)}(z).\]
\item The function $q_{n}(z;m)$ satisfies equation \eqref{eq2:q} 
with 
\[\pms{\tfrac{1}{8}(m-n)^2}{-\tfrac{1}{8}(m+n+2)^2}{-m-1}.\]
\end{enumerate}
}\end{remarks}

\subsection{\label{ssec:GenLag}Rational solutions in terms of generalised Laguerre polynomials}
Each of the transformations $\mathcal{R}_{1}$ \eqref{bt:R1}, $\mathcal{R}_{2}$ \eqref{bt:R2} and $\mathcal{R}_{3}$ \eqref{bt:R3} is actually four transformations since there are two independent choices of sign for $a=\pm\sqrt{2\a}$ and $b=\pm\sqrt{2\b}$. This can lead to four different solutions from applying the transformation to a given solution. This is illustrated in Table \ref{table:R1params} which shows the effect on the parameters of applying the transformations $\mathcal{R}_{1}$ \eqref{bt:R1} and $\mathcal{R}_{1}^2$ on a solution $w(z)$ of \PV\ \eqref{eq:pv} with $\pms{\hf a^2}{-\hf b^2}{c}$; see also Example \ref{exam:54}.
\begin{table}[h]
\[\begin{array}{|c|ccc|}
\hline
& \a_{1} & \b_{1} & \ga_{1} \\ \hline
\mathcal{R}_{1}(a,b,c) & \tfrac{1}{8}(a-b+c+1)^2 &-\tfrac{1}{8}(a-b-c+1)^2 &a+b\\
\mathcal{R}_{1}(-a,b,c) & \tfrac{1}{8}(a+b-c-1)^2 &-\tfrac{1}{8}(a+b+c-1)^2 &-a+b\\
\mathcal{R}_{1}(a,-b,c) & \tfrac{1}{8}(a+b+c+1)^2 &-\tfrac{1}{8}(a+b-c+1)^2 &a-b\\
\mathcal{R}_{1}(-a,-b,c) & \tfrac{1}{8}(a-b-c-1)^2 &-\tfrac{1}{8}(a-b+c-1)^2 &-a-b \\ \hline\hline
\mathcal{R}_{1}^2(a,b,c) & \hf(a+1)^2 &-\hf(b-1)^2 &c\\
\mathcal{R}_{1}^2(-a,b,c) & \hf(a-1)^2 &-\hf(b-1)^2 &c\\
\mathcal{R}_{1}^2(a,-b,c) & \hf(a+1)^2 &-\hf(b+1)^2 &c\\
\mathcal{R}_{1}^2(-a,-b,c) & \hf(a-1)^2 &-\hf(b+1)^2 &c \\ \hline
\end{array}\]
\caption{\label{table:R1params}The effect on the parameters of applying the transformations $\mathcal{R}_{1}$ \eqref{bt:R1} and $\mathcal{R}_{1}^2$ on a solution $w(z)$ of \PV\ with $\pms{\hf a^2}{-\hf b^2}{c}$.}
\end{table}

\subsubsection{\label{adPII:GenLag}First discrete equation: asymmetric discrete \PII}
For the \bt\ $\mathcal{R}_{1}=\T_{1,-1,1}$, we know that $\mathcal{R}_{1}^2(a,b,c)=(a+1,b-1,c)$, so that
\begin{subequations}\label{bt:R1w}\begin{align} 
\mathcal{R}_{1}^2\big(w_{m,n}^{(\mu)};m,m+2n+1+\mu,\mu\big)&=\big(w_{m+1,n-1}^{(\mu)};m+1,m+2n+\mu,\mu\big),\label{bt:R1wa}\\
\mathcal{R}_{1}^2\big(w_{m,n}^{(\mu)};-m,m+2n+1+\mu,\mu\big)&=\big(w_{m-1,n}^{(\mu)};1-m,m+2n+\mu,\mu\big),\label{bt:R1wb}\\
\mathcal{R}_{1}^2\big(w_{m,n}^{(\mu)};m,-m-2n-1-\mu,\mu\big)&=\big(w_{m+1,n}^{(\mu)};m+1,-m-2n-2-\mu,\mu\big),\label{bt:R1wc}\\
\mathcal{R}_{1}^2\big(w_{m,n}^{(\mu)};-m,-m-2n-1-\mu,\mu\big)&=\big(w_{m-1,n+1}^{(\mu)};1-m,-m-2n-2-\mu,\mu\big).\label{bt:R1wd}
\end{align}\end{subequations}
From \eqref{bt:R1w} we see that except in the case \eqref{bt:R1wc}, the \bt\ $\mathcal{R}_{1}$ \eqref{bt:R1} can only be applied a finite number of times before the sequence terminates as either $m$ or $n$ becomes zero, as illustrated in the following example. 

\begin{example}{\label{exam:54}\rm
Consider the rational solution
{\begin{align*}
w_{1,1}^{(\mu-3)}(z)&=-\frac{(z-\mu+1)\big[z^2-2\mu z+\mu(\mu+1)\big]}{z^2-2\mu z+\mu(\mu-1)},
\end{align*}}
which satisfies \PV\ \eqref{eq:pv} with $\pms{\hf}{-\hf(\mu+1)^2}{\mu-3}$, so that
{\beq q_{0}(z)
=\frac{w_{1,1}^{(\mu-3)}(z)+1}{w_{1,1}^{(\mu-3)}(z)-1}=1+\frac{2}{z-\mu}-\frac{4(z-\mu+1)}{z^2-2(\mu-1) z+\mu(\mu-1)}.\label{sol:q0}\eeq}

{To derive a sequence of solutions we define the \bt\
\beq\mathcal{Q}_{1}(q;a,b,c)=\left(\frac{2}{z (q^2-1)}\left\{{z}\deriv{q}{z}-(a-b)q-a-b\right\};\hf(a-b+c+1),\hf(-a+b+c-1),a+b \right),\label{bt:Q1} \eeq
since
\[ \mathcal{Q}_{1}(q)=\frac{\mathcal{R}_{1}\big(\frac{q+1}{q-1}\big)+1}{\mathcal{R}_{1}\big(\frac{q+1}{q-1}\big)-1}=\frac{2}{z (q^2-1)}\left\{{z}\deriv{q}{z}-(a-b)q-a-b\right\}.\]}

Applying the \bt\ $\mathcal{Q}_{1}$ \eqref{bt:Q1}, with $(a,b,c)=(1,\mu+1,\mu-3)$,
gives the sequence of rational solutions 
\begin{align*}q_1^{[a]}(z) &=1-\frac{2}{z-\mu+1}+\frac{4(z-\mu)}{z^2-2\mu z+\mu(\mu-1)},\\ 
q_2^{[a]}(z)&=1-\frac{4(z-\mu+1)}{z^2-2(\mu-1)z+(\mu-1)(\mu-2)},
\qquad q_{3}^{[a]}(z)=1.
\end{align*}
Applying $\mathcal{Q}_{1}$ \eqref{bt:Q1}, with 
$(a,b,c)=(-1,\mu+1,\mu-3)$, 
gives the sequence of rational solutions
\[ q_1^{[b]}(z)=-1+\frac{2\mu}{z}-\frac{2}{z-\mu+1}, \qquad q_{2}^{[b]}(z)=1,\]
and applying $\mathcal{Q}_{1}$ \eqref{bt:Q1}, with 
$(a,b,c)=(-1,-\mu-1,\mu-3)$, 
gives the sequence of rational solutions
\[ q_1^{[d]}(z)=-1-\frac{4(z-\mu)}{z^2-2\mu z+\mu(\mu+1)},\qquad q_{2}^{[d]}(z)=1.\]
In each case the \bt\ only produces a finite number of solutions.

Applying $\mathcal{Q}_{1}$ \eqref{bt:Q1}, with 
$(a,b,c)=(1,-\mu-1,\mu-3)$, 
gives an infinite sequence of solutions. 
The next few are
\begin{align*}
q_1(z)&=1-\frac{2\mu}{z}+2\Ln{z^2-2\mu z+\mu(\mu-1)}{z^2-2\mu z+\mu(\mu+1)},\\
q_2(z)&=1+2\Ln{z^2-2(\mu+1)z+\mu(\mu+1)}{z^4-4\mu z^3+6\mu^2z^2-4\mu(\mu^2-1)z+\mu^2(\mu^2-1)},\\
q_3(z)&=1-\frac{2\mu}{z}+2\Ln{z^3-3(\mu+1)z^2+3\mu(\mu+1)z-\mu(\mu^2-1)}{z^4-4(\mu+1)z^3+6(\mu+1)^2z^2-4\mu(\mu+1)(\mu+2)z+\mu(\mu+1)^2(\mu+2)}.
\end{align*}
Generally for $n\geq0$
\begin{align*} q_{2n}(z)=1+2\Ln{\Tmn{n,1}{\mu-2}(z)}{\Tmn{n,2}{\mu-4}(z)},\qquad q_{2n+1}(z)=1-\frac{2\mu}{z}+2\Ln{\Tmn{n+1,1}{\mu-3}(z)}{\Tmn{n,2}{\mu-3}(z)},\end{align*}
satisfy
\[ q_{n+1}+q_{n-1}=\frac{4}{z}\,\frac{(n+\mu+2)q_{n}-\tfrac{3}{2}-(\mu-\tfrac{3}{2})(-1)^n}{1-q_{n}^2},\]
which is asymmetric \dPII\ \eqref{eq:adPII} with $\la=\mu+2$, $\rho=-\tfrac{3}{2}$ and $\ph=-\mu+\tfrac{3}{2}$, whilst $x_{n}=q_{2n}$ and $y_{n}=q_{2n+1}$ satisfy
\begin{align*}
x_{n+1}+x_{n}&=\frac{4}{z}\,\frac{(2n+\mu+3)y_{n}+\mu-3}{1-y_{n}^2},\qquad
y_{n}+y_{n-1}=\frac{4}{z}\,\frac{(2n+\mu+2)x_{n}-\mu}{1-x_{n}^2}.
\end{align*}
Also, applying the inverse transformation $\mathcal{Q}_{1}^{-1}$ we obtain
\[ q_{-1}(z)=1-\frac{2\mu}{z}+\frac{2}{z-\mu+1},\qquad q_{-2}(z)=1.\]
}\end{example}

We note that for the solution $q_{0}$ \eqref{sol:q0}
\[\mathcal{Q}_{1}^3\left(q_{0};1,\mu+1,\mu-3\right)=1,\quad \mathcal{Q}_{1}^2\left(q_{0};-1,\mu+1,\mu-3 \right)=1, \quad
\mathcal{Q}_{1}^2\left(q_{0};-1,-\mu-1,\mu-3 \right)=1.\]
where $\mathcal{Q}_{1}$ is the \bt\ 
Further if $\mathcal{R}_{1}$ \eqref{bt:R1} is the \bt\ given by \eqref{bt:R1wc} then
\[ w_{1,1}^{(\mu-3)}\arr{\mathcal{R}_{1}}\frac{1}{{v}_{1,1}^{(\mu)}} \arr{\mathcal{R}_{1}} w_{2,1}^{(\mu-3)}\arr{\mathcal{R}_{1}}\frac{1}{{v}_{2,1}^{(\mu)}} \arr{\mathcal{R}_{1}} w_{3,1}^{(\mu-3)}\]
{or equivalently
\[ \frac{w_{1,1}^{(\mu-3)}+1}{w_{1,1}^{(\mu-3)}-1}\arr{\mathcal{Q}_{1}} 
\frac{1+{v}_{1,1}^{(\mu)}}{1-{v}_{1,1}^{(\mu)}}\arr{\mathcal{Q}_{1}} 
\frac{w_{2,1}^{(\mu-3)}+1}{w_{1,1}^{(\mu-3)}-1}\arr{\mathcal{Q}_{1}} 
\frac{1+{v}_{2,1}^{(\mu)}}{1-{v}_{2,1}^{(\mu)}}\arr{\mathcal{Q}_{1}}
\frac{w_{3,1}^{(\mu-3)}+1}{w_{3,1}^{(\mu-3)}-1}\]}

More generally, using the solution $w_{m,n}^{(\mu)}(z)$, for fixed $m$ and $n$, then we have the following Lemma. 
\begin{lemma}
{Consider the rational functions
\[Q_{2N}(z)=X_{N}(z)=\frac{w_{N+m,n}^{(\mu)}(z)+1}{w_{N+m,n}^{(\mu)}(z)-1},\qquad Q_{2N+1}(z)=Y_{N}(z)=\frac{1+v_{N+m,n}^{(\mu+2n+1)}(z)}{1-v_{N+m,n}^{(\mu+2n+1)}(z)},\]
with $w_{m,n}^{(\mu)}(z)$ and $v_{m,n}^{(\mu)}(z)$ as given in Theorem \ref{thm:genLag},
then $Q_{N}(z)$ satisfies
\[Q_{N+1}+Q_{N-1}= \frac{4}{z}\frac{(N+2m+2n+\mu+1) Q_{N}-n-\hf-(n+\mu+\hf)(-1)^N}{1-Q_{N}^2},\]
which is asymmetric \dPII\ \eqref{eq:adPII} with $\la=2m+2n+\mu+1$, $\rho=-n-\hf$ and $\ph=-n-\mu-\hf$,
whilst $X_{N}(z)$ and $Y_{N}(z)$ satisfy the discrete system
\begin{subequations}\label{sys:lem54}\begin{align} 
X_{N+1}+X_{N}&=\frac{4}{z}\frac{(2N+2m+2n+\mu+2)Y_{N}+\mu}{1-Y_{N}^2},\\
Y_{N}+Y_{N-1}&=\frac{4}{z}\frac{(2N+2m+2n+\mu+1)X_{N}-2n-\mu-1}{1-X_{N}^2}.
\end{align}\end{subequations}
}\end{lemma}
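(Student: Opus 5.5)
The plan is to realise the sequence $Q_N$ as the orbit of the single seed $w_{m,n}^{(\mu)}$ under the \bt\ $\mathcal{R}_{1}=\T_{1,-1,1}$. We work with the sign choice corresponding to \eqref{bt:R1wc}, i.e.\ $(a,b,c)=(m,-m-2n-1-\mu,\mu)$, which is the case that does not terminate. Once the orbit is established, the discrete equation follows directly from the general derivation in \S\ref{ssec:adPII}, since \eqref{eq:adPII} holds for any chain $q_N=(w_N+1)/(w_N-1)$ with $w_{N+1}=\mathcal{R}_1(w_N)$. So the proof reduces to two things: identifying the parameters $(\la,\rho,\ph)$, and identifying the iterates with the stated rational functions.

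\emph{Parameters.} Set $(a_0,b_0,c_0)=(m,-m-2n-1-\mu,\mu)$ and compare with \eqref{dP2:abc} at $N=0$:
\[
\hf(\la+\rho+\ph)=m,\qquad -\hf(\la-\rho-\ph)=-m-2n-1-\mu,\qquad \rho-\ph=\mu.
\]
Adding the first two equations gives $\rho+\ph=-2n-1-\mu$, so $\ph=-n-\mu-\hf$ and $\rho=-n-\hf$. Then $\la=2m-\rho-\ph=2m+2n+\mu+1$. These agree with the lemma. Substituting into \eqref{eq:adPII}, the numerator becomes $(N+2m+2n+\mu+1)Q_N-n-\hf-(n+\mu+\hf)(-1)^N$. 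Specialising \eqref{sys:dPII} then gives \eqref{sys:lem54}:
\begin{itemize}
\item for the first equation, $\rho-\ph=\mu$ and the coefficient $2N+\la+1=2N+2m+2n+\mu+2$;
\item for the second, $\rho+\ph=-2n-\mu-1$ and the coefficient $2N+\la$.
\end{itemize}

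\emph{Identification of the iterates.} Even steps are handled by \eqref{bt:R1wc}: $\mathcal{R}_1^2$ sends $(w_{m,n}^{(\mu)};m,-m-2n-1-\mu,\mu)$ to $(w_{m+1,n}^{(\mu)};m+1,-m-2n-2-\mu,\mu)$. By induction $w_{2N}=w_{N+m,n}^{(\mu)}$, so $Q_{2N}=X_N$.

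For odd steps, the one-step map gives parameters
\[
\mathcal{R}_1(a,b,c)=\bigl(\hf(a-b+c+1),\hf(-a+b+c-1),a+b\bigr)=\bigl(m+n+\mu+1,\,-m-n-1,\,-2n-1-\mu\bigr).
\]
Under $\mathcal{S}_2$ (inversion $w\mapsto1/w$, which swaps $\a\leftrightarrow-\b$ and $\ga\to-\ga$), the solution $1/v_{m,n}^{(\mu+2n+1)}$ has $\pms{\hf(m+n+\mu+1)^2}{-\hf(m+n+1)^2}{-\mu-2n-1}$, which matches. Also $(1/v+1)/(1/v-1)=(1+v)/(1-v)$, which is exactly the stated form of $Y_N$. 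So the claim is that $\mathcal{R}_1(w_{m,n}^{(\mu)})=1/v_{m,n}^{(\mu+2n+1)}$, as the displayed chain after Example~\ref{exam:54} asserts for $m=n=1$. Once this is known for general $(m,n)$, applying it with $m\to N+m$ gives $w_{2N+1}=1/v_{N+m,n}^{(\mu+2n+1)}$. By uniqueness of the image under a \bt, this is consistent with the even steps: applying $\mathcal{R}_1$ to $1/v_{N+m,n}$ must return $w_{N+m+1,n}$.

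\emph{Main obstacle.} The hard step is proving $\mathcal{R}_1(w_{m,n}^{(\mu)})=1/v_{m,n}^{(\mu+2n+1)}$ in general. I would work in the $q$-variable using $\mathcal{Q}_1$ \eqref{bt:Q1}, together with the logarithmic-derivative forms \eqref{eq:413a} and \eqref{eq:413b}:
\[
\mathcal{W}=1+2\dln\frac{\Tmn{m-1,n}{\mu+1}}{\Tmn{m-1,n+1}{\mu-1}},\qquad
-\mathcal{V}^{(\mu+2n+1)}=1-\frac{2(\mu+2n+1)}{z}-2\dln\frac{\Tmn{m-1,n+1}{\mu}}{\Tmn{m,n}{\mu}}.
\]
Next, express $z\mathcal{W}'$ via the Hirota identities \eqref{iden:324c} and \eqref{iden:Tmn3}. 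Then verify $\frac{2}{z(\mathcal{W}^2-1)}\{z\mathcal{W}'-(a-b)\mathcal{W}-a-b\}=-\mathcal{V}$ as a bilinear identity among $T$-polynomials with shifted $\mu$. I expect this identity to follow from the Jacobi identity \eqref{JacobiId} applied to the Wronskian \eqref{def:Tmn}, combined with the three-term Laguerre relations. If a clean bilinear derivation proves elusive, the fallback is an indirect argument: both sides are rational solutions of \PV\ with the same parameters. Using the uniqueness result of Kitaev--Law--McLeod for non-integer $\ga$, together with analytic continuation in $\mu$, they must coincide.
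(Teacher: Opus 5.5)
Your proposal is correct and is essentially the paper's proof: the paper also takes the $\mathcal{R}_1$-orbit of $w_{m,n}^{(\mu)}$ with $(a,b,c)=(m,-m-2n-1-\mu,\mu)$, alternating $w_{N+m,n}^{(\mu)}\mapsto 1/v_{N+m,n}^{(\mu+2n+1)}\mapsto w_{N+m+1,n}^{(\mu)}$, and reads off $(\la,\rho,\ph)$ from \eqref{dP2:abc} as you do. The paper only asserts the two one-step transformations, so your uniqueness-plus-continuity argument supplies the justification it leaves implicit (its later remark citing Kitaev, Law and McLeod points to the same argument); one small caveat is that at $N=0$ with $m=1$ your composition trick does not give the backward step to $1/v_{0,n}^{(\mu+2n+1)}$, so that step must come from the uniqueness argument applied directly.
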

\begin{proof} From the transformations
\begin{align*}
\mathcal{R}_{1}\big(w_{m,n}^{(\mu)};m,-m-2n-1-\mu,\mu\big)&=\big({1/{v}_{m,n}^{(\mu+2n+1)}};m+n+1+\mu,-m-n-1,-\mu-2n-1\big),\\
\mathcal{R}_{1}\big(v_{m,n}^{(\mu)};m+n+1,-m+n-\mu,\mu\big)&=\big({1/{w}_{m+1,n}^{(\mu-2n-1)}};m+\mu+1,-m-1,-\mu+2n+1\big), 
\end{align*}
we obtain the sequence of solutions
\[ w_{m,n}^{(\mu)} \arr{\mathcal{R}_{1}}\frac{1}{{v}_{m,n}^{(\mu+2n+1)}} \arr{\mathcal{R}_{1}} w_{m+1,n}^{(\mu)} \arr{\mathcal{R}_{1}} \frac{1}{{v}_{m+1,n}^{(\mu+2n+1)}} \arr{\mathcal{R}_{1}} w_{m+2,n}^{(\mu)}\]
from which the result follows. 
We note that \[ \mathcal{R}_{1}^2\Big(w_{m,n}^{(\mu)} \Big)=w_{m+1,n}^{(\mu)},\qquad 
\mathcal{R}_{1}^2\Big(1/{v}_{m,n}^{(\mu)} \Big)=1/{v}_{m+1,n}^{(\mu)}.\]
\end{proof}

\begin{remarks}{\rm 
\begin{enumerate}[(i)]\item[]
\item An equivalent result can be obtained using the inverse transformation $\mathcal{R}_{1}^{-1}=\T_{-1,1,-1}$ since
\begin{align*}
\mathcal{R}_{1}^{-1}\big(w_{m,n}^{(\mu)};-m,m+2n+1+\mu,\mu\big)&=\big(v_{m,n}^{(\mu+2n+1)};-m-n-1,m+n+1+\mu,2n+1+\mu\big),\\
\mathcal{R}_{1}^{-1}\big(v_{m,n}^{(\mu)};-m-n-1,m-n+\mu,\mu\big)&=\big(w_{m+1,n}^{(\mu-2n-1)};-m-1,m+1+\mu,-2n-1+\mu\big).
\end{align*}
\item We note that if $\mu=-n-\hf$ then
\[ Q_{2N}(z)=\frac{w_{N+m,n}^{(-n-1/2)}(z)+1}{w_{N+m,n}^{(-n-1/2)}(z)-1},\qquad Q_{2N+1}(z)=\frac{1+v_{N+m,n}^{(n+1/2)}(z)}{1-v_{N+m,n}^{(n+1/2)}(z)},\]
are solutions to 
\[Q_{N+1}+Q_{N-1}=\frac{4}{z}\frac{(N+2m+n+\hf) Q_{N}-n-\hf}{1-Q_{N}^2},\] 
which is \dPII\ \eqref{eq:dPII} with $\zeta=4/z$, $\la=2m+n+\hf$ and $\rho=-n-\hf$.
Hence, although there is no alternating term, the structure of the even and odd terms is different.
\item {As remarked above, Kitaev, Law and McLeod \cite[Theorem 1.2]{refKLM} proved that rational solutions expressed in terms of generalised Laguerre polynomials are unique if $\mu\not\in\Integer$. Consequently the hierarchies can be determined through the parameters using induction. To obtain the rational solution $w_{m,n}^{(\ell)}$, with $\ell\in\Integer$, then we use
$w_{m,n}^{(\ell)}=\lim_{\mu\to\ell}w_{m,n}^{(\mu)}$ and similarly for $v_{m,n}^{(\ell)}$ and $u_{m,n}^{(\ell)}$.
In fact it is necessary to do this in order to evaluate the rational solution when $\mu\in\Integer$ using Maple.}
\end{enumerate}
}\end{remarks}

Alternatively, using the solution $v_{m,n}^{(\mu)}(z)$, for fixed $m$ and $n$, then we have the following Lemma.
\begin{lemma}
Consider the rational functions
\[ Q_{2N}(z)=X_{N}(z)=\frac{v_{m,N+n}^{(\mu)}+1}{v_{m,N+n}^{(\mu)}-1},\qquad Q_{2N+1}(z)=Y_{N}(z)=\frac{u_{N+n+1,m}^{(\mu+2m+1)}+1}{u_{N+n+1,m}^{(\mu+2m+1)}-1},\]
with $v_{m,n}^{(\mu)}(z)$ and $u_{m,n}^{(\mu)}(z)$ as given in Theorem \ref{thm:genLag},
then $Q_{N}(z)$ satisfies
\[Q_{N+1}+Q_{N-1}=\frac{4}{z}\frac{(N+2n+1-\mu) Q_{N}+\mu+(m+\hf)[1+(-1)^N]}{1-Q_{N}^2},\]
which is asymmetric \dPII\ \eqref{eq:adPII} with $\la=2n+1-\mu$, $\rho=m+\hf+\mu$ and $\ph=m+\hf$,
whilst $X_{N}(z)$ and $Y_{N}(z)$ satisfy the discrete system
\begin{subequations}\label{sys:lem56}\begin{align} 
X_{N+1}+X_{N}&=\frac{4}{z}\frac{(2N+2n+2-\mu)Y_{N}+\mu}{1-Y_{N}^2},\\
Y_{N}+Y_{N-1}&=\frac{4}{z}\frac{(2N+2n+1-\mu)X_{N}+\mu+2m+1}{1-X_{N}^2}.
\end{align}\end{subequations}
\end{lemma}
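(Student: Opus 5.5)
The plan mirrors the proof of the previous Lemma: exhibit a chain of solutions linked by $\mathcal{R}_{1}=\T_{1,-1,1}$ that alternates between $v$-type and $u$-type solutions. Then read off the parameters $(a_N,b_N,c_N)$ along the chain and match them with the general solution \eqref{dP2:abc}. Once the chain is established, the discrete equation follows immediately from \S\ref{ssec:adPII}. Indeed, any sequence $w_N$ with $w_{N+1}=\mathcal{R}_{1}(w_N)$ and $q_N=(w_N+1)/(w_N-1)$ satisfies asymmetric \dPII\ \eqref{eq:adPII}, with $\la,\rho,\ph$ fixed by $a_0,b_0,c_0$.

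The key steps are the following.

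\begin{enumerate}[(1)]
\item Take $v_{m,n}^{(\mu)}$, which has $\pms{\hf(m+n+1)^2}{-\hf(m-n+\mu)^2}{\mu}$. Choose the signs $a=-(m+n+1)$ and $b=m-n+\mu$, or whichever choice in Table \ref{table:R1params} makes $\mathcal{R}_{1}^2$ increase $n$. The parameter map gives $\mathcal{R}_{1}(a,b,c)=\big(\hf(-2n-1+\mu... )\big)$, and this should be computed explicitly.
\item Identify the image $\mathcal{R}_{1}(v_{m,n}^{(\mu)})$ as $u_{n+1,m}^{(\mu+2m+1)}$, whose parameters are $\pms{\hf(n+1)^2}{-\hf(n+2m+2-\mu-2m-1)^2}{\mu+2m+1}$, i.e.\ $\b=-\hf(n+1-\mu)^2$. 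I will check that these agree with the computed image, possibly after the symmetry $w\mapsto 1/w$ appearing in the previous Lemma. Since $Y_N$ uses $(u+1)/(u-1)$ rather than $(1+v)/(1-v)$, I expect no inversion here.
\item Show likewise that $\mathcal{R}_{1}(u_{n+1,m}^{(\mu+2m+1)})=v_{m,n+1}^{(\mu)}$, consistent with $\mathcal{R}_{1}^2(a,b,c)=(a+1,b-1,c)$. This gives the chain
\[ v_{m,n}^{(\mu)}\arr{\mathcal{R}_{1}} u_{n+1,m}^{(\mu+2m+1)}\arr{\mathcal{R}_{1}} v_{m,n+1}^{(\mu)}\arr{\mathcal{R}_{1}}\cdots. \]
\item Set $(a_0,b_0,c_0)$ equal to the signed parameters of $v_{m,n}^{(\mu)}$ and solve $a_0=\hf(\la+\rho+\ph)$, $b_0=-\hf(\la-\rho-\ph)$, $c_0=\rho-\ph$. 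This should give $\la=2n+1-\mu$, $\rho=m+\hf+\mu$ and $\ph=m+\hf$, up to the sign conventions, which I will fix so that the stated equation results. Substituting $N=2N'$ and $N=2N'+1$ then yields the system \eqref{sys:lem56} exactly as in \eqref{sys:dPII}.
\end{enumerate}

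The main obstacle is step (2)–(3): proving that $\mathcal{R}_{1}$ actually maps the specific rational solution $v_{m,n}^{(\mu)}$ to $u_{n+1,m}^{(\mu+2m+1)}$ and back to $v_{m,n+1}^{(\mu)}$, rather than merely to some solution with the right parameters. For generic $\mu\notin\Integer$ I will use the uniqueness result of Kitaev, Law and McLeod \cite[Theorem 1.2]{refKLM}, quoted in the Remarks after Theorem \ref{thm:KLM}. The \bt\ of a rational solution is rational, and its parameters fall into case (i) of Theorem \ref{thm:KLM}, so it must equal the stated $u$ or $v$. The case $\mu\in\Integer$ then follows by taking the limit $\mu\to\ell$, as in the preceding remark. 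I must also check the nondegeneracy conditions \eqref{eq:PT.BTs.P5a}–\eqref{eq:PT.BTs.P5a1}, which hold because $v$ and $u$ are nonconstant with $w\neq1$.
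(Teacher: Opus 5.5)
Your route is the same as the paper's: under $\mathcal{R}_{1}$ you build the chain $v_{m,n}^{(\mu)}\to u_{n+1,m}^{(\mu+2m+1)}\to v_{m,n+1}^{(\mu)}\to\cdots$ and read off \eqref{eq:adPII} and \eqref{sys:lem56} from the signed parameters of the first term. Your use of Kitaev--Law--McLeod uniqueness for $\mu\notin\Integer$, followed by a limit, is also what the paper relies on implicitly (see the remarks after the first lemma in \S\ref{adPII:GenLag}). Two points need correcting, however.

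\textbf{Sign choice in step (1).} Your first choice of signs is wrong. With $(a,b,c)=(-(m+n+1),\,m-n+\mu,\,\mu)$ one gets $\mathcal{R}_{1}(a,b,c)=(-m,\,m+\mu,\,\mu-2n-1)$. So $\mathcal{R}_{1}(v_{m,n}^{(\mu)})=w_{m,n}^{(\mu-2n-1)}$, and $\mathcal{R}_{1}^2$ lowers $m$ rather than raising $n$; this is the chain behind \eqref{lem514:sys1}, not this lemma. The right choice is $(a_0,b_0,c_0)=(m+n+1,\,m-n+\mu,\,\mu)$. Then $\mathcal{R}_{1}(a_0,b_0,c_0)=(n+1,\,\mu-n-1,\,\mu+2m+1)$, exactly the signed parameters of $u_{n+1,m}^{(\mu+2m+1)}$. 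A second application gives $(m+n+2,\,m-n-1+\mu,\,\mu)$, the parameters of $v_{m,n+1}^{(\mu)}$. Now $a_0-b_0=2n+1-\mu=\la$, $a_0+b_0=2m+1+\mu=\rho+\ph$ and $c_0=\mu=\rho-\ph$ give the stated constants directly, with no sign conventions left to adjust.

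\textbf{Nondegeneracy.} Your justification of \eqref{eq:PT.BTs.P5a}--\eqref{eq:PT.BTs.P5a1} is not valid. A nonconstant rational solution with $w\not\equiv1$ can still have $\Phi\equiv0$. This is exactly why the sequence $q_1^{[b]}\mapsto q_2^{[b]}=1$ in Example~\ref{exam:54} terminates, since $q=1$ means $W=\infty$.

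What is true is the following. For $\T_{1,-1,1}$, the condition $\Phi\equiv0$ is a Riccati equation. Substituting it into \PV\ leaves a term proportional to $(a-b+c+1)zw$, so it can only occur when the image parameter $\hf(a-b+c+1)$ vanishes. Likewise $\Phi\equiv 2zw$ forces the image parameter $\hf(-a+b+c-1)$ to vanish. Along your chain the image $a$-parameters are $n+1,\,m+n+2,\,n+2,\dots$, and the image $b$-parameters are $\mu$ minus an integer. Hence both nondegeneracy conditions hold whenever $\mu\notin\Integer$, which is precisely the regime in which you invoke uniqueness. With these two corrections the argument is complete.
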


\begin{proof} From the transformations
{\begin{align*}
\mathcal{R}_{1}\big(v_{m,n}^{(\mu)};m+n+1,m-n+\mu,\mu\big)&=\big(u_{n+1,m}^{(\mu+2m+1)}; n+1,-n-1+\mu,2m+1+\mu\big),\\
\mathcal{R}_{1}\big(u_{m,n}^{(\mu)};m,-m-2n-1+\mu,\mu\big)&=\big(v_{n,m}^{(\mu-2n-1)};m+n+1,-m-n-1+\mu,-2n-1+\mu\big),
\end{align*}}
we obtain the sequence of solutions
\[v_{m,n}^{(\mu)} \arr{\mathcal{R}_{1}} u_{n+1,m}^{(\mu+2m+1)} \arr{\mathcal{R}_{1}} v_{m,n+1}^{(\mu)} \arr{\mathcal{R}_{1}} u_{n+2,m}^{(\mu+2m+1)} \arr{\mathcal{R}_{1}} v_{m,n+2}^{(\mu)}\]
from which the result follows. We note that \[ \mathcal{R}_{1}^2\Big(v_{m,n}^{(\mu)} \Big)=v_{m,n+1}^{(\mu)},\qquad 
\mathcal{R}_{1}^2\Big(u_{n,m}^{(\mu)} \Big)=u_{n+1,m}^{(\mu)}.\]
\end{proof}

\subsubsection{\label{dPn:GenLag}Second discrete equation}
The effect on the parameters of applying the transformations $\mathcal{R}_{2}$ \eqref{bt:R2} and $\mathcal{R}_{2}^2$ on a solution $w(z)$ of \PV\ with $\pms{\hf a^2}{-\hf b^2}{c}$ is given in Table \ref{table:R2params}.
\begin{table}[h]
{\[\begin{array}{|c|ccc|}
\hline
& \a_{1} & \b_{1} & \ga_{1} \\ \hline
\mathcal{R}_{2}(a,b,c) & \tfrac{1}{8}(a+b-c-1)^2 &-\tfrac{1}{8}(a+b+c+1)^2 &-a+b+1\\
\mathcal{R}_{2}(-a,b,c) & \tfrac{1}{8}(a-b+c+1)^2 &-\tfrac{1}{8}(a-b-c-1)^2 &a+b+1\\
\mathcal{R}_{2}(a,-b,c) & \tfrac{1}{8}(a-b-c-1)^2 &-\tfrac{1}{8}(a-b+c+1)^2 &-a-b+1\\
\mathcal{R}_{2}(-a,-b,c) & \tfrac{1}{8}(a+b+c+1)^2 &-\tfrac{1}{8}(a+b-c-1)^2 &a-b +1\\ \hline\hline
\mathcal{R}_{2}^2(a,b,c) & \hf(a-1)^2 &-\hf(b+1)^2 &c+2\\
\mathcal{R}_{2}^2(-a,b,c) & \hf(a+1)^2 &-\hf(b+1)^2 &c+2\\
\mathcal{R}_{2}^2(a,-b,c) & \hf(a-1)^2 &-\hf(b-1)^2 &c+2\\
\mathcal{R}_{2}^2(-a,-b,c) & \hf(a+1)^2 &-\hf(b-1)^2 &c+2 \\ \hline
\end{array}\]
\caption{\label{table:R2params}The effect on the parameters of applying the transformations $\mathcal{R}_{2}$ \eqref{bt:R2} and $\mathcal{R}_{2}^2$ on a solution $w(z)$ of \PV\ with $\pms{\hf a^2}{-\hf b^2}{c}$.}
}\end{table}

In the following example, we illustrate how using the \bt\ $\mathcal{R}_{2}$ \eqref{bt:R2} gives rise to a hierarchy of solutions of \eqref{eq:dPn} in terms of generalised Laguerre polynomials.
\begin{example}
Applying the \bt\ $\mathcal{R}_{2}$ \eqref{bt:R2}, with $(a,b,c)=(-3,\mu,\mu)$, to the solution
\[ v_{1,1}^{(\mu)}(z)=1-\frac{z\big[z^2-2\mu z+\mu(\mu+1)\big]\big[z^2-2\mu z+\mu(\mu-1)\big]}{(z-\mu)\big[z^4-4\mu z^3+6\mu^2 z^2-4\mu(\mu^2-1)z+\mu^2(\mu^2-1)\big]},\]
we obtain the sequence of rational solutions of \PV\ \eqref{eq:pv}
\[ v_{1,1}^{(\mu)}\arr{\mathcal{R}_{2}} u_{2,1}^{(\mu+4)}\arr{\mathcal{R}_{2}} v_{1,2}^{(\mu+2)}\arr{\mathcal{R}_{2}}u_{3,1}^{(\mu+6)}\arr{\mathcal{R}_{2}} v_{1,3}^{(\mu+4)}\]
Hence we obtain the solutions of equation \eqref{eq:dPn}
\begin{align*}
q_{0}(z)&=\frac{v_{1,1}^{(\mu)}+1}{v_{1,1}^{(\mu)}-1}
= -1+\frac{2\mu}{z}+\Ln{z^2-2\mu+\mu(\mu+1)}{z^2-2\mu+\mu(\mu-1)},\\
q_{1}(z)&=\frac{u_{2,1}^{(\mu+4)}+1}{u_{2,1}^{(\mu+4)}-1}
= 1 +2\Ln{z^4-4(\mu+1)z^3+6(\mu+1)^2z^2-4\mu(\mu+1)(\mu+2)z+\mu(\mu+1)^2(\mu+2)}{z^2-2\mu+\mu(\mu+1)}.
 \end{align*}
Consequently, if we define the rational functions 
\begin{align*} 
q_{2n}(z)&=x_{n}(z)=\frac{v_{1,n+1}^{(2n+\mu)}(z)+1}{v_{1,n+1}^{(2n+\mu)}(z)-1}
=-1+\frac{2(\mu+2n)}{z}+2\Ln{\Tmn{0,n+2}{\mu-3}}{\Tmn{1,n+1}{\mu-3}(z)},\\
 q_{2n+1}(z)&=y_{n}(z)=\frac{u_{n+2,1}^{(2n+\mu+4)}(z)+1}{u_{n+2,1}^{(2n+\mu+4)} (z)-1}
=1+2\Ln{\Tmn{1,n+2}{\mu-3}}{\Tmn{0,n+2}{\mu-3}(z)},
 \end{align*}
then $q_{n}(z)$ satisfies
\[\frac{n+\mu+2}{q_{n+1}+q_{n}}+\frac{n+\mu+1}{q_{n}+q_{n-1}}=\hf z-\frac{\big\{n+\mu+\tfrac{3}{2}[1+(-1)^n]\big\}q_{n}+3-\mu}{1-q_{n}^2},\]
which is equation \eqref{eq:dPn} with $\la=\mu+\tfrac{3}{2}$, $\rho=3-\mu$ and $\ph=\tfrac{3}{2}$,
whilst $x_{n}(z)$ and $y_{n}(z)$ satisfy the discrete system
\begin{align*} 
\frac{2n+\mu+3}{x_{n+1}+y_{n}}+\frac{2n+\mu+2}{x_{n}+y_{n}}&=\hf z-\frac{(2n+\mu+1)y_{n}+3-\mu}{1-y_{n}^2},\\
\frac{2n+\mu+2}{x_{n}+y_{n}}+\frac{2n+\mu+1}{x_{n}+y_{n-1}}&=\hf z-\frac{(2n+\mu+3)x_{n}+3-\mu}{1-x_{n}^2}.
\end{align*}
\end{example}
\comment{Alternatively, to derive a sequence of solutions we can use the \bt\
\beq\mathcal{Q}_{2}(q)=\frac{\mathcal{R}_{2}\big(\frac{q+1}{q-1}\big)+1}{\mathcal{R}_{2}\big(\frac{q+1}{q-1}\big)-1}=-q+\frac{2(a-b-c-1)(q^2-1)}{2zq'-zq^2+2(a-b)q+2a+2b+z}.\label{bt:Q2} \eeq}

More generally, using the solution $v_{m,n}^{(\mu)}(z)$, for fixed $m$ and $n$, then we have the following Lemma. 
\begin{lemma}
Consider the rational functions 
\[ Q_{2N}(z)=X_{N}(z)=\frac{v_{m,N+n}^{(2N+\mu)}(z)+1}{v_{m,N+n}^{(2N+\mu)}(z)-1},\qquad Q_{2N+1}(z)=Y_{N}(z)=\frac{u_{N+n+1,m}^{(2N+\mu+2m+2)}(z)+1}{u_{N+n+1,m}^{(2N+\mu+2m+2)} (z)-1}, \]
with $v_{m,n}^{(\mu)}(z)$ and $u_{m,n}^{(\mu)}(z)$ as given in Theorem \ref{thm:genLag},
then $Q_{N}(z)$ satisfies
\[\frac{N+m+\mu+1}{Q_{N+1}+Q_{N}}+\frac{N+m+\mu}{Q_{N}+Q_{N-1}}=\hfz-\frac{\big\{N+[1+(-1)^N](m+\hf)+\mu\big\}Q_{N}+2n+1-\mu}{1-Q_{N}^2},\]
which is equation \eqref{eq:dPn} with $\la=m+\mu-\hf$, $\rho=2n-\mu+1$ and $\ph=m+\hf$,
whilst $X_{N}(z)$ and $Y_{N}(z)$ satisfy the discrete system
\begin{subequations}\label{sys:lem57}\begin{align} 
\frac{2N+m+\mu+2}{X_{N+1}+Y_{N}}+\frac{2N+m+\mu+1}{X_{N}+Y_{N}}&=\hfz-\frac{(2N+\mu+1)Y_{N}+2n+1-\mu}{1-Y_{N}^2},\\
\frac{2N+m+\mu+1}{X_{N}+Y_{N}}+\frac{2N+m+\mu}{X_{N}+Y_{N-1}}&=\hfz-\frac{(2N+2m+\mu+1)X_{N}+2n+1-\mu}{1-X_{N}^2}.
\end{align}\end{subequations}
\end{lemma}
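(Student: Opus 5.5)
The plan follows the pattern of the two preceding lemmas for $\mathcal{R}_{1}$. First I will establish two one-step identities for $\mathcal{R}_{2}$ \eqref{bt:R2} acting on the Laguerre-type rational solutions of Theorem~\ref{thm:genLag}. The first sends $v_{m,n}^{(\mu)}$, with an appropriate sign choice of $(a,b,c)$, to $u_{n+1,m}^{(\mu+2m+2)}$. The second sends $u_{n+1,m}^{(\mu+2m+2)}$ to $v_{m,n+1}^{(\mu+2)}$. The signs of $a$ and $b$ are to be chosen so that the parameter map matches, using the rows of Table~\ref{table:R2params}. The example with $v_{1,1}^{(\mu)}$ and $(a,b,c)=(-3,\mu,\mu)$ indicates the choice $a=-(m+n+1)$ and $b=m-n+\mu$.

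\textbf{Parameter check.} For $v_{m,n}^{(\mu)}$ we have $(a,b,c)=(-(m+n+1),\,m-n+\mu,\,\mu)$, and
\[
\mathcal{R}_{2}(a,b,c)=\big(\hf(a+b-c-1),\ \hf(a+b+c+1),\ -a+b+1\big)=\big(-n-1,\ m-n+\mu,\ 2m+\mu+2\big).
\]
These are exactly the parameters of $u_{n+1,m}^{(\mu+2m+2)}$: indeed $\hf(n+1)^2$, $-\hf(n+1+2m+1-\mu-2m-2)^2=-\hf(n-\mu)^2$ up to the sign convention, and $\ga=\mu+2m+2$. The same check is repeated for the second step. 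Composing the two steps gives $\mathcal{R}_{2}^2(a,b,c)=(a-1,b+1,c+2)$, which is consistent with $v_{m,n}^{(\mu)}\mapsto v_{m,n+1}^{(\mu+2)}$.

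\textbf{Identifying the transformed function.} Once the parameters agree, I must show the transformed function equals the claimed rational solution. For $\mu\notin\Integer$ I will invoke the uniqueness result of Kitaev--Law--McLeod from the remarks after Theorem~\ref{thm:KLM}. Since $\mathcal{R}_{2}$ maps rational solutions to rational solutions, the image is the unique rational solution with those parameters, namely the claimed one. The integer case then follows by taking the limit $\mu\to\ell$, as in the earlier remark.

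Should uniqueness be thought insufficient, because $\ga$ is an integer in the family, the fallback is a direct computation. I would use the logarithmic-derivative forms \eqref{GenLag_vmn_log} and \eqref{GenLag_umn_log} together with the bilinear identities for $\Tmn{m,n}{\mu}$. This identification is the main obstacle.

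\textbf{Obtaining the discrete equations.} Iterating gives the chain
\[
v_{m,n}^{(\mu)}\arr{\mathcal{R}_{2}}u_{n+1,m}^{(\mu+2m+2)}\arr{\mathcal{R}_{2}}v_{m,n+1}^{(\mu+2)}\arr{\mathcal{R}_{2}}u_{n+2,m}^{(\mu+2m+4)}\arr{\mathcal{R}_{2}}\cdots .
\]
Setting $Q_N=(w_N+1)/(w_N-1)$ along this chain, the general derivation of \eqref{eq:dPn} in \S\ref{ssec:adPn} applies. I will read off $\la$, $\rho$ and $\ph$ by matching $(a_0,b_0,c_0)$ with \eqref{eq:abcn2}, choosing the overall signs of $(a,b)$ and possibly $q\to -q$. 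This gives $\la=m+\mu-\hf$, $\rho=2n-\mu+1$ and $\ph=m+\hf$.

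The displayed equation has $\hf z$ on the right-hand side, whereas \eqref{eq:dPn} has $z$. I will reconcile this by dividing \eqref{eq:dPn} by 2, and I will verify the coefficients $\la+N+\hf$ on each side. Finally, splitting the index by parity, with $X_N=Q_{2N}$ and $Y_N=Q_{2N+1}$, yields \eqref{sys:lem57} exactly as \eqref{sys:dPn} was obtained from \eqref{eq:dPn}.
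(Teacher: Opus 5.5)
Your proposal follows the paper's proof. It uses the same two $\mathcal{R}_{2}$ steps $v_{m,n}^{(\mu)}\to u_{n+1,m}^{(\mu+2m+2)}\to v_{m,n+1}^{(\mu+2)}$, starting from $(a,b,c)=(-(m+n+1),m-n+\mu,\mu)$, and the same resulting chain. It identifies the images by Kitaev--Law--McLeod uniqueness for $\mu\notin\Integer$ plus a limit $\mu\to\ell$, which is the justification the paper itself points to in its remark on uniqueness.

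There are two slips to fix:
\begin{enumerate}
\item In the parameter check, $\hf(a+b+c+1)=\mu-n$, not $m-n+\mu$. It is $\mu-n$ that matches $\b=-\hf(n-\mu)^2$, so no appeal to a sign convention is needed.
\item Matching $(a_0,b_0,c_0)$ with \eqref{eq:abcn2} gives $\la=m+\mu+\hf$. This is what the displayed equation actually encodes, since its first coefficient is $N+\la+\hf=N+m+\mu+1$. So the quoted value $\la=m+\mu-\hf$ is a misprint in the statement, and your planned coefficient check will expose it rather than confirm it.
\end{enumerate}
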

\begin{proof} From the transformations
\begin{align*}
\mathcal{R}_{2}\big(v_{m,n}^{(\mu)};-m-n-1,m-n+\mu,\mu\big)&=\big(u_{n+1,m}^{(\mu+2m+2)};-n-1,-n+\mu,2m+2+\mu\big),\\
\mathcal{R}_{2}\big(u_{n,m}^{(\mu+2m+2)};-n,-n+1+\mu,2m+2+\mu\big)&=\big(v_{m,n}^{(\mu+2)};m+n+1,-m-n-1+\mu,2m-1+\mu\big),
\end{align*}
we obtain the sequence of solutions
\[ v_{m,n}^{(\mu)} \arr{\mathcal{R}_{2}} u_{n+1,m}^{(\mu+2m+2)} \arr{\mathcal{R}_{2}} v_{m,n+1}^{(\mu+2)} \arr{\mathcal{R}_{2}} u_{n+2,m}^{(\mu+2m+4)} \arr{\mathcal{R}_{2}} v_{m,n+2}^{(\mu+4)}\] 
from which the result follows.
We note that \[ \mathcal{R}_{2}^2\Big(v_{m,n}^{(\mu)} \Big)=v_{m,n+1}^{(\mu+2)},\qquad 
\mathcal{R}_{2}^2\Big(u_{n,m}^{(\mu)} \Big)=u_{n+1,m}^{(\mu+2)}.\]
\end{proof}

Alternatively, using the solution $w_{m,n}^{(\mu)}(z)$, for fixed $m$ and $n$, then we have the following Lemma.
\begin{lemma}
Consider the rational functions 
\[ Q_{2N}(z)=X_{N}(z)=\frac{w_{N+m,n}^{(\mu-2N)}(z)+1}{w_{N+m,n}^{(\mu-2N)}(z)-1},\qquad Q_{2N+1}(z)=Y_{N}(z)=\frac{v_{N+m,n}^{(\mu+2n-2N)}(z)+1}{v_{N+m,n}^{(\mu+2n-2N)} (z)-1},\] 
with $w_{m,n}^{(\mu)}(z)$ and $v_{m,n}^{(\mu)}(z)$ as given in Theorem \ref{thm:genLag},
then $Q_{N}(z)$ satisfies
\[\frac{N-n-\mu}{Q_{N+1}+Q_{N}}+\frac{N-n-\mu-1}{Q_{N}+Q_{N-1}}+\hfz+\frac{\big\{N-\mu-(n+\hf)[1+(-1)^N] \big\}Q_{N}+2 m+2 n+1+\mu}{1-Q_{N}^{2}}=0,\]
which is equation \eqref{eq:dPn} with $\la=-n-\mu-\hf$, $\rho=2m+2n+\mu+1$ and $\ph=-n-\hf$,
whilst $X_{N}(z)$ and $Y_{N}(z)$ satisfy the discrete system
\begin{subequations}\label{sys:lem58}\begin{align} 
&\frac{2 N-n-\mu+1}{X_{N+1}+Y_{N}}+\frac{2 N-n-\mu}{X_{N}+Y_{N}}+\hfz+\frac{(2 N-\mu+1) Y_{N}+2 m+2 n+1+\mu}{1-Y_{N}^{2}}=0,\\
&\frac{2 N-n-\mu}{X_{N}+Y_{N}}+\frac{2 N-n-\mu-1}{X_{N}+Y_{N-1}}+\hfz+\frac{(2 N-2 n-\mu-1 ) X_{N}+2 m+2 n+1+\mu}{1-X_{N}^{2}}=0.
\end{align}\end{subequations}
\end{lemma}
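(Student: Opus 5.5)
Following the proofs of the two preceding lemmas, the plan is to construct the chain of rational solutions generated by $\mathcal{R}_{2}$ \eqref{bt:R2}, now starting from $w_{m,n}^{(\mu)}$. The candidate pair of transformations is
\begin{align*}
\mathcal{R}_{2}\big(w_{m,n}^{(\mu)};-m,-m-2n-1-\mu,\mu\big)&=\big(v_{m,n}^{(\mu+2n)};-m-n-1,\,m-n+\mu+2n,\,\mu+1\big),\\
\mathcal{R}_{2}\big(v_{m,n}^{(\mu+2n)};-m-n-1,\,m+n+\mu,\,\mu+1\big)&=\big(w_{m+1,n}^{(\mu-2)};-m-1,\,-m-2n-\mu,\,\mu-2\big).
\end{align*}
The sign choices for $(a,b)$ are taken from Table \ref{table:R2params}. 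I would fix them by requiring that the image parameters $\mathcal{R}_{2}(a,b,c)=\big(\hf(a+b-c-1),\hf(a+b+c+1),-a+b+1\big)$ match those of $v_{m,n}^{(\mu+2n)}$ in Theorem \ref{thm:genLag}(b), and at the next step those of $w_{m+1,n}^{(\mu-2)}$ in Theorem \ref{thm:genLag}(a), up to the signs of $a$ and $b$. As a consistency check, the composite must agree with $\mathcal{R}_{2}^2(a,b,c)=(a-1,b+1,c+2)$ from Table \ref{table:R2params}. I would verify this before fixing the exact superscripts, since it forces $\mu\mapsto\mu-2$ per double step for $w$, in agreement with the $\mu-2N$ in $X_N$.

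Iterating gives the chain
\[
w_{m,n}^{(\mu)}\arr{\mathcal{R}_{2}} v_{m,n}^{(\mu+2n)}\arr{\mathcal{R}_{2}} w_{m+1,n}^{(\mu-2)}\arr{\mathcal{R}_{2}} v_{m+1,n}^{(\mu+2n-2)}\arr{\mathcal{R}_{2}}\cdots,
\]
so that $\mathcal{R}_{2}^2(w_{m,n}^{(\mu)})=w_{m+1,n}^{(\mu-2)}$ and $\mathcal{R}_{2}^2(v_{m,n}^{(\mu)})=v_{m+1,n}^{(\mu-2)}$. This is exactly the indexing of $X_N$ and $Y_N$ in the statement.

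To establish each arrow I would rely on uniqueness rather than direct computation. Kitaev, Law and McLeod \cite[Theorem 1.2]{refKLM} show that for $\mu\notin\Integer$ the rational solution with the given parameters is unique. The image under a Bäcklund transformation of a rational solution is rational. Hence it suffices to check that the image is not degenerate (for example $\Phi\neq0$ in \eqref{eq:PT.BTs.P5a}) and that the parameters match. Integer $\mu$ then follows by taking the limit, as in the remark after the first lemma.

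Next I would read off $(a_N,b_N,c_N)$ along the chain: $a_0=-m$, $b_0=-m-2n-1-\mu$, $c_0=\mu$, and the subsequent values from the transformation law. I would match these with \eqref{eq:abcn2} to determine $\la$, $\rho$ and $\ph$. From $c_{2N}=2N+\la-\ph$ with $c_{2N}=\mu-2N$ something must reverse direction, which signals that the index runs with a sign flip. I therefore expect the substitution $n\to-N$ (or a relabelling $\la\to-\la$) in \eqref{eq:dPn}, which accounts for the form ``$+\hfz$'' and the numerators $N-n-\mu$.

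Substituting $q_N=Q_N$ into \eqref{eq:dPn}, written as \eqref{dPn:btsys2} with explicit $a_N,b_N,c_N$, and adding then gives the stated equation with $\la=-n-\mu-\hf$, $\rho=2m+2n+\mu+1$ and $\ph=-n-\hf$. Splitting into even and odd $N$ yields \eqref{sys:lem58}.

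The main obstacle is the bookkeeping of signs. This includes choosing the branches of $a$ and $b$ so that the chain neither terminates (as happened in Example \ref{exam:54}) nor lands on $1/v$ instead of $v$, and reconciling the backward direction of $c_N$ with the parametrisation \eqref{eq:abcn2}. I would first test the case $m=n=1$ numerically with computer algebra to pin down the correct superscripts and signs before writing the general argument.
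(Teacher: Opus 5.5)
\paragraph{The direction of the Bäcklund transformation is wrong.} The chain in the statement is generated by $\mathcal{R}_{2}^{-1}$, not by $\mathcal{R}_{2}$. No choice of the signs of $a$ and $b$ repairs this, and your own consistency check shows why.

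By Table~\ref{table:R2params}, $\mathcal{R}_{2}^2$ sends $c\mapsto c+2$ for all four sign choices. For $w_{m,n}^{(\mu)}$ we have $c=\ga=\mu$, so $\mathcal{R}_{2}^2$ forces $\mu\mapsto\mu+2$, not $\mu-2$. It can never reach $w_{m+1,n}^{(\mu-2)}$. For the same reason $\mathcal{R}_{2}^2(v_{m,n}^{(\mu)})=v_{m+1,n}^{(\mu-2)}$ is impossible.

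Your two displayed maps are also false:
\begin{itemize}
\item $\mathcal{R}_{2}(-m,-m-2n-1-\mu,\mu)=(-m-n-1-\mu,\,-m-n,\,-2n-\mu)$. This has $\ga=-2n-\mu$; these are the parameters of $1/v_{m,n-1}^{(\mu+2n)}$. But $v_{m,n}^{(\mu+2n)}$ has $\ga=\mu+2n$ by Theorem~\ref{thm:genLag}(b), not $\mu+1$.
\item With the correct input $c=\mu+2n$, your second map gives $(-n-1,\,n+\mu,\,2m+2n+\mu+2)$. These are the parameters of $u_{n+1,m}^{(\mu+2m+2n+2)}$, i.e.\ the chain of the preceding lemma.
\end{itemize}
More generally, the third component of $\mathcal{R}_{2}(a,b,c)$ is $-a+b+1$. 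For generic $\mu$ this never equals $\mu+2n$ when $a=\pm m$, $b=\pm(m+2n+1+\mu)$. Nor does it equal $\mu-2$ when $a=\pm(m+n+1)$, $b=\pm(m+n+\mu)$. So the uniqueness argument has nothing to act on: the parameters simply do not match.

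\paragraph{The fix, as in the paper.} Use $\mathcal{R}_{2}^{-1}=\T_{1,1,1}\circ\T_{1,-1,1}$, which acts as $(a,b,c)\mapsto\big(\hf(a+b-c+1),\hf(a+b+c-1),-a+b-1\big)$. Then
\begin{align*}
\mathcal{R}_{2}^{-1}\big(w_{m,n}^{(\mu)};m,m+2n+1+\mu,\mu\big)&=\big(v_{m,n}^{(\mu+2n)};m+n+1,m+n+\mu,\mu+2n\big),\\
\mathcal{R}_{2}^{-1}\big(v_{m,n}^{(\mu+2n)};m+n+1,m+n+\mu,\mu+2n\big)&=\big(w_{m+1,n}^{(\mu-2)};m+1,m+2n+\mu,\mu-2\big),
\end{align*}
consistent with $\mathcal{R}_{2}^{-2}(a,b,c)=(a+1,b-1,c-2)$.

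Hence $Q_{N+1}=\mathcal{R}_{2}^{-1}(Q_N)$ and $Q_{N-1}=\mathcal{R}_{2}(Q_N)$. This, not a relabelling, is the source of the index reversal you sensed: $q_k:=Q_{-k}$ is exactly the setting of \S\ref{ssec:adPn}.

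Matching with \eqref{eq:abcn2} at $k=-2N$ and $k=-2N-1$ gives $(\la,\rho,\ph)=(n+\mu+\hf,\,-2m-2n-\mu-1,\,n+\hf)$. Then \eqref{eq:dPn} at index $-N$, multiplied by $-\hf$, is precisely the displayed equation. The labels quoted in the statement are these with all signs reversed; with them, \eqref{eq:dPn} reproduces the displayed equation only up to the sign of the $z$-term.

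With this correction the rest of your plan goes through as in the paper: uniqueness for $\mu\notin\Integer$ plus a limit for integer $\mu$, adding the two relations, and splitting into even and odd $N$.
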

\begin{proof} From the transformations
\begin{align*}
\mathcal{R}_{2}^{-1}\big(w_{m,n}^{(\mu)};m,m+2n+1+\mu,\mu\big)&=\big(v_{m,n}^{(\mu+2n)}; m+n+1,m+n+\mu,2n+\mu\big),\\
\mathcal{R}_{2}^{-1}\big(v_{m,n}^{(\mu+2n)}; m+n+1,m+n+\mu,2n+\mu\big)&=\big(w_{m+1,n}^{(\mu-2)};m,m+2n+1+\mu,\mu\big)
\end{align*}
we obtain the sequence of solutions
\[ w_{m,n}^{(\mu)} \arr{\mathcal{R}_{2}^{-1}} v_{m,n}^{(\mu+2n)} \arr{\mathcal{R}_{2}^{-1}} w_{m+1,n}^{(\mu-2)} \arr{\mathcal{R}_{2}^{-1}} v_{m,n}^{(\mu+2n-2)} \arr{\mathcal{R}_{2}^{-1}} w_{m+2,n}^{(\mu-4)}\]
from which the result follows.
\end{proof}
\begin{remark}{{\rm From the transformations
\begin{align*}
\mathcal{R}_{2}^{-1}\big(w_{m,n}^{(\mu)};m,-m-2n-1-\mu,\mu\big)&=\big(1/{u}_{n+1,m}^{(\mu+2m+2n+2)};-n-\mu,-n-1,-2m-2n-2-\mu\big),\\
\mathcal{R}_{2}^{-1}\big(1/{u}_{n,m}^{(\mu)}; 2m+n+1-\mu,-n,-\mu\big)&=\big(w_{m+1,n}^{(\mu-2m-2n-2)};m+1,m-\mu,-2m-2n-2+\mu\big),
\end{align*}
we obtain the sequence of solutions
\[ w_{m,n}^{(\mu)} \arr{\mathcal{R}_{2}^{-1}} \frac{1}{{u}_{n+1,m}^{(\mu+2m+2n+2)}} \arr{\mathcal{R}_{2}^{-1}} w_{m+1,n+1}^{(\mu-2)}
\arr{\mathcal{R}_{2}^{-1}} \frac{1}{{u}_{n+2,m+1}^{(\mu+2m+2n+4)}} \arr{\mathcal{R}_{2}^{-1}} w_{m+2,n+2}^{(\mu-4)}\]
}}\end{remark}


\subsubsection{\label{atdPI:GenLag}Third discrete equation: ternary discrete \PI}
The effect on the parameters of applying the transformations $\mathcal{R}_{3}$ \eqref{bt:R3}, $\mathcal{R}_{3}^2$ and $\mathcal{R}_{3}^3$ on a solution $w(z)$ of \PV\ with $\pms{\hf a^2}{-\hf b^2}{c}$ is given in Table \ref{table:R3params}.
\begin{table}[h]
{\[\begin{array}{|c|ccc|}
\hline
& \a_{1} & \b_{1} & \ga_{1} \\ \hline
\mathcal{R}_{3}(a,b,c) & \tfrac{1}{8}(a-b+c+1)^2 &-\tfrac{1}{8}(a-b-c+1)^2 &a+b\\
\mathcal{R}_{3}(-a,b,c) & \tfrac{1}{8}(a+b-c-1)^2 &-\tfrac{1}{8}(a+b+c-1)^2 &-a+b\\
\mathcal{R}_{3}(a,-b,c) & \tfrac{1}{8}(a+b+c+1)^2 &-\tfrac{1}{8}(a+b-c+1)^2 &a-b\\
\mathcal{R}_{3}(-a,-b,c) & \tfrac{1}{8}(a-b-c-1)^2 &-\tfrac{1}{8}(a-b+c-1)^2 &-a-b \\ 
\hline\hline
\mathcal{R}_{3}^2(a,b,c) & \tfrac{1}{8}(a+b+c+1)^2 &-\tfrac{1}{8}(a+b-c-1)^2 &a-b+1\\
\mathcal{R}_{3}^2(-a,b,c) & \tfrac{1}{8}(a-b-c-1)^2 &-\tfrac{1}{8}(a-b+c+1)^2 &-a-b+1\\
\mathcal{R}_{3}^2(a,-b,c) & \tfrac{1}{8}(a-b+c+1)^2 &-\tfrac{1}{8}(a-b-c-1)^2&a+b+1\\
\mathcal{R}_{3}^2(-a,-b,c) & \tfrac{1}{8}(a+b-c-1)^2 &-\tfrac{1}{8}(a+b+c+1)^2 &-a+b+1 \\ 
\hline\hline
\mathcal{R}_{3}^3(a,b,c) & \hf(a+1)^2 &-\hf b^2 &c+1\\
\mathcal{R}_{3}^3(-a,b,c) & \hf(a-1)^2 &-\hf b^2 &c+1\\
\mathcal{R}_{3}^3(a,-b,c) & \hf(a+1)^2 &-\hf b^2 &c+1\\
\mathcal{R}_{3}^3(-a,-b,c) & \hf(a-1)^2 &-\hf b^2 &c+1 \\ \hline
\end{array}\]
\caption{\label{table:R3params}The effect on the parameters of applying the transformation $\mathcal{R}_{3}$ \eqref{bt:R3}, $\mathcal{R}_{3}^2$ and $\mathcal{R}_{3}^3$ on a solution $w(z)$ of \PV\ with $\pms{\hf a^2}{-\hf b^2}{c}$.}
}\end{table}

In the following example, we illustrate how using the \bt\ $\mathcal{R}_{3}$ \eqref{bt:R3} gives rise to a hierarchy of solutions of ternary \dPI\ \eqref{eq:tdPI} in terms of generalised Laguerre polynomials.
\begin{example}
Applying the \bt\ $\mathcal{R}_{3}$ \eqref{bt:R3}, with $(a,b,c)=(\mu+1,-1,-\mu+3)$, to the solution
\[\frac{1}{{w}_{1,1}^{(\mu-3)}(z)}=-\frac{z^2-2\mu z+\mu(\mu-1)}{(z-\mu+1)\big[z^2-2\mu z+\mu(\mu+1)\big]}, \]
we obtain the sequence of rational solutions of \PV\ \eqref{eq:pv}
\[ \frac{1}{{w}_{1,1}^{(\mu-3)}} \arr{\mathcal{R}_{3}}{v}_{1,1}^{(\mu)} \arr{\mathcal{R}_{3}} u_{2,1}^{(\mu+3)} \arr{\mathcal{R}_{3}}\frac{1}{{w}_{1,2}^{(\mu-4)}}
\arr{\mathcal{R}_{3}}{v}_{1,2}^{(\mu+1)} \arr{\mathcal{R}_{3}} u_{3,1}^{(\mu+4)}\]
so that the effect of applying $\mathcal{R}_{3}$ three times is
\[ \mathcal{R}_{3}^3\Big(1/{{w}_{1,1}^{(\mu-3)}}\Big)= 1/{{w}_{1,1}^{(\mu-4)}},\qquad
\mathcal{R}_{3}^3\Big( {v}_{1,1}^{(\mu)}\Big)= {v}_{1,2}^{(\mu+1)},\qquad
\mathcal{R}_{3}^3\Big( {u}_{2,1}^{(\mu+3)}\Big)= {u}_{3,1}^{(\mu+4)}.\]
Hence we obtain the rational solutions of the ternary \dPI\ \eqref{eq:tdPI}
\begin{align*} 
x_{0}(z)&=\frac{w_{1,1}^{(\mu-3)}(z)}{1-w_{1,1}^{(\mu-3)}(z)} =-1+
\Ln{z^2-2(\mu-1)z+\mu(\mu-1)}{z-\mu},\\
x_{1}(z)&= \frac{1}{v_{1,1}^{(\mu)}(z)-1}=-1+\frac{\mu}{z}
+ \Ln{z^2-2\mu z +\mu(\mu-1)}{z^2-2\mu z +\mu(\mu+1)}, \\ 
x_{2}(z)&= \frac{1}{u_{2,1}^{(\mu+3)}(z)-1}
=\Ln{z^4-4\mu z^3+6\mu^2z^2-4\mu(\mu^2-1)z+\mu^2(\mu^2-1)z}{z^2-2(\mu+1)z +\mu(\mu-1)}.
\end{align*}
Consequently, if we define the rational functions
\begin{align*} 
x_{3n}(z)=X_{n}(z)&=\frac{w_{1,n+1}^{(\mu-n-3)}(z)}{1-w_{1,n+1}^{(\mu-n-3)}(z)} =-1+\Ln{\Tmn{0,n+2}{\mu-n-4}(z)}{\Tmn{0,n+1}{\mu-n-2}(z)},\\
x_{3n+1}(z)=Y_{n}(z)&= \frac{1}{v_{1,n+1}^{(\mu+n)}(z)-1}=-1+\frac{\mu+n}{z}+ \Ln{\Tmn{0,n+2}{\mu-n-3}(z)}{\Tmn{1,n+1}{\mu-n-3}(z)},\\
x_{3n+2}(z)=Z_{n}(z)&= \frac{1}{u_{n+2,1}^{(\mu+n+3)}(z)-1}=\Ln{\Tmn{1,n+2}{\mu-n-4}(z)}{\Tmn{0,n+2}{\mu-n-4}(z)},
\end{align*}
then $x_{n}(z)$ satisfies ternary \dPI\ 
\[ x_{n}(x_{n+1}+x_{n-1}+1) +\frac{a_{n}}{z}=0,\]
with \[a_{n}=\tfrac{1}{3}(n+\mu+5)+\tfrac{2}{3}(\mu-1)\cos \!\big(\tfrac{2}{3}\pi n\big)+\tfrac{4}{9}\sqrt{3}\,\sin \!\big(\tfrac{2}{3}\pi n\big),\]
whilst $X_{n}(z)$, $Y_{n}(z)$ and $Z_{n}(z)$ satisfy the discrete system
\begin{align*} 
&X_{n}(Y_{n}+Z_{n-1}+1)+\frac{n+\mu+1}{z}=0,\\
&Y_{n}(Z_{n}+X_{n}+1)+\frac{n+3}{z}=0,\\
&Z_{n}(X_{n+1}+Y_{n}+1)+\frac{n+2}{z}=0.
\end{align*}
\end{example}
More generally, using the solution $w_{m,n}^{(\mu)}(z)$, for fixed $m$ and $n$, as the initial solution then we have the following Lemma.
\begin{lemma}\label{lem:512}
Consider the rational functions
\beq X_{N}(z)=\frac{w_{m,N+n}^{(\mu-N)}(z)}{1-w_{m,N+n}^{(\mu-N)}(z)},\quad
Y_{N}(z)= \frac{1}{v_{m,N+n}^{(\mu+N+2n+1)}(z)-1},\quad
Z_{N}(z)= \frac{1}{u_{N+n+1,m}^{(\mu+N+2m+2n+2)}(z)-1}, 
\eeq
with $w_{m,n}^{(\mu)}(z)$, $v_{m,n}^{(\mu)}(z)$ and $u_{m,n}^{(\mu)}(z)$ as given in Theorem \ref{thm:genLag},
then $X_{N}(z)$, $Y_{N}(z)$ and $Z_{N}(z)$ satisfy the discrete system
\begin{subequations}\label{sys:lem511}\begin{align} 
&X_{N}(Y_{N}+Z_{N-1}+1)+\frac{N+m+2n+\mu+1}{z}=0,\\
&Y_{N}(Z_{N}+X_{N}+1)+\frac{N+m+n+1}{z}=0,\\
&Z_{N}(X_{N+1}+Y_{N}+1)+\frac{N+n+1}{z}=0,
\end{align}\end{subequations}
which is the system \eqref{sys:tertdPI} with
$a_{3N}=N+m+2n+\mu+1$, $a_{3N+1}=N+m+n+1$ and $a_{3N+2}=N+n+1$, i.e.\ \eqref{eq:abcn3a} with
$\la=\tfrac{1}{3}(m+2n+2\mu+1)$, $\rho=\tfrac{1}{3}(2m+4n+\mu+2)$ and $\ph=m+\tfrac{1}{3}$.
\end{lemma}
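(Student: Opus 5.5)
The plan follows the pattern of the proofs of the earlier lemmas in this section: exhibit a chain of Bäcklund transformations $\mathcal{R}_{3}$ linking the three families of rational solutions, and then invoke the construction of \S\ref{ssec:tdPI}. There it was shown that if $w_{N}$ is a chain of solutions with $w_{N+1}=\mathcal{R}_{3}(w_{N})$ and parameters evolving by \eqref{bt:R3abc}, then $x_{N}=1/(w_{N}-1)$ satisfies the ternary \dPI\ \eqref{eq:tdPI}. Splitting into residues mod $3$ then gives the system \eqref{sys:tertdPI}. Since
\[\frac{w}{1-w}=\frac{1}{1/w-1},\]
the variable $X_{N}$ corresponds to the \PV\ solution $1/w_{m,N+n}^{(\mu-N)}$. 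The chain to establish is therefore
\[ \frac{1}{w_{m,n}^{(\mu)}}\arr{\mathcal{R}_{3}} v_{m,n}^{(\mu+2n+1)}\arr{\mathcal{R}_{3}} u_{n+1,m}^{(\mu+2m+2n+2)}\arr{\mathcal{R}_{3}} \frac{1}{w_{m,n+1}^{(\mu-1)}}\arr{\mathcal{R}_{3}}\cdots.\]
This generalises the preceding example, which is the case $m=n=1$ with $\mu\to\mu-3$.

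\textbf{Step 1: parameters.} By the symmetry $\mathcal{S}_{2}$ \eqref{PV:S2}, $1/w_{m,n}^{(\mu)}$ solves \PV\ with $\pms{\hf(m+2n+1+\mu)^2}{-\hf m^2}{-\mu}$. I choose the branch $(a,b,c)=(m+2n+1+\mu,\,-m,\,-\mu)$. Applying \eqref{bt:R3abc} gives
\begin{align*}
\mathcal{R}_{3}(a,b,c)&=(m+n+1,\;m+n+1+\mu,\;2n+1+\mu),\\
\mathcal{R}_{3}^{2}(a,b,c)&=\big(\tfrac12(\mu+2n+2+\ldots),\ldots\big),\\
\mathcal{R}_{3}^{3}(a,b,c)&=(a+1,\,b,\,c+1).
\end{align*}
The first line matches the parameters of $v_{m,n}^{(\mu+2n+1)}$, since $m-n+(\mu+2n+1)=m+n+1+\mu$. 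Writing $\mathcal{R}_{3}^{3}(a,b,c)$ as $(m+2(n+1)+1+(\mu-1),\,-m,\,-(\mu-1))$ matches $1/w_{m,n+1}^{(\mu-1)}$, as required. For the middle step I will carry out the same arithmetic and check that it gives parameters of $u_{n+1,m}^{(\mu+2m+2n+2)}$, namely $\pms{\hf(n+1)^2}{-\hf(\cdots)^2}{\mu+2m+2n+2}$, up to signs of $a,b$.

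\textbf{Step 2: reading off $a_{3N},a_{3N+1},a_{3N+2}$.} The first components along the chain give $a_{3N}=N+m+2n+\mu+1$ and $a_{3N+1}=N+m+n+1$, and the middle step should give $a_{3N+2}=N+n+1$. Fitting these to \eqref{eq:abcn3a} yields the stated $\la,\rho,\ph$.

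\textbf{Step 3: identifying the solutions (main obstacle).} Matching parameters does not by itself show that $\mathcal{R}_{3}$ maps the specific rational solution to the named one, because of non-uniqueness for integer $\mu$. I would handle this as in the remark after the asymmetric \dPII\ lemma. For $\mu\notin\Integer$, Kitaev, Law and McLeod's uniqueness theorem applies: $\mathcal{R}_{3}$ of a rational solution is rational (when well defined), so it must equal the unique rational solution with those parameters. Integer $\mu$ then follows by taking limits. The delicate points are:
\begin{itemize}
\item checking that $\mathcal{R}_{3}$ is well defined, i.e.\ the denominators in \eqref{bt:R3a} do not vanish identically for these branches;
\item checking that the $u$-step parameters fall in case (c) of Theorem \ref{thm:genLag}.
\end{itemize}
Alternatively, one could verify the maps directly via the logarithmic-derivative forms \eqref{eq:413d}--\eqref{eq:413f} and the bilinear identities used earlier, but I expect the uniqueness route to be cleaner.
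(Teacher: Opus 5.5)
Your proposal is correct and is essentially the paper's proof: the paper exhibits exactly this chain $1/w_{m,n}^{(\mu)}\to v_{m,n}^{(\mu+2n+1)}\to u_{n+1,m}^{(\mu+2m+2n+2)}\to 1/w_{m,n+1}^{(\mu-1)}$ under $\mathcal{R}_3$, starting from the same branch $(m+2n+1+\mu,-m,-\mu)$. It simply asserts the three maps and reads off $a_{3N},a_{3N+1},a_{3N+2}$ as you do. The middle step you deferred does give $(n+1,-n-\mu,2m+2n+2+\mu)$, and your uniqueness-plus-limit justification (which the paper only invokes in a remark) is sound: for $\mu\notin\Integer$, the necessary conditions $-a+b\mp c=1$ for the denominator $\Phi$ of \eqref{bt:R3a}, respectively $\Phi-2zw$, to vanish identically fail at every step of the chain.
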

\begin{proof}From the transformations
\begin{align*}
&\mathcal{R}_{3}\big(1/{w}_{m,n}^{(\mu)};m+2n+1+\mu,-m,-\mu\big)=\big(v_{m,n}^{(\mu+2n+1)}; m+n+1,m+n+\mu+1,2n+1+\mu\big),\\ 
&\mathcal{R}_{3}\big(v_{m,n}^{(\mu+2n+1)};m+n+1,m+n+\mu+1,2n+1+\mu\big)=\big(u_{n+1,m}^{(\mu+2m+2n+2)};n+1,-n-\mu,2m+2n+2+\mu\big),\\
&\mathcal{R}_{3}\big(u_{n+1,m}^{(\mu+2m+2n+2)};n+1,-n-\mu,2m+2n+2+\mu\big)=\big(1/{w}_{m,n+1}^{(\mu-1)};m+2n+\mu+2,-m,\mu-1\big),
\end{align*}
we obtain the sequence of solutions
\[ \frac{1}{{w}_{m,n}^{(\mu)}} \arr{\mathcal{R}_{3}} v_{m,n}^{(\mu+2n+1)} \arr{\mathcal{R}_{3}} u_{n+1,m}^{(\mu+2m+2n+2)} \arr{\mathcal{R}_{3}} \frac{1}{{w}_{m,n+1}^{(\mu-1)}} \arr{\mathcal{R}_{3}} v_{m,n+1}^{(\mu+2n+2)} \]
from which the result follows. We note that \[ \mathcal{R}_{3}^3\Big(1/w_{m,n}^{(\mu)}\Big)=1/w_{m,n+1}^{(\mu-1)},\qquad 
\mathcal{R}_{3}^3\Big({v}_{m,n}^{(\mu)}\Big)={v}_{m,n+1}^{(\mu+1)},\qquad
\mathcal{R}_{3}^3\Big({u}_{n,m}^{(\mu)}\Big)={u}_{n+1,m}^{(\mu+1)}\]
\end{proof}

\begin{lemma}\label{lem:513}
Consider the rational functions 
\beq X_{N}(z)=\frac{1}{v_{m,N+n}^{(N+\mu)}(z)-1},\quad Y_{N}(z)=\frac{1}{u_{N+n+1,m}^{(N+\mu+2m+1)}(z)-1},\quad 
Z_{N}(z)=\frac{w_{m,N+n+1}^{(-N+\mu-2n-2)}(z)}{1-w_{m,N+n+1}^{(-N+\mu-2n-2)}(z)},\eeq
with $w_{m,n}^{(\mu)}(z)$, $v_{m,n}^{(\mu)}(z)$ and $u_{m,n}^{(\mu)}(z)$ as given in Theorem \ref{thm:genLag},
then $X_{N}(z)$, $Y_{N}(z)$ and $Z_{N}(z)$ satisfy the discrete system
\begin{subequations}\label{sys:lem510}\begin{align} 
&X_{N}(Y_{N}+Z_{N-1}+1)+\frac{N+m+n+1}{z}=0,\\
&Y_{N}(Z_{N}+X_{N}+1)+\frac{N+n+1}{z}=0,\\
&Z_{N}(X_{N+1}+Y_{N}+1)+\frac{N+m+\mu+1}{z}=0,
\end{align}\end{subequations}
which is the system \eqref{sys:tertdPI} with
$a_{3N}=N+m+n+1$, $a_{3N+1}=N+n+1$ and $a_{3N+2}=N+m+\mu+1$, i.e.\ \eqref{eq:abcn3a} with
$\la=\tfrac{1}{3}(m+n-\mu+1)$, $\rho=\tfrac{1}{3}(2m+2n+\mu+2)$ and $\ph=-m+n-\mu+\tfrac{1}{3}$.
\end{lemma}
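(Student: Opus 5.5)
The plan mirrors the proof of Lemma \ref{lem:512}: produce a chain of rational solutions of \PV\ linked by successive applications of $\mathcal{R}_{3}$ \eqref{bt:R3}, then read the system \eqref{sys:lem510} off the ternary \dPI\ \eqref{eq:tdPI} through the substitution $w_{n}=1+1/x_{n}$. The chain should start from $v_{m,n}^{(\mu)}$ rather than $1/w_{m,n}^{(\mu)}$. Concretely, I would establish the three transformations
\begin{align*}
&\mathcal{R}_{3}\big(v_{m,n}^{(\mu)};m+n+1,m-n+\mu,\mu\big)=\big(u_{n+1,m}^{(\mu+2m+1)};\,n+1,\,-n+\mu-1,\,2m+1+\mu\big),\\
&\mathcal{R}_{3}\big(u_{n+1,m}^{(\mu+2m+1)};n+1,-n+\mu-1,2m+1+\mu\big)=\big(1/w_{m,n+1}^{(\mu-2n-2)};\,\ast,\,\ast,\,\ast\big),\\
&\mathcal{R}_{3}\big(1/w_{m,n+1}^{(\mu-2n-2)};\ast\big)=\big(v_{m,n+1}^{(\mu+1)};\,\ast\big).
\end{align*}
Together these give $\mathcal{R}_{3}^3\big(v_{m,n}^{(\mu)}\big)=v_{m,n+1}^{(\mu+1)}$, in agreement with the remark after Lemma \ref{lem:512}. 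The starred entries are fixed by the parameter map \eqref{bt:R3abc}.

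\textbf{Step 1 (parameters).} Iterate \eqref{bt:R3abc} on the triple $(a,b,c)=(m+n+1,\,m-n+\mu,\,\mu)$, choosing the sign of $b$ so that it matches the first row of Table \ref{table:R3params}. The first iterate has $a_1=\hf(a-b+c+1)=n+1$ and $c_1=a+b=2m+\mu+1$. I would check that the resulting $(\a,\b,\ga)$ coincide with those listed in Theorem \ref{thm:genLag} for $u_{n+1,m}^{(\mu+2m+1)}$, for $1/w_{m,n+1}^{(\mu-2n-2)}$ (using the symmetry $\mathcal{S}_2$ \eqref{PV:S2}), and for $v_{m,n+1}^{(\mu+1)}$. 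The values $a_{3N}$, $a_{3N+1}$, $a_{3N+2}$ are then read off directly, namely $a_{3N}=N+m+n+1$, $a_{3N+1}=N+n+1$ and $a_{3N+2}=N+m+\mu+1$. Solving the three linear equations from \eqref{eq:abcn3a} at $n=0,1,2$ gives $\la$, $\rho$ and $\ph$. For example, $3\rho+1=a_0+a_1+a_2=3m+\ldots$ shows that $\rho=\tfrac13(2m+2n+\mu+2)$.

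\textbf{Step 2 (identifying the solutions).} Parameter matching alone only identifies a solution up to uniqueness. By \cite[Theorem 1.2]{refKLM}, the rational solution for these parameters is unique when $\mu\notin\Integer$. Since $\mathcal{R}_3$ maps rational solutions to rational solutions, the image of $v_{m,n}^{(\mu)}$ must therefore be the listed one. The integer case then follows by taking the limit $\mu\to\ell$, as in the remark following the first Lemma of \S\ref{adPII:GenLag}.

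\textbf{Step 3 (conversion).} Substitute $x=1/(w-1)$ into each solution of the chain. For the $w$-type member this gives $1/(1/w-1)=w/(1-w)$, which explains the form of $Z_N$. Then \eqref{eq:tdPI} with $x_{3N}=X_N$, $x_{3N+1}=Y_N$ and $x_{3N+2}=Z_N$ is exactly \eqref{sys:lem510}.

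The main obstacle is Step 2 in the degenerate cases. The uniqueness argument is unavailable when $\mu$ is an integer, and it also fails for small indices where $b$ or $a$ vanishes, so that the sign choice in $\mathcal{R}_3$ is ambiguous. If the limit argument is not convincing there, I would instead verify the middle transformation directly. This means computing $\mathcal{R}_3(u)$ from \eqref{bt:R3a} using the logarithmic-derivative forms \eqref{GenLag_log} together with the bilinear identities for $\Tmn{m,n}{\mu}$ that were used in the proof of the Lemma giving \eqref{eq:413a}--\eqref{eq:413f}.
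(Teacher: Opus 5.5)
Your route is the same as the paper's. Its proof lists the three $\mathcal{R}_3$ steps $v_{m,n}^{(\mu)}\to u_{n+1,m}^{(\mu+2m+1)}\to 1/w_{m,n+1}^{(\mu-2n-2)}\to v_{m,n+1}^{(\mu+1)}$ and reads \eqref{sys:lem510} off \eqref{eq:tdPI} via $x=1/(w-1)$. Your uniqueness-plus-limit justification is the one the paper gives in the remark after the first lemma of \S\ref{adPII:GenLag}.

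\textbf{Sign error in the first transformation.} This error is not cosmetic. The map \eqref{bt:R3abc} gives $b_1=\hf(a-b-c+1)=n+1-\mu$, so the output triple is $(n+1,\,n+1-\mu,\,2m+1+\mu)$. The triple $(n+1,\,-n-1+\mu,\,2m+1+\mu)$ that you wrote, and then fed into your second step, is the $\mathcal{R}_1$ output; compare the proof of the lemma giving \eqref{sys:lem56}.

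$\mathcal{R}_1$ and $\mathcal{R}_3$ act on $w$ by the same formula \eqref{bt:R3a} and differ only in this sign of $b$. Your triple therefore reproduces the $\mathcal{R}_1$ chain. The second step outputs $v_{m,n+1}^{(\mu)}$, with $(\a,\b,\ga)=(\hf(m+n+2)^2,-\hf(m-n-1+\mu)^2,\mu)$, which belongs to the asymmetric \dPII\ hierarchy rather than giving $1/w_{m,n+1}^{(\mu-2n-2)}$.

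With the correct sign, the second and third steps give $(m+\mu+1,\,-m,\,2n+2-\mu)$ and $(m+n+2,\,m-n+\mu,\,\mu+1)$, as required. So the sign of $b$ along the chain is not something you choose. Only the initial signs are free; after that, \eqref{bt:R3abc} forces them. This forcing is exactly what makes $\mathcal{R}_3^{-1}$ at $(a_n,b_n,c_n)$ undo the previous step, which the derivation of \eqref{eq:tdPI} needs. Also, $a_0+a_1+a_2=2m+2n+\mu+3$, not $3m+\cdots$, although your value of $\rho$ is right.

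\textbf{Non-degeneracy in Step 2.} The claim that $\mathcal{R}_3$ maps rational solutions to rational solutions needs \eqref{eq:PT.BTs.P5a}--\eqref{eq:PT.BTs.P5a1}. Here $\Phi=zw'+aw^2-(a+b-z)w+b$ is the denominator in \eqref{bt:R3a}, and chains can genuinely terminate when these conditions fail (cf.\ Example \ref{exam:54}).

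Substituting the Riccati equation $\Phi\equiv 0$ (resp.\ $\Phi-2zw\equiv 0$) into \PV\ shows it is compatible only if $-a+b-c=1$ (resp.\ $-a+b+c=1$). For your three triples these conditions reduce as follows:
\begin{itemize}
\item first triple: $n=-1$ or $\mu=n+1$;
\item second triple: $\mu=-m-1$ or $m=0$;
\item third triple: $\mu=n-m$ only.
\end{itemize}
All of these are excluded when $\mu\notin\Integer$ and $m\ge 1$. Add this check and the uniqueness argument goes through.
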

\begin{proof} From the transformations
\begin{align*}
\mathcal{R}_{3}\big(v_{m,n}^{(\mu)};m+n+1,m-n+\mu,\mu\big)&=\big(u_{n+1,m}^{(\mu+2m+1)};n+1,n+1-\mu,2m+1+\mu\big),\\
\mathcal{R}_{3}\big(u_{m,n}^{(\mu)};m,m+2n+1-\mu,\mu\big)&=\big(1/{w}_{n,m}^{(\mu-2m-2n-1)};-n+\mu,-n,2m+2n+1-\mu\big),\\
\mathcal{R}_{3}\big(1/{w}_{m,n}^{(\mu)};m+2n+1+\mu,-m,-\mu\big)&=\big(v_{m,n}^{(\mu+2n+1)}; m+n+1,m+n+\mu+1,2n+1+\mu\big),
\end{align*}
we obtain the sequence of solutions
\[ v_{m,n}^{(\mu)} \arr{\mathcal{R}_{3}} u_{n+1,m}^{(\mu+2m+1)} \arr{\mathcal{R}_{3}} \frac{1}{{w}_{m,n+1}^{(\mu-2n-2)}} \arr{\mathcal{R}_{3}} v_{m,n+1}^{(\mu+1)}
\arr{\mathcal{R}_{3}} u_{n+2,m}^{(\mu+2m+2)}\]
from which the result follows.
We note that \[ \mathcal{R}_{3}^3\Big(1/w_{m,n}^{(\mu)} \Big)=1/w_{m,n+1}^{(\mu-1)},\qquad 
\mathcal{R}_{3}^3\Big({v}_{m,n}^{(\mu)} \Big)={v}_{m,n+1}^{(\mu+1)},\qquad
\mathcal{R}_{3}^3\Big({u}_{n,m}^{(\mu)} \Big)={u}_{n+1,m}^{(\mu+1)}.\]
\end{proof}

\comment{\begin{remark}{\rm Equivalent results can be obtained from the sequence of solutions
\[ {\wt{w}_{m,n}^{(\mu)}} \arr{\mathcal{R}_{3}} v_{m,n}^{(\mu+2n+1)} \arr{\mathcal{R}_{3}} u_{n+1,m}^{(\mu+2m+2n+2)} \arr{\mathcal{R}_{3}} {\wt{w}_{m,n+1}^{(\mu-1)}} \]
where
\[ a_{3N}=N+m+2n+1+\mu,\qquad a_{3N+1}=N+m+n+1,\qquad a_{3N+2}=N+n+1,\]
and the sequence of solutions
\[ {\wt{w}_{m,n}^{(\mu)}} \arr{\mathcal{R}_{3}} u_{n+1,m-1}^{(\mu+2m+2n+1)} \arr{\mathcal{R}_{3}} v_{m-1,n+1}^{(\mu+2n+2)} \arr{\mathcal{R}_{3}} {\wt{w}_{m,n+1}^{(\mu-1)}} \]
where
\[ a_{3N}=N+m+2n+1+\mu,\qquad a_{3N+1}=N+n+1,\qquad a_{3N+2}=n+m+n+1.\]
}\end{remark}}
\begin{remarks}
\begin{enumerate}[(i)]\item[]
\item There are no solutions in terms of generalised Laguerre polynomials in the symmetric case when $\la=\ph=0$.
\item Lemmas \ref{lem:512} and \ref{lem:513} imply that there are hierarchies rational solutions of the fourth discrete equation \eqref{eq:tdPInew} in terms of generalised Laguerre polynomials.
\end{enumerate}
\end{remarks}

\subsubsection{Partial difference systems}
In this subsection we derive some partial difference systems which have solutions in terms of generalised Laguerre polynomials.

\begin{lemma} If we define
\beq \label{def:WVULag}\mathcal{W}_{m,n}^{(\mu)}=\frac{{w}_{m,n}^{(\mu)}+1}{{w}_{m,n}^{(\mu)}-1}, \qquad 
\mathcal{V}_{m,n}^{(\mu)}=\frac{{v}_{m,n}^{(\mu)}+1}{{v}_{m,n}^{(\mu)}-1}, \qquad 
\mathcal{U}_{m,n}^{(\mu)}=\frac{{u}_{m,n}^{(\mu)}+1}{{u}_{m,n}^{(\mu)}-1},\eeq
with $w_{m,n}^{(\mu)}(z)$, $v_{m,n}^{(\mu)}(z)$ and $u_{m,n}^{(\mu)}(z)$ as given in Theorem \ref{thm:genLag},
then they satisfy the partial difference systems
\begin{subequations}\begin{align}
\mathcal{W}_{m+1,n}^{(\mu)}+\mathcal{W}_{m,n}^{(\mu)} &+\frac{4}{z}\frac{(2m+2n+\mu+2)\mathcal{V}_{m,n}^{(\mu+2n+1)}-\mu}{1-\big(\mathcal{V}_{m,n}^{(\mu+2n+1)}\big)^{\!2}}=0,\label{eqn:sysref}\\
\mathcal{V}_{m,n}^{(\mu+2n+1)}+\mathcal{V}_{m-1,n}^{(\mu+2n+1)}&+\frac{4}{z}\frac{(2m+2n+\mu+1)\mathcal{W}_{m,n}^{(\mu)}-2n-\mu-1}{1-\big(\mathcal{W}_{m,n}^{(\mu)}\big)^{\!2}}=0,
\end{align}\end{subequations}
and
\begin{subequations}\begin{align}
\mathcal{V}_{m,n+1}^{(\mu)}+\mathcal{V}_{m,n}^{(\mu)}&=\frac{4}{z}\frac{(2n+2-\mu)\,\mathcal{U}_{n+1,m}^{(\mu+2m+1)}-\mu}{1-\big(\mathcal{U}_{n+1,m}^{(\mu+2m+1)}\big)^{\!2}},\\
\mathcal{U}_{n+1,m}^{(\mu+2m+1)}+\mathcal{U}_{n,m}^{(\mu+2m+1)}&=\frac{4}{z}\frac{(2n+1-\mu)\mathcal{V}_{m,n}^{(\mu)}+2m+\mu+1}{1-\big(\mathcal{V}_{m,n}^{(\mu)} \big)^{\!2}}.
\end{align}\end{subequations}
\end{lemma}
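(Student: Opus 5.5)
The two systems are the two preceding lemmas on asymmetric \dPII\ hierarchies, rewritten with the lattice indices $(m,n)$ kept explicit instead of being absorbed into $N$. I will therefore not verify any new identity. I will re-index the results already proved, using the $\mathcal{R}_{1}$-chains established there.

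For the first pair I take the lemma built from $w_{m,n}^{(\mu)}$. There $Q_{2N}=X_N=\mathcal{W}_{N+m,n}^{(\mu)}$ and $Q_{2N+1}=Y_N=(1+v_{N+m,n}^{(\mu+2n+1)})/(1-v_{N+m,n}^{(\mu+2n+1)})$. This $Y_N$ equals $-\mathcal{V}_{N+m,n}^{(\mu+2n+1)}$, because $1/v$ replaces $q$ by $-q$ under $q=(w+1)/(w-1)$. The lemma holds for every admissible $m$, so I set $N=0$, or equivalently replace $m+N$ by $m$. The system \eqref{sys:lem54} then reads
\[X_{1}+X_0=\frac{4}{z}\,\frac{(2m+2n+\mu+2)Y_0+\mu}{1-Y_0^2},\qquad Y_0+Y_{-1}=\frac{4}{z}\,\frac{(2m+2n+\mu+1)X_0-2n-\mu-1}{1-X_0^2},\]
with $X_1=\mathcal{W}_{m+1,n}^{(\mu)}$, $X_0=\mathcal{W}_{m,n}^{(\mu)}$, $Y_0=-\mathcal{V}_{m,n}^{(\mu+2n+1)}$ and $Y_{-1}=-\mathcal{V}_{m-1,n}^{(\mu+2n+1)}$. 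Substituting $Y=-\mathcal{V}$, the first equation gives $\mathcal{W}_{m+1,n}+\mathcal{W}_{m,n}=-\frac{4}{z}\big[(2m+2n+\mu+2)\mathcal{V}-\mu\big]/(1-\mathcal{V}^2)$, which is \eqref{eqn:sysref}. The second equation, after multiplying by $-1$, gives the companion equation. I must also check that the $N=-1$ term $Y_{-1}$ is legitimate. It comes from the chain $1/v_{m-1,n}^{(\mu+2n+1)}\arr{\mathcal{R}_{1}}w_{m,n}^{(\mu)}$, which is the second transformation in that proof with $m\to m-1$ and $\mu\to\mu+2n+1$. That transformation returns $1/w_{m,n}^{(\mu)}$ paired with the sign-flipped parameters, which is the same solution, so $Y_{-1}$ is valid provided $m\ge1$.

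For the second pair I take the lemma built from $v_{m,n}^{(\mu)}$ in the same way, with $X_N=\mathcal{V}_{m,N+n}^{(\mu)}$ and $Y_N=\mathcal{U}_{N+n+1,m}^{(\mu+2m+1)}$. Setting $N=0$ in \eqref{sys:lem56} gives the first equation of the second pair directly. Its right-hand side involves $\mathcal{U}_{n+1,m}^{(\mu+2m+1)}$ with coefficient $2n+2-\mu$. Writing the second equation of \eqref{sys:lem56} at $N=0$ gives $Y_0+Y_{-1}=\mathcal{U}_{n+1,m}+\mathcal{U}_{n,m}$, with right-hand side in $\mathcal{V}_{m,n}^{(\mu)}$ and coefficient $2n+1-\mu$. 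Here no sign flips occur, since no reciprocals appear.

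The main obstacle is bookkeeping rather than analysis. I expect two things to need care. The first is the sign convention relating $(1+v)/(1-v)$ to $\mathcal{V}$, which accounts for the overall minus signs in the first pair. The second is confirming that the $N=-1$ members produced by $\mathcal{R}_{1}^{-1}$ really are the stated functions with the same superscripts. For the latter I would use the remark giving $\mathcal{R}_{1}^{-1}$ on $w_{m,n}^{(\mu)}$ and $v_{m,n}^{(\mu)}$. If necessary I would also invoke the uniqueness result of Kitaev--Law--McLeod for $\mu\notin\Integer$, extended by continuity in $\mu$, to identify the solution with given parameters.
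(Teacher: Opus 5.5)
Your approach is the paper's own: the lemma is \eqref{sys:lem54} and \eqref{sys:lem56} evaluated at $N=0$. Your treatment of the first pair is correct. The identity $Y_N=(1+v)/(1-v)=-\mathcal{V}_{N+m,n}^{(\mu+2n+1)}$ accounts exactly for the signs in the first two equations. Checking that $Y_{-1}=-\mathcal{V}_{m-1,n}^{(\mu+2n+1)}$ is care the paper does not take. To make that check precise: the cited transformation sends $v_{m-1,n}^{(\mu+2n+1)}$ to $1/w_{m,n}^{(\mu)}$. Passing to $1/v_{m-1,n}^{(\mu+2n+1)}\mapsto w_{m,n}^{(\mu)}$ uses $\mathcal{R}_{1}(1/w;-b,-a,-c)=1/\mathcal{R}_{1}(w;a,b,c)$, which is a one-line check from \eqref{bt:R1a}.

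The step that fails as written is your claim that $N=0$ in \eqref{sys:lem56} gives the first equation of the second pair ``directly''. It actually gives
\[ \mathcal{V}_{m,n+1}^{(\mu)}+\mathcal{V}_{m,n}^{(\mu)}=\frac{4}{z}\,\frac{(2n+2-\mu)\,\mathcal{U}_{n+1,m}^{(\mu+2m+1)}+\mu}{1-\big(\mathcal{U}_{n+1,m}^{(\mu+2m+1)}\big)^{2}}, \]
with $+\mu$, whereas the statement prints $-\mu$. The $+\mu$ is correct and the printed sign is a misprint. To see this, add the two relations \eqref{bt:R1} in the variable $q=(w+1)/(w-1)$. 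This gives $q_{k+1}+q_{k-1}=\frac{4}{z}\big[(a_k-b_k)q_k+a_k+b_k\big]/(1-q_k^2)$. The function $u_{n+1,m}^{(\mu+2m+1)}$ enters the chain with $(a,b,c)=(n+1,\mu-n-1,2m+\mu+1)$, so the constant term is $a+b=\mu$.

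A concrete check: take $m=n=0$, $\mu=3$, $z=1$. Then $\mathcal{V}_{0,0}^{(3)}=5$, $\mathcal{U}_{1,0}^{(4)}=0$ and $\mathcal{V}_{0,1}^{(3)}=7$. The left side is $12$. The right side is $12$ with $+\mu$ and $-12$ with $-\mu$.

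So your argument, like the paper's one-line proof, establishes the corrected identity, and you should say so explicitly. The constant term is exactly the bookkeeping you identified as the main risk, and it is where the mismatch lies.
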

\begin{proof}These follow from setting $N=0$ in \eqref{sys:lem54} and \eqref{sys:lem56}, respectively.
\end{proof}

\begin{lemma}
If $\mathcal{W}_{m,n}^{(\mu)}$, $\mathcal{V}_{m,n}^{(\mu)}$ and $\mathcal{U}_{m,n}^{(\mu)}$ are given by \eqref{def:WVULag}, then they satisfy the partial difference systems
\begin{subequations}\label{lem514:sys1}\begin{align}
\mathcal{W}_{m+1,n}^{(\mu)}+\mathcal{W}_{m,n}^{(\mu)} &+\frac{4}{z}\frac{(2n+\mu+1)\,\mathcal{V}_{m,n}^{(\mu+2n+1)}-\mu}{1-\big(\mathcal{V}_{m,n}^{(\mu+2n+1)}\big)^{\!2}}=0,\label{lem514:sys1a}\\
\mathcal{V}_{m,n}^{(\mu+2n+1)}+\mathcal{V}_{m-1,n}^{(\mu+2n+1)}&+\frac{4}{z}\frac{(2m+2n+\mu+1)\mathcal{W}_{m,n}^{(\mu)}-(2n+\mu+1)}{1-\big(\mathcal{W}_{m,n}^{(\mu)}\big)^{\!2}}=0,
\end{align}\end{subequations}
and
\begin{subequations}\label{lem514:sys2}\begin{align}
\mathcal{W}_{m+1,n-1}^{(\mu)}+\mathcal{W}_{m,n}^{(\mu)} &+\frac{4}{z}\frac{(2n+\mu)\,\mathcal{U}_{n,m}^{(\mu+2m+2n+1)}-\mu}{1-\big(\mathcal{U}_{n,m}^{(\mu+2m+2n+1)}\big)^{\!2}}=0,\\
\mathcal{U}_{n+1,m-1}^{(\mu+2m+2n+1)}+\mathcal{U}_{n,m}^{(\mu+2m+2n+1)}&+\frac{4}{z}\frac{(2n+\mu+1)\mathcal{W}_{m,n}^{(\mu)}-(2m+2n+\mu+1)}{1-\big(\mathcal{W}_{m,n}^{(\mu)}\big)^{\!2}}=0.
\end{align}\end{subequations}
\end{lemma}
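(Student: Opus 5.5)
The plan is to repeat the proofs of the two preceding lemmas: identify explicit chains of rational solutions linked by $\mathcal{R}_{1}=\T_{1,-1,1}$ (or its inverse $\T_{-1,1,-1}$), and read off the coefficients of each equation from the parameters at each step. The general mechanism is already in \S\ref{ssec:adPII}. If $q=(w+1)/(w-1)$, where $w$ solves \PV\ with parameters $(a,b,c)$, then by \eqref{bt:Q1}
\[ \mathcal{Q}_{1}(q)=\frac{2}{z(q^2-1)}\left\{z\deriv{q}{z}-(a-b)q-a-b\right\}, \]
and $\mathcal{Q}_{1}^{-1}(q)$ is the same expression with the sign of the derivative term reversed. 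Adding the two gives
\[ \mathcal{Q}_{1}(q)+\mathcal{Q}_{1}^{-1}(q)=-\frac{4}{z}\,\frac{(a-b)q+a+b}{1-q^2}. \]
Each equation in \eqref{lem514:sys1} and \eqref{lem514:sys2} should therefore be this relation, applied to the middle member of a three-term chain $Q_{-}\to Q\to Q_{+}$. Its coefficients $(a-b,\,a+b)$ are fixed by the particular signs of $a=\pm\sqrt{2\a}$ and $b=\pm\sqrt{-2\b}$ used at that step (Table \ref{table:R1params}).

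\textbf{System \eqref{lem514:sys1}.} I use the chain in Remark (i) after the lemma for $\mathcal{W}$ and $\mathcal{V}$:
\[ w_{m,n}^{(\mu)}\;\to\;v_{m,n}^{(\mu+2n+1)}\;\to\;w_{m+1,n}^{(\mu)}, \]
taken with the sign choice $(a,b)=(-m,\,m+2n+1+\mu)$ at $w_{m,n}^{(\mu)}$. This gives $a-b=-(2m+2n+\mu+1)$ and $a+b=2n+\mu+1$, which is exactly the second equation of \eqref{lem514:sys1}.

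For the first equation I take the middle member to be $v_{m,n}^{(\mu+2n+1)}$. I then look for the sign choice of its $(a,b)$ for which $\mathcal{R}_{1}^{\pm1}$ returns $w_{m,n}^{(\mu)}$ and $w_{m+1,n}^{(\mu)}$, and which gives the coefficients $2n+\mu+1$ and $-\mu$. This sign choice differs from the one used for \eqref{eqn:sysref}. The check is purely arithmetic: apply \eqref{eq:PT.BTs.P5c} to each of the four sign choices and compare with the parameter sets in Theorem \ref{thm:genLag}.

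\textbf{System \eqref{lem514:sys2}.} Here I use the row $\mathcal{R}_{1}^2(a,b,c)=(a+1,b-1,c)$. Starting from $(a,b,c)=(m,\,m+2n+1+\mu,\,\mu)$ it sends $w_{m,n}^{(\mu)}$ to $w_{m+1,n-1}^{(\mu)}$, as in \eqref{bt:R1wa}. The intermediate solution has parameters
\[ \mathcal{R}_{1}(m,\,m+2n+1+\mu,\,\mu)=(-n,\;n+\mu,\;2m+2n+1+\mu). \]
I expect this to be $\mathcal{U}$-type, namely $u_{n,m}^{(\mu+2m+2n+1)}$ up to the symmetry \eqref{PV:S2}, so that its $q$-variable is $\pm\,\mathcal{U}_{n,m}^{(\mu+2m+2n+1)}$. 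Similarly, $\mathcal{R}_{1}$ applied to this $u$ with the other admissible sign produces $w_{m+1,n-1}^{(\mu)}$. The analogous chain starting from $u_{n,m}$ produces $u_{n+1,m-1}$ from the other side. Each of the two equations is then the summed relation above with the matching $(a-b,a+b)$.

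To confirm that each step lands on the named rational solution and not merely on some solution with the same parameters, I use uniqueness. For $\mu\notin\Integer$, Kitaev, Law and McLeod \cite[Theorem 1.2]{refKLM} show that rational solutions in case (i) are unique. So it suffices that the \bt\ maps rational solutions to rational solutions and that the parameters match. The integer case follows by taking $\mu\to\ell$, as in Remark (iii) after the first lemma of \S\ref{adPII:GenLag}.

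\textbf{Main obstacle.} The hard part is the bookkeeping of signs and symmetries. Each $\mathcal{R}_{1}$ step is really four transformations, and the target may appear as $u$ or $1/u$, or after $z\to -z$ via $\mathcal{S}_{1}$. Whether the $q$-variable is $\mathcal{U}$ or $-\mathcal{U}$ therefore depends on these choices, and this changes the sign of $a+b$ in the final equations. For each step I would fix the sign choice by matching against Theorem \ref{thm:genLag}. As a safeguard, I would check one low-order case directly in the style of Example \ref{exam:54}, say $m=n=1$, using the explicit $\mathcal{W}$, $\mathcal{V}$, $\mathcal{U}$ from \eqref{eq:413a}--\eqref{eq:413c}.
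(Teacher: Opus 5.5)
There is a genuine gap at the first equation of \eqref{lem514:sys1}; the rest of your argument is the paper's. Your other chains coincide with the paper's chains, up to the direction in which the chain is traversed. For the second equation of \eqref{lem514:sys1} the middle member is $w_{m,n}^{(\mu)}$ with $(a,b)=(-m,m+2n+1+\mu)$. For \eqref{lem514:sys2} the chain is $w_{m,n}^{(\mu)}\to u_{n,m}^{(\mu+2m+2n+1)}\to w_{m+1,n-1}^{(\mu)}$, with middle parameters $(-n,n+\mu)$. There $q=+\mathcal{U}$ follows from uniqueness, because $1/u$ has different parameters.

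There is also a sign slip. Adding the two equations in \eqref{dP2:btsys2} gives $\mathcal{Q}_{1}(q)+\mathcal{Q}_{1}^{-1}(q)=+\frac{4}{z}\frac{(a-b)q+a+b}{1-q^2}$, as in \eqref{eq:adPII}. With your minus sign, every equation you read off comes out with the wrong sign, including the second equation of \eqref{lem514:sys1}, which you say comes out ``exactly''.

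The gap is \eqref{lem514:sys1a}. You assert that some sign choice at $v_{m,n}^{(\mu+2n+1)}$, different from the one behind \eqref{eqn:sysref}, gives the coefficients $2n+\mu+1$ and $-\mu$. You defer this as ``purely arithmetic'', but the check fails.

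\begin{itemize}
\item Since $|a|=m+n+1$ and $|b|=m+n+\mu+1$, every sign choice gives $a-b\in\{\pm\mu,\pm(2m+2n+\mu+2)\}$.
\item The choices linking $\pm\mathcal{W}_{m,n}^{(\mu)}$ and $\pm\mathcal{W}_{m+1,n}^{(\mu)}$ are $(a,b)=\pm(-m-n-1,m+n+\mu+1)$. Both reproduce \eqref{eqn:sysref}, with coefficient $2m+2n+\mu+2$.
\item The paper's own chain $w_{m,n}^{(\mu)}\to v_{m-1,n}^{(\mu+2n+1)}\to w_{m-1,n}^{(\mu)}$ has middle parameters $(-m-n,m+n+\mu)$. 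After $m\to m+1$ it gives the same coefficient.
\end{itemize}

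So \eqref{lem514:sys1a} cannot hold as printed. Subtracting it from \eqref{eqn:sysref} would force $(2m+1)\mathcal{V}_{m,n}^{(\mu+2n+1)}\equiv0$.

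A concrete check with $m=1$, $n=0$, $\mu=0$: here $\mathcal{W}_{1,0}^{(0)}=\frac{z-3}{z-1}$, $\mathcal{W}_{2,0}^{(0)}=\frac{z^2-8z+10}{z^2-4z+2}$ and $\mathcal{V}:=\mathcal{V}_{1,0}^{(1)}=-\frac{z^2-6z+4}{z(z-2)}$. Then $\mathcal{W}_{2,0}^{(0)}+\mathcal{W}_{1,0}^{(0)}=\frac{2(z-2)(z^2-6z+4)}{(z-1)(z^2-4z+2)}=-\frac{16}{z}\frac{\mathcal{V}}{1-\mathcal{V}^2}$. The printed coefficient $2n+\mu+1=1$ would instead require $-\frac{4}{z}\frac{\mathcal{V}}{1-\mathcal{V}^2}$.

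The coefficient $2n+\mu+1$ in the statement is therefore a misprint for $2m+2n+\mu+2$. Your proposal reaches the printed form only by asserting a check that fails when carried out. Done correctly, your argument proves the corrected system, whose first equation is then identical to \eqref{eqn:sysref}.
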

\begin{proof}
From the transformations 
\begin{align*}
\mathcal{R}_{1}\big(w_{m,n}^{(\mu)};-m,m+2n+\mu+1,\mu\big) &= \big(v_{m-1,n}^{(\mu+2n+1)};-m-n,m+n+\mu,\mu+2n+1\big),\\
\mathcal{R}_{1}\big(v_{m,n}^{(\mu)};-m-n-1,m-n+\mu,\mu\big)&=\big(w_{m,n}^{(\mu-2n-1)};-m,m+\mu,\mu-2n-1\big),
\end{align*}
we obtain the sequence
\[ w_{m,n}^{(\mu)} \arr{\mathcal{R}_{1}} v_{m-1,n}^{(\mu+2n+1)} \arr{\mathcal{R}_{1}} w_{m-1,n}^{(\mu)} \arr{\mathcal{R}_{1}} v_{m-2,n}^{(\mu+2n+1)} \arr{\mathcal{R}_{1}} w_{m-2,n}^{(\mu)} \]
from which the system \eqref{lem514:sys1} follows.

From the transformations 
\begin{align*}
\mathcal{R}_{1}\big(w_{m,n}^{(\mu)};m,m+2n+\mu+1,\mu\big) &= \big(u_{n,m}^{(\mu+2m+2n+1)};-n,n+\mu,\mu+2m+2n+1\big),\\
\mathcal{R}_{1}\big(u_{m,n}^{(\mu)};-m,-m-2n-1+\mu,\mu\big)&=\big(w_{n+1,m-1}^{(\mu-2m-2n-1)};n+1,-n-1+\mu,\mu-2m-2n-1\big),
\end{align*}
we obtain the sequence
\[ w_{m,n}^{(\mu)} \arr{\mathcal{R}_{1}} u_{n,m}^{(\mu+2m+2n+1)} \arr{\mathcal{R}_{1}} w_{m+1,n-1}^{(\mu)} \arr{\mathcal{R}_{1}} u_{n-1,m+1}^{(\mu+2m+2n+1)} \arr{\mathcal{R}_{1}} w_{m+2,n-2}^{(\mu)}\]
from which the system \eqref{lem514:sys2} follows.
\end{proof}

\begin{lemma}
If $\mathcal{W}_{m,n}^{(\mu)}$, $\mathcal{V}_{m,n}^{(\mu)}$ and $\mathcal{U}_{m,n}^{(\mu)}$ are given by \eqref{def:WVULag}, then they satisfy the partial difference systems
\begin{subequations}\begin{align}
\frac{m+\mu+1}{\mathcal{V}_{m,n}^{(\mu+1)}+\mathcal{U}_{n,m}^{(\mu+2m+1)}}+\frac{m+\mu}{\mathcal{V}_{m,n-1}^{(\mu-1)}+\mathcal{U}_{n,m}^{(\mu+2m+1)}}&=\hfz-\frac{\mu\,\mathcal{U}_{n,m}^{(\mu+2m+1)}+2n-\mu}{1-\big(\mathcal{U}_{n,m}^{(\mu+2m+1)}\big)^{\!2}},\\
\frac{m+\mu}{\mathcal{V}_{m,n-1}^{(\mu-1)}+\mathcal{U}_{n,m}^{(\mu+2m+1)}}+\frac{m+\mu-1}{\mathcal{V}_{m,n-1}^{(\mu-1)}+\mathcal{U}_{n-1,m}^{(\mu+2m-1)}}&=\hfz-\frac{(2m+\mu)\mathcal{V}_{m,n-1}^{(\mu-1)}+2n-\mu}{1-\big(\mathcal{V}_{m,n-1}^{(\mu-1)} \big)^{\!2}},
\end{align}\end{subequations}
and
\begin{subequations}\begin{align}
\frac{n+\mu-1}{\mathcal{W}_{m+1,n}^{(\mu-2)}+\mathcal{V}_{m,n}^{(\mu+2n)}}+\frac{n+\mu}{\mathcal{W}_{m,n}^{(\mu)}+\mathcal{V}_{m,n}^{(\mu+2n)}}&=\hfz-\frac{(\mu-1) \mathcal{V}_{m,n}^{(\mu+2n)}-(2 m+2 n+1+\mu)}{1-\big(\mathcal{V}_{m,n}^{(\mu+2n)}\big)^{\!2}},\\
\frac{n+\mu}{\mathcal{W}_{m,n}^{(\mu)}+\mathcal{V}_{m,n}^{(\mu+2n)}}+\frac{n+\mu+1}{\mathcal{W}_{m,n}^{(\mu)}+\mathcal{V}_{m-1,n}^{(\mu+2n+2)}}&=\hfz-\frac{(\mu+2 n+1 ) \mathcal{W}_{m,n}^{(\mu)}-(2 m+2 n+1+\mu)}{1-\big(\mathcal{W}_{m,n}^{(\mu)}\big)^{\!2}}.
\end{align}\end{subequations}
\end{lemma}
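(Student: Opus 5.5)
The plan follows the pattern of the two preceding lemmas. We exhibit two chains of rational solutions generated by the \bt\ $\mathcal{R}_{2}$ \eqref{bt:R2} or its inverse, along which $\mathcal{R}_{2}$ acts with the parameter recursion \eqref{eq:abcn2}. Each consecutive triple then satisfies the summed relation obtained from \eqref{dPn:btsys2}. After that, we simply read off the two-step systems at a fixed lattice position.

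For the first system we use the chain from the lemma preceding this one, built on $v_{m,n}^{(\mu)}$:
\[ v_{m,n-1}^{(\mu-2)} \arr{\mathcal{R}_{2}} u_{n,m}^{(\mu+2m)} \arr{\mathcal{R}_{2}} v_{m,n}^{(\mu)}.\]
Shifting $\mu\to\mu\pm1$ aligns the superscripts with those in the statement: the middle element is $\mathcal{U}_{n,m}^{(\mu+2m+1)}$ and its neighbours are $\mathcal{V}_{m,n-1}^{(\mu-1)}$ and $\mathcal{V}_{m,n}^{(\mu+1)}$. In \eqref{sys:lem57}, with $N=0$ and $(n,\mu)$ replaced by $(n-1,\mu-1)$, the first equation (centred at $Y_0$) becomes the first equation of the first system. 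The second equation of \eqref{sys:lem57}, centred at $X_0$, gives the second equation after the same substitution; it involves $\mathcal{U}_{n-1,m}^{(\mu+2m-1)}$, which is $Y_{-1}$. The key check is that the coefficients agree. We compute $(a,b,c)$ for $v_{m,n-1}^{(\mu-1)}$, with signs fixed as in the proof of the previous lemma, namely $(-m-n,m-n+\mu,\mu-1)$. We then insert $a_{N}-b_{N}-c_{N}\pm1$ and $[n+\la+(-1)^n\ph]q+\rho$ from \eqref{dPn:btsys2}. Finally we confirm that the numerators $m+\mu+1$, $m+\mu$, $m+\mu-1$ and the terms $\mu\,\mathcal{U}+2n-\mu$, $(2m+\mu)\mathcal{V}+2n-\mu$ come out, halving \eqref{eq:dPn} so that $z$ becomes $\hfz$.

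For the second system we use the chain built from $\mathcal{R}_{2}^{-1}$ on $w_{m,n}^{(\mu)}$ from the last lemma:
\[ w_{m,n}^{(\mu)} \arr{\mathcal{R}_{2}^{-1}} v_{m,n}^{(\mu+2n)} \arr{\mathcal{R}_{2}^{-1}} w_{m+1,n}^{(\mu-2)}.\]
We need one further link, $v_{m-1,n}^{(\mu+2n+2)} \arr{\mathcal{R}_{2}^{-1}} w_{m,n}^{(\mu)}$, which follows from the second transformation in that proof with $m\to m-1$ and $\mu\to\mu+2$. Since $\mathcal{R}_{2}^{-1}$ is the inverse of $\mathcal{R}_{2}$, reversing the chain gives an $\mathcal{R}_{2}$-chain. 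Summing the two equations of \eqref{dPn:btsys2} at the $v$-centred position gives the first equation, and at the $w$-centred position gives the second. This is simply \eqref{sys:lem58} at $N=0$, once $X_{N}+Y_{N-1}$ is identified with $\mathcal{W}_{m,n}^{(\mu)}+\mathcal{V}_{m-1,n}^{(\mu+2n+2)}$.

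The main obstacle is the bookkeeping rather than anything conceptual. We must confirm that the superscripts on the $v$ and $u$ terms in the $N=-1$ neighbour agree with the lemma's indexing. Note that the chain printed in the last lemma's proof has an apparent misprint, $v_{m,n}^{(\mu+2n-2)}$ where $v_{m+1,n}^{(\mu+2n-2)}$ is intended, so it has to be recomputed directly. We also need the signs of $a$ and $b$ that determine which entry of Table~\ref{table:R2params} applies. These signs must be checked so that the constants $\pm(2m+2n+1+\mu)$ and the coefficients $\mu-1$, $\mu+2n+1$ come out as stated. If a sign discrepancy appears, we will verify the affected identity directly for small $m,n$ using the logarithmic-derivative forms \eqref{eq:413a}--\eqref{eq:413b}.
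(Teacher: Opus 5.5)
Your proposal is correct and is essentially the paper's proof, which simply sets $N=0$ in \eqref{sys:lem57} and \eqref{sys:lem58}. Your explicit relabelling $(n,\mu)\to(n-1,\mu-1)$ for the first system, which the paper leaves implicit, checks out (e.g.\ $2(n-1)+1-(\mu-1)=2n-\mu$), and the second system is exactly \eqref{sys:lem58} at $N=0$ multiplied by $-1$, so no sign discrepancy arises and the small-case fallback is not needed.
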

\begin{proof}These follow from setting $N=0$ in \eqref{sys:lem57} and \eqref{sys:lem58}, respectively.
\end{proof}

\begin{lemma}
Consider the rational functions
\[\mathcal{X}_{m,n}^{(\mu)}=\frac{w_{m,n}^{(\mu)}}{1-w_{m,n}^{(\mu)}},\qquad \mathcal{Y}_{m,n}^{(\mu)}=\frac{1}{v_{m,n}^{(\mu)}-1},\qquad 
\mathcal{Z}_{m,n}^{(\mu)}=\frac{1}{u_{m,n}^{(\mu)}-1},\]
with $w_{m,n}^{(\mu)}(z)$, $v_{m,n}^{(\mu)}(z)$ and $u_{m,n}^{(\mu)}(z)$ as given in Theorem \ref{thm:genLag},
then they satisfy the partial difference system
\begin{subequations}\begin{align}
&\mathcal{X}_{m,n}^{(\mu)}\big(\mathcal{Y}_{m,n}^{(\mu+2n+1)}+\mathcal{Z}_{n,m}^{(\mu+2m+2n+1)}+1\big)+\frac{m+2n+\mu+1}{z}=0,\\
&\mathcal{Y}_{m,n}^{(\mu+2n+1)}\big(\mathcal{Z}_{n+1,m}^{(\mu+2m+2n+2)}+\mathcal{X}_{m,n}^{(\mu)}+1\big)+\frac{m+n+1}{z}=0,\\
&\mathcal{Z}_{n+1,m}^{(\mu+2m+2n+2)}\big(\mathcal{X}_{m,n+1}^{(\mu-1)}+\mathcal{Y}_{m,n}^{(\mu+2n+1)}+1\big)+\frac{n+1}{z}=0.
\end{align}\end{subequations}
\end{lemma}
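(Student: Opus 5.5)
This follows the pattern of the earlier partial difference lemmas: set $N=0$ in the system \eqref{sys:lem511} of Lemma \ref{lem:512}. The only real work is checking that the indices and parameters match. In Lemma \ref{lem:512} the functions are
\[X_N=\frac{w_{m,N+n}^{(\mu-N)}}{1-w_{m,N+n}^{(\mu-N)}},\qquad Y_N=\frac{1}{v_{m,N+n}^{(\mu+N+2n+1)}-1},\qquad Z_N=\frac{1}{u_{N+n+1,m}^{(\mu+N+2m+2n+2)}-1}.\]
At $N=0$ these give $X_0=\mathcal{X}_{m,n}^{(\mu)}$, $Y_0=\mathcal{Y}_{m,n}^{(\mu+2n+1)}$ and $Z_0=\mathcal{Z}_{n+1,m}^{(\mu+2m+2n+2)}$. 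Likewise $X_1=\mathcal{X}_{m,n+1}^{(\mu-1)}$. The second and third equations of the claimed system are therefore exactly the second and third equations of \eqref{sys:lem511} at $N=0$, with coefficients $(m+n+1)/z$ and $(n+1)/z$.

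The first equation involves $Z_{-1}$, whose formula in Lemma \ref{lem:512} gives $Z_{-1}=1/\big(u_{n,m}^{(\mu+2m+2n+1)}-1\big)=\mathcal{Z}_{n,m}^{(\mu+2m+2n+1)}$. The first equation of \eqref{sys:lem511} at $N=0$ reads $X_0(Y_0+Z_{-1}+1)+(m+2n+\mu+1)/z=0$, which is the claimed first equation.

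The main obstacle is justifying that $Z_{-1}$ really belongs to the same chain. This needs the step $u_{n,m}^{(\mu+2m+2n+1)}\arr{\mathcal{R}_3}1/w_{m,n}^{(\mu)}$, and it follows from the second transformation listed in the proof of Lemma \ref{lem:513}. In that formula, $\mathcal{R}_3(u_{m,n}^{(\mu)};\dots)=1/w_{n,m}^{(\mu-2m-2n-1)}$; swapping $m\leftrightarrow n$ and shifting $\mu\to\mu+2m+2n+1$ gives precisely this step.

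I would then check that the parameter triple produced by this step, namely $(m+2n+1+\mu,-m,-\mu)$ up to the sign choices for $a,b$, agrees with the input triple of the first transformation in the proof of Lemma \ref{lem:512}. I would verify this with \eqref{bt:R3abc} and the sign conventions of Table \ref{table:R3params}. If the signs mismatch, I would use the other sign choice of $b$, which is allowed because $\b=-\tfrac12 b^2$.

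Once the chain is consistent, the three equations follow by adding the pair \eqref{tdP1:btsys} at each node, exactly as in the derivation of \eqref{eq:tdPI}. Here $a_n$ is the first parameter of each node, $m+2n+\mu+1$, $m+n+1$ and $n+1$ respectively, as recorded in Lemma \ref{lem:512}.
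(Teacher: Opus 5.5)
Your proposal is correct and is exactly the paper's argument, which simply sets $N=0$ in \eqref{sys:lem511}, where $Z_{-1}=\mathcal{Z}_{n,m}^{(\mu+2m+2n+1)}$ and $X_1=\mathcal{X}_{m,n+1}^{(\mu-1)}$. Your extra check of the backward step is also sound: the swapped and shifted transformation from Lemma~\ref{lem:513} sends $u_{n,m}^{(\mu+2m+2n+1)}$ with $(a,b,c)=(n,-n-\mu,\mu+2m+2n+1)$ to $1/w_{m,n}^{(\mu)}$ with exactly $(m+2n+1+\mu,-m,-\mu)$, so no sign adjustment is needed --- and flipping the sign of $b$ would not have been a legitimate fix anyway, since the forward and inverse steps at a node must use the same $b_n$ for the $b_n$ terms in \eqref{tdP1:btsys} to cancel.
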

\begin{proof}This follows from setting $N=0$ in \eqref{sys:lem511}.
\end{proof}

\subsection{\label{ssec:GenUm}Rational solutions in terms of generalised Umemura polynomials}
\def\RR#1#2#3#4#5{\mathcal{R}_{#1}\big(#5;#2,#3,#4\big)}
\subsubsection{\label{adPII:GenUm}First discrete equation: asymmetric discrete \PII}
In the following example, we illustrate how using the \bt\ $\mathcal{R}_{1}$ \eqref{bt:R1} gives rise to a hierarchy of solutions of asymmetric \dPII\ \eqref{eq:adPII} in terms of generalised Umemura polynomials. {We remark that Kitaev, Law and McLeod \cite[Theorem 1.2]{refKLM} proved that rational solutions expressed in terms of generalised Umemura polynomials are unique.}
\begin{example}
Applying the \bt\ $\mathcal{R}_{1}$ with $(a,b,c)=(\k+\hf,-\k+\hf,2)$ to the solution
\[\wh{w}_{2,1}^{(\k-3/2)}(z) = -\frac{\big[z^2-16\k(\k+1)\big]\big[z^3-6(2\k-1)z^2+48\k(\k-1)z-32\k(\k-1)(2\k-1)\big]}{\big[z^2-16\k(\k-1)\big]\big[z^3-6(2\k+1)z^2+48\k(\k+1)z-32\k(\k+1)(2\k+1)\big]},\] 
gives the sequence of rational solutions
\[ \wh{w}_{2,1}^{(\k-3/2)} \arr{\mathcal{R}_{1}} \wh{v}_{2,1}^{(\k)} \arr{\mathcal{R}_{1}} \wh{w}_{2,1}^{(\k-1/2)} \arr{\mathcal{R}_{1}} \wh{v}_{2,1}^{(\k+1)} \arr{\mathcal{R}_{1}} \wh{w}_{2,1}^{(\k+1/2)}\]
Hence we obtain solutions of asymmetric \dPII\ \eqref{eq:adPII}
\begin{align*} 
q_{0}(z)&=\frac{\wh{w}_{2,1}^{(\k-3/2)}(z)+1}{\wh{w}_{2,1}^{(\k-3/2)}(z)-1} 
=2\Ln{z^{4}-8\k z^{3}+128\k(\k^2-1)z -256\k^2(\k^2-1)}{z-4\k},\\[5pt]
q_{1}(z)&=\frac{\wh{v}_{2,1}^{(\k)}(z)+1}{\wh{v}_{2,1}^{(\k)}(z)-1} 
=2\Ln{z^{2}-16\k(\k+1)}{z^{3}-6 (2\k+1) z^{2}+48\k(\k+1)z -32\k(\k+1)(2\k+1)},\\[5pt]
q_{2}(z)&=\frac{\wh{w}_{2,1}^{(\k-1/2)}(z)+1}{\wh{w}_{2,1}^{(\k-1/2)}(z)-1}
=2\Ln{z^4-8(\k+1)+128\k(\k+1)(\k+2)-256\k(\k+1)^2(\k+2)}{z-4(\k+1)},\\[5pt]
q_{3}(z)&=\frac{\wh{v}_{2,1}^{(\k+1)}(z)+1}{\wh{v}_{2,1}^{(\k+1)}(z)-1} =2\Ln{z^{2}-16(\k+1)(\k+2)}{z^{3}-6 (2\k+3) z^{2}+48(\k+1)(\k+2)z -32(\k+1)(\k+2)(2\k+3)}.
\end{align*}
If we define the rational functions
\begin{align*} 
q_{2n}(z)&=x_{n}(z)=\frac{\wh{w}_{2,1}^{(\k+n-3/2)}(z)+1}{\wh{w}_{2,1}^{(\k+n-3/2)}(z)-1}=2\Ln{\Umn{2,1}{2\k+2n-3}(z)}{\Umn{1,0}{2\k+2n+1}(z)}\\ 
&=2\Ln{z^{4}-8(n+\k) z^{3}+128(n+\k)\big[(n+\k)^2-1\big]z -256(n+\k)^2\big[(n+\k)^2-1\big]}{z-4(n+\k)},\\
q_{2n+1}(z)&=y_{n}(z)=\frac{\wh{v}_{2,1}^{(\k+n)}(z)+1}{\wh{v}_{2,1}^{(\k+n)}(z)-1}=2\Ln{\Umn{2,0}{2\k+2n+1/2}(z)}{\Umn{1,1}{2\k+2n+1/2}(z)}\\
&=2\Ln{z^3-6(2n+2\k+1)z^2+48(n+\k)(n+\k+1)-32(n+\k)(n+\k+1)(2n+2\k+1)}{z^2-16(n+\k)(n+\k+1)},
\end{align*}
then $q_{n}(z)$ satisfies
\[q_{n+1}+q_{n-1}=\frac{4}{z}\frac{(n+2\k)q_{n}+2-(-1)^n}{1-q_{n}^2},\]
which is asymmetric \dPII\ \eqref{eq:adPII} with $\la=2\k$, $\rho=2$ and $\ph=-1$,
whilst $x_{n}(z)$ and $y_{n}(z)$ satisfy the discrete system
\begin{align*} 
x_{n+1}+x_{n}&=\frac{4}{z}\frac{(2n+2\k+1)y_{n}+3}{1-y_{n}^2},\qquad
y_{n}+y_{n-1}=\frac{4}{z}\frac{2(n+\k)x_{n}+1}{1-x_{n}^2}.
\end{align*}
\end{example}

More generally, using the solution $\wh{w}_{m,n}^{(\k)}$, for fixed $m$ and $n$, then we have the following Lemma.
\begin{lemma}\label{dPII_Um_case1} Consider the rational functions
\[Q_{2N}(z)=X_{N}(z)=\frac{\wh{w}_{m,n}^{(N+\k)}(z)+1}{\wh{w}_{m,n}^{(N+\k)}(z)-1},\qquad 
Q_{2N+1}(z)=Y_{N}(z)=\frac{\wh{v}_{m,n}^{(N+\k+n+1/2)}(z)+1}{\wh{v}_{m,n}^{(N+\k+n+1/2)}(z)-1},\]
with $\wh{w}_{m,n}^{(\k)}(z)$ and $\wh{v}_{m,n}^{(\k)}(z)$ as given in Theorem \ref{thm:genUm},
then $Q_{N}(z)$ satisfies
\[Q_{N+1}+Q_{N-1}=\frac{4}{z}\frac{(N+m+n+2\k)Q_{N}+m-n(-1)^N}{1-Q_{N}^2},\]
which is asymmetric \dPII\ \eqref{eq:adPII} with $\la=m+n+2\k$, $\rho=m$ and $\ph=-n$,
whilst $X_{N}(z)$ and $Y_{N}(z)$ satisfy the discrete system
\begin{subequations}\label{sys:513}\begin{align} 
X_{N+1}+X_{N}&=\frac{4}{z}\frac{(2N+m+n+2\k+1)Y_{N}+m+n}{1-Y_{N}^2},\\
Y_{N}+Y_{N-1}&=\frac{4}{z}\frac{(2N+m+n+2\k)X_{N}+m-n}{1-X_{N}^2}.
\end{align}\end{subequations}
\end{lemma}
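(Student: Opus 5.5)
The plan is to follow the same route as the Laguerre-case lemmas of \S\ref{adPII:GenLag}. I will exhibit a two-step chain of the \bt\ $\mathcal{R}_{1}=\T_{1,-1,1}$,
\[ \wh{w}_{m,n}^{(\k)} \arr{\mathcal{R}_{1}} \wh{v}_{m,n}^{(\k+n+1/2)} \arr{\mathcal{R}_{1}} \wh{w}_{m,n}^{(\k+1)} \arr{\mathcal{R}_{1}} \wh{v}_{m,n}^{(\k+n+3/2)} \arr{\mathcal{R}_{1}}\cdots, \]
and then read off the discrete equation from \eqref{eq:adPII} and the general solution \eqref{dP2:abc} of the parameter recursion.

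\textbf{Step 1: parameters of the first link.} By Theorem~\ref{thm:genUm}, $\wh{w}_{m,n}^{(\k)}$ has parameters $\pms{\hf(m+\k)^2}{-\hf(n+\k)^2}{m+n}$. I choose the square-root signs $(a,b,c)=(m+\k,\,-n-\k,\,m+n)$. The parameter map $\mathcal{R}_{1}(a,b,c)=\big(\hf(a-b+c+1),\hf(-a+b+c-1),a+b\big)$ then gives
\[ \big(m+n+\k+\tfrac12,\; -\k-\tfrac12,\; m-n\big). \]
These are exactly the parameters of $\wh{v}_{m,n}^{(\k')}$ with $\k'=\k+n+\tfrac12$, since $a=m+\k'$, $b=-(n-\k')$ and $c=m-n$.

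\textbf{Step 2: parameters of the second link.} Starting from $\wh{v}_{m,n}^{(\k')}$, I take $(a,b,c)=(m+\k',\,n-\k',\,m-n)$. Applying $\mathcal{R}_{1}$ gives
\[ \big(m-n+\k'+\tfrac12,\; -\k'-\tfrac12,\; m+n\big)=\big(m+\k+1,\; -(n+\k+1),\; m+n\big), \]
which are the parameters of $\wh{w}_{m,n}^{(\k+1)}$. Hence $\mathcal{R}_{1}^2$ sends $\wh{w}_{m,n}^{(\k)}\mapsto\wh{w}_{m,n}^{(\k+1)}$, and induction on $N$ gives the whole chain.

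\textbf{Step 3: identifying the images as the Umemura solutions.} This is the main obstacle, because the parameter computation alone does not show that the image function is $\wh{v}$ or $\wh{w}$. First, $\mathcal{R}_{1}$ maps rational functions to rational functions. Second, I must check the non-degeneracy conditions \eqref{eq:PT.BTs.P5a}--\eqref{eq:PT.BTs.P5a1}. For generic $\k$ this holds because $\Phi$ is a nonzero rational function, which I would confirm from the leading behaviour as $z\to\infty$. Then I invoke the uniqueness result of Kitaev, Law and McLeod \cite[Theorem 1.2]{refKLM} for rational solutions in case (ii) of Theorem~\ref{thm:KLM}. This forces the image to equal $\wh{v}_{m,n}^{(\k+n+1/2)}$, respectively $\wh{w}_{m,n}^{(\k+1)}$.

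If one wanted to avoid uniqueness, the alternative is a direct check. One would substitute the logarithmic-derivative forms \eqref{GenUm_log} into \eqref{bt:R1a} and use the bilinear identities for $\Umn{m,n}{\k}$, which is much heavier. Special values of $\k$ at which the chain degenerates are handled by continuity in $\k$, as in the remark following Lemma~5.4.

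\textbf{Step 4: matching the discrete equation.} Setting $w_{2N}=\wh{w}_{m,n}^{(N+\k)}$ and $w_{2N+1}=\wh{v}_{m,n}^{(N+\k+n+1/2)}$, the sequence $a_N,b_N,c_N$ obeys the $\mathcal{R}_{1}$ recursion. I then compare with \eqref{dP2:abc}, taking $\la=m+n+2\k$, $\rho=m$ and $\ph=-n$:
\begin{itemize}
\item $a_{2N}=\hf(2N+\la+\rho+\ph)=N+m+\k$;
\item $b_{2N}=-\hf(2N+\la-\rho-\ph)=-(N+n+\k)$;
\item $c_{2N}=\rho-\ph=m+n$.
\end{itemize}
The odd terms match in the same way. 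Since $Q_N=(w_N+1)/(w_N-1)$, the derivation in \S\ref{ssec:adPII} gives \eqref{eq:adPII} with these constants, which is the stated equation. Finally, setting $X_N=Q_{2N}$ and $Y_N=Q_{2N+1}$ in the form \eqref{sys:dPII} yields the system \eqref{sys:513}.
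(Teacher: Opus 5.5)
Your proof is correct and follows the paper's route. The paper's proof consists of exactly the two $\mathcal{R}_{1}$ links $\wh{w}_{m,n}^{(\k)}\mapsto\wh{v}_{m,n}^{(\k+n+1/2)}$ and $\wh{v}_{m,n}^{(\k)}\mapsto\wh{w}_{m,n}^{(\k-n+1/2)}$, with the same sign choices $(m+\k,-n-\k,m+n)$ and $(m+\k,n-\k,m-n)$, and then reads the result off \eqref{eq:adPII}; your appeal to Kitaev--Law--McLeod uniqueness just makes explicit the identification step the paper leaves implicit (there is one harmless slip: in Step 1 the image has $b=-\k-\tfrac12=n-\k'$, not $-(n-\k')$, which does not change $\b$, and you use the correct sign in Step 2).
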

\begin{proof}From the transformations
\begin{align*}
\RR{1}{m+\k}{-n-\k}{m+n}{\wh{w}_{m,n}^{(\k)}} &=\big(\wh{v}_{m,n}^{(\k+n+1/2)};m+n+\k+\hf,-\k-\hf,m-n\big), \\
\RR{1}{m+\k}{n-\k}{m-n}{\wh{v}_{m,n}^{(\k)}} &=\big(\wh{w}_{m,n}^{(\k-n+1/2)};m-n+\k+\hf,-\k-\hf,m+n\big),
\end{align*}
we obtain the sequence of solutions
\[ \wh{w}_{m,n}^{(\k)} \arr{\mathcal{R}_{1}} \wh{v}_{m,n}^{(\k+n+1/2)} \arr{\mathcal{R}_{1}} \wh{w}_{m,n}^{(\k+1)} \arr{\mathcal{R}_{1}} \wh{v}_{m,n}^{(\k+n+3/2)} \arr{\mathcal{R}_{1}} \wh{w}_{m,n}^{(\k+2)} 
\] 
from which the result follows. We note that
\[ \mathcal{R}_{1}^2\Big(\wh{w}_{m,n}^{(\k)} \Big)=\wh{w}_{m,n}^{(\k+1)},\qquad \mathcal{R}_{1}^2\Big(\wh{v}_{m,n}^{(\k)}\Big)=\wh{v}_{m,n}^{(\k+1)},\]
so that only the parameter changes.
\end{proof}

Alternatively, using the solution $\wh{v}_{m,n}^{(\k)}$, for fixed $m$ and $n$, then we have the following Lemma.
\begin{lemma}\label{dPII_Um_case2} Consider the rational functions
\[Q_{2N}(z)=X_{N}(z)=\frac{\wh{v}_{N+m,N+n}^{(\k)}(z)+1}{\wh{v}_{N+m,N+n}^{(\k)}(z)-1},\qquad 
Q_{2N+1}(z)=Y_{N}(z)=\frac{\wh{u}_{N+m,N+n}^{(2\k-2N-2n)}(z)+1}{\wh{u}_{N+m,N+n}^{(2\k-2N-2n)}(z)-1},\]
with $\wh{v}_{m,n}^{(\k)}(z)$ and $\wh{u}_{m,n}^{(\k)}(z)$ as given in Theorem \ref{thm:genUm},
then $Q_{N}(z)$ satisfies
\[Q_{N+1}+Q_{N-1}=\frac{4}{z}\frac{(N+m+n)Q_{N}+m-n+\k[1+(-1)^N]}{1-Q_{N}^2},\]
which is asymmetric \dPII\ \eqref{eq:adPII} with $\la=m+n$, $\rho=m-n+\k$ and $\ph=\k$,
whilst $X_{N}(z)$ and $Y_{N}(z)$ satisfy the discrete system
\begin{subequations}\label{sys:514}\begin{align} 
X_{N+1}+X_{N}&=\frac{4}{z}\frac{(2N+m+n+1)Y_{N}+m-n}{1-Y_{N}^2},\\
Y_{N}+Y_{N-1}&=\frac{4}{z}\frac{(2N+m+n)X_{N}+2\k+m-n}{1-X_{N}^2}.
\end{align}\end{subequations}
\end{lemma}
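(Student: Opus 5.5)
The plan is to follow the pattern of the proof of Lemma \ref{dPII_Um_case1}. I will exhibit a two-step chain of \bts\ $\mathcal{R}_{1}=\T_{1,-1,1}$,
\[ \wh{v}_{m,n}^{(\k)} \arr{\mathcal{R}_{1}} \wh{u}_{m,n}^{(2\k-2n)} \arr{\mathcal{R}_{1}} \wh{v}_{m+1,n+1}^{(\k)} \arr{\mathcal{R}_{1}} \wh{u}_{m+1,n+1}^{(2\k-2n-2)} \arr{\mathcal{R}_{1}} \wh{v}_{m+2,n+2}^{(\k)}, \]
and then read off the discrete equation from the general result of \S\ref{ssec:adPII}: equations \eqref{dP2:btsys2} and \eqref{eq:adPII}, with the parameters \eqref{dP2:abc}.

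\textbf{Step 1: parameter bookkeeping.} By Theorem \ref{thm:genUm}, $\wh{v}_{m,n}^{(\k)}$ solves \PV\ with $a=\pm(m+\k)$, $b=\pm(n-\k)$ and $c=m-n$. I choose the signs $(a,b,c)=(m+\k,\,\k-n,\,m-n)$. Applying $\mathcal{R}_{1}(a,b,c)=\big(\hf(a-b+c+1),\hf(-a+b+c-1),a+b\big)$ gives
\[ \big(m+\tfrac12,\,-n-\tfrac12,\,m-n+2\k\big). \]
These are the parameters of case (iii) in Theorem \ref{thm:genUm} for $\wh{u}_{m,n}^{(\k')}$ with $m+n+\k'=m-n+2\k$, that is $\k'=2\k-2n$. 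Next I apply $\mathcal{R}_{1}$ to $\wh{u}_{m,n}^{(2\k-2n)}$ with $(a,b,c)=(m+\hf,-n-\hf,m-n+2\k)$. This gives
\[ \big(m+1+\k,\;\k-n-1,\;m-n\big), \]
which are exactly the parameters of $\wh{v}_{m+1,n+1}^{(\k)}$. Hence $\mathcal{R}_{1}^2$ sends $\wh{v}_{m,n}^{(\k)}$ to $\wh{v}_{m+1,n+1}^{(\k)}$ and shifts $(a,b,c)\mapsto(a+1,b-1,c)$, as it should.

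\textbf{Step 2: identifying the images.} The transformed functions are rational, because $\mathcal{R}_{1}$ is rational in $w$ and $w'$. They solve \PV\ for the parameters computed above, and the Kitaev--Law--McLeod uniqueness theorem \cite[Theorem 1.2]{refKLM} for rational solutions of Umemura type then forces them to equal $\wh{u}_{m,n}^{(2\k-2n)}$ and $\wh{v}_{m+1,n+1}^{(\k)}$ respectively. I must also check the nondegeneracy conditions \eqref{eq:PT.BTs.P5a} and \eqref{eq:PT.BTs.P5a1}, namely that $\Phi\not\equiv0$ and $\Phi-2z w\not\equiv0$. I would do this first for generic $\k$, for instance by comparing leading behaviour as $z\to\infty$ using the degrees of $\Umn{m,n}{\k}$. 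The special values of $\k$ are then handled by continuity in $\k$, as in the remark following Lemma \ref{dPII_Um_case1}.

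\textbf{Step 3: the discrete equation.} I set $w_{2N}=\wh{v}_{m+N,n+N}^{(\k)}$ and $w_{2N+1}=\wh{u}_{m+N,n+N}^{(2\k-2N-2n)}$, and then $Q_{N}=(w_{N}+1)/(w_{N}-1)$. The chain shows $w_{N+1}=\mathcal{R}_{1}(w_{N})$ for all $N$, so \eqref{eq:adPII} holds once I match $a_{N}$, $b_{N}$, $c_{N}$ with \eqref{dP2:abc}. From $a_{0}=m+\k$, $b_{0}=\k-n$ and $c_{0}=m-n$ I solve
\[ \hf(\la+\rho+\ph)=m+\k,\qquad -\hf(\la-\rho-\ph)=\k-n,\qquad \rho-\ph=m-n. \]
This gives $\la=m+n$, $\ph=\k$ and $\rho=m-n+\k$, which are the stated values. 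The system \eqref{sys:514} then follows by splitting \eqref{eq:adPII} into even and odd $N$, exactly as in \eqref{sys:dPII}.

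I expect Step 2 to be the main obstacle. The parameter computation is mechanical, but it does not by itself show that the image is the \emph{specific} polynomial expression with the claimed superscript. That identification rests on the uniqueness theorem, together with checking that the \bt\ does not degenerate, for example to the constant $1$, for these particular $m$, $n$, $\k$ and sign choices.
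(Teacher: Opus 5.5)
Your proposal is correct and follows essentially the same route as the paper. The paper's proof consists of exactly the two $\mathcal{R}_{1}$ steps $\wh{v}_{m,n}^{(\k)}\mapsto\wh{u}_{m,n}^{(2\k-2n)}\mapsto\wh{v}_{m+1,n+1}^{(\k)}$, with the same sign choices $(a,b,c)=(m+\k,\k-n,m-n)$ and $(m+\hf,-n-\hf,m-n+2\k)$. It then reads off asymmetric \dPII\ with $\la=m+n$, $\rho=m-n+\k$, $\ph=\k$ from \S\ref{ssec:adPII}. Your Step 2 (uniqueness via Kitaev--Law--McLeod, the checks $\Phi\not\equiv0$ and $\Phi-2zw\not\equiv0$, and continuity in $\k$) makes explicit a justification that the paper leaves implicit.
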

\begin{proof}From the transformations 
\begin{align*}
\RR{1}{m+\k}{-n+\k}{m-n}{\wh{v}_{m,n}^{(\k)}} &=\big(\wh{u}_{m,n}^{(2\k-2n)};m+\hf,-n-\hf,m-n+2\k\big), \\
\RR{1}{m+\hf}{-n-\hf}{m+n+2\k}{\wh{u}_{m,n}^{(2\k)}} &=\big(\wh{v}_{m+1,n+1}^{(n+\k)};m+n+\k+1,\k-1,m-n\big),
\end{align*}
we obtain the sequence of solutions
\[ \wh{v}_{m,n}^{(\k)} \arr{\mathcal{R}_{1}} \wh{u}_{m,n}^{(2\k-2n)} \arr{\mathcal{R}_{1}} \wh{v}_{m+1,n+1}^{(\k)} \arr{\mathcal{R}_{1}} \wh{u}_{m+1,n+1}^{(2\k-2n-2)} \arr{\mathcal{R}_{1}} \wh{v}_{m+2,n+2}^{(\k)}
\] 
from which the result follows. We note that
\[ \mathcal{R}_{1}^2\Big(\wh{w}_{m,n}^{(\k)} \Big)=\wh{w}_{m+1,n+1}^{(\k)},\qquad \mathcal{R}_{1}^2\Big(\wh{u}_{m,n}^{(2\k-2n)}\Big)=\wh{u}_{m+1,n+1}^{(2\k-2n-2)}.\]
\end{proof}

\begin{remark}{\rm From the transformations
\begin{subequations}\label{sys:519}\begin{align}
\RR{1}{m+\k}{n+\k}{m+n}{\wh{w}_{m,n}^{(\k)}} &=\big(\wh{u}_{m,n-1}^{(2\k+1)};m+\hf,n-\hf,m+n+2\k\big), \label{sys:519a}\\
\RR{1}{m+\hf}{n-\hf}{m+n+2\k}{u_{m,n-1}^{(2\k+1)}} &=\big(\wh{w}_{m+1,n-1}^{(\k)};m+\k+1,n+\k-1,m+n\big),\label{sys:519b}
\end{align}\end{subequations}
we obtain the sequence
\[ \wh{w}_{m,n}^{(\k)} \arr{\mathcal{R}_{1}} \wh{u}_{m,n-1}^{(2\k+1)} \arr{\mathcal{R}_{1}} \wh{w}_{m+1,n-1}^{(\k)} \arr{\mathcal{R}_{1}} \wh{u}_{m,n-2}^{(2\k+1)} \arr{\mathcal{R}_{1}} \wh{w}_{m+2,n-2}^{(\k)}. \]
It might be expected that the sequence would terminate when the second index becomes zero. However it can be shown that
\[\wh{w}_{m,1}^{(\k)} \arr{\mathcal{R}_{1}} \wh{u}_{m,0}^{(2\k+1)} \arr{\mathcal{R}_{1}} \wh{w}_{m+1,0}^{(\k)} \arr{\mathcal{R}_{1}} \wh{u}_{m+1,0}^{(2\k)} \]
and
\[ \wh{u}_{m+1,0}^{(2\k)} \arr{\mathcal{R}_{1}} \wh{v}_{m+2,1}^{(\k)} \arr{\mathcal{R}_{1}} \wh{u}_{m+2,1}^{(2\k-2)} \arr{\mathcal{R}_{1}} \wh{v}_{m+3,2}^{(\k)}.\]
}\end{remark}

This is illustrated in the following example.
\begin{example}{\rm Consider the rational solution
\[ \wh{w}_{1,1}^{(\k)}(z)=-\frac{(z-4\k-2)(z+4\k+6)}{(z+4\k+2)(z-4\k-6)},\]
which satisfies \PV\ with $\pms{\hf(\k+1)^2}{\hf(\k+1)^2}{2}$, and so using \eqref{sys:519} we obtain the sequence
\[ \wh{w}_{1,1}^{(\k)} \arr{\mathcal{R}_{1}} \wh{u}_{1,0}^{(2\k+1)} \arr{\mathcal{R}_{1}} \wh{w}_{2,0}^{(\k)} \arr{\mathcal{R}_{1}} \wh{u}_{2,0}^{(2\k)} \arr{\mathcal{R}_{1}}
\wh{v}_{3,1}^{(\k)} \arr{\mathcal{R}_{1}} u_{3,1}^{(2\k-2)}.
\]
The first few solutions are
\begin{align*} q_{0}(z)&=\frac{\wh{w}_{1,1}^{(\k)}(z)+1}{\wh{w}_{1,1}^{(\k)}(z)-1}=\frac{4z}{z^2-4(2\k+1)(2\k+3)},\\
q_{1}(z) &=\frac{4(\k+1)}{z}+2\deriv{}{z}\ln\frac{z-4\k-6}{z-4\k-2},\\
q_{2}(z)&=
\frac{6\big[z^2-8(\k+1)z+4(2\k+1)(2\k+3) \big]}{z^3-12(\k+1)z^2+12(2\k+1)(2\k+3) z-16(\k+1)(2\k+1)(2\k+3)}-\frac{2}{z-4\k-4},\\
q_{3}(z)&=\frac{4(\k+1)}{z}+2\deriv{}{z}\ln\frac{z^3-6(2\k+3)z^2+48(\k+1)(\k+2) z-32(\k+1)(\k+2)(2\k+3)}{z^3-6(2\k+1)z^2+48\k(\k+1)z-32\k(\k+1)(2\k+1)},
\end{align*}
and satisfy
\beq q_{n+1}+q_{n-1}=
\frac{4}{z} \frac{nq_{n}+2+[1+(-1)^n]\k}{1-q_{n}^2},\label{eq:525}\eeq
which is asymmetric \dPII\ \eqref{eq:adPII} with $\la=0$, $\rho=2+\k$ and $\ph=\k$. In general
\[q_{2n}=x_{n}=2\Ln{\Umn{n+1,n-2}{2\k-2n+3}}{\Umn{n,n-1}{2\k-2n+3}},\qquad
q_{2n+1}=y_{n}=\frac{4(\k+1)}{z}+2\Ln{\Umn{n+1,n-1}{2\k-2n+3}}{\Umn{n+1,n-1}{2\k-2n+1}},
\]
satisfy \eqref{eq:525} and
\begin{align*}
x_{n+1}+x_{n}&=\frac{4}{z}\,\frac{(2n+1)y_{n}+2}{1-y_{n}^2},\qquad
y_{n}+y_{n-1}=\frac{4}{z}\,\frac{2nx_{n}+2\k+2}{1-x_{n}^2}.
\end{align*}
}\end{example}
\comment{\begin{example}{\rm Consider the rational solution
\[ w_{1,1}^{(1/2)}=-\frac{(z-4)(z+8)}{(z+4)(z-8)},\]
which satisfies \PV\ with $\pms{\tfrac{9}{2}}{\tfrac{9}{2}}{2}$ and so
\[ w_{1,1}^{(1/2)} \arr{\mathcal{R}_{1}} u_{1,0}^{(2)} \arr{\mathcal{R}_{1}} w_{2,0}^{(1/2)} \arr{\mathcal{R}_{1}} u_{2,0}^{(2\k)} \arr{\mathcal{R}_{1}}
v_{3,1}^{(1/2)} \arr{\mathcal{R}_{1}} u_{3,1}^{(-1)}
\]
The first few associated solutions are
\[ q_{0}=\frac{w_{1,1}^{(1/2)}+1}{w_{1,1}^{(1/2)}-1}=\frac{4z}{z^2-32},\] 
and
\begin{align*} q_{1} &=\frac{2(3z^2-32z+96)}{z(z-4)(z-8)},\qquad q_{2}=\frac{4(z^{3}-72 z^{2}+432z-768)}{(z-6) (z^{3}-18 z^{2}+96z-192)},\\ 
q_{3}&=\frac{6 (z^{6}-192 z^{5}+2448 z^{4}-16128 z^{3}+61920 z^{2}-138240z+138240)}{z (z^{3}-12 z^{2}+36z-48)(z^{3}-24 z^{2}+180z-480)}
\end{align*}
which satisfy
\[q_{n+1}+q_{n-1}=
\frac{2\big[2nq_{n}+5+(-1)^n\big]}{z(1-q_{n}^2)},\]
}\end{example}}

\smallskip\noindent\textbf{Symmetric case}.
In the symmetric case when $\ph=0$ in \eqref{eq:adPII}, then
\[ q_{n}(z)=\frac{\wh{w}_{m,0}^{(\k+n/2)}(z)+1}{\wh{w}_{m,0}^{(\k+n/2)}(z)-1},\qquad p_{n}(z)=\frac{\wh{w}_{0,m}^{(\k-n/2)}(z)+1}{\wh{w}_{0,m}^{(\k-n/2)}(z)-1},\]
where $\wh{w}_{m,0}^{(\k)}(z)$ and $\wh{w}_{0,m}^{(\k)}(z)$ are given by \eqref{Um:wm0} and \eqref{Um:wn0} respectively,
which satisfy
\begin{align*}
q_{n+1}+q_{n-1}&=\frac{4}{z}\frac{(n+m+2\k)q_{n}+m}{1-q_{n}^2},\qquad
p_{n+1}+p_{n-1}=\frac{4}{z}\frac{(n-m-2\k)p_{n}+m}{1-p_{n}^2}.
\end{align*}

\subsubsection{\label{dPn:GenUm}Second discrete equation}
In the following example, we illustrate how using the \bt\ $\mathcal{R}_{2}$ \eqref{bt:R2} gives rise to a hierarchy of solutions of \eqref{eq:dPn} in terms of generalised Umemura polynomials.
\begin{example}
Applying the \bt\ $\mathcal{R}_{2}$ \eqref{bt:R2}, with $(a,b,c)=(-\k-1,-\k-m,m+1)$ to the solution $\wh{w}_{1,m}^{(\k)}(z)$ gives the sequence
\[ \wh{w}_{1,m}^{(\k)}\arr{\mathcal{R}_{2}} \wh{v}_{2,m}^{(\k+m-1/2)}\arr{\mathcal{R}_{2}} \wh{w}_{3,m}^{(\k-1)} \arr{\mathcal{R}_{2}} \wh{v}_{4,m}^{(\k+m-3/2)}\]
Hence if we define the rational functions
\begin{align*} 
q_{2n}(z)&=x_{n}(z)=\frac{\wh{w}_{2n+1,m}^{(\k-n)}(z)+1}{\wh{w}_{2n+1,m}^{(\k-n)}(z)-1}=2\Ln{\Umn{2n+1,m}{2\k-2n}(z)}{\Umn{2n,m-1}{2\k-2n+2}(z)},\\ 
q_{2n+1}(z)&=y_{n}(z)=\frac{\wh{v}_{2n+2,m}^{(\k+m-n-1/2)}(z)+1}{\wh{v}_{2n+2,m}^{(\k+m-n-1/2)}(z)-1}=2\Ln{\Umn{2n+2,m-1}{2\k-2n-2m+2}(z)}{\Umn{2n+1,m}{2\k-2n-2m+2}(z)},
\end{align*}
then $q_{n}(z)$ satisfies
\[\frac{2n+3}{q_{n+1}+q_{n}}+\frac{2n+1}{q_{n}+q_{n-1}}=z-\frac{2\big[n+1-m(-1)^n\big]q_{n}+4\k+2m+2}{1-q_{n}^2},\]
which is equation \eqref{eq:dPn} with $\la=m$, $\ph=-n$ and $\rho=m+n+2 \k$,
whilst $x_{n}(z)$ and $y_{n}(z)$ satisfy the discrete system
\begin{align*} 
\frac{4n+5}{x_{n+1}+y_{n}}+\frac{4n+3}{x_{n}+y_{n}}&=z-\frac{2(2n+m+1)y_{n}+4\k+2m+2}{1-y_{n}^2},\\
\frac{4n+3}{x_{n}+y_{n}}+\frac{4n+1}{x_{n}+y_{n-1}}&=z-\frac{2(2n-m+1)x_{n}+4\k+2m+2}{1-x_{n}^2}.
\end{align*}
\end{example}

More generally, using the solution $\wh{w}_{m,n}^{(\k)}$, for fixed $m$ and $n$, then we have the following Lemma.
\begin{lemma}
Consider the rational functions 
\[ Q_{2N}(z)=X_{N}(z)=\frac{\wh{w}_{2N+m,n}^{(\k-N)}(z)+1}{\wh{w}_{2N+m,n}^{(\k-N)}(z)-1},\qquad Q_{2N+1}(z)=Y_{N}(z)=\frac{\wh{v}_{2N+m+1,n}^{(\k+n-N-1/2)}(z)+1}{\wh{v}_{2N+m+1,n}^{(\k+n-N-1/2)}(z)-1},
\]
with $\wh{w}_{m,n}^{(\k)}(z)$ and $\wh{v}_{m,n}^{(\k)}(z)$ as given in Theorem \ref{thm:genUm},
then $Q_{N}(z)$ satisfies
\[\frac{2N+2m+1}{Q_{N+1}+Q_{N}}+\frac{2N+2m-1}{Q_{N}+Q_{N-1}}=z-\frac{2\big[N+m-n(-1)^N\big]Q_{N}+2(m+n+2\k)}{1-Q_{N}^2},\]
which is equation \eqref{eq:dPn} with $\la=m$, $\ph=-n$ and $\rho=m+n+2 \k$,
whilst $X_{N}(z)$ and $Y_{N}(z)$ satisfy the discrete system
\begin{subequations}\label{sys:515}\begin{align} 
\frac{4N+2m+3}{X_{N+1}+Y_{N}}+\frac{4N+2m+1}{X_{N}+Y_{N}}&=z-\frac{2[(2N+m+n+1)Y_{N}+m+n+2\k]}{1-Y_{N}^2},\\
\frac{4N+2m+1}{X_{N}+Y_{N}}+\frac{4N+2m-1}{X_{N}+Y_{N-1}}&=z-\frac{2[(2N+m-n)X_{N}+m+n+2\k]}{1-X_{N}^2}.
\end{align}\end{subequations}
\end{lemma}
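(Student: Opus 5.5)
The argument follows the pattern of the preceding lemmas. The aim is to show that the \bt\ $\mathcal{R}_{2}$ \eqref{bt:R2}, with suitably chosen signs of $a$ and $b$, maps the solution $\wh{w}_{m,n}^{(\k)}$ to $\wh{v}_{m+1,n}^{(\k+n-1/2)}$, and then maps $\wh{v}_{m+1,n}^{(\k+n-1/2)}$ to $\wh{w}_{m+2,n}^{(\k-1)}$. Iterating these two steps gives the chain
\[ \wh{w}_{m,n}^{(\k)} \arr{\mathcal{R}_{2}} \wh{v}_{m+1,n}^{(\k+n-1/2)} \arr{\mathcal{R}_{2}} \wh{w}_{m+2,n}^{(\k-1)} \arr{\mathcal{R}_{2}} \wh{v}_{m+3,n}^{(\k+n-3/2)} \arr{\mathcal{R}_{2}} \wh{w}_{m+4,n}^{(\k-2)}, \]
which generalises the preceding example (the case $m=1$, with $n$ playing the role of $m$ there). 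Once this chain is established, the transformation $w\mapsto (w+1)/(w-1)$ of \S\ref{ssec:adPn} sends $w_{N}$ to $Q_{N}$, and \eqref{eq:dPn} follows directly from the derivation given there.

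\textbf{Parameter bookkeeping.} For $\wh{w}_{m,n}^{(\k)}$ I take $(a,b,c)=(-m-\k,-n-\k,m+n)$, mirroring the example's choice $(-\k-1,-\k-m,m+1)$. The formula for $\mathcal{R}_{2}(a,b,c)$ then gives
\[ \big(\hf(-m-n-2\k-m-n-1),\ \hf(-2\k+1),\ m-n+1\big)=\big(-(m+n+\k+\hf),\ -\k+\hf,\ m-n+1\big). \]
These are the parameters of $\wh{v}_{m+1,n}^{(\k')}$ with $\k'=\k+n-\hf$, since then $a=-(m+1+\k')$, $b=-\k+\hf=-(n-\k')$ and $c=(m+1)-n$. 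The second step is checked in the same way, using $\mathcal{R}_{2}^2(a,b,c)=(a-1,b+1,c+2)$, which yields $(-m-\k-1,-n-\k+1,m+n+2)$. These are exactly the parameters of $\wh{w}_{m+2,n}^{(\k-1)}$.

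Matching the parameters to the general solution \eqref{eq:abcn2} then identifies the constants. From $c_{N}=N+\la-(-1)^N\ph$ with $c_{0}=m+n$ and $c_{1}=m-n+1$ we get $\la=m$ and $\ph=-n$. From $a_{0}=-\hf(\la+\rho+\ph)=-m-\k$ we get $\rho=m+n+2\k$. Substituting into \eqref{eq:dPn} and splitting into even and odd $N$ gives \eqref{sys:515}.

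\textbf{Identification of the image under $\mathcal{R}_{2}$.} This is the main obstacle: showing that $\mathcal{R}_{2}(\wh{w}_{m,n}^{(\k)})$ actually equals $\wh{v}_{m+1,n}^{(\k+n-1/2)}$. Here I would invoke the uniqueness result of Kitaev, Law and McLeod \cite[Theorem 1.2]{refKLM}, which states that rational solutions in case (ii) of Theorem \ref{thm:KLM} are unique for given parameters. Since $\mathcal{R}_{2}$ is a birational map in $w$ and $w'$, it sends a rational solution to a rational solution with the computed parameters. That image must therefore be $\wh{v}_{m+1,n}^{(\k+n-1/2)}$, provided it is defined, i.e.\ the denominator in \eqref{bt:R2a} does not vanish identically, and provided it does not degenerate.

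The non-degeneracy has to be checked. The factor $a-b-c-1$ in \eqref{bt:R2a} equals $-m+n-(m+n)-1=-2m-1\neq0$, so the map is not the trivial $1/w$. To rule out a vanishing denominator for generic $\k$, I would use the logarithmic-derivative form in Remark \ref{rem:410}, or simply compare leading behaviour as $z\to\infty$. If uniqueness is considered insufficient, the fallback is a direct verification: substitute the log-derivative expressions \eqref{eq:GenUmPolyw1} and \eqref{eq:GenUmPolyv1} into \eqref{dPn:btsys2} and reduce the result to bilinear identities among the $\Umn{m,n}{\k}$, such as \eqref{eq:Masuda34a}, \eqref{eq:Masuda34b}, \eqref{eq:Masuda12a} and \eqref{eq:Masuda12b}. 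The step that may need new identities, or a computer check for small $m$ and $n$, is the one in which the index $m$ shifts together with $\k$.
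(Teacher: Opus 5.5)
Your proposal is correct and is essentially the paper's proof: the paper exhibits the two $\mathcal{R}_{2}$ steps $\wh{w}_{m,n}^{(\k)}\mapsto\wh{v}_{m+1,n}^{(\k+n-1/2)}$ and $\wh{v}_{m,n}^{(\k)}\mapsto\wh{w}_{m+1,n}^{(\k-n-1/2)}$ with the same sign choices for $(a,b,c)$, chains them, and reads off \eqref{eq:dPn}; your appeal to the uniqueness result of Kitaev, Law and McLeod (which the paper recalls at the start of \S\ref{ssec:GenUm}) justifies the identifications that the paper only asserts. There is one cosmetic slip: $-\k+\hf$ equals $n-\k'$, not $-(n-\k')$, which is harmless because only $b^2$ enters $\b$ and you carry the correct value $b=-\k+\hf$ into the second step.
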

\begin{proof}From the transformations
\begin{align*}
\RR{2}{-m-\k}{-n-\k}{m+n}{\wh{w}_{m,n}^{(\k)}} &=\big(\wh{v}_{m+1,n}^{(\k+n-1/2)};-m-n-\k-\hf,-\k+\hf,m-n+1\big),\\
\RR{2}{-m-\k}{n-\k}{m-n}{\wh{v}_{m,n}^{(\k)}} &=\big(\wh{w}_{m+1,n}^{(\k-n-1/2)};-m-\k-\hf,n-\k+\hf,m+n+1\big)
\end{align*}
we obtain the sequence of solutions
\[\wh{w}_{m,n}^{(\k)}\arr{\mathcal{R}_{2}} \wh{v}_{m+1,n}^{(\k+n-1/2)}\arr{\mathcal{R}_{2}} \wh{w}_{m+2,n}^{(\k-1)}\arr{\mathcal{R}_{2}} \wh{v}_{m+3,n}^{(\k+n-3/2)}\arr{\mathcal{R}_{2}} \wh{w}_{m+4,n}^{(\k-2)}
\] 
from which the result follows. We note that
\[ \mathcal{R}_{2}^2\Big(\wh{w}_{m,n}^{(\k)} \Big)=\wh{w}_{m+2,n}^{(\k-1)},\qquad \mathcal{R}_{2}^2\Big(\wh{v}_{m,n}^{(\k)}\Big)=\wh{v}_{m+2,n}^{(\k-1)}.\]
\end{proof}
When $n=0$ there is no alternating term and so
\[Q_{2N}(z)=\frac{\wh{w}_{2N+m,0}^{(\k-N)}(z)+1}{\wh{w}_{2N+m,0}^{(\k-N)}(z)-1},\qquad Q_{2N+1}(z)=\frac{\wh{v}_{2N+m+1,0}^{(\k-N-1/2)}(z)+1}{\wh{v}_{2N+m+1,0}^{(\k-N-1/2)}(z)-1},\]
satisfy
\[\frac{2N+2m+1}{Q_{N+1}+Q_{N}}+\frac{2N+2m-1}{Q_{N}+Q_{N-1}}=z-\frac{2(N+m)Q_{N}+2m+4\k}{1-Q_{N}^2}.\]

It is also possible to obtain a sequence of solutions involving the solutions 
$\wh{w}_{m,n}^{(\k)}$ and $\wh{u}_{m,n}^{(2\k+1)}$ as illustrated in the following Lemma.
\begin{lemma}
Consider the rational functions 
\[ Q_{2N}(z)=X_{N}(z)=\frac{\wh{w}_{N+m,N+n}^{(\k)}(z)+1}{\wh{w}_{N+m,N+n}^{(\k)}(z)-1},\qquad Q_{2N+1}(z)=Y_{N}(z)=\frac{\wh{u}_{N+m,N+n}^{(2\k+1)}(z)+1}{\wh{u}_{N+m,N+n}^{(2\k+1)}(z)-1},\]
with $\wh{w}_{m,n}^{(\k)}(z)$ and $\wh{u}_{m,n}^{(\k)}(z)$ as given in Theorem \ref{thm:genUm},
then $Q_{N}(z)$ satisfies
\[\frac{2N+2m+2n+2\k+1}{Q_{N+1}+Q_{N}}+\frac{2N+2m+2n+2\k-1}{Q_{N}+Q_{N-1}}=z-\frac{2\big[N+m+n+\k\big\{1+(-1)^N\big\}\big] Q_{N}+2(m-n)}{1-Q_{N}^2},\]
which is equation \eqref{eq:dPn} with $\la=m+n+\k$, $\ph=\k$ and $\rho=m-n$,
whilst $X_{N}(z)$ and $Y_{N}(z)$ satisfy the discrete system
\begin{subequations}\label{sys:516}\begin{align} 
\frac{4 N+2m+2n+2\k+3}{X_{N+1}+Y_{N}}+\frac{4 N+2m+2n+2\k+1}{X_{N}+Y_{N}} &=z-\frac{2[(2 N+m+n+1) Y_{N}+m-n]}{1-Y_{N}^{2}},\\
\frac{4 N+2m+2n+2\k+1}{X_{N}+Y_{N}}+\frac{4 N+2m+2n+2\k-1}{X_{N}+Y_{N-1}} &=z-\frac{2[(2 N+m+n+2 \k) X_{N}+m-n]}{1-X_{N}^{2}}.
\end{align}\end{subequations}
\end{lemma}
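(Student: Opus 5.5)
The plan is to follow the pattern of the preceding lemmas for $\mathcal{R}_{2}$. I will exhibit two applications of the \bt\ $\mathcal{R}_{2}=\T_{-1,-1,-1}\circ\T_{-1,1,-1}$, with explicitly chosen signs of $a$ and $b$, that take $\wh{w}_{m,n}^{(\k)}$ to $\wh{u}_{m,n}^{(2\k+1)}$ and then to $\wh{w}_{m+1,n+1}^{(\k)}$. Iterating gives the chain
\[\wh{w}_{m,n}^{(\k)}\arr{\mathcal{R}_{2}}\wh{u}_{m,n}^{(2\k+1)}\arr{\mathcal{R}_{2}}\wh{w}_{m+1,n+1}^{(\k)}\arr{\mathcal{R}_{2}}\wh{u}_{m+1,n+1}^{(2\k+1)}\arr{\mathcal{R}_{2}}\wh{w}_{m+2,n+2}^{(\k)}.\]
Setting $q=(w+1)/(w-1)$, the derivation in \S\ref{ssec:adPn} of \eqref{eq:dPn} from \eqref{dPn:btsys2} then applies verbatim, and the statement follows once the parameters are identified.

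\textbf{Step 1: parameter matching.} I use the action
\[\mathcal{R}_{2}(a,b,c)=\big(\hf(a+b-c-1),\hf(a+b+c+1),-a+b+1\big).\]
The solution $\wh{w}_{m,n}^{(\k)}$ has $(\a,\b,\ga)=\big(\hf(m+\k)^2,-\hf(n+\k)^2,m+n\big)$. Choosing $(a,b,c)=(-m-\k,\,n+\k,\,m+n)$ gives
\[\mathcal{R}_{2}(-m-\k,n+\k,m+n)=\big(-m-\hf,\;n+\hf,\;m+n+2\k+1\big),\]
which are exactly the parameters of $\wh{u}_{m,n}^{(2\k+1)}$ in \eqref{def:GenUmu}. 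Next, applying $\mathcal{R}_{2}$ with $(a,b,c)=(-m-\hf,\,n+\hf,\,m+n+2\k+1)$ gives
\[\big(-(m+1+\k),\;n+1+\k,\;m+n+2\big),\]
which are the parameters of $\wh{w}_{m+1,n+1}^{(\k)}$, with the same sign convention as the starting point. So the chain closes with the indices shifted by $(1,1)$, and
\[\mathcal{R}_{2}^2\big(\wh{w}_{m,n}^{(\k)}\big)=\wh{w}_{m+1,n+1}^{(\k)},\qquad \mathcal{R}_{2}^2\big(\wh{u}_{m,n}^{(2\k+1)}\big)=\wh{u}_{m+1,n+1}^{(2\k+1)}.\]

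\textbf{Step 2: identifying the images.} Here I invoke the uniqueness result of Kitaev, Law and McLeod \cite[Theorem 1.2]{refKLM}, quoted above, which says that rational solutions expressed through generalised Umemura polynomials are unique for their parameters. Since $\mathcal{R}_{2}$ \eqref{bt:R2a} is a rational expression in $w$ and $w'$, it maps rational solutions to rational solutions. The image is therefore forced to be $\wh{u}_{m,n}^{(2\k+1)}$, respectively $\wh{w}_{m+1,n+1}^{(\k)}$, once it is a genuine solution. I also need the denominator of \eqref{bt:R2a} not to vanish identically. I would check this on a small case such as $m=n=1$ with computer algebra, as the paper does elsewhere, and argue generically in $\k$ with a limit for exceptional values.

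\textbf{Step 3: reading off the discrete equation.} The chain gives $a_{2N}=-(N+m+\k)$, $b_{2N}=N+n+\k$, $c_{2N}=2N+m+n$. Matching with \eqref{eq:abcn2} gives $\la+\rho+\ph=2m+2\k$, $\la-\rho+\ph=2n+2\k$ and $\la-\ph=m+n$, so
\[\la=m+n+\k,\qquad \rho=m-n,\qquad \ph=\k.\]
The odd terms, $a_{2N+1}=-(N+m+\hf)$ and $c_{2N+1}=2N+m+n+2\k+1$, are consistent with these values. Substituting into \eqref{eq:dPn} yields the stated scalar equation. Splitting into even and odd $N$ exactly as in \eqref{sys:dPn} yields the system \eqref{sys:516}.

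I expect the main obstacle to be Step 2, in particular the boundary cases where an index reaches zero (cf.\ the remark on \eqref{sys:519}) and integer values of $2\k$. There I would appeal to continuity in $\k$ rather than to direct verification.
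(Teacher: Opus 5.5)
Your proposal is correct and follows the paper's proof. The paper uses exactly the same two applications of $\mathcal{R}_{2}$: with $(a,b,c)=(-m-\kappa,\,n+\kappa,\,m+n)$ it maps $\widehat{w}_{m,n}^{(\kappa)}$ to $\widehat{u}_{m,n}^{(2\kappa+1)}$, and with $(a,b,c)=(-m-\tfrac12,\,n+\tfrac12,\,m+n+2\kappa+1)$ it maps that to $\widehat{w}_{m+1,n+1}^{(\kappa)}$. It then reads $\lambda=m+n+\kappa$, $\rho=m-n$, $\varphi=\kappa$ off the resulting chain. Your Step 2 (identifying the images via Kitaev--Law--McLeod uniqueness and checking the denominator does not vanish) is extra justification that the paper leaves implicit, not a different route.
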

\begin{proof}From the transformations
\begin{align*}
\RR{2}{-m-\k}{n+\k}{m+n}{\wh{w}_{m,n}^{(\k)}} &=\big(\wh{u}_{m,n}^{(2\k+1)};-m-\hf,n+\hf,m+n+2\k+1\big), \\
\RR{2}{-m-\hf}{n+\hf}{m+n+2\k+1}{\wh{u}_{m,n}^{(2\k+1)}} &=\big(\wh{w}_{m+1,n+1}^{(\k)};-m-\k-1,n+\k+1,m+n+2\big).
\end{align*}
we obtain the sequence of solutions
\[ \wh{w}_{m,n}^{(\k)} \arr{\mathcal{R}_{2}} \wh{u}_{m,n}^{(2\k+1)} \arr{\mathcal{R}_{2}} \wh{w}_{m+1,n+1}^{(\k)} \arr{\mathcal{R}_{2}} \wh{u}_{m+1,n+1}^{(2\k+1)} \arr{\mathcal{R}_{2}} \wh{w}_{m+2,n+2}^{(\k)} \]
from which the result follows. We note that
\[ \mathcal{R}_{2}^2\Big(\wh{w}_{m,n}^{(\k)} \Big)=\wh{w}_{m+1,n+1}^{(\k)},\qquad \mathcal{R}_{2}^2\Big(\wh{u}_{m,n}^{(2\k+1)}\Big)=\wh{u}_{m+1,n+1}^{(2\k+1)}.\]
\end{proof}
When $\k=0$ then there is no alternating term and so
\[ Q_{2N}(z)=\frac{\wh{w}_{N+m,N+n}^{(0)}(z)+1}{\wh{w}_{N+m,N+n}^{(0)}(z)-1},\qquad Q_{2N+1}(z)=\frac{\wh{u}_{N+m,N+n}^{(1)}(z)+1}{\wh{u}_{N+m,N+n}^{(1)}(z)-1},\]
satisfy
\[\frac{2N+2m+2n+1}{Q_{N+1}+Q_{N}}+\frac{2N+2m+2n-1}{Q_{N}+Q_{N-1}}=z-\frac{2\big[(N+m+n)Q_{N}+m-n\big]}{1-Q_{N}^2},\]
which is equation \eqref{eq:dPn} with $\la=m+n$, $\rho=m-n$ and $\ph=0$.
As previously, although there is no alternating term, the structure of the even and odd terms is different.

\smallskip\noindent\textbf{Symmetric case}.
In the symmetric case when $\ph=0$ in \eqref{eq:dPn1}, from the transformations
\begin{align*}
\RR{2}{-m-\k}{-\k}{m}{\wh{w}_{m,0}^{(\k)}} &=\big(\wh{w}_{m+1,0}^{(\k-1/2)};-m-\k-\hf,-\k+\hf,m+1\big), \\
\RR{2}{n}{n+\hf}{n+\k}{\wh{w}_{0,n}^{(\k)}} &=\big(\wh{w}_{0,n+1}^{(\k-1/2)};\k-\hf,n+\k+\hf,n+1\big),
\end{align*}
recall \eqref{Um:wm0} and \eqref{Um:wn0}, we obtain the respective sequences
\[\begin{split}&\wh{w}_{m,0}^{(\k)} \arr{\mathcal{R}_{2}} \wh{w}_{m+1,0}^{(\k-1/2)} \arr{\mathcal{R}_{2}} \wh{w}_{m+2,0}^{(\k-1)}\\ 
&\wh{w}_{0,n}^{(\k)} \arr{\mathcal{R}_{2}} \wh{w}_{0,n+1}^{(\k-1/2)} \arr{\mathcal{R}_{2}} \wh{w}_{0,n+2}^{(\k-1)}\end{split}.\]
If we define
\[ q_{n}(z)=\frac{\wh{w}_{m+n,0}^{(\k-n/2)}(z)+1}{\wh{w}_{m+n,0}^{(\k-n/2)}(z)-1},\qquad p_{n}(z)=\frac{\wh{w}_{0,n+m}^{(\k-n/2)}(z)+1}{\wh{w}_{0,n+m}^{(\k-n/2)}(z)-1}, \]
then ${q}_{n}(z)$ and $p_{n}(z)$ satisfy
\begin{align*}
\frac{2n+2m+1}{q_{n+1}+q_{n}}+\frac{2n+2m-1}{q_{n}+q_{n-1}}&=z-\frac{2(n+m)q_{n}+2m+4\k}{1-q_{n}^2},\\
\frac{2n+2m+1}{p_{n+1}+p_{n}}+\frac{2n+2m-1}{p_{n}+p_{n-1}}&=z-\frac{2(n+m)p_{n}-2m-4\k}{1-p_{n}^2},
\end{align*}
which are the discrete equation \eqref{eq:dPn1} with $\la=m$, $\rho=2\k+m$, $\ph=0$ and
$\la=m$, $\rho=-2\k-m$, $\ph=0$, respectively.

\subsubsection{\label{tdPI:GenUm}Third discrete equation: ternary discrete \PI}
In the following example, we illustrate how using the \bt\ $\mathcal{R}_{3}$ \eqref{bt:R3} gives rise to a hierarchy of solutions of ternary \dPI\ \eqref{eq:tdPI} in terms of generalised Umemura polynomials.

\begin{example}
Applying the \bt\ $\mathcal{R}_{3}$ \eqref{bt:R3}, with $(a,b,c)=(\k+1,-\k-m,m+1)$ to the solution ${w}_{1,m}^{(\k)}$ gives the sequence of rational solutions
\[ \wh{w}_{1,m}^{(\k)} \arr{\mathcal{R}_{3}} \wh{v}_{1,m}^{(\k+m+1/2)} \arr{\mathcal{R}_{3}} \wh{u}_{1,m}^{(2\k+1)} \arr{\mathcal{R}_{3}} \wh{w}_{2,m}^{(\k)} \arr{\mathcal{R}_{3}} 
\wh{v}_{2,m}^{(\k+m+1/2)} \arr{\mathcal{R}_{3}} \wh{u}_{2,m}^{(2\k+1)}\]
so that
\[ \mathcal{R}_{3}^3\Big(\wh{w}_{1,m}^{(\k)}\Big)=\wh{w}_{2,m}^{(\k)},\qquad \mathcal{R}_{3}^3\Big(\wh{v}_{1,m}^{(\k+m+1/2)} \Big)= \wh{v}_{2,m}^{(\k+m+1/2)},\qquad
\mathcal{R}_{3}^3\Big(\wh{u}_{1,m}^{(2\k+1)} \Big)=\wh{u}_{2,m}^{(2\k+1)}.\]
Hence if we define the rational functions
\begin{align*} 
x_{3n}(z)={X}_{n}(z)&=\frac{1}{\wh{w}_{n+1,m}^{(\k)}(z)-1} = -\frac{1}{2}+\Ln{\Umn{n+1,m}{2\k}(z)}{\Umn{n,m-1}{2\k+2}(z)},\\
x_{3n+1}(z)={Y}_{n}(z)&=\frac{1}{\wh{v}_{n+1,m}^{(\k+m+1/2)}(z)-1}= -\frac{1}{2}+\Ln{\Umn{n+1,m-1}{2\k+2}(z)}{\Umn{n,m}{2\k+2}(z)},\\
x_{3n+2}(z)={Z}_{n}(z)&=\frac{1}{\wh{u}_{n+1,m}^{(2\k+1)}(z)-1}= -\frac{1}{2}+\frac{n+m+2\k+2}{z}+\Ln{\Umn{n+1,m}{2\k+2}(z)}{\Umn{n+1,m}{2\k}(z)},
\end{align*}
then $x_{n}(z)$ satisfies ternary \dPI\ 
\[ x_{n}(x_{n+1}+x_{n-1}+1) +\frac{a_{n}}{z}=0,\]
with
\[ a_{n}=\tfrac{1}{3}(n+m+2\k)+1+\tfrac{1}{3}(\k-m)\cos\!\big(\tfrac{2}{3}\pi n\big)+\tfrac{1}{9}(3m+3\k+1)\sqrt{3}\sin\!\big(\tfrac{2}{3}\pi n\big),\]
so
\[ a_{3n}=n+\k+1,\qquad a_{3n+1}=n+m+\k+\tfrac{3}{2}, \qquad a_{3n+2}=n+\tfrac{3}{2} ,\]
whilst ${X}_{n}(z)$, ${Y}_{n}(z)$ and $ {Z}_{n}(z)$ satisfy
\begin{align*}
&{X}_{n}\big({Y}_{n}+{Z}_{n-1}+1\big)+\frac{n+\k+1}{z}=0,\\
&{Y}_{n}\big({Z}_{n}+{X}_{n}+1\big)+\frac{n+m+\k+\tfrac{3}{2}}{z}=0,\\
&Z_{n}\big({X}_{n+1}+{Y}_{n}+1\big)+\frac{n+\tfrac{3}{2}}{z}=0.
\end{align*}
\end{example}

More generally, using the solution $\wh{w}_{m,n}^{(\k)}$, for fixed $m$ and $n$, then we have the following Lemma.
\begin{lemma}\label{lem:528}
Consider the rational functions
\beq\label{xyz:case1} {X}_{N}(z)=\frac{1}{\wh{w}_{N+m,n}^{(\k)}(z)-1},\qquad {Y}_{N}(z)=\frac{1}{\wh{v}_{N+m,n}^{(\k+n+1/2)}(z)-1}, \qquad {Z}_{N}(z)=\frac{1}{\wh{u}_{N+m,n}^{(2\k+1)}(z)-1},\eeq
with $\wh{w}_{m,n}^{(\k)}(z)$, $\wh{v}_{m,n}^{(\k)}(z)$ and $\wh{u}_{m,n}^{(\k)}(z)$ as given in Theorem \ref{thm:genUm},
then ${X}_{N}(z)$, ${Y}_{N}(z)$ and $ {Z}_{N}(z)$ satisfy
\begin{subequations}\label{xyz:sys1}\begin{align}
&{X}_{N}\big({Y}_{N}+{Z}_{N-1}+1\big)+\frac{N+m+\k}{z}=0,\\
&{Y}_{N}\big({Z}_{N}+{X}_{N}+1\big)+\frac{N+m+n+\k+\hf}{z}=0,\\
&Z_{N}\big({X}_{N+1}+{Y}_{N}+1\big)+\frac{N+m+\hf}{z}=0,
\end{align} \end{subequations}
which is the system \eqref{sys:tertdPI} with
$\la=\tfrac{1}{3}(\k-n)$, $\ph=n+\k+\tfrac{1}{3}$ and $\rho=m+\tfrac{1}{3}n+\tfrac{2}{3}\k$.
\end{lemma}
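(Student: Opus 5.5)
The plan follows the proofs of Lemmas \ref{lem:512} and \ref{lem:513}. I will exhibit three explicit applications of $\mathcal{R}_{3}$ \eqref{bt:R3} that form a closed $3$-cycle (up to a parameter shift) among $\wh{w}$, $\wh{v}$ and $\wh{u}$. Reading this cycle as the chain $w_{3N}\to w_{3N+1}\to w_{3N+2}\to w_{3N+3}$ of \S\ref{ssec:tdPI}, and setting $x_{k}=1/(w_{k}-1)$, the system \eqref{sys:tertdPI} follows directly from the derivation of \eqref{eq:tdPI}.

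The three transformations I intend to verify are
\begin{align*}
&\mathcal{R}_{3}\big(\wh{w}_{m,n}^{(\k)};m+\k,-n-\k,m+n\big)=\big(\wh{v}_{m,n}^{(\k+n+1/2)};m+n+\k+\hf,-\k-\hf,m-n\big),\\
&\mathcal{R}_{3}\big(\wh{v}_{m,n}^{(\k+n+1/2)};m+n+\k+\hf,\k+\hf,m-n\big)=\big(\wh{u}_{m,n}^{(2\k+1)};m+\hf,n+\hf,m+n+2\k+1\big),\\
&\mathcal{R}_{3}\big(\wh{u}_{m,n}^{(2\k+1)};m+\hf,n+\hf,m+n+2\k+1\big)=\big(\wh{w}_{m+1,n}^{(\k)};m+\k+1,n+\k,m+n+1\big).
\end{align*}
On the parameter level, each line is a direct check using the first line of Table~\ref{table:R3params}, with the signs of $a,b$ chosen as displayed. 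The first transformation coincides, as a map on $w$, with the first step in the proof of Lemma~\ref{dPII_Um_case1}, because $\mathcal{R}_{3}$ and $\mathcal{R}_{1}$ act identically on $w$ (Remark (ii) after \eqref{tdPI:GenUm}'s parent subsection). That lemma already identifies the image as $\wh{v}_{m,n}^{(\k+n+1/2)}$; only the sign of $b$ differs, and this is harmless because \PV\ depends on $b^2$.

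For the second and third steps, the resulting parameters $\big(\tfrac12(m+\hf)^2,-\tfrac12(n+\hf)^2,m+n+2\k+1\big)$ and $\big(\tfrac12(m+1+\k)^2,-\tfrac12(n+\k)^2,m+n+1\big)$ fall in cases (iii) and (ii) of Theorem~\ref{thm:KLM}, respectively. By the uniqueness result of Kitaev, Law and McLeod for the Umemura-type rational solutions (quoted before Example 5.12), the rational image must then be $\wh{u}_{m,n}^{(2\k+1)}$ and $\wh{w}_{m+1,n}^{(\k)}$. To guard against degenerate cases, I will also check on the first nontrivial case $m=n=1$ that the image is nonconstant, which confirms that it is not a spurious solution.

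Composing the three steps gives $\mathcal{R}_{3}^3(\wh{w}_{m,n}^{(\k)})=\wh{w}_{m+1,n}^{(\k)}$ with $(a,b,c)\mapsto(a+1,-b,c+1)$, in agreement with Table~\ref{table:R3params}. Since the third and first lines of \eqref{xyz:case1} are then consistent, I can iterate in $N$. From \eqref{eq:tdPI}, the constant at position $k$ is $a_{k}$, the $a$-parameter of $w_{k}$. These are $N+m+\k$, $N+m+n+\k+\hf$ and $N+m+\hf$ for $k=3N$, $3N+1$ and $3N+2$, which gives \eqref{xyz:sys1}.

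Finally, I will match these constants with \eqref{eq:abcn3a}, using $a_{3N}=N+\la+\rho$, $a_{3N+1}=N-\hf\la+\hf\ph+\rho+\tfrac13$ and $a_{3N+2}=N-\hf\la-\hf\ph+\rho+\tfrac23$. Solving this linear system yields $\la=\tfrac13(\k-n)$, $\ph=n+\k+\tfrac13$ and $\rho=m+\tfrac13n+\tfrac23\k$.

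The main obstacle is the identification of the second and third images as the specific Umemura solutions. Uniqueness handles this in principle, but the signs of $a$ and $b$ must be chosen consistently with \eqref{bt:R3a}. If uniqueness is felt to be insufficient, the fallback is to check the identification directly via the logarithmic-derivative forms \eqref{eq:GenUmPolywvu2} together with the Umemura bilinear identities (\eqref{eq:Masuda34a}, \eqref{eq:Masuda12a}, \eqref{eq:B7}).
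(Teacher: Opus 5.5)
Your overall plan is the paper's own. You use the same three applications of $\mathcal{R}_{3}$, $\wh{w}_{m,n}^{(\k)}\to\wh{v}_{m,n}^{(\k+n+1/2)}\to\wh{u}_{m,n}^{(2\k+1)}\to\wh{w}_{m+1,n}^{(\k)}$, read as the chain of \S\ref{ssec:tdPI}. Your values of $a_{3N}$, $a_{3N+1}$, $a_{3N+2}$ and of $(\la,\ph,\rho)$ are right. The paper just lists the three transformations. Your appeal to Kitaev--Law--McLeod uniqueness to identify the images is a sensible justification for generic $\k$, although a check at $m=n=1$ does not rule out a vanishing denominator in \eqref{bt:R3a} in general.

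The problem is the signed parameter bookkeeping, which you explicitly wave away.

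\textbf{The output signs are wrong.} Table~\ref{table:R3params} records only $b^2$, so it cannot fix signs. Computing with \eqref{bt:R3abc}:
\begin{itemize}
\item your first step outputs $(m+n+\k+\hf,\,\k+\hf,\,m-n)$; your $-\k-\hf$ is the $\mathcal{R}_{1}$ value from Lemma~\ref{dPII_Um_case1};
\item your third step outputs $(m+\k+1,\,-n-\k,\,m+n+1)$, not $b=n+\k$.
\end{itemize}
Hence $\mathcal{R}_{3}^3(a,b,c)=(a+1,b,c+1)$, as stated in \S\ref{ssec:tdPI}, and your claim $(a,b,c)\mapsto(a+1,-b,c+1)$ is false.

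\textbf{The sign of $b$ is not harmless.} Since $\mathcal{R}_{1}$ and $\mathcal{R}_{3}$ act identically on $w$ for equal $(a,b)$, the sign of $b$ fed into the next step decides where $\wh{v}_{m,n}^{(\k+n+1/2)}$ goes. With $b=\k+\hf$ it goes to $\wh{u}_{m,n}^{(2\k+1)}$. With $b=-\k-\hf$, as in Lemma~\ref{dPII_Um_case1}, it goes to $\wh{w}_{m,n}^{(\k+1)}$. This is the whole difference between the ternary \dPI\ chain and the asymmetric \dPII\ chain, and your own choice $b=+\k+\hf$ in step two already contradicts ``harmless''.

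\textbf{The derivation needs exact signed parameters.} Equation \eqref{eq:tdPI} comes from adding the two relations \eqref{tdP1:btsys} at a node $k$. Their $b$-terms are $\mp b_k/\big(2z(x_k+1)\big)$. These cancel only if the $b_k$ used to step forward from $w_k$ equals the $b_k$ with which $w_k$ was produced, i.e.\ only if $(a_{k+1},b_{k+1},c_{k+1})=\mathcal{R}_{3}(a_k,b_k,c_k)$ exactly. If your cycle really flipped $b$, the term $(n+\k)x_{3N}/\big(z(x_{3N}+1)\big)$ would survive at $k=3N$. The first equation of \eqref{xyz:sys1} would then fail, and your step ``iterate in $N$'' would be unjustified.

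The fix is only arithmetic. With \eqref{bt:R3abc}, the output of each step coincides, signs included, with the input of the next. The parameters then form the orbit \eqref{eq:abcn3} with $b_{3N}=-n-\k$, $b_{3N+1}=\k+\hf$, $b_{3N+2}=n+\hf$, and the rest of your argument goes through as written.
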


\begin{proof}From the transformations
\begin{align*}
\RR{3}{m+\k}{-n-\k}{m+n}{\wh{w}_{m,n}^{(\k)}} &=\big(\wh{v}_{m,n}^{(\k+n+1/2)};m+n+\k+\hf,\k+\hf,m-n\big), \\
\RR{3}{m+\k}{-n+\k}{m-n}{\wh{v}_{m,n}^{(\k)}} &=\big(\wh{u}_{m,n}^{(2\k-2n)};m+n+\hf,\hf,m-n+2\k\big), \\
\RR{3}{m+\hf}{n+\hf}{m+n+2\k}{\wh{u}_{m,n}^{(2\k)}} &=\big(\wh{w}_{m+1,n}^{(\k-1/2)};m+\k+\hf,-n-\k+\hf,m+n+1\big),
\end{align*}
we obtain the sequence of solutions
\[ \wh{w}_{m,n}^{(\k)} \arr{\mathcal{R}_{3}} \wh{v}_{m,n}^{(\k+n+1/2)} \arr{\mathcal{R}_{3}} \wh{u}_{m,n}^{(2\k+1)} \arr{\mathcal{R}_{3}} \wh{w}_{m+1,n}^{(\k)} \arr{\mathcal{R}_{3}} \wh{v}_{m+1,n}^{(\k+n+1/2)} \arr{\mathcal{R}_{3}} \wh{u}_{m+1,n}^{(2\k+1)}\] 
from which the result follows. We note that
\[ \mathcal{R}_{3}^3\Big(\wh{w}_{m,n}^{(\k)} \Big)=\wh{w}_{m+1,n}^{(\k)},\qquad 
\mathcal{R}_{3}^3\Big(\wh{v}_{m,n}^{(\k)} \Big)=\wh{v}_{m+1,n}^{(\k)},\qquad \mathcal{R}_{3}^3\Big(\wh{u}_{m,n}^{(2\k+1)}\Big)=\wh{u}_{m+1,n}^{(2\k+1)}.\]
\end{proof}
\begin{remark}
It is interesting to note that ${X}_{N}(z)$, ${Y}_{N}(z)$ and $Z_{N}(z)$ arise from the three different sets of rational solutions of \PV\ that are expressed in terms of the generalised Umemura polynomials.
\end{remark}

As was the case for $\mathcal{R}_{2}$, it is also possible to obtain another sequence of solutions as illustrated in the following Lemma.
\begin{lemma}\label{lem:530}
Consider the rational functions
\beq\label{xyz:case2} {X}_{N}(z)=\frac{1}{\wh{w}_{N+m,n}^{(\k)}(z)-1},\qquad {Y}_{N}(z)=\frac{1}{\wh{u}_{N+m,n-1}^{(2\k+1)}(z)-1}, \qquad {Z}_{N}(z)=\frac{1}{\wh{v}_{N+m+1,n}^{(\k+n-1/2)}(z)-1},\eeq
with $\wh{w}_{m,n}^{(\k)}(z)$, $\wh{v}_{m,n}^{(\k)}(z)$ and $\wh{u}_{m,n}^{(\k)}(z)$ as given in Theorem \ref{thm:genUm},
then ${X}_{N}(z)$, ${Y}_{N}(z)$ and $ {Z}_{N}(z)$ satisfy
\begin{subequations}\label{xyz:sys2}\begin{align}
&{X}_{N}\big({Y}_{N}+{Z}_{N-1}+1\big)+\frac{N+m+\k}{z}=0,\\
&{Y}_{N}\big({Z}_{N}+{X}_{N}+1\big)+\frac{N+m+\hf}{z}=0,\\
&{Z}_{N}\big({X}_{N+1}+{Y}_{N}+1\big)+\frac{N+m+n+\k+\hf}{z}=0,
\end{align}\end{subequations}
which is the system \eqref{sys:tertdPI} with
$\la=\tfrac{1}{3}(\k-n)$, $\ph=-n-\k+\tfrac{1}{3}$ and $\rho=m+\tfrac{1}{3}n+\tfrac{2}{3}\k$.
\end{lemma}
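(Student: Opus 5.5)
The plan follows the proof of Lemma~\ref{lem:528}. I will build a chain of rational solutions of \PV\ linked by $\mathcal{R}_{3}$ \eqref{bt:R3}, each entry carrying a specific choice of signs for $(a,b,c)$. Then I will read off $a_n$ at each step and use the derivation of \eqref{eq:tdPI} from \eqref{tdP1:btsys}. That derivation shows that if $w_{n+1}=\mathcal{R}_3(w_n)$ and $w_{n-1}=\mathcal{R}_3^{-1}(w_n)$ with the parameter bookkeeping \eqref{bt:R3abc}, then $x_n=1/(w_n-1)$ satisfies $x_n(x_{n+1}+x_{n-1}+1)+a_n/z=0$.

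The chain I aim to establish is
\[ \wh{w}_{m,n}^{(\k)} \arr{\mathcal{R}_{3}} \wh{u}_{m,n-1}^{(2\k+1)} \arr{\mathcal{R}_{3}} \wh{v}_{m+1,n}^{(\k+n-1/2)} \arr{\mathcal{R}_{3}} \wh{w}_{m+1,n}^{(\k)}, \]
so that $\mathcal{R}_3^3(\wh{w}_{m,n}^{(\k)})=\wh{w}_{m+1,n}^{(\k)}$. This is consistent with Table~\ref{table:R3params}, since $\mathcal{R}_3^3(a,b,c)=(a+1,b,c+1)$.

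\textbf{Step 1: first link.} I take $\wh{w}_{m,n}^{(\k)}$ with $(a,b,c)=(m+\k,\,n+\k,\,m+n)$. This is the sign choice that, by \eqref{sys:519a}, sends it under $\mathcal{R}_1$ to $\wh{u}_{m,n-1}^{(2\k+1)}$ with $(m+\hf,\,n-\hf,\,m+n+2\k)$. By the remark after \eqref{tdPI0}, $\mathcal{R}_{3}$ and $\mathcal{R}_{1}$ act identically on $w$ and on $a,c$, while $b$ changes sign. So $\mathcal{R}_3$ gives $\wh{u}_{m,n-1}^{(2\k+1)}$ with parameters $(m+\hf,\,-n+\hf,\,m+n+2\k)$, and $a_{3N}=N+m+\k$.

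\textbf{Step 2: second link.} Next I apply $\mathcal{R}_3$ to $\wh u$ with these parameters. By \eqref{bt:R3abc} the new parameters are $\big(m+n+\k+\hf,\;\k+\hf,\;m-n+1\big)$. These are exactly the parameters of $\wh v_{m+1,n}^{(\k+n-1/2)}$, namely $\pms{\hf(m+1+\k+n-\hf)^2}{-\hf(\k+\hf)^2}{m+1-n}$. By the uniqueness result of Kitaev, Law and McLeod for generalised-Umemura rational solutions, the image must equal $\wh v_{m+1,n}^{(\k+n-1/2)}$, provided it is rational and not degenerate. Rationality is automatic, since $\mathcal{R}_3$ is a rational map in $w,w'$. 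The value $a_{3N+1}=m+\hf$ matches the statement.

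\textbf{Step 3: third link.} Applying $\mathcal{R}_3$ once more gives $\big(m+\k+1,\;\pm(n+\k),\;m+n+1\big)$, which are the parameters of $\wh w_{m+1,n}^{(\k)}$. Here $a_{3N+2}=m+n+\k+\hf$, which matches after shifting $m\to m+N$. By the same uniqueness argument the image is $\wh w_{m+1,n}^{(\k)}$, and I will confirm that the sign of $b$ is the one needed to restart the cycle, using $b=n+\k$ again. Induction on $N$ then gives the three equations \eqref{xyz:sys2} with $X_N,Y_N,Z_N$ as in \eqref{xyz:case2}. The values of $\la,\ph,\rho$ follow by solving \eqref{eq:abcn3a} at $n=3N,3N+1,3N+2$, which is a linear computation.

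\textbf{Main obstacle.} The delicate part is the use of uniqueness in Steps 2--3. I must check that the image under $\mathcal{R}_3$ is a genuine solution: the denominators in \eqref{bt:R3a} must not vanish identically, and the image must not be the trivial constant. I must also check that the parameters lie in case (ii) or (iii) of Theorem~\ref{thm:KLM}, where uniqueness holds. I would handle degenerate $\k$ by working with generic $\k$ and passing to the limit, as in the remarks following Lemma~\ref{dPII_Um_case1}. As a safeguard I would verify directly, for small $m,n$ by computer algebra, that $\mathcal{R}_3(\wh u_{m,n-1}^{(2\k+1)})$ agrees with $\wh v_{m+1,n}^{(\k+n-1/2)}$. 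The boundary case $n=1$, where $\wh u_{m,0}$ appears, needs the special $n=0$ formulas of Theorem~\ref{thm:genUm}.
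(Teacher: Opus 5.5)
Your proof follows the paper's route. The paper's proof lists the three $\mathcal{R}_{3}$ transformations that give the cycle $\wh{w}_{m,n}^{(\k)}\to\wh{u}_{m,n-1}^{(2\k+1)}\to\wh{v}_{m+1,n}^{(\k+n-1/2)}\to\wh{w}_{m+1,n}^{(\k)}$, reads off $a_{3N},a_{3N+1},a_{3N+2}$, and gives no further argument. Your use of Kitaev--Law--McLeod uniqueness, with generic $\k$ and a limit, is extra justification for identifying the images that the paper omits.

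There is one arithmetic slip to fix, and it matters. In Step 2 the new $b$ is $\hf(a-b-c+1)=\hf(1-2\k)=\hf-\k$, not $\k+\hf$. Correspondingly, $\wh{v}_{m+1,n}^{(\k+n-1/2)}$ has $\b=-\hf(\k-\hf)^2$, not $-\hf(\k+\hf)^2$.

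The sign of $b$ does not affect the uniqueness-based identification, but it does fix the next step. With $b=\hf-\k$, Step 3 gives exactly $(m+\k+1,\,+(n+\k),\,m+n+1)$, so no $\pm$ is needed and the cycle restarts. With your value $b=\k+\hf$, $\mathcal{R}_{3}$ would give $(m+1,\,n,\,m+n+2\k+1)$, which is not the parameter set of $\wh{w}_{m+1,n}^{(\k)}$.
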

\begin{proof}From the transformations
\begin{align*}
\RR{3}{m+\k}{n+\k}{m+n}{\wh{w}_{m,n}^{(\k)}} &=\big(\wh{u}_{m,n-1}^{(2\k+1)};m+\hf,-n+\hf,m+n+2\k\big),\\ 
\RR{3}{m+\hf}{-n-\hf}{m+n+2\k}{\wh{u}_{m,n}^{(2\k)}} &=\big(\wh{v}_{m+1,n+1}^{(\k+n)};m+n+\k+1,-\k+1,m-n\big),\\
\RR{3}{m+\k}{n-\k}{m+n}{\wh{v}_{m,n}^{(\k)}} &=\big(\wh{w}_{m,n}^{(\k-n+1/2)};m+\k+\hf,-n+\k+\hf,m+n\big),
\end{align*}
we obtain the sequence of solutions
\[ \wh{w}_{m,n}^{(\k)} \arr{\mathcal{R}_{3}} \wh{u}_{m,n-1}^{(2\k+1)} \arr{\mathcal{R}_{3}} \wh{v}_{m+1,n}^{(\k+n-1/2)} \arr{\mathcal{R}_{3}} \wh{w}_{m+1,n}^{(\k)} \arr{\mathcal{R}_{3}} \wh{u}_{m+1,n-1}^{(2\k+1)} \] 
from which the result follows. We note that
\[\mathcal{R}_{3}^3\Big(\wh{w}_{m,n}^{(\k)} \Big)=\wh{w}_{m+1,n}^{(\k)},\qquad 
\mathcal{R}_{3}^3\Big(\wh{v}_{m,n}^{(\k)} \Big)=\wh{v}_{m+1,n}^{(\k)},\qquad \mathcal{R}_{3}^3\Big(\wh{u}_{m,n}^{(2\k+1)}\Big)=\wh{u}_{m+1,n}^{(2\k+1)}.\]
\end{proof}

\comment{\item From the transformations
\begin{align*}
&\mathcal{R}_{3}\big(\wh{w}_{m,n}^{(\k)};-m-\k,n+\k,m+n\big)=\big(1/v_{m,n}^{(\k+n-1/2)};-\k+\hf,-m-n-\k+\hf,n-m\big),\\
&\mathcal{R}_{3}\big(1/v_{m,n}^{(\k+n-1/2)};-\k+\hf,-m-n-\k+\hf,n-m\big)=\big(1/u_{m,n}^{(2\k-1)};-n+\hf,m+\hf-m-n+1-2\k\big),\\
&\mathcal{R}_{3}\big(1/u_{m,n}^{(2\k-1)};-n+\hf,m+\hf-m-n+1-2\k\big)=\big(w_{m+1,n}^{(\k)};m+1+\k,n+\k,m+n+1\big),
\end{align*}
we obtain the sequence of solutions
\[ \wh{w}_{m,n}^{(\k)} \arr{\mathcal{R}_{3}} \frac{1}{v_{m,n}^{(\k+n-1/2)}} \arr{\mathcal{R}_{3}} \frac{1}{u_{m,n}^{(2\k-1)}} \arr{\mathcal{R}_{3}} w_{m+1,n}^{(\k)}\] 
\item From the transformations
\begin{align*}
&\mathcal{R}_{3}\big(\wh{w}_{m,n}^{(\k)};-m-\k,-n-\k,m+n\big)=\big(1/u_{m-1,n}^{(2\k+1)};n+\hf,-m+\hf,-m-n-2\k\big),\\
&\mathcal{R}_{3}\big(1/u_{m-1,n}^{(2\k+1)};n+\hf,-m+\hf,-m-n-2\k\big)=\big(1/v_{m,n+1}^{(\k+n+1/2)};-\k+\hf,m+n+\k+\hf,-m+n+1\big),\\
&\mathcal{R}_{3}\big(1/v_{m,n+1}^{(\k+n+1/2)};-\k+\hf,m+n+\k+\hf,-m+n+1\big)=\big(w_{m+1,n}^{(\k)};m+1+\k,n+\k,m+n+1\big),
\end{align*}
we obtain the sequence of solutions
\[ \wh{w}_{m,n}^{(\k)} \arr{\mathcal{R}_{3}} \frac{1}{u_{m-1,n}^{(2\k+1)}}\arr{\mathcal{R}_{3}} \frac{1}{v_{m,n+1}^{(\k+n+1/2)}} \arr{\mathcal{R}_{3}} w_{m+1,n}^{(\k)}\]}

\begin{remarks}
\begin{enumerate}[(i)]\item[]
\item There are no solutions in terms of generalised Umemura polynomials in the symmetric case when $\la=\ph=0$.
\item Lemmas \ref{lem:528} and \ref{lem:530} imply that there are hierarchies rational solutions of the fourth discrete equation \eqref{eq:tdPInew} in terms of generalised Umemura polynomials.
\end{enumerate}
\end{remarks}

\subsubsection{Partial difference systems}
In this subsection we derive some partial difference systems which have solutions in terms of generalised Umemura polynomials.
\begin{lemma}
If we define 
\beq\label{def:WVU} \wh{\mathcal{W}}_{m,n}^{(\k)}(z)=\frac{\wh{w}_{m,n}^{(\k)}(z)+1}{\wh{w}_{m,n}^{(\k)}(z)-1},\qquad 
\wh{\mathcal{V}}_{m,n}^{(\k)}(z)=\frac{\wh{v}_{m,n}^{(\k)}(z)+1}{\wh{v}_{m,n}^{(\k)}(z)-1},\qquad\wh{\mathcal{U}}_{m,n}^{(\k)}(z)=\frac{\wh{u}_{m,n}^{(\k)}(z)+1}{\wh{u}_{m,n}^{(\k)}(z)-1},\eeq
with $\wh{w}_{m,n}^{(\k)}(z)$, $\wh{v}_{m,n}^{(\k)}(z)$ and $\wh{u}_{m,n}^{(\k)}(z)$ as given in Theorem \ref{thm:genUm},
then they satisfy the partial difference systems
\begin{subequations}\begin{align}
\wh{\mathcal{W}}_{m,n}^{(\k+1)}+\wh{\mathcal{W}}_{m,n}^{(\k)}&=\frac{4}{z} \frac{(m+n+2\k+1) \wh{\mathcal{V}}_{m,n}^{(\k+n+1/2)}+m+n}{1-\Big(\wh{\mathcal{V}}_{m,n}^{(\k+n+1/2)}\Big)^{\!2}},\\
\wh{\mathcal{V}}_{m,n}^{(\k+n+1/2)}+\wh{\mathcal{V}}_{m,n}^{(\k+n-1/2)}&=\frac{4}{z} \frac{(m+n+2\k )\wh{\mathcal{W}}_{m,n}^{(\k)}+m-n}{1-\Big(\wh{\mathcal{W}}_{m,n}^{(\k)}\Big)^{\!2}},
\end{align}\end{subequations}
and
\begin{subequations}\begin{align}
\wh{\mathcal{V}}_{m,n}^{(\k+n)}+\wh{\mathcal{V}}_{m+1,n+1}^{(\k+n)}&=\frac{4}{z} \frac{(m+n+1)\,\wh{\mathcal{U}}_{m,n}^{(2\k)}+m-n}{1-\Big(\wh{\mathcal{U}}_{m,n}^{(2\k)}\Big)^{\!2}},\\
\wh{\mathcal{U}}_{m,n}^{(2\k)}+\wh{\mathcal{U}}_{m+1,n+1}^{(2\k-2)}&=\frac{4}{z} \frac{(m+n+2) \wh{\mathcal{V}}_{m+1,n+1}^{(\k+n)}+m+n+2\k}{1-\Big(\wh{\mathcal{V}}_{m+1,n+1}^{(\k+n)}\Big)^{\!2}}.
\end{align}\end{subequations}
\end{lemma}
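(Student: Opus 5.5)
The plan follows the pattern of the earlier partial difference system lemmas for generalised Laguerre polynomials: we specialise $N=0$ in the discrete systems already obtained from the hierarchies in Lemma \ref{dPII_Um_case1} and Lemma \ref{dPII_Um_case2}.

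For the first system, take Lemma \ref{dPII_Um_case1} with $N=0$. Here
\[X_0=\wh{\mathcal{W}}_{m,n}^{(\k)},\qquad X_1=\wh{\mathcal{W}}_{m,n}^{(\k+1)},\qquad Y_0=\wh{\mathcal{V}}_{m,n}^{(\k+n+1/2)},\]
and, from the definition of $Y_N$ with $N=-1$, $Y_{-1}=\wh{\mathcal{V}}_{m,n}^{(\k+n-1/2)}$. Substituting these into \eqref{sys:513} gives the first pair of equations directly. The coefficients $(m+n+2\k+1)$ and $m+n$ appear in the first equation, and $(m+n+2\k)$ and $m-n$ in the second. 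I will check that they agree with \eqref{sys:513} at $N=0$, which they do by inspection.

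The one point to justify is that $Y_{-1}$ is a genuine member of the chain. This uses the inverse transformation $\mathcal{R}_1^{-1}=\T_{-1,1,-1}$ applied to $\wh{w}_{m,n}^{(\k)}$. Equivalently, applying $\mathcal{R}_1$ to $\wh{v}_{m,n}^{(\k+n-1/2)}$ gives $\wh{w}_{m,n}^{(\k)}$, by the second transformation in the proof of Lemma \ref{dPII_Um_case1} with $\k\to\k+n-\hf$. The parameters there, $(m+\k+n-\hf,\;\k-\hf,\;m-n)$, must match the parameters $(a,b,c)$ used at the preceding step. Once they do, the chain extends one step backwards.

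For the second system, take Lemma \ref{dPII_Um_case2} with $N=0$. Here
\[X_0=\wh{\mathcal{V}}_{m,n}^{(\k)},\qquad X_1=\wh{\mathcal{V}}_{m+1,n+1}^{(\k)},\qquad Y_0=\wh{\mathcal{U}}_{m,n}^{(2\k-2n)}.\]
The stated identities have upper index $\k+n$ on $\wh{\mathcal{V}}$ and $2\k$ on $\wh{\mathcal{U}}$, so I first replace $\k$ by $\k+n$ in Lemma \ref{dPII_Um_case2}. This turns $2\k-2n$ into $2\k$.

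\begin{itemize}
\item The first equation of \eqref{sys:514} then gives $\wh{\mathcal{V}}_{m+1,n+1}^{(\k+n)}+\wh{\mathcal{V}}_{m,n}^{(\k+n)}$ in terms of $Y_0=\wh{\mathcal{U}}_{m,n}^{(2\k)}$, with coefficients $(m+n+1)$ and $m-n$, as required.
\item For the second equation I take $N=1$ rather than $N=0$ in the second equation of \eqref{sys:514}. Then $Y_1=\wh{\mathcal{U}}_{m+1,n+1}^{(2\k-2)}$, $Y_0=\wh{\mathcal{U}}_{m,n}^{(2\k)}$, and $X_1=\wh{\mathcal{V}}_{m+1,n+1}^{(\k+n)}$, with coefficients $(m+n+2)$ and $2(\k+n)+m-n=m+n+2\k$, as stated.
\end{itemize}

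The main obstacle is bookkeeping of the shifted parameters rather than any new identity. The difference equations themselves follow from adding the two halves of the \bt\ pair, which is already established in \S\ref{ssec:adPII}. The delicate point is the sign choices of $a$ and $b$ at each step. The transformations used in the chains must be the ones whose parameter sequences $(a_N,b_N,c_N)$ satisfy \eqref{dP2:abc} with the same $\la,\rho,\ph$. For each of the two chains I will verify, via Table \ref{table:R1params}, that consecutive parameter triples are related by $\mathcal{R}_1$ with the chosen signs.
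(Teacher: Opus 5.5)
Your proposal is correct and follows the paper's proof, which simply sets $N=0$ in \eqref{sys:513} and \eqref{sys:514}. Your substitution $\k\to\k+n$ and your use of $N=1$ for the last equation (equivalently $N=0$ followed by the shift $(m,n,\k)\to(m+1,n+1,\k-1)$) just make explicit the index bookkeeping the paper leaves implicit.

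There is one small sign slip in your backward-extension check. With $\k\to\k+n-\hf$, the input triple for $\wh{v}_{m,n}^{(\k+n-1/2)}$ is $(m+n+\k-\hf,\,\hf-\k,\,m-n)$, not $(m+n+\k-\hf,\,\k-\hf,\,m-n)$. With your sign of $b$, $\mathcal{R}_{1}$ returns first parameter $m+\hf$ instead of $m+\k$, so it does not give back $\wh{w}_{m,n}^{(\k)}$.
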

\begin{proof}These follow from setting $N=0$ in \eqref{sys:513} and \eqref{sys:514}, respectively.
\end{proof}
\begin{lemma}
If $\wh{\mathcal{W}}_{m,n}^{(\k)}(z)$ and $\wh{\mathcal{U}}_{m,n}^{(\k)}(z)$ are given by \eqref{def:WVU},
\begin{subequations}\begin{align}
\wh{\mathcal{W}}_{m+1,n}^{(\k)}+\wh{\mathcal{W}}_{m,n+1}^{(\k)}&=\frac{4}{z} \frac{(m-n)\,\wh{\mathcal{U}}_{m,n}^{(2\k+1)}+m+n+1}{1-\Big(\wh{\mathcal{U}}_{m,n}^{(2\k+1)}\Big)^{\!2}},\\
\wh{\mathcal{U}}_{m-1,n}^{(2\k+1)}+\wh{\mathcal{U}}_{m,n-1}^{(2\k+1)}&=\frac{4}{z} \frac{(m-n )\wh{\mathcal{W}}_{m,n}^{(\k)}+m+n+2\k}{1-\Big(\wh{\mathcal{W}}_{m,n}^{(\k)}\Big)^{\!2}}.
\end{align}\end{subequations}
\end{lemma}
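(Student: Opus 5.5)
The plan is to argue as in Lemmas \ref{dPII_Um_case1} and \ref{dPII_Um_case2}. The chain of solutions generated by $\mathcal{R}_{1}$ in the remark containing \eqref{sys:519} is
\[ \wh{w}_{m,n}^{(\k)} \arr{\mathcal{R}_{1}} \wh{u}_{m,n-1}^{(2\k+1)} \arr{\mathcal{R}_{1}} \wh{w}_{m+1,n-1}^{(\k)}. \]
For each of the two equations I would centre a three-term window of this chain at the relevant solution. Then I would apply the general identity obtained by adding the two equations of \eqref{dP2:btsys2}, namely
\[ q_{+}+q_{-}=\frac{4}{z}\,\frac{(a-b)q+a+b}{1-q^2}, \qquad q=\frac{w+1}{w-1}. \]
This holds whenever $w_{+}=\mathcal{R}_{1}(w)$ and $w_{-}=\mathcal{R}_{1}^{-1}(w)$, with $w$ a solution of \PV\ for the signed parameters $(a,b,c)$. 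Using \eqref{dP2:abc}, the coefficient $n+\la$ in \eqref{eq:adPII} is $a_n-b_n$ and $\rho+(-1)^n\ph$ is $a_n+b_n$; I would first re-derive this directly from \eqref{bt:R1} so that no sign errors creep in.

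\emph{First equation.} Replace $n$ by $n+1$ in \eqref{sys:519a}. This gives $\mathcal{R}_{1}\big(\wh{w}_{m,n+1}^{(\k)};m+\k,n+1+\k,m+n+1\big)=\big(\wh{u}_{m,n}^{(2\k+1)};m+\hf,n+\hf,m+n+2\k\big)$. Then \eqref{sys:519b} gives $\mathcal{R}_{1}\big(\wh{u}_{m,n}^{(2\k+1)};m+\hf,n+\hf,m+n+2\k\big)=\big(\wh{w}_{m+1,n}^{(\k)};\dots\big)$. So the centre is $\wh{u}_{m,n}^{(2\k+1)}$, with forward image $\wh{w}_{m+1,n}^{(\k)}$ and backward image $\mathcal{R}_{1}^{-1}(\wh{u}_{m,n}^{(2\k+1)})=\wh{w}_{m,n+1}^{(\k)}$. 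Here $a-b=m-n$ and $a+b=m+n+1$, which reproduces exactly the right-hand side of the first equation.

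\emph{Second equation.} Take the centre $\wh{w}_{m,n}^{(\k)}$ with signed parameters $(a,b,c)=(m+\k,n+\k,m+n)$. By \eqref{sys:519a} its forward image is $\wh{u}_{m,n-1}^{(2\k+1)}$. Replacing $m$ by $m-1$ in \eqref{sys:519b} gives $\mathcal{R}_{1}(\wh{u}_{m-1,n}^{(2\k+1)})=\wh{w}_{m,n}^{(\k)}$, so the backward image is $\wh{u}_{m-1,n}^{(2\k+1)}$. Since $a-b=m-n$ and $a+b=m+n+2\k$, the second equation follows.

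The main obstacle is the bookkeeping. The signed parameters $(a,b,c)$ attached to each intermediate solution must be the ones actually produced by the previous step, since \PV\ only sees $a^2$ and $b^2$. In particular, the signs with which $\wh{u}$ enters \eqref{sys:519b} must match those output by \eqref{sys:519a}, so that $\mathcal{R}_{1}^{-1}=\T_{-1,1,-1}$ genuinely inverts the step. I would check this with the parameter map \eqref{PT.BTs.P5c} and Table \ref{table:R1params}. The transformation identities \eqref{sys:519} themselves I take from the remark, where they are asserted without proof. If independent confirmation is wanted, I would compare parameters via Theorem \ref{thm:genUm}, invoke the uniqueness of Umemura-type rational solutions \cite{refKLM}, and check small $m,n$ by computer algebra.
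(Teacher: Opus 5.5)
Your proposal is correct and matches the paper's proof: the paper reads off the windows $\wh{w}_{m,n+1}^{(\k)}\to\wh{u}_{m,n}^{(2\k+1)}\to\wh{w}_{m+1,n}^{(\k)}$ and $\wh{u}_{m-1,n}^{(2\k+1)}\to\wh{w}_{m,n}^{(\k)}\to\wh{u}_{m,n-1}^{(2\k+1)}$ from \eqref{sys:519}, and then uses the sum of the two equations in \eqref{dP2:btsys2} with coefficients $a-b$ and $a+b$, exactly as you do. There are two harmless index slips, neither of which matters because $c$ does not enter the summed identity: after the shift $n\to n+1$ the signed parameters of $\wh{u}_{m,n}^{(2\k+1)}$ are $\big(m+\hf,n+\hf,m+n+2\k+1\big)$ rather than $m+n+2\k$ in the last entry, and getting $\mathcal{R}_1\big(\wh{u}_{m-1,n}^{(2\k+1)}\big)=\wh{w}_{m,n}^{(\k)}$ from \eqref{sys:519b} requires the shift $(m,n)\to(m-1,n+1)$, not just $m\to m-1$.
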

\begin{proof}From the transformations \eqref{sys:519} we obtain the sequences
\[\begin{array}{ccccc} \wh{w}_{m,n+1}^{(\k)} &\arr{\mathcal{R}_{1}} &\wh{u}_{m,n}^{(2\k+1)} &\arr{\mathcal{R}_{1}} &\wh{w}_{m+1,n}^{(\k)}\\
\wh{u}_{m-1,n}^{(2\k+1)} &\arr{\mathcal{R}_{1}} &\wh{w}_{m,n}^{(\k)} &\arr{\mathcal{R}_{1}} &\wh{u}_{m,n-1}^{(2\k+1)} 
\end{array}\]
from which the result follows. We note that
\[ \mathcal{R}_{1}^3\Big(\wh{w}_{m,n}^{(\k)} \Big)=\wh{w}_{m+1,n}^{(\k)},\qquad 
\mathcal{R}_{1}^3\Big(\wh{v}_{m,n}^{(\k)} \Big)=\wh{v}_{m+1,n}^{(\k)},\qquad \mathcal{R}_{1}^3\Big(\wh{u}_{m,n}^{(2\k+1)}\Big)=\wh{u}_{m+1,n}^{(2\k+1)}.\]
\end{proof}

\begin{lemma}
If $\wh{\mathcal{W}}_{m,n}^{(\k)}(z)$, $\wh{\mathcal{V}}_{m,n}^{(\k)}(z)$ and $\wh{\mathcal{U}}_{m,n}^{(\k)}(z)$ are given by \eqref{def:WVU}, then they satisfy the partial difference systems
\begin{align*} 
\frac{2m+1}{\wh{\mathcal{W}}_{m+1,n}^{(\k-1)}+\wh{\mathcal{V}}_{m,n}^{(\k+n-1/2)}}+\frac{2m-1}{\wh{\mathcal{W}}_{m-1,n}^{(\k)}+\wh{\mathcal{V}}_{m,n}^{(\k+n-1/2)}}&=z-\frac{2\Big[(m+n)\wh{\mathcal{V}}_{m,n}^{(\k+n-1/2)}+m+n+2\k-1\Big]}{1-\Big(\wh{\mathcal{V}}_{m,n}^{(\k+n-1/2)}\Big)^{\!2}},\\
\frac{2m+1}{\wh{\mathcal{W}}_{m,n}^{(\k)}+\wh{\mathcal{V}}_{m+1,n}^{(\k+n-1/2)}}+\frac{2m-1}{\wh{\mathcal{W}}_{m,n}^{(\k)}+\wh{\mathcal{V}}_{m-1,n}^{(\k+n+1/2)}}&=z-\frac{2\Big[(m-n)\wh{\mathcal{W}}_{m,n}^{(\k)}+m+n+2\k\Big]}{1-\Big(\wh{\mathcal{W}}_{m,n}^{(\k)}\Big)^{\!2}},
\end{align*}
and
\begin{align*} 
\frac{2m+2n+2\k+3}{\wh{\mathcal{W}}_{m+1,n+1}^{(\k)}+\wh{\mathcal{U}}_{m,n}^{(2\k+1)}}+\frac{2m+2n+2\k+1}{\wh{\mathcal{W}}_{m,n}^{(\k)}+\wh{\mathcal{U}}_{m,n}^{(2\k+1)}} &=z-\frac{2\Big[(m+n+1) \wh{\mathcal{U}}_{m,n}^{(2\k+1)}+m-n\Big]}{1-\Big(\wh{\mathcal{U}}_{m,n}^{(2\k+1)}\Big)^{\!2}},\\
\frac{2m+2n+2\k+1}{\wh{\mathcal{W}}_{m,n}^{(\k)}+\wh{\mathcal{U}}_{m,n}^{(2\k+1)}}+\frac{2m+2n+2\k-1}{\wh{\mathcal{W}}_{m,n}^{(\k)}+\wh{\mathcal{U}}_{m-1,n-1}^{(2\k+1)}} 
&=z-\frac{2\Big[(m+n+2 \k) \wh{\mathcal{W}}_{m,n}^{(\k)}+m-n\Big]}{1-\Big(\wh{\mathcal{W}}_{m,n}^{(\k)}\Big)^{\!2}}.
\end{align*}
\end{lemma}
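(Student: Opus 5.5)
The plan is to follow the pattern of the earlier partial difference system lemmas. Each of the four identities is the $N=0$ instance of a discrete system already established for the $\mathcal{R}_{2}$ hierarchies of generalised Umemura polynomials. The first pair should come from the system \eqref{sys:515}, which describes the sequence
\[\wh{w}_{m,n}^{(\k)}\arr{\mathcal{R}_{2}} \wh{v}_{m+1,n}^{(\k+n-1/2)}\arr{\mathcal{R}_{2}} \wh{w}_{m+2,n}^{(\k-1)}.\]
The second pair should come from \eqref{sys:516}, which describes $\wh{w}_{m,n}^{(\k)}\arr{\mathcal{R}_{2}}\wh{u}_{m,n}^{(2\k+1)}\arr{\mathcal{R}_{2}}\wh{w}_{m+1,n+1}^{(\k)}$.

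For the second pair the argument is immediate. Setting $N=0$ in \eqref{sys:516} gives $X_{0}=\wh{\mathcal{W}}_{m,n}^{(\k)}$, $Y_{0}=\wh{\mathcal{U}}_{m,n}^{(2\k+1)}$, $X_{1}=\wh{\mathcal{W}}_{m+1,n+1}^{(\k)}$ and $Y_{-1}=\wh{\mathcal{U}}_{m-1,n-1}^{(2\k+1)}$. The latter identification uses that the lemma's sequence runs backwards, via $\mathcal{R}_{2}^{-1}$, from $\wh{w}_{m,n}^{(\k)}$ to $\wh{u}_{m-1,n-1}^{(2\k+1)}$. Substituting these reproduces the two stated equations verbatim, with coefficients $2m+2n+2\k+3$, $2m+2n+2\k\pm1$ and the right-hand sides $(m+n+1)Y_0+m-n$ and $(m+n+2\k)X_0+m-n$.

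For the first pair more care is needed, because the stated identities are arranged with the $\wh{\mathcal{V}}$ in the centre of the first equation. I would therefore use the two elementary relations from the proof of the $\mathcal{R}_{2}$ lemma,
\[\RR{2}{-m-\k}{-n-\k}{m+n}{\wh{w}_{m,n}^{(\k)}}=\wh{v}_{m+1,n}^{(\k+n-1/2)}\quad\text{and}\quad\RR{2}{-m-\k}{n-\k}{m-n}{\wh{v}_{m,n}^{(\k)}}=\wh{w}_{m+1,n}^{(\k-n-1/2)}.\]
For the second stated equation, I centre the chain at $\wh{w}_{m,n}^{(\k)}$. Its forward neighbour is $\wh{v}_{m+1,n}^{(\k+n-1/2)}$. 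Its backward neighbour is the $\wh{v}$ whose $\mathcal{R}_{2}$-image is $\wh{w}_{m,n}^{(\k)}$, namely $\wh{v}_{m-1,n}^{(\k+n+1/2)}$, which one checks from the second relation with $(m,\k)\to(m-1,\k+n+\tfrac12)$. I then add the two equations \eqref{dPn:btsys2} at this node, with $(a,b,c)=(-m-\k,-n-\k,m+n)$. The coefficients $2n+2\la\pm1$ there become $a_n-b_n-c_n\mp1$, i.e.\ $-2m\mp1$ up to the overall sign convention, and this gives $2m\pm1$. The term $[\xi+(-1)^n\ph]q+\rho$ becomes $(m-n)\wh{\mathcal{W}}+m+n+2\k$.

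For the first stated equation, I centre at $\wh{v}_{m,n}^{(\k+n-1/2)}$. By the first relation with $m\to m-1$ and $\k\to\k$ its predecessor is $\wh{w}_{m-1,n}^{(\k)}$, and by the second relation its successor is $\wh{w}_{m+1,n}^{(\k-1)}$. Adding the two equations \eqref{dPn:btsys2} with the parameters $(a,b,c)=(-m-\k-n+\tfrac12,\,-\k+\tfrac12,\,m-n)$ should yield the coefficients $2m\pm1$ and the numerator $(m+n)\wh{\mathcal{V}}+m+n+2\k-1$.

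The main obstacle is this bookkeeping, not any new identity. I must check that the index shifts in the superscripts (the $\k+n\mp\tfrac12$ versus $\k\mp1$) are consistent with the compositions of $\mathcal{R}_{2}$ and $\mathcal{R}_{2}^{-1}$, with the sign choices of $a=\pm\sqrt{2\a}$ and $b=\pm\sqrt{-2\b}$ fixed as in the table for $\mathcal{R}_{2}$. I must also check that the generic $(a_n,b_n,c_n)$ in \eqref{dPn:btsys2} specialise correctly. I would verify each of the four coefficient computations explicitly from $a-b-c\pm1$ and $\rho=(a+b)$-type combinations, as given by \eqref{eq:abcn2}, rather than relying on the $N$-indexed formulas. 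If a sign mismatch appears, the fallback is the symmetry $\mathcal{S}_{2}$, under which $w\to1/w$ and hence $q\to-q$, to realign the orientation.
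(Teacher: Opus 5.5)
Your proposal is correct and is essentially the paper's proof, which simply sets $N=0$ in \eqref{sys:515} and \eqref{sys:516}. The recentring you do for the first pair is not actually needed: the second equation is \eqref{sys:515} at $N=0$ verbatim, and the first is the first equation of \eqref{sys:515} at $N=0$ after relabelling $m\to m-1$. Your direct computation from \eqref{dPn:btsys2} just re-derives this, and the parameter triples and coefficients you obtain there are correct.
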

\begin{proof}These follow from setting $N=0$ in \eqref{sys:515} and \eqref{sys:516}, respectively.
\end{proof}

\begin{lemma}
If we define
\beq \wh{\mathcal{X}}_{m,n}^{(\k)}(z)=\frac{1}{\wh{w}_{m,n}^{(\k)}(z)-1},\qquad \wh{\mathcal{Y}}_{m,n}^{(\k)}(z)=\frac{1}{\wh{v}_{m,n}^{(\k)}(z)-1}, \qquad
 \wh{\mathcal{Z}}_{m,n}^{(\k)}(z)=\frac{1}{\wh{u}_{m,n}^{(\k)}(z)-1},\eeq
with $\wh{w}_{m,n}^{(\k)}(z)$, $\wh{v}_{m,n}^{(\k)}(z)$ and $\wh{u}_{m,n}^{(\k)}(z)$ as given in Theorem \ref{thm:genUm},
then they satisfy the partial difference system
\begin{subequations}\begin{align}
&\wh{\mathcal{X}}_{m,n}^{(\k)}\Big(\wh{\mathcal{Y}}_{m,n}^{(\k+n+1/2)}+\wh{\mathcal{Z}}_{m-1,n}^{(2\k+1)}+1\Big)+\frac{m+\k}{z}=0,\\
&\wh{\mathcal{Y}}_{m,n}^{(\k+n+1/2)}\Big(\wh{\mathcal{Z}}_{m,n}^{(2\k+1)}+\wh{\mathcal{X}}_{m,n}^{(\k)}+1\Big)+\frac{m+n+\k+\hf}{z}=0,\\
& \wh{\mathcal{Z}}_{m,n}^{(2\k+1)}\Big(\wh{\mathcal{X}}_{m+1,n}^{(\k)}+\wh{\mathcal{Y}}_{m,n}^{(\k+n+1/2)}+1\Big)+\frac{m+\hf}{z}=0.
\end{align}\end{subequations}
\end{lemma}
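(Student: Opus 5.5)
The plan is to specialise Lemma \ref{lem:528} to $N=0$, exactly as the earlier partial difference lemmas were obtained from \eqref{sys:lem511}, \eqref{sys:513}--\eqref{sys:516}. With the definitions in the statement, \eqref{xyz:case1} becomes
\[X_{N}=\wh{\mathcal{X}}_{N+m,n}^{(\k)},\qquad Y_{N}=\wh{\mathcal{Y}}_{N+m,n}^{(\k+n+1/2)},\qquad Z_{N}=\wh{\mathcal{Z}}_{N+m,n}^{(2\k+1)}.\]
Setting $N=0$ in \eqref{xyz:sys1} therefore gives the three stated equations. The only shifted quantities are $Z_{-1}=\wh{\mathcal{Z}}_{m-1,n}^{(2\k+1)}$ and $X_{1}=\wh{\mathcal{X}}_{m+1,n}^{(\k)}$. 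The coefficients $N+m+\k$, $N+m+n+\k+\hf$ and $N+m+\hf$ reduce to $m+\k$, $m+n+\k+\hf$ and $m+\hf$, as required.

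The key step is to make sure Lemma \ref{lem:528} holds at the relevant indices. The underlying mechanism is the chain of three transformations in its proof:
\[\wh{w}_{m,n}^{(\k)}\;\to\;\wh{v}_{m,n}^{(\k+n+1/2)}\;\to\;\wh{u}_{m,n}^{(2\k+1)}\;\to\;\wh{w}_{m+1,n}^{(\k)}.\]
Each arrow is $\mathcal{R}_{3}$ applied with the specified choice of signs of $a$ and $b$. The ternary \dPI\ system \eqref{sys:tertdPI} then follows from \eqref{eq:tdPI}. This is because $x_{n}=1/(w_{n}-1)$ with $w_{n}$ consecutive in the $\mathcal{R}_{3}$ chain, and $a_{n}$ is the first parameter of the $n$th solution.

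I will check directly that:
\begin{enumerate}[(i)]
\item the first parameters along the chain are $m+\k$, $m+n+\k+\hf$ and $m+\hf$, and the next one is $m+1+\k$; these are the three $a$-values appearing in the system;
\item the chain can also be entered one step earlier, at $\wh{u}_{m-1,n}^{(2\k+1)}\to\wh{w}_{m,n}^{(\k)}$, so that the $Z_{-1}$ term is legitimate.
\end{enumerate}
Item (ii) is the third transformation in that proof with $m\to m-1$.

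The main obstacle is the boundary case $m=0$ (or $n=0$), where $\wh{u}_{m-1,n}^{(2\k+1)}$ or some $\wh{v}$ has a negative index. In that case I would either restrict to $m\geq1$, $n\geq1$, or use the $m=0$ and $n=0$ formulae of Theorem \ref{thm:genUm} together with the invertibility $\mathcal{R}_{3}^{-1}=\T_{-1,1,1}$, and identify $\mathcal{R}_{3}^{-1}(\wh{w}_{m,n}^{(\k)})$ via the uniqueness result of Kitaev, Law and McLeod.

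As a consistency check, I will also confirm that adding the two relations \eqref{tdP1:btsys} reproduces each equation with these parameter values.
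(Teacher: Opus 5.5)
Your proposal is correct and is exactly the paper's argument: the lemma is Lemma~\ref{lem:528} with $N=0$ substituted in \eqref{xyz:case1} and \eqref{xyz:sys1}, where the shifted terms are $Z_{-1}=\wh{\mathcal{Z}}_{m-1,n}^{(2\k+1)}$ and $X_{1}=\wh{\mathcal{X}}_{m+1,n}^{(\k)}$. Your extra checks on the $\mathcal{R}_{3}$ chain go beyond what the paper writes but agree with it, and, like you, the paper leaves the boundary cases $m=0$, $n=0$ implicit (note that a direct computation gives $a=m+\hf$ for $\wh{u}_{m,n}^{(2\k+1)}$, as you claim, whereas the paper's displayed image parameters for the $\wh{v}\to\wh{u}$ step misprint $m+\hf,\,n+\hf$ as $m+n+\hf,\,\hf$).
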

\begin{proof}This follows from setting $N=0$ in \eqref{xyz:case1} and \eqref{xyz:sys1}.
\end{proof}

\subsection{Non-unique solutions of the discrete equations}
Kitaev, Law and McLeod \cite[Theorem 1.2]{refKLM} prove that if $\ga\not\in\Integer$ in case (i), then the rational solution of \PV\ is unique and if $\ga\in\Integer$ then there are at most two rational solutions. Clarkson and Dunning \cite{refCD24} gave examples of non-unique rational solutions which arise for solutions expressed in terms of generalised Laguerre polynomials discussed in \S\ref{ssec:GenLag}. 
{This is illustrated in the following example}.
\begin{example}
The rational solutions
\begin{align} w_{1,1}^{(1)}(z)&=-\frac{(z-3)(z^2-8z+20)}{(z-2)(z-6)},\qquad 
u_{1,2}^{(1)}(z)=\frac{(z^2+4z+6)(z^3+9z^2+36z+60)}{z^4+12z^3+54z^2+96z+72}, \label{sols:w11u12} 
\end{align}
both satisfy \PV\ \eqref{eq:pv} for the parameters $\pms{\hf}{-\tfrac{25}{2}}{1}$. Hence the solutions 
\begin{subequations}\label{sol:qp}\begin{align}
q_{1}(z)&=\frac{w_{1,1}^{(1)}(z)+1}{w_{1,1}^{(1)}(z)-1}= \frac{z^3-12z^2+52z-72}{(z-4)(z^2-6z+12)},\\
p_{1}(z)&=\frac{u_{1,2}^{(1)}(z)+1}{u_{1,2}^{(1)}(z)-1}=\frac{z^5 + 14z^4 + 90z^3 + 312z^2+552z+432}{(z^2+6z+12)(z^3+6z^2+18z+24)},
\end{align}\end{subequations}
both satisfy 
equation \eqref{eq2:q} for the parameters $\pms{\hf}{-\tfrac{25}{2}}{1}$. 
\end{example}

\begin{remarks}{\rm
\begin{enumerate}[(i)]\item[]
\item{The rational solutions \eqref{sol:qp} are special cases of the solution of
equation \eqref{eq2:q} for the parameters $\pms{\hf}{-\tfrac{25}{2}}{1}$ given by
\begin{subequations}\label{sol:ganrat1}\beq q(z)=\frac{f(z;C_{1},C_{2})}{g(z;C_{1},C_{2})},\eeq
where
\begin{align} 
f(z;C_{1},C_{2})&=2 C_{1}^{2} (z^{3}-12 z^{2}+52 z -72)\rme^{2 z}- C_{1} C_{2} (z^{5}-2 z^{4}+36 z^{3}+48 z^{2}-240 z -864)\rme^{z}\nonumber\\
&\qquad-3 C_{2}^{2} (z^{5}+14 z^{4}+90 z^{3}+312 z^{2}+552 z +432),\\
g(z;C_{1},C_{2})&=2 C_{1}^{2} (z -4) (z^{2}-6 z +12)\rme^{2 z}+C_{1} C_{2} (z^{5}-12 z^{3}-24 z^{2}+144 z +576)\rme^{z}\nonumber\\
&\qquad-3 C_{2}^{2} (z^{2}+6 z +12) (z^{3}+6 z^{2}+18 z +24),
\end{align} \end{subequations}
with $C_{1}$ and $C_{2}$ arbitrary constants.} 

\item{Special function solutions of \PV\ \eqref{eq:pv} are known to be expressible in terms of \textit{Kummer functions} $M(a,b,z)$ and $U(a,b,z)$, or equivalently \textit{Whittaker functions} $M_{\k,\mu}(z)$ and $W_{\k,\mu}(z)$, see, e.g.\ \cite{refMasuda04,refOkamotoPV,refWat}, also \cite[\S40]{refGLS} and \cite[\S32.10(v)]{refDLMF}.
Special cases of the Kummer functions $M(a,b,z)$ and $U(a,b,z)$, are expressed in terms of \textit{modified Bessel functions} $I_{\nu}(z)$ and $K_{\nu}(z)$, see \cite[\S13.6(iii)]{refDLMF}. 
Further, when $\nu=n+\hf$, with $n\in\Integer$, then modified Bessel functions become \textit{spherical Bessel functions} $I_{\pm(n+1/2)}(z)$, with $n\in\mathbb{N}$, which have the form
\beq I_{\pm(n+1/2)}(z)=\frac{1}{\sqrt{2\pi z}} \sum_{k=0}^{n} \frac{(n+k)!}{(2z)^{k}k!(n-k)!}\Big\{(-1)^k\rme^z\pm(-1)^{n+1}\rme^{-z}\Big\},\eeq
see \cite[\S10.49(ii)]{refDLMF}.
The solution
\beq\label{sol:Bessel1} q(z)=\frac{(2\nu+3)\Zn[2]{\nu+1}+(4\nu+4+z)\Zn{\nu+1}\Zn{\nu}+z\Zn[2]{\nu}}{\Zn{\nu+1}\big\{(2\nu+1+z)\Zn{\nu+1}-z\Zn{\nu}\big\}},\eeq
where
\beq \mathcal{Z}_{\nu}(z)=\begin{cases} c_{1}I_{\nu}(z)+c_{2}I_{-\nu}(z),&\text{if}\quad \nu\not\in\Integer,\\
c_{1}I_{n}(z)+c_{2}(-1)^nK_{n}(z),\quad&\text{if}\quad\nu=n\in\Integer,
\end{cases}\label{def:Zn}\eeq
with $c_{1}$ and $c_{2}$ arbitrary constants,
satisfies \eqref{eq2:q} for the parameters $\pms{\hf}{-2(\nu+1)^2}{1}$. Setting $\nu=\tfrac{3}{2}$ in \eqref{sol:Bessel1} gives \eqref{sol:ganrat1}, with $C_{1}$ and $C_{2}$ expressed in terms of $c_{1}$ and $c_{2}$.}
\end{enumerate}
}\end{remarks}

In this subsection, we use pairs of solutions such as \eqref{sol:qp} to derive examples of distinct hierarchies of rational solutions which satisfy the same discrete equation.

\subsubsection{First discrete equation: asymmetric discrete \PII}
Applying the \bt\ $\mathcal{R}_{1}$ \eqref{bt:R1}, with $(a,b,c)=(1,-5,1)$, to the solutions \eqref{sols:w11u12} gives the sequences of rational solutions
\begin{align*} 
&w_{1,1}^{(1)}\arr{\mathcal{R}_{1}}\frac{1}{{v}_{1,1}^{(4)}}\arr{\mathcal{R}_{1}} w_{2,1}^{(1)}\arr{\mathcal{R}_{1}}\frac{1}{{v}_{2,1}^{(4)}} \arr{\mathcal{R}_{1}} w_{3,1}^{(1)}\\
&u_{1,2}^{(1)} \arr{\mathcal{R}_{1}} {v}_{2,1}^{(4)}\arr{\mathcal{R}_{1}} u_{2,2}^{(1)}\arr{\mathcal{R}_{1}} v_{2,2}^{(4)} \arr{\mathcal{R}_{1}} u_{3,2}^{(1)}
\end{align*} 
Hence we obtain the hierarchies
\begin{subequations} \label{eq1:qhier}
\begin{align}
q_{2n}(z)&=\frac{1+v_{n,1}^{(4)}(z)}{1-v_{n,1}^{(4)}(z)}=1-\frac{8}{z}+2\Ln{\Tmn{n,1}{1}(z)}{\Tmn{n-1,2}{1}(z)},\\
q_{2n+1}(z)&=\frac{w_{n+1,1}^{(1)}(z)+1}{w_{n+1,1}^{(1)}(z)-1}=1+2\Ln{\Tmn{n,1}{2}(z)}{\Tmn{n,2}{0}(z)},
\end{align}\end{subequations} and 
\begin{subequations}\label{eq1:phier}\begin{align}
p_{2n}(z)&=\frac{v_{2,n}^{(-4)}(z)+1}{v_{2,n}^{(-4)}(z)-1}=-1-\frac{8}{z}+2\Ln{\Tmn{1,n+1}{-2n-5}(z)}{\Tmn{2,n}{-2n-5}(z)},\\
p_{2n+1}(z)&=\frac{u_{n+1,2}^{(1)}(z)+1}{u_{n+1,2}^{(1)}(z)-1}=1+2\Ln{\Tmn{2,n+1}{-2n-6}(z)}{\Tmn{1,n+1}{-2n-6}(z)},
\end{align}\end{subequations}
where $q_{2n}$ and $p_{2n}$ both satisfy equation \eqref{eq2:q} for the parameters $\pms{\hf(n+3)^2}{-\hf(n+2)^2}{-4}$ whilst
$q_{2n+1}$ and $p_{2n+1}$ both satisfy equation \eqref{eq2:q} for the parameters $\pms{\hf(n+1)^2}{-\hf(n+5)^2}{1}$.
We note that there is a difference in the structure of the solutions in the hierarchies \eqref{eq1:qhier} and \eqref{eq1:phier}. {The solutions $q_{n}$ \eqref{eq1:qhier} are expressed in terms of the generalised Laguerre polynomials $\Tmn{n,1}{\mu}(z)$ and $\Tmn{n,2}{\mu}(z)$ which are determinants of fixed size, whereas the solutions $p_{n}$ \eqref{eq1:phier} are expressed in terms of the generalised Laguerre polynomials $\Tmn{1,n}{\mu}(z)$ and $\Tmn{2,n}{\mu}(z)$ which are determinants that increase in size as $n$ increases.}

Both $q_{n}$ and $p_{n}$ satisfy the discrete equation
\beq q_{n+1}+q_{n-1}=\frac{4}{z}\frac{(n+5)q_{n}-\tfrac{3}{2}+\frac{5}{2}(-1)^n}{1-q_{n}^2},\eeq
which is asymmetric \dPII\ \eqref{eq:adPII} with $\la=5$, $\rho=-\tfrac{3}{2}$ and $\ph=\tfrac{5}{2}$.
Further, if either $(x_{n},y_{n})=(q_{2n},q_{2n+1})$ or 
$(x_{n},y_{n})=(p_{2n},p_{2n+1})$ then we obtain solutions to the discrete system
\begin{align*}
x_{n+1}+x_{n}&=\frac{4}{z}\frac{(2n+6)y_{n}-4}{1-y_{n}^2},\qquad y_{n}+y_{n-1}=\frac{4}{z}\frac{(2n+5)x_{n}+1}{1-x_{n}^2}.
\end{align*}
The first few solutions in the hierarchies \eqref{eq1:qhier} and \eqref{eq1:phier} are given by
\begin{align*}
q_{0}&=1-\frac{8}{z}+2\deriv{}{z}\ln(z-3),\\
q_{2}&=1-\frac{8}{z}+2\Ln{z^2-8z+12}{z^2-8z+20},\\
q_{4}&=1-\frac{8}{z}+2\Ln{z^{3}-15 z^{2}+60 z -60}{z^{4}-20 z^{3}+150 z^{2}-480 z +600},\\
q_{6}&=1-\frac{8}{z}+2\Ln{z^{4}-24 z^{3}+180 z^{2}-480 z +360}{z^{6}-36 z^{5}+522 z^{4}-3840 z^{3}+15120 z^{2}-30240 z +25200},\\[7.5pt]
q_{1}&=1+2\Ln{z-4}{z^2-6z+12},\\
q_{3}&=1+2\Ln{z^2-10z+20}{z^{4}-16 z^{3}+96 z^{2}-240 z +240},\\
q_{5}&=1+2\Ln{z^{3}-18 z^{2}+90 z -120}{z^{6}-30 z^{5}+360 z^{4}-2160 z^{3}+6840 z^{2}-10800 z +7200},\\ 
q_{7}&=1+2\Ln{z^{4}-28 z^{3}+252 z^{2}-840 z +840}{z^{8}-48 z^{7}+960 z^{6}-10320 z^{5}+64800 z^{4}-241920 z^{3}+524160 z^{2}-604800 z +302400},
\end{align*}
and
\begin{align*}
p_{0}&=-1-\frac{8}{z}+2\deriv{}{z}\ln(z^2+4z+6),\\
p_{2}&=-1-\frac{8}{z}+2\Ln{z^{4}+12 z^{3}+54 z^{2}+96 z +72}{z^{3}+9 z^{2}+36 z +60},\\
p_{4}&=-1-\frac{8}{z}+2\Ln{z^{6}+24 z^{5}+228 z^{4}+1056 z^{3}+2520 z^{2}+2880 z +1440}{z^{6}+24 z^{5}+252 z^{4}+1440 z^{3}+4680 z^{2}+8640 z +7200},\\[7.5pt]
p_{1}&=1+2\Ln{z^{3}+6 z^{2}+18 z +24}{z^{2}+6 z +12},\\
p_{3}&=1+2\Ln{z^{6}+18 z^{5}+144 z^{4}+624 z^{3}+1512 z^{2}+2160 z +1440}{z^{4}+16 z^{3}+96 z^{2}+240 z +240},\\
p_{5}&=1+2\Ln{z^{9}+36 z^{8}+576 z^{7}+5280 z^{6}+30240 z^{5}+112320 z^{4}+270720 z^{3}+414720 z^{2}+388800 z +172800}{z^{6}+30 z^{5}+360 z^{4}+2160 z^{3}+6840 z^{2}+10800 z +7200},
\end{align*}
with $q_{-1}(z)=p_{-1}(z)=1$.

\begin{remark}{\rm{The solutions $q_{0}$ and $p_{0}$ are special cases of the solution
\[ q(z)=\frac{c_{1} (z^{2}-9 z +24) \rme^{z}-c_{2}(z^{3}+8 z^{2}+30 z +48)}{c_{1} z(z -3)\rme^{z}+c_{2} z(z^{2}+4 z +6)},\]
which satisfies equation \eqref{eq2:q} for the parameters $\pms{\tfrac{9}{2}}{-2}{-4}$ and is obtained by setting $\nu=\tfrac{3}{2}$ is the solution
\[q(z)= \frac{(4\nu+2+z)\Zn{\nu+1}-z\Zn{\nu}}{z\big\{\Zn{\nu+1}-\Zn{\nu}\big\}},\]
which satisfies equation \eqref{eq2:q} for the parameters $\pms{\hf(\nu+\tfrac{3}{2})^2}{-\hf(\nu+\hf)^2}{-2\nu-1}$.
}}\end{remark}

\subsubsection{Second discrete equation}
The rational solutions 
\[v_{0,n}^{(m+n+1)}(z)=(-1)^n\frac{m+1}{n+1}\,\frac{\Tmn{0,n}{m-n+1}(z)}{\Tmn{0,n+1}{m-n-1}(z)},\qquad 
\frac{1}{v_{m,0}^{(-m-n-1)}}(z) =-\frac{m+1}{n+1}\,\frac{\Tmn{m,1}{-m-n-3}(z)}{\Tmn{m-1,1}{-m-n-3}(z)}, \]
both satisfy \PV\ \eqref{eq:pv} for the parameters $\pms{\hf(n+1)^2}{-\hf(m+1)^2}{m+n+1}$. In the case when $m=n-1$, applying the \bt\ $\mathcal{R}_{2}$ \eqref{bt:R2} gives the hierarchy of solutions
\begin{subequations} \label{eq2:qhier}
\begin{align}
q_{2n}(z)&=\frac{1+v_{n,0}^{(-2n)}(z)}{1-v_{n,0}^{(-2n)}(z)}
= 1+\frac{4n}{z}-2\dln \Tmn{n-1,1}{-2n-1}(z),\\ q_{2n+1}(z)&=\frac{1+\wt{v}_{n,0}^{(-2n)}(z)}{1-\wt{v}_{n,0}^{(-2n)}(z)} = -1+2\dln \Tmn{n,1}{-2n-3}(z),
\end{align}\end{subequations}
and
\begin{subequations} \label{eq2:phier}\begin{align}
p_{2n}(z)&=\frac{v_{0,n-1}^{(2n)}(z)+1}{v_{0,n-1}^{(2n)}(z)-1}
=-1+\frac{4n}{z}-2\dln \Tmn{0,n-1}{1}(z)\\ 
p_{2n+1}(z)&=\frac{\wt{v}_{0,n-1}^{(2n)}(z)-1}{\wt{v}_{0,n-1}^{(2n)}(z)+1 }=1+2\dln \Tmn{0,n}{1}(z),
\end{align}\end{subequations}
where
\begin{align*}
{v}_{n,0}^{(\mu)}(z)&=\frac{n+\mu}{n+1}\frac{\Tmn{n-1,1}{\mu-2}(z)}{\Tmn{n,1}{\mu-2}(z)}=\frac{n+\mu}{n+1}\,\frac{\LaguerreL{\mu-1}{n}(z)}{\LaguerreL{\mu-1}{n+1}(z)},\\
\wt{v}_{n,0}^{(\mu)}(z)&=\frac{\Tmn{n,1}{\mu-2}(z)}{\Tmn{n-1,1}{\mu-2}(z)}=\frac{\LaguerreL{\mu-1}{n+1}(z)}{\LaguerreL{\mu-1}{n}(z)},\\
{v}_{0,n}^{(\mu)}(z)&=(-1)^n\frac{\mu-n}{n+1}\,\frac{\Tmn{0,n}{\mu-2n}(z)}{\Tmn{0,n+1}{\mu-2n-2}(z)}={\frac{n-\mu}{n+1}\,\frac{L_{n}^{(-\mu-1)}(-z)}{L_{n+1}^{(-\mu-1)}(-z)}}, \\
\wt{v}_{0,n}^{(\mu)}(z)&=(-1)^{n+1}\frac{\Tmn{0,n+1}{\mu-2n-2}(z)}{\Tmn{0,n}{\mu-2n}(z)}=\frac{L_{n+1}^{(-1-\mu)}(-z)}{L_n^{(-1-\mu)}(-z)}.
\end{align*}
The solutions $q_{2n}$ and $p_{2n}$ both satisfy equation \eqref{eq2:q} for the parameters $\pms{\hf n^2}{-\hf(n+1)^2}{2n}$ whilst
$q_{2n+1}$ and $p_{2n+1}$ both satisfy equation \eqref{eq2:q} for the parameters $\pms{\hf n^2}{-\hf(n+1)^2}{2n+2}$. Again there is a difference in the structure of the solutions in the hierarchies \eqref{eq2:qhier} and \eqref{eq2:phier}.

Both $q_{n}$ and $p_{n}$ satisfy the discrete equation
\[\frac{2n+2}{q_{n+1}+q_{n}}+\frac{2n}{q_{n}+q_{n-1}}=z-\frac{\big[2n+1+(-1)^n\big]q_{n}-2}{1-q_{n}^2},\]
which is equation \eqref{eq:dPn} with $\la=\hf$, $\rho=-2$ and $\ph=\hf$.
Further, if either $(x_{n},y_{n})=(q_{2n},q_{2n+1})$ or 
$(x_{n},y_{n})=(p_{2n},p_{2n+1})$ then we obtain solutions of the discrete system
\begin{align*}
\frac{2n+2}{x_{n+1}+y_{n}}+\frac{2n+1}{x_{n}+y_{n}}&=\hf z-\frac{(2n+1)y_{n}-1}{1-y_{n}^2},\\
\frac{2n+1}{x_{n}+y_{n}}+\frac{2n}{x_{n}+y_{n-1}}&=\hf z-\frac{(2n+1)x_{n}-1}{1-x_{n}^2}.
\end{align*}

The first few solutions in the hierarchies \eqref{eq2:qhier} and \eqref{eq2:phier} are given by
\[\begin{split}
q_{2} &= 1+\frac{4}{z}-\frac{2}{z+1},\\ 
q_{4} &= 1+\frac{8}{z}-\frac{4 (2+z )}{z^{2}+4 z +6}, \\q_{6} &= 
1+\frac{12}{z}-\frac{6 (z^{2}+6 z +12)}{z^{3}+9 z^{2}+36 z +60},\\ 
q_{8} &= 1+\frac{16}{z}-\frac{8(z^{3}+12 z^{2}+60 z +120)}{z^{4}+16 z^{3}+120 z^{2}+480 z +840},
\end{split} \qquad \begin{split}
q_{1} &= -1+\frac{2}{z+1},\\ q_{3} &= -1+\frac{4(z+2)}{z^{2}+4 z +6},\\
 q_{5} &= -1+\frac{6(z^{2}+6 z +12)}{z^{3}+9 z^{2}+36 z +60},\\
q_{7} &= 
-1+\frac{8(z^{3}+12 z^{2}+60 z +120)}{z^{4}+16 z^{3}+120 z^{2}+480 z +840},
\end{split} 
\] and \[
\begin{split}
p_{2} &= -1+\frac{4}{z}, \\
p_{4} &= -1+\frac{8}{z}-\frac{2}{z-3}, \\
p_{6} &= -1+\frac{12}{z}-\frac{4(z-4)}{z^{2}-8 z +20}, \\
p_{8} &= -1+\frac{16}{z}-\frac{6(z^{2}-10 z +30)}{z^{3}-15 z^{2}+90 z -210},
\end{split} \qquad \begin{split}
p_{1} &= 1, \\
p_{3} &= 1+\frac{2}{z-3},\\
p_{5} &= 1+\frac{4(z-4)}{z^{2}-8 z +20}, \\
p_{7} &= 1+\frac{6(z^{2}-10 z +30)}{z^{3}-15 z^{2}+90 z -210}. 
\end{split}\]

\subsubsection{Third discrete equation: ternary discrete \PI}
The rational solutions 
\beq v_{3,1}^{(-1)}(z)=-\frac{(z-2)(z-6)}{(z-3)(z^2-8z+20)},\qquad v_{2,2}^{(-1)}(z)=\frac{z^4+12z^3+54z^2+96z+72}{(z^2+4z+6)(z^3+9z^2+36z+60)},\label{sol:v31v22} \eeq
both satisfy \PV\ \eqref{eq:pv} for the parameters $\pms{\tfrac{25}{2}}{-\hf}{-1}$ and so
\begin{subequations}\label{sol:xy}\begin{align}x(z)&=\frac{1}{v_{3,1}^{(-1)}(z)-1}=-\frac{(z-3)(z^{2}-8 z +20)}{(z-4)(z^{2}-6 z +12)},\\
y(z)&=\frac{1}{v_{2,2}^{(-1)}(z)-1}=-\ds\frac{(z^{2}+4 z +6)(z^{3}+9 z^{2}+36 z +60)}{(z^{2}+6 z +12)(z^{3}+6 z^{2}+18 z +24)},
\end{align}\end{subequations}
both satisfy equation \eqref{eq2:x}
for the parameters $\pms{\tfrac{25}{2}}{-\hf}{-1}$. We note that
\[v_{3,1}^{(-1)}(z)= \ifrac{1}{w_{1,1}^{(1)}(z)},\qquad v_{2,2}^{(-1)}(z)= \ifrac{1}{u_{1,2}^{(1)}(z)} .\]

Applying the \bt\ $\mathcal{R}_{3}$ \eqref{bt:R3}, with $(a,b,c)=(5,1,-1)$, to the solutions \eqref{sol:v31v22} gives the sequences of rational solutions
\begin{align*}
&{v}_{3,1}^{(-1)} \arr{\mathcal{R}_{3}} u_{2,3}^{(6)} \arr{\mathcal{R}_{3}}\frac{1}{{w}_{3,2}^{(-5)}}
\arr{\mathcal{R}_{3}}{v}_{3,2}^{(0)} \arr{\mathcal{R}_{3}} u_{3,3}^{(7)} \arr{\mathcal{R}_{3}}\frac{1}{{w}_{3,3}^{(-6)}}
\arr{\mathcal{R}_{3}}{v}_{3,3}^{(1)} \\
&{v}_{2,2}^{(-1)} \arr{\mathcal{R}_{3}}\frac{1}{{w}_{3,2}^{(-6)}} \arr{\mathcal{R}_{3}} u_{3,2}^{(5)} 
\arr{\mathcal{R}_{3}}{v}_{2,3}^{(0)} \arr{\mathcal{R}_{3}}\frac{1}{{w}_{3,3}^{(-7)}} \arr{\mathcal{R}_{3}} u_{4,2}^{(6)} 
\arr{\mathcal{R}_{3}}{v}_{2,4}^{(1)} 
\end{align*}
Hence we obtain the solution hierarchies
\begin{subequations}\label{eq3:xhier} \begin{align}
&x_{3n}(z)= -1+\frac{n+3}{z}-\dln\Tmn{0,n}{2-n}(z)= -1+\frac{n+3}{z}-\dln L_n^{(-n-3)} (-z),\\
&x_{3n+1}(z)=-1+\Ln{\Tmn{0,n+1}{1-n}(z)}{\Tmn{0,n}{3-n}(z)}={-1+\Ln{L_{n+1}^{(-n-4)}(-z)}{L_{n}^{(-n-4)}(-z)}},\\
&x_{3n+2}(z)=\dln\Tmn{0,n+1}{2-n}(z)= {\dln L_{n+1}^{(-n-5)}},
\end{align}\end{subequations}
and
\begin{subequations}\label{eq3:yhier} \begin{align} 
&y_{3n}(z)=\frac{n+3}{z}-\dln\Tmn{1,1}{-n-4}(z)=\frac{n+3}{z}-\dln L_2^{(-n-3)}(z),\\
&y_{3n+1}(z)=-1+\Ln{\Tmn{1,1}{-n-5}(z)}{\Tmn{2,1}{-n-5}(z)}=-1+\Ln{L_2^{(-n-4)}(z)}{L_3^{(-n-4)}(z)},\\
&y_{3n+2}(z)=-1+\dln\Tmn{2,1}{-n-6}(z)=-1+\dln L_3^{(-n-5)}(z),
\end{align}\end{subequations}
where $x_{n}$ and $y_{n}$ both satisfy equation \eqref{eq2:x} for the parameters $\pms{\hf(n+1)^2}{-\tfrac{9}{2}}{n+3}$,
$x_{3n+1}$ and $ y_{3n+1}$ for the parameters $\pms{\hf(n+4)^2}{-\hf}{n-2}$ and
$x_{3n+2}$ and $y_{3n+2}$ for the parameters $\pms{\hf(n+1)^2}{-\tfrac{9}{2}}{n+5}$.
We remark that whilst $x_{n}$ is only defined for $n\geq-1$, $y_{n}$ is defined for all $n\in\Integer$. Also the solutions $x_{n}$ only involve the generalised Laguerre polynomials $\Tmn{0,n}{\mu}(z)=(-1)^{\lceil n/2 \rceil} L_n^{(-\mu-2n-1)}(-z)$,
whereas the solutions $y_{n}$ only involve the generalised Laguerre polynomials $\Tmn{1,1}{\mu}(z)=L_2^{(\mu+1)}(z)$ and $\Tmn{2,1}{\mu}(z)=L_3^{(\mu+1)}(z)$.
Both $x_{n}$ and $y_{n}$ satisfy the ternary \dPI\ equation
\[x_n(x_{n+1}+x_{n-1}+1)+\frac{a_{n}}{z}=0,\]
and the equation
\[ \frac{a_{n+1}}{x_{n+2}+x_{n}+1}+\frac{a_{n-1}}{x_{n}+x_{n-2}+1}=z+\frac{a_{n}}{x_{n}},\]
with
\[a_n=\tfrac{1}{3}n+\tfrac{5}{3}-\tfrac{2}{3}\cos(\tfrac{2}{3}\pi n)+\tfrac{10}{9}\sqrt{3}\,\sin(\tfrac{2}{3}\pi n).\]
Further if $(X_{n},Y_{n},Z_{n})=(x_{3n},x_{3n+1},x_{3n+2})$ or 
$(X_{n},Y_{n},Z_{n})=(y_{3n},y_{3n+1},y_{3n+2})$, then we obtain solutions of the discrete system
\begin{align*}
&X_{n}\big(Y_{n}+Z_{n-1}+1\big)+\frac{n+1}{z}=0,\quad
Y_{n}\big(Z_{n}+X_{n}+1\big)+\frac{n+4}{z}=0,\quad
Z_{n}\big(X_{n+1}+Y_{n}+1\big)+\frac{n+1}{z}=0.
\end{align*}
The first few solutions in the hierarchies \eqref{eq3:xhier} and \eqref{eq3:yhier} are given by
\[\begin{split}
x_{0}&=-1+\frac{3}{z}, \\
x_{3}&=-1+\frac{4}{z}-\frac{1}{z-3},\\
x_{6}&=-1+\frac{5}{z}-\frac{2(z-3)}{z^2-6z+12},
\end{split}\qquad\begin{split}
x_{1}&=-1+\frac{1}{z-3},\\
x_{4}&=-1+\Ln{z^2-6z+12}{z-4},\\
x_{7}&=-1+\Ln{z^3-9z^2+36z-60}{z^2-8z+20},
\end{split}\qquad\begin{split}
x_{2}&=\frac{1}{z-4},\\ x_{5}&=\frac{2(z-4)}{z^2-8z+20},\\ x_{8}&= \frac{3(z^2-8z+20)}{z^3-12z^2+60z-120},
\end{split}\]
and
\[\begin{split}
y_{0}&=\frac{3}{z}-\frac{2(z+1)}{z^2+2z+2},\\
y_{3}&=\frac{4}{z}-\frac{2(z+2)}{z^2+4z+6},\\
y_{6}&=\frac{5}{z}-\frac{2(z+3)}{z^2+6z+12}, 
\end{split}\qquad\begin{split}
 y_{1}&=-1+\Ln{z^{2}+4 z +6}{z^{3}+3 z^{2}+6 z +6},\\
 y_{4}&=-1+\Ln{z^2+6z+12}{z^3+6z^2+18z+24},\\
 y_{7}&=-1+\Ln{z^2+8z+20}{z^3+9z^2+36z+60},
\end{split}\qquad\begin{split}
y_{2}&=-1+\frac{3(z^2+4z+6)}{z^3+6z^2+18z+24},\\
y_{5}&=-1+\frac{3(z^2+6z+12)}{z^3+9z^2+36z+60},\\
y_{8}&=-1+\frac{3(z^2+8z+20)}{z^3+12z^2+60z+120}.
\end{split}\]

\section{\label{sec:dis}Discussion}
In this paper we have derived discrete equations from the \bts\ of \PV\ \eqref{eq:pv} and obtained hierarchies of rational solutions of the discrete equations. 
Whilst the asymmetric \dPII\ \eqref{eq:adPII}, the discrete equation \eqref{eq:dPn} and the ternary \dPI\ \eqref{eq:tdPI} were derived by 
Tokihiro, Grammaticos and Ramani \cite{refTGR02} from \bts\ of \PV\ \eqref{eq:pv}, our approach gives the explicit relationship between solutions of the discrete equations and solutions of \PV. This enabled us to derive the hierarchies of rational solutions of the discrete equations.

Tokihiro, Grammaticos and Ramani \cite{refTGR02} also derive the discrete system
\beq\label{TGReq324} x_{n}+x_{n-1}=\frac{1}{y_{n}}+\frac{n+\la+\mu}{1-y_{n}},\qquad y_{n+1}y_{n}=\frac{x_{n}-n-\la}{x_{n}^2-\ph^2},\eeq 
with $\la$, $\mu$ and $\ph$ constants,
which was derived earlier by Fokas, Grammaticos and Ramani \cite{refFGR} from the Schlesinger transformations of \PV\ \eqref{eq:pv}. 
Tokihiro, Grammaticos and Ramani \cite{refTGR02} state that whilst $y_{n}$ satisfies an equation which is equivalent to \PV, $x_{n}$ satisfies a second-order, fourth-degree equation. Schlesinger transformations of \PV\ \eqref{eq:pv} are transformations which are quadratic in $\rmd w/\rmd z$. 
For example, consider the Schlesinger transformations $R_{1}$ and $R_{2}$ of \PV\ in \cite{refFMA,refMF}, which Fokas, Grammaticos and Ramani \cite{refFGR} use to derive the discrete system \eqref{TGReq324}. 
Suppose that $w_{0}=w(z;a_{0},b_{0},c_{0})$ satisfies \PV\ for the parameters $\pms{\hf a_{0}^2}{-\hf b_{0}^2}{c_{0}}$, then the Schlesinger transformations $R_{1}$ and $R_{2}$ are given by
\begin{align*}
w_{1}&=R_{1}(w_{0})=\frac{\big\{zw_{0}'+a_{0}w_{0}^2-(a_{0}+b_{0}+z)w_{0}+b_{0}\big\}\big\{zw_{0}'+a_{0}w_{0}^2-(a_{0}-b_{0}+z)w_{0}-b_{0}\big\}}{z^{2}\big(w_{0}'\big)^{\!2} -2 a_{0} z w_{0}(w_{0} -1)w_{0}'+(a_{0}^2w_{0}^2-b_{0}^2)(w_{0}-1)^2-2z(a_{0}-c_{0}+1)w_{0}(w_{0}-1)-z^2w_{0}^2},\\
w_{2}&=R_{2}(w_{0})=\frac{\big\{zw_{0}'-a_{0}w_{0}^2+(a_{0}+b_{0}+z)w_{0}-b_{0}\big\}\big\{zw_{0}'-a_{0}w_{0}^2+(a_{0}-b_{0}+z)w_{0}+b_{0}\big\}}{z^{2}\big(w_{0}'\big)^{\!2}+2 a_{0} z w_{0}(w_{0} -1)w_{0}'+(a_{0}^2w_{0}^2-b_{0}^2)(w_{0}-1)^2-2z(a_{0}-c_{0}-1)w_{0}(w_{0}-1)-z^2w_{0}^2},
\end{align*}
with $'=\rmd/\rmd z$, and the effect on the parameters is
\[R_{1}\big(a_{0},b_{0},c_{0}\big)=\big(a_{0}+1,b_{0},c_{0}-1\big),\qquad R_{2}\big(a_{0},b_{0},c_{0}\big)=\big(a_{0}-1,b_{0},c_{0}+1\big).
\]
Hence $w_{1}$ and $w_{2}$ respectively satisfy \PV\ \eqref{eq:pv} for the parameters $\pms{\hf (a_{0}+1)^2}{-\hf b_{0}^2}{c_{0}-1}$ and $\pms{\hf (a_{0}-1)^2}{-\hf b_{0}^2}{c_{0}+1}$, so that $R_{1}R_{2}=R_{2}R_{1}=\mathcal{I}$, the identity transformation. In terms of the \bts\ $\T_{\ep_{1},\ep_{2},\ep_{3}}$ defined in \S\ref{sec:bts}, it is straightforward to show that
\[ R_{1}\equiv\T_{1,1,1}^2\circ\T_{-1,-1,-1},\qquad R_{2}\equiv\T_{-1,-1,-1}^2\circ\T_{1,1,1} .\]
All the Schlesinger transformations of \PV\ \eqref{eq:pv} given in \cite{refFMA,refMF} can be written as a sequence of the \bts\ $\T_{\ep_{1},\ep_{2},\ep_{3}}$. The Schlesinger transformations in \cite{refFMA,refMF} are written in terms of the monodromy parameters $\th_{0}$, $\th_{1}$ and $\th_{\infty}$ which are related to the parameters $a_{0}$, $b_{0}$ and $c_{0}$ through \eqref{SchlesParams}, i.e.
\[\th_{0} = \hf(a_{0}+b_{0}-c_{0}-1),\qquad \th_{1} =- \hf(a_{0}+b_{0}+c_{0}-1),\qquad\th_{\infty} = a_{0}-b_{0},\]
so that
\[a_{0}=\hf(\th_{0}-\th_{1}+\th_{\infty}),\qquad b_{0}=\hf(\th_{0}-\th_{1}-\th_{\infty}),\qquad c_{0}=1-\th_{0}-\th_{1}.\]

Exact solutions of the discrete equations \eqref{eq:adPII}, \eqref{eq:dPn} and \eqref{eq:tdPI} in terms of Kummer functions and modified Bessel functions, which are a special case of Kummer functions, cf.~\cite[\S13.6(iii)]{refDLMF}, using the explicit relationship between these discrete equations and \PV\ \eqref{eq:pv}, is currently under investigation, and we do not pursue this further here.

\subsection*{Acknowledgements}
We thank Andy Hone, Edoardo Peroni and Alexander Stokes for stimulating discussions.
BM is grateful to EPSRC for the support of Doctoral Training Partnerships grant EP/W524050/1. {We also thank the reviewers for their careful reading of the manuscript and their comments which have improved it.}

{\section*{Conflict of interest statement}
The authors declare no conflicts of interest.
\section*{Data availability statement}
No data were used for the research described in the article.}
\subsection*{ORCID iDs} 
\begin{tabular}{ll}
Peter A.\ Clarkson & 0000-0002-8777-5284 \\
Clare Dunning & 0000-0003-0535-9891
\end{tabular}
\appendix
\section{Properties of the generalised Laguerre polynomials and the generalised Umemura polynomials}
For convenience, in this appendix we list various differential-difference equations and discrete equations satisfied by the generalised Laguerre polynomials $\Tmn{m,n}{\mu}(z)$ and the generalised Umemura polynomials $\Umn{m,n}{\k}(z)$.
\subsection{Properties of the generalised Laguerre polynomials}
Clarkson and Dunning \cite[\S3]{refCD24} prove the following differential-difference equations for the generalised Laguerre polynomials
\begin{subequations}\label{eq324}\begin{align}
&\Hir{\Tmn{m,n-1}{\mu+1}}{\Tmn{m,n}{\mu}}=\Tmn{m+1,n-1}{\mu}\,\Tmn{m-1,n}{\mu+1},\label{iden:324a}\\
&\Hir{\Tmn{m,n-1}{\mu+1}}{\Tmn{m+1,n}{\mu-1}}=\Tmn{m+1,n-1}{\mu}\,\Tmn{m,n}{\mu},\label{iden:324b}\\
&\Hir{\Tmn{m,n-1}{\mu+1}}{\Tmn{m,n}{\mu-1}}=\Tmn{m+1,n-1}{\mu}\,\Tmn{m-1,n}{\mu},\label{iden:324c}\\
&\Hir{\Tmn{m+1,n}{\mu}}{\Tmn{m,n}{\mu+1}}=\Tmn{m+1,n-1}{\mu+1}\,\Tmn{m,n+1}{\mu},\label{iden:324d}\\
&\Hir{\Tmn{m,n}{\mu}}{\Tmn{m,n}{\mu+1}}=\Tmn{m+1,n-1}{\mu+1}\,\Tmn{m-1,n+1}{\mu},\label{iden:324e}\\
&\Hir{\Tmn{m+1,n}{\mu}}{\Tmn{m,n}{\mu}}=\Tmn{m+1,n-1}{\mu+1}\,\Tmn{m,n+1}{\mu-1},\label{iden:324f}
\end{align}\end{subequations}
where $\D_{z}(f\cdot g)$ is the Hirota operator \eqref{Hirop},
and the discrete equations
\begin{subequations}\begin{align}\Tmn{m,n+1}{\mu-1}\, \Tmn{m,n-1}{\mu+1}&=\Tmn{m+1,n}{\mu-1}\Tmn{m-1,n}{\mu+1}-\big(\Tmn{m,n}{\mu}\big)^{\!2},\label{iden:Tmn1}\\
 {\Tmn{m+1,n}{\mu}\,\Tmn{m,n}{\mu}} &={\Tmn{m+1,n}{\mu-1}\,\Tmn{m,n}{\mu+1}}-\Tmn{m,n+1}{\mu-1} \,\Tmn{m+1,n-1}{\mu+1}, \label{iden:Tmn2}\\ 
 \Tmn{m,n}{\mu-1}\,\Tmn{m,n-1}{\mu+1}&=\Tmn{m,n}{\mu}\,\Tmn{m,n-1}{\mu}-\Tmn{m-1,n}{\mu}\,\Tmn{m+1,n-1}{\mu}.\label{iden:Tmn3}
 \end{align}\end{subequations}

The following differential-difference equations arise from the definitions of $w_{m,n}^{(\mu)}$, $v_{m,n}^{(\mu)}$ and $u_{m,n}^{(\mu)}$ in Theorem \ref{thm:genLag} and Remarks \ref{rem:45}
\begin{subequations}\label{eq:A3}\begin{align}
z\Hir{\Tmn{m-1,n}{\mu}}{\Tmn{m+1,n-1}{\mu}}&=(m+1)\Tmn{m,n-1}{\mu}\,\Tmn{m,n}{\mu}+(z-m-2n-\mu)\Tmn{m-1,n}{\mu}\Tmn{m+1,n-1}{\mu},\label{eq:A3a}\\
z\Hir{\Tmn{m-1,n-1}{\mu+1}}{\Tmn{m,n}{\mu-1}}&=(m+n+\mu)\Tmn{m,n-1}{\mu+1}\,\Tmn{m-1,n}{\mu-1}-(m+n)\Tmn{m,n}{\mu-1}\,\Tmn{m-1,n-1}{\mu+1},\label{eq:A3b}\\
z\Hir{\Tmn{m,n-1}{\mu+1}}{\Tmn{m-1,n+1}{\mu-1}}&=n\Tmn{m-1,n}{\mu+1}\,\Tmn{m,n}{\mu-1}-(z-n-\mu)\Tmn{m,n-1}{\mu+1}\,\Tmn{m-1,n+1}{\mu-1},\label{eq:A3c}
\end{align}\end{subequations}
and so from \eqref{iden:324b} and \eqref{eq:A3b}
\beq z\Tmn{m,n-1}{\mu}\,\Tmn{m-1,n}{\mu} =(m+n+\mu)\Tmn{m,n-1}{\mu+1}\,\Tmn{m-1,n}{\mu-1}-(m+n)\Tmn{m,n}{\mu-1}\,\Tmn{m-1,n-1}{\mu+1}.\label{eq:A4} \eeq

The generalised Laguerre polynomials $\Tmn{m,n}{\mu}(z)$ satisfy the differential-difference equations 
\begin{subequations}\begin{align}
z\D_{z}^2\Big(\Tmn{m,n}{\mu}\cdot\Tmn{m,n}{\mu}\Big)
+ 2\Tmn{m,n}{\mu}\deriv{\Tmn{m,n}{\mu}}{z}
&=2(\mu+m+n+1)\Tmn{m+1,n-1}{\mu+1}\Tmn{m-1,n+1}{\mu-1},\\
z\D_{z}^2\Big(\Tmn{m,n}{\mu}\cdot\Tmn{m,n}{\mu}\Big)
+ 2\Tmn{m,n}{\mu}\deriv{\Tmn{m,n}{\mu}}{z}
&=2(\mu+m+n+1)\left\{ \Tmn{m,n}{\mu+1}\Tmn{m,n}{\mu-1} -\left(\Tmn{m,n}{\mu}\right)^{\!2} \right\},
\end{align}\end{subequations}
where
\[ \D_z^2\big(f\cdot g\big)=g\deriv[2]{f}{z}-2\deriv{f}{z}\deriv{g}{z}+f\deriv{g}{z},\]
from which we obtain the identity
\beq \Tmn{m+1,n-1}{\mu+1}\Tmn{m-1,n+1}{\mu-1} = \Tmn{m,n}{\mu+1}\Tmn{m,n}{\mu-1} -\left(\Tmn{m,n}{\mu}\right)^{\!2} .\eeq

\subsection{Properties of generalised Umemura polynomials}
Masuda, Ohta and Kajiwara \cite{refMOK}, see also \cite{refMasuda25}, prove the following discrete equations for the generalised Umemura polynomials
\begin{subequations}\begin{align}
\Umn{m+1,n}{\k-1}\,\Umn{m-1,n+1}{\k+1} &=\Umn{m+1,n+1}{\k-1}\,\Umn{m-1,n}{\k+1}+2(2m+1)\,\Umn{m,n+1}{\k-1}\,\Umn{m,n}{\k+1}, \\ 
\Umn{m,n+1}{\k-1}\,\Umn{m+1,n-1}{\k+1} &=\Umn{m+1,n+1}{\k-1}\,\Umn{m,n-1}{\k+1}-2(2n+1)\,\Umn{m+1,n}{\k-1}\,\Umn{m,n}{\k+1}, \\ 
\Umn{m-1,n}{\k}\,\Umn{m+1,n-1}{\k} &=z\,\Umn{m,n}{\k-2}\,\Umn{m,n-1}{\k+2}-2(2m+\k)\,\Umn{m,n}{\k}\,\Umn{m,n-1}{\k},\\
\Umn{m,n-1}{\k}\,\Umn{m-1,n+1}{\k} &=z\,\Umn{m,n}{\k-2}\,\Umn{m-1,n}{\k+2}+2(2n+\k)\,\Umn{m,n}{\k}\,\Umn{m-1,n}{\k},\\
\Umn{m+1,n}{\k-2}\,\Umn{m-1,n-1}{\k+2} &=z\,\Umn{m,n}{\k-2}\,\Umn{m,n-1}{\k+2}-2(\k-1)\,\Umn{m,n}{\k}\,\Umn{m,n-1}{\k},\\
\Umn{m,n+1}{\k-2}\,\Umn{m-1,n-1}{\k+2} &=z\,\Umn{m,n}{\k-2}\,\Umn{m-1,n}{\k+2}+2(\k-1)\,\Umn{m,n}{\k}\,\Umn{m-1,n}{\k},\\
\Umn{m-1,n+1}{\k-1}\,\Umn{m,n-1}{\k+1} &=z\,\Umn{m-1,n}{\k-1}\,\Umn{m,n}{\k+1}+2(2m+\k-1)\,\Umn{m,n}{\k-1}\,\Umn{m-1,n}{\k+1},\\
\Umn{m+1,n-1}{\k-1}\,\Umn{m-1,n}{\k+1} &=z\,\Umn{m,n-1}{\k-1}\,\Umn{m,n}{\k+1}-2(2n+\k-1)\,\Umn{m,n}{\k-1}\,\Umn{m,n-1}{\k+1},\\
\Umn{m+1,n}{\k-1}\,\Umn{m-1,n-1}{\k+1} &=z\,\Umn{m,n-1}{\k-1}\,\Umn{m,n}{\k+1}-2(2m+2n+\k)\,\Umn{m,n}{\k-1}\,\Umn{m,n-1}{\k+1}, \\ 
\Umn{m,n+1}{\k-1}\,\Umn{m-1,n-1}{\k+1} &=z\,\Umn{m-1,n}{\k-1}\,\Umn{m,n}{\k+1}+2(2m+2n+\k)\,\Umn{m,n}{\k-1}\,\Umn{m-1,n}{\k+1}, 
\end{align}\end{subequations}
where $\D_{z}(f\cdot g)$ is the Hirota operator \eqref{Hirop},
see equation (4.16) and Proposition 4.1 in \cite{refMOK}.
From equations (A.12) and (A.13) in \cite[Appendix A]{refMasuda25} one can derive the following differential-difference equations for the generalised Umemura polynomials
\begin{subequations}\begin{align}
2\Hir{\Umn{m,n-1}{\k}}{\Umn{m-1,n}{\k}} &=\Umn{m,n-1}{\k}\,\Umn{m-1,n}{\k}-\Umn{m,n}{\k}\,\Umn{m-1,n-1}{\k}, \label{eq:Masuda1}\\
2\Hir{\Umn{m,n-1}{\k}}{\Umn{m-1,n}{\k}} &=\Umn{m,n}{\k-2}\,\Umn{m-1,n-1}{\k+2}-\Umn{m,n-1}{\k}\,\Umn{m-1,n}{\k}, \label{eq:Masuda2} \\
2\Hir{\Umn{m,n}{\k-1}}{\Umn{m-1,n-1}{\k+1}} &=\Umn{m,n}{\k-1}\,\Umn{m-1,n-1}{\k+1}-\Umn{m,n-1}{\k+1}\,\Umn{m-1,n}{\k-1}, \label{eq:Masuda3}\\
2\Hir{\Umn{m,n}{\k-1}}{\Umn{m-1,n-1}{\k+1}} &=\Umn{m,n-1}{\k-1}\,\Umn{m-1,n}{\k+1}-\Umn{m,n}{\k-1}\,\Umn{m-1,n-1}{\k+1}.\label{eq:Masuda4}
\end{align}\end{subequations}
By adding and subtracting these equations we obtain the differential-difference equations
\begin{subequations}\begin{align}
4\Hir{\Umn{m,n-1}{\k}}{\Umn{m-1,n}{\k}} &=\Umn{m,n}{\k-2}\,\Umn{m-1,n-1}{\k+2}-\Umn{m,n}{\k}\,\Umn{m-1,n-1}{\k}, \label{eq:Masuda12a}\\
4\Hir{\Umn{m,n}{\k-1}}{\Umn{m-1,n-1}{\k+1}} &=\Umn{m,n-1}{\k-1}\,\Umn{m-1,n}{\k+1}-\Umn{m,n-1}{\k+1}\,\Umn{m-1,n}{\k-1}, \label{eq:Masuda34a}
\end{align}\end{subequations}
and the discrete equations
\begin{subequations}\begin{align}
2\Umn{m,n-1}{\k}\,\Umn{m-1,n}{\k} &=\Umn{m,n}{\k}\,\Umn{m-1,n-1}{\k}+\Umn{m,n}{\k-2}\,\Umn{m-1,n-1}{\k+2},\label{eq:Masuda12b}\\
2\Umn{m,n}{\k-1}\,\Umn{m-1,n-1}{\k+1}&=\Umn{m,n-1}{\k+1}\,\Umn{m-1,n}{\k-1}+\Umn{m,n-1}{\k-1}\,\Umn{m-1,n}{\k+1}.\label{eq:Masuda34b}
\end{align}\end{subequations}

Further differential-difference equations are given by 
\begin{subequations}\begin{align}
\Hir{\Umn{m+1,n}{\k-1}}{\Umn{m-1,n}{\k+1}} &=(2m+1)\Umn{m,n}{\k-1}\,\Umn{m,n}{\k+1},\label{eq:B5a}\\
\Hir{\Umn{m,n+1}{\k-1}}{\Umn{m,n-1}{\k+1}} &=(2n+1)\Umn{m,n}{\k-1}\,\Umn{m,n}{\k+1}.\label{eq:B5b}
\end{align}\end{subequations}

The following differential-difference equations arise from the definitions of $\wh{w}_{m,n}^{(\k)}$, $\wh{v}_{m,n}^{(\k)}$ and $\wh{u}_{m,n}^{(\k)}$ in Theorem \ref{thm:genUm} and Remark \ref{rem:410}
\begin{subequations}\label{eq:A12}\begin{align}
z\Hir{\Umn{m-1,n}{2\k}}{\Umn{m,n-1}{2\k+2}}&=(n+\k)\Umn{m-1,n}{2\k}\Umn{m,n-1}{2\k+2}-(m+\k)\Umn{m,n-1}{2\k}\,\Umn{m-1,n}{2\k+2},\\
z\Hir{\Umn{m,n-1}{2\k}}{\Umn{m-1,n}{2\k+2}}&=(m+\k)\Umn{m,n-1}{2\k}\Umn{m-1,n}{2\k+2}-(n+\k)\Umn{m-1,n}{2\k}\,\Umn{m,n-1}{2\k+2},\\
z\Hir{\Umn{m-1,n-1}{2\k+1}}{\Umn{m,n}{2\k+1}}&=\k\Umn{m-1,n-1}{2\k+1}\,\Umn{m,n}{2\k+1}- (m+n+\k)\Umn{m-1,n-1}{2\k+3}\,\Umn{m,n}{2\k-1},\\2z\Hir{\Umn{m+1,n}{\k-1}}{\Umn{m-1,n}{\k+1}}&=(2m+1)\left\{\Umn{m-1,n}{\k+1}\,\Umn{m+1,n}{\k-1}+\Umn{m,n-1}{\k+1}\,\Umn{m,n+1}{\k-1}\right\},\label{eq:B6d}\\
2z\Hir{\Umn{m,n+1}{\k-1}}{\Umn{m,n-1}{\k+1}}&=(2n+1)\left\{\Umn{m-1,n}{\k+1}\,\Umn{m+1,n}{\k-1}+\Umn{m,n-1}{\k+1}\,\Umn{m,n+1}{\k-1}\right\}.\label{eq:B6e}
\end{align}\end{subequations}
Consequently from \eqref{eq:B5a} and \eqref{eq:B6d}, or from \eqref{eq:B5b} and \eqref{eq:B6e}, we obtain
\beq 2z\Umn{m,n}{\k-1}\,\Umn{m,n}{\k+1}=\Umn{m-1,n}{\k+1}\,\Umn{m+1,n}{\k-1}+\Umn{m,n-1}{\k+1}\,\Umn{m,n+1}{\k-1}.\label{eq:B7}\eeq
The generalised Umemura polynomials $\Umn{m,n}{\k}(z)$ satisfy the differential-difference equations 
\begin{subequations}\label{sys41}\begin{align}
\Umn{m+1,n}{\k}\,\Umn{m-1,n}{\k}&=4z\D_{z}^2\Big(\Umn{m,n}{\k}\cdot\Umn{m,n}{\k}\Big)
+8\Umn{m,n}{\k}\deriv{\Umn{m,n}{\k}}{z} 
+(z-6m-2n-2\k-2)\left(\Umn{m,n}{\k}\right)^{\!2}, \label{eq:Umn1}\\
\Umn{m,n+1}{\k}\,\Umn{m,n-1}{\k}&=4z\D_{z}^2\Big(\Umn{m,n}{\k}\cdot\Umn{m,n}{\k}\Big)
+8\Umn{m,n}{\k}\deriv{\Umn{m,n}{\k}}{z} 
+(z+2m+6n+2\k+2)\left(\Umn{m,n}{\k}\right)^{\!2}, \label{eq:Umn2}
\end{align}\end{subequations}
and
\begin{subequations}\label{sys44}\begin{align}
\Umn{m+1,n}{\k-2}\,\Umn{m-1,n}{\k+2}&=4z\D_{z}^2\Big(\Umn{m,n}{\k}\cdot\Umn{m,n}{\k}\Big)
+8\Umn{m,n}{\k}\deriv{\Umn{m,n}{\k}}{z} 
+(z+2m-2n-2\k+2)\left(\Umn{m,n}{\k}\right)^{\!2},\\
\Umn{m,n+1}{\k-2}\,\Umn{m,n-1}{\k+2}&=4z\D_{z}^2\Big(\Umn{m,n}{\k}\cdot\Umn{m,n}{\k}\Big)
+8\Umn{m,n}{\k}\deriv{\Umn{m,n}{\k}}{z} 
+(z+2m-2n+2\k-2)\left(\Umn{m,n}{\k}\right)^{\!2},
\end{align}\end{subequations}
with $\Umn{-1,-1}{(\k)}(z)=\Umn{-1,0}{(\k)}(z)=\Umn{0,-1}{(\k)}(z)=\Umn{0,0}{(\k)}(z)=1$.
Eliminating the derivative terms in \eqref{sys41} and \eqref{sys44} gives
\begin{subequations}\begin{align}
\Umn{m,n+1}{\k}\,\Umn{m,n-1}{\k}-\Umn{m+1,n}{\k}\,\Umn{m-1,n}{\k}&=4(2m+2n+\k+1)\left(\Umn{m,n}{\k}\right)^{\!2},\\
\Umn{m,n+1}{\k-2}\,\Umn{m,n-1}{\k+2}-\Umn{m+1,n}{\k-2}\,\Umn{m-1,n}{\k+2}&=4(\k-1)\left(\Umn{m,n}{\k}\right)^{\!2}.
\end{align}\end{subequations}
Also from \eqref{sys41} and \eqref{sys44} we obtain the difference equations
\begin{subequations}\begin{align}
\Umn{m,n+1}{\k-2}\,\Umn{m,n-1}{\k+2}-\Umn{m+1,n}{\k}\,\Umn{m-1,n}{\k}&=4(2m+\k)\left(\Umn{m,n}{\k}\right)^{\!2},\\
\Umn{m,n+1}{\k}\,\Umn{m,n-1}{\k}-\Umn{m+1,n}{\k-2}\,\Umn{m-1,n}{\k+2}&=4(2n+\k)\left(\Umn{m,n}{\k}\right)^{\!2},\\
\Umn{m+1,n}{\k-2}\,\Umn{m-1,n}{\k+2}-\Umn{m+1,n}{\k}\,\Umn{m-1,n}{\k}&=4(2m+1)\left(\Umn{m,n}{\k}\right)^{\!2},\\
\Umn{m,n+1}{\k}\,\Umn{m,n-1}{\k}-\Umn{m,n+1}{\k-2}\,\Umn{m,n-1}{\k+2}&=4(2n+1)\left(\Umn{m,n}{\k}\right)^{\!2}.
\end{align}\end{subequations}

\def\refjl#1#2#3#4#5#6#7{\vspace{-0.2cm}
\bibitem{#1}\textrm{\frenchspacing#2}, \textrm{#3},
\textit{\frenchspacing#4}, \textbf{#5}\ (#6) #7.}

\def\refbk#1#2#3#4#5{\vspace{-0.2cm}
\bibitem{#1} \textrm{\frenchspacing#2}, \textit{#3}, #4 (#5).}

\def\refcf#1#2#3#4#5#6#7{\vspace{-0.2cm}
\bibitem{#1} \textrm{\frenchspacing#2}, \textrm{#3},
in: \textit{#4}, {\frenchspacing#5}, #6 (#7).}

\def\refpp#1#2#3#4{\vspace{-0.2cm}
\bibitem{#1} \textrm{\frenchspacing#2}, \textrm{#3}, #4.}

\def\refirnm#1#2#3#4#5{\vspace{-0.2cm}
\bibitem{#1}\textrm{\frenchspacing#2}, \textrm{#3},
\textit{\frenchspacing Int. Math. Res. Not. IMRN} (#4)\ #5.}
\def\NMJ{Nagoya Math. J.}

\def\refjsm#1#2#3#4#5#6#7{\vspace{-0.2cm}
\bibitem{#1}\textrm{\frenchspacing#2}, \textrm{#3},
\textit{\frenchspacing#4} (#6)\ #7.}

\def\fit#1{\textit{\frenchspacing#1}}
{}

\end{document}